\documentclass[12pt,a4paper]{article}
\setcounter{tocdepth}{2}

\usepackage[utf8]{inputenc}
\usepackage[T1]{fontenc}
\usepackage{lmodern}
\usepackage{amsmath,amssymb,amsthm}
\usepackage{graphicx}
\usepackage{caption}
\usepackage{subcaption}
\usepackage{booktabs}
\usepackage{tabularx}
\usepackage{cite}
\usepackage{float}
\usepackage{tikz}
\usepackage{braket}
\usepackage{enumitem}
\usepackage{mathtools}
\usepackage{mathrsfs}
\usetikzlibrary{shapes,arrows,positioning}
\usepackage{hyperref}
\hypersetup{
	colorlinks=true,
	linkcolor=blue,
	citecolor=blue,
	urlcolor=blue,
}

\usepackage{geometry}
\geometry{
	margin=1in,
}

\newtheorem{theorem}{Theorem}
\newtheorem{lemma}{Lemma}
\newtheorem{property}{Property}
\newtheorem{definition}{Definition}
\newtheorem{remark}{Remark}
\newtheorem{example}{Example}
\newtheorem{proposition}{Proposition}

\newtheorem{corollary}{Corollary}

\title{
	Packaged Quantum States for Gauge-Invariant Quantum Computation and Communication
}

\author{
	Rongchao Ma \\
	\textit{Department of Physics, University of Alberta, Edmonton, Canada}\\
}

\date{\today}

\begin{document}
	
\maketitle

\begin{abstract}
	Packaged quantum states are gauge-invariant states in which all internal quantum numbers (IQNs) form an inseparable block.
	This feature gives rise to novel packaged entanglements that encompass all IQNs, which is important both for fundamental physics and for quantum technology. 
	Here we develop a framework for gauge-invariant quantum information processing based on packaged quantum states.
	We propose the necessary and sufficient conditions for a valid packaged superposition state of a single particle and multi-particle.
	We then present the details of constructing gauge-invariant packaged qubits (or qudits), packaged gates, and packaged circuits (which commute with the total charge operator).
	These serve as alternative foundation for gauge-invariant quantum information science.
	We then adapt conventional quantum error-correction codes, quantum algorithms, and quantum communication protocols to the ($d \times D$)-dimensional hybrid-packaged subspace.
	This high-dimensional hybrid-packaged subspace is flexible for pruning and scaling to match available physics systems.
	Thus, packaged quantum information processing becomes feasible and testable.	
	Our results show that the gauge-invariant packaged quantum states may provide a possible route toward robust, fault-tolerant, and secure quantum technologies.
\end{abstract}

\tableofcontents

\section{Introduction}
\label{SEC:Introduction}

Traditional quantum information science usually uses external \textbf{degrees of freedom (DOFs)} (such as spin, momentum, polarization, or magnetic flux) and assumes that arbitrary superpositions and local operations are available.
This formalism has two drawbacks:
(1) \textbf{Internal quantum numbers (IQNs)} (electric charge, flavor, color, etc.) \cite{WeinbergBook,PeskinSchroeder} are not fully exploited, because they are constrained by the fundamental symmetries of realistic physical systems.
(2) Quantum systems are vulnerable to gauge‑violating errors \cite{Poulin2005,Bacon2006,Zohar2016}.

Recently, we demonstrated the validity of packaged entangled states \cite{Ma2017} using quantum field theory \cite{Ma2025} and group theory \cite{MaGroup2025}.
Due to the packaging principle, all IQNs in the packaged entangled states are inseparably entangled.
In other words, the packaging principle prohibits entanglements in which only part of the IQNs are entangled (e.g., the charge $Q$), while leaving the other quantum numbers unentangled.
These packaged entangled states are restricted to a single superselection sector \cite{WWW1952,DHR1971,DHR1974,StreaterWightman2001} and transform covariantly within that fixed net‐charge sector.
Furthermore, external DOFs (e.g., spin or momentum) can combine with the packaged IQNs to yield hybrid gauge-invariant \cite{Feynman1949,Yang1954,Utiyama1956,Weinberg1967} entanglement.
These unique features have been applied to quantum simulation of lattice gauge theories \cite{MaLGT2025}.

Evidence for packaged entanglement include early works in high-energy physics about electric charge entanglement \cite{WuShaknov1950}, flavor entanglement \cite{Abe2001,Aubert2002}, and color entanglement \cite{Brandelik1979,Abazov2012}.
Recent experimental results for packaged entanglement include the observation of entanglements in quarks, leptons \cite{TheATLASCollaboration2024,FabbrichesiFloreaniniGabrielli2023,Afik2024,TheCMSCollaboration2024,HayrapetyanCMSCollaboration2024, Blasone2009,Go2007}, massive gauge bosons \cite{Barr2022,AguilarSaavedra2023}, quantum correlations in quark-pairs \cite{TheATLASCollaboration2024,TheCMSCollaboration2024}, heavy mesons \cite{Go2007}, leptons \cite{FabbrichesiFloreaniniGabrielli2023}, neutrinos \cite{Blasone2009}, and tau leptons and bottom quarks \cite{Afik2024}.
These results indicate that it is both necessary and feasible to incorporate high-energy physics into quantum information science \cite{AfikNova2022,Afik2025}.

In this paper, we develop a comprehensive framework for gauge-invariant quantum computation and communication based on packaged quantum states.
We encode logical information within a fixed net-charge subspace and design all operations to commute with the total charge operator.
This ensures that the entire dynamics remain within a single superselection sector.
The intrinsic constraint aligns with the fundamental principles of quantum field theory.
Compared with conventional quantum information science, this packaged framework offers at least two advantages:
(1) It exploits the full structure of IQNs. For instance, simulations of lattice gauge theory become native \cite{MaLGT2025}.
(2) It leads to significant gauge-violating error suppression and fault tolerance \cite{Steane1996,Kitaev2003,Bombin2006}.

Our manuscript is organized as follows.
Section \ref{SEC:GaugeInvariantPackagedStates} develops the foundations of gauge‑invariant quantum information processing, including the necessary and sufficient conditions for single‑particle packaged superposition states.
Section \ref{SEC:PackagedQubitsGatesAndCircuits} introduces packaged qubits, gates, and circuits.
Section \ref{SEC:PackagedStatesForQuantumCommunication} presents packaged resource states for quantum communication.
Section \ref{SEC:dDDimensionalHybridPackagedSpace} lifts these constructions to high‑dimensional hybrid‑packaged sub‑spaces.
Section \ref{SEC:ErrorAnalysisAndFaultToleranceHyb} analyzes errors and fault‑tolerance in the gauge‑invariant setting.
Section \ref{SEC:PackagedQuantumErrorCorrectionCodes} is devoted to packaged quantum error‑correction codes (Shor‑like, Steane‑like, and surface codes).
Section \ref{SEC:QuantumComputationInPackagedSpace} reconstructs conventional quantum algorithms in the hybrid‑packaged sub‑space.
Section \ref{SEC:QuantumCommunicationAndCryptography} explores enhanced quantum communication and cryptography protocols.
Section \ref{SEC:MetrologyAndSensingInPackagedSpace} investigates metrology and sensing applications.
Section \ref{SEC:ExperimentalProspects} discusses experimental prospects and implementations.

\begin{table}[H]
	\centering
	\caption{Symbols used in this manuscript}
	\begin{tabular}[hbt!]{|p{2cm}|p{5cm}|p{5cm}|}
		\hline\hline
		Symbols &Meaning &Location \\
		\hline
		IQN &Internal quantum number &Sec.~\ref{SEC:Introduction} \\
		\hline
		DOF &Degrees of freedom &Sec.~\ref{SEC:Introduction} \\
		\hline
		$\hat{Q}$ &Charge operator &Sec.~\ref{SEC:ValidPackagedSuperpositionStatesOfSingleParticle} \\
		\hline
		$d$ &Dimension of internal space &Sec.~\ref{SEC:ConstructingdDimensionalPackagedHilbertSpace} \\
		\hline
		$D$ &Dimension of external space &Sec.~\ref{SEC:ConstructingDDimensionalExternalHilbertSpace} \\
		\hline
		$N$ &Dimension of single-index hybrid-packaged subspace &Sec.~\ref{SEC:dDDimensionalHybridPackagedQudits} \\
		\hline
	\end{tabular}
\end{table}

\section{Gauge-Invariant Packaged States}
\label{SEC:GaugeInvariantPackagedStates}

In earlier works \cite{Ma2025,MaGroup2025}, we introduced the packaging principle and packaged entangled states in gauge-invariant quantum systems.
The packaging principle (see Appendix \ref{SEC:PackagingPrinciple}) states that, whenever a local gauge symmetry and superselection rules are present, the \textbf{internal quantum numbers (IQNs)} must appear in indivisible packaged blocks.
One cannot entangle only a subset of these IQNs while leaving the rest unentangled.
All the IQNs associated with a creation operator are locked together into a single \textbf{irreducible representation (irrep)} \cite{Weyl1925,Wigner1939} of the gauge group.

The packaging principle influences how quantum superpositions and entanglement can form in quantum field theories (QFTs) due to gauge constraints.
It then influences quantum computation (e.g., how to build packaged qubits, gates, and circuits that respect gauge-invariance) and quantum communication (e.g., how to send charge‐neutral or global‐only states through a channel without violating superselection).

In this section, we
propose the necessary and sufficient conditions for a valid packaged superposition state of a single particle,
show how to construct complete (maximal) orthonormal bases of multi‐particle packaged entangled states,
and discuss the advantages of packaged states in quantum information.

\subsection{Single-Particle Packaged Superposition States}
\label{SEC:SuperpositionOfSingleParticlePackagedStates}

From an application point of view, the particle packaged state $\lvert P\rangle$ and antiparticle packaged state $\lvert \bar{P}\rangle$ are trivial.
However, the packaged superposition state is nontrivial.
Due to superselection rules, not all packaged states can be coherently superposed.
Only states within the same net-charge sector are allowed to form superpositions.
We now discuss the conditions for a valid superposition of the packaged states of a single-particle.

\subsubsection{Gauged Charge and Global Quantum Numbers}

If we only consider the spin of a single-particle, then we can freely superpose single‐particle spin states $\lvert \uparrow\rangle$ and $\lvert \downarrow\rangle$ to obtain a superposition state like $\psi = \alpha\,\lvert \uparrow\rangle + \beta\,\lvert \downarrow\rangle$.
This is because spin is an external degree of freedom (DOF) and is not subject to superselection.

However, if we consider a gauge quantum number such as electric charge $\pm e$, then we cannot superpose states $\lvert +e\rangle$ and $\lvert -e\rangle$ to form a coherent superposition state like $\psi = \alpha\,\lvert +e\rangle + \beta\,\lvert -e\rangle$.
This is because states with different $\pm e$ charges belong to distinct charge sectors $\mathcal{H}_{+e}$ and $\mathcal{H}_{-e}$ and superselection rules forbid the formation of superpositions that span distinct charge sectors under local gauge group $U(1)$.

This indicates that superselection rules prevent charged single particles (or excitations differing by a gauge charge) from serving as quantum information carriers, such as
qubits in quantum computation and messenger particles in quantum communication.
To solve this problem, let us now compare the difference between a gauged and global IQN:
\begin{enumerate}
	\item Gauge Charge (e.g., electric charge $Q$, color in QCD):
	Local gauge-invariance and superselection protects gauge charge.
	The cross‐sector superpositions are disallowed.
	For example, an electron $(Q=-e)$ and positron $(Q=+e)$ lie in different charge sectors.
	
	\item Global Quantum Number (e.g., strangeness, flavor, isospin): 
	Local gauge symmetries do not enforce global quantum number.
	Therefore, superselection rules do not apply to global quantum number.
	The coherent superpositions are allowed if the net gauge charge is still the same.
	For example, flavor states such as $\lvert K^0\rangle$ and $\lvert \bar{K}^0\rangle$ each have net electric charge $0$, so we can form $\alpha\,\lvert K^0\rangle + \beta\,\lvert \bar{K}^0\rangle$.
\end{enumerate}

This distinction between gauged IQNs (which are fixed by local gauge-invariance) and global quantum numbers (which can be superposed) is important.
It uncovers the fact that only those differences that arise from global (non-gauged) quantum numbers can be used to encode quantum information.

\subsubsection{Valid Packaged Superposition States of Single‐Particle}
\label{SEC:ValidPackagedSuperpositionStatesOfSingleParticle}

Now we see that the obstacle for superposition of single‐particle packaged states comes from gauged charge. More specifically, if a particle $\lvert P\rangle$ and its antiparticle $\lvert \bar{P}\rangle$ differ by a nonzero gauge charge, then superselection forbids the superposition between state $\lvert P\rangle$ and $\lvert \bar{P}\rangle$.

However, if we select a particle $\lvert P\rangle$ and its antiparticle $\lvert \bar{P}\rangle$ that reside in the same net‐charge sector (often $Q=0$), then they differ only by a global quantum number (e.g., flavor, isospin) instead of a gauged one.
In this case, we can successfully bypass the obstacles of superselection and form physically coherent superposition states \cite{Ma2025}.
Therefore, we can choose neutral particles as quantum information carriers.
This is because all relevant states of a neutral particle (single or composite) always remain in one gauge sector and differ only in global quantum numbers, which permit superposition.

For rigor, let us now summarize the conditions for forming a valid packaged superposition states of single-particle as follows:

\begin{proposition}[Necessary and Sufficient Conditions for Packaged Superposition of a Single Particle]\label{PROP:NecessarySufficientConditionsSingleParticle}
	Let $\lvert P \rangle$ be the packaged state of a particle and $\lvert \bar{P} \rangle$ be the packaged state of its antiparticle.
	The basis of single-particle space can be written as $\{\lvert P \rangle, \lvert \bar{P} \rangle\}$.
	Then the packaged superposition state of a single particle
	\begin{equation}\label{EQ:ValidPackagedSuperpositionStatesOfSingleParticle}
		\Psi = \alpha\,\lvert P\rangle + \beta\,\lvert \bar{P}\rangle, \quad |\alpha|^2 + |\beta|^2 = 1,
	\end{equation}
	is valid (physically allowed and nontrivial) iff the following two conditions hold:
	
	\begin{enumerate}
		\item \textit{Zero Net Gauge Charge:} 
		The particle $\lvert P\rangle$ and its antiparticle $\lvert \bar{P}\rangle$ both carry zero gauge charge.
		This ensures they occupy the same superselection sector $\mathcal{H}_{Q=0}$, so $\lvert P\rangle$ and $\lvert \bar{P}\rangle$ are not separated by a gauge-induced superselection boundary.
		Let $\hat{Q}$ be the net-charge operator, we have
		\begin{equation}\label{EQ:ZeroNetGaugeChargeCond}
			\hat{Q}\lvert P\rangle = 0,\quad \hat{Q}\lvert \bar{P}\rangle = 0.
		\end{equation}
				
		\item \textit{Difference is a Global Quantum Number:}
		The particle and its antiparticle differ only by a global quantum number (flavor, strangeness, lepton number, etc.), which we denote by the operator $\hat{F}$.
		Then, the basis states must satisfy
		\begin{equation}\label{EQ:DifferenceGlobalQuantumNumber}
			\hat{F}\lvert P\rangle = f\,\lvert P\rangle,\quad \hat{F}\lvert \bar{P}\rangle = -f\,\lvert \bar{P}\rangle,
		\end{equation}
		where $f$ is a nonzero real number.
		Since $\hat{F}$ is not protected by a local gauge symmetry, coherent superpositions of the form in Eq.~(\ref{EQ:ValidPackagedSuperpositionStatesOfSingleParticle}) are physically allowed.
		The difference in a global quantum number ensures that the two states carry opposite values of the global quantum number and therefore can serve as distinct logical states for encoding quantum information.
	\end{enumerate}
	Thus, to construct physically allowed superposition states for quantum computation, one must choose particles (or composite systems) that are neutral (or that differ only by global quantum numbers).
\end{proposition}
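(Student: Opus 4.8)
The plan is to establish the proposition as a characterization: I must prove both directions of the ``iff,'' namely that the two stated conditions are jointly necessary and jointly sufficient for the superposition in Eq.~(\ref{EQ:ValidPackagedSuperpositionStatesOfSingleParticle}) to be valid, where ``valid'' means physically allowed by superselection \emph{and} nontrivial (the two basis states are genuinely distinguishable). The natural organizing principle is the superselection structure already developed in the excerpt: the total Hilbert space decomposes as a direct sum $\mathcal{H} = \bigoplus_q \mathcal{H}_q$ over charge sectors, and a coherent superposition $\alpha\lvert P\rangle + \beta\lvert\bar P\rangle$ with $\alpha,\beta \neq 0$ is physical if and only if both kets lie in a \emph{single} sector $\mathcal{H}_q$. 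I would state this decomposition explicitly as the governing fact (it follows from the $U(1)$ gauge-invariance and superselection rules invoked above) and use it as the pivot for both directions.

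For sufficiency, I would assume both conditions hold and argue directly. Condition~1, $\hat Q\lvert P\rangle = \hat Q\lvert\bar P\rangle = 0$, places both kets in the single sector $\mathcal{H}_{Q=0}$, so no gauge-induced superselection boundary separates them; hence the superposition is not forbidden and is a legitimate normalized ray once $|\alpha|^2+|\beta|^2=1$. Condition~2, $\hat F\lvert P\rangle = f\lvert P\rangle$ and $\hat F\lvert\bar P\rangle = -f\lvert\bar P\rangle$ with $f\neq 0$, guarantees via distinct $\hat F$-eigenvalues that $\lvert P\rangle$ and $\lvert\bar P\rangle$ are orthogonal and physically distinguishable (they are distinguished by a genuine, non-gauged observable), so the state is nontrivial and the two labels can serve as distinct logical values. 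Combining the two gives a physically allowed, nontrivial superposition.

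For necessity, I would argue the contrapositive on each condition. If Condition~1 fails, then at least one ket carries nonzero gauge charge, or the two carry unequal charges; either way the two kets sit in distinct sectors $\mathcal{H}_{q}\neq\mathcal{H}_{q'}$, and by the superselection decomposition any relative phase between them is unobservable, so $\alpha\lvert P\rangle+\beta\lvert\bar P\rangle$ is not a valid coherent superposition (it is a mere classical mixture across sectors). If Condition~2 fails, either $\hat F$ does not distinguish the two states ($f=0$, so both kets share the eigenvalue), in which case they are not genuinely distinct logical states and the superposition is trivial, or the distinguishing quantum number is itself gauged rather than global, which returns us to a superselection obstruction and contradicts validity. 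I would also note the sign structure $+f,-f$ is the natural particle/antiparticle relation under a charge-conjugation-odd $\hat F$, ensuring the opposite eigenvalues needed for clean logical encoding.

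The main obstacle I anticipate is making the word ``valid'' precise enough that the equivalence is a theorem rather than a restatement of definitions. ``Physically allowed'' must be pinned to the operational statement that relative phases across the superposition are observable, which is exactly the single-sector condition; ``nontrivial'' must be pinned to orthogonality/distinguishability by a non-gauged observable. Once these two operational meanings are fixed, necessity and sufficiency follow cleanly from the sector decomposition, but the care lies in the necessity direction, where I must rule out the degenerate cases (shared $\hat F$-eigenvalue, or a distinguishing charge that is secretly gauged) without circularity, appealing to the packaging principle and superselection structure established earlier rather than assuming the conclusion.
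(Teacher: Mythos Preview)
Your proposal is correct and follows essentially the same approach as the paper: both directions hinge on the superselection decomposition $\mathcal{H}=\bigoplus_q\mathcal{H}_q$, with Condition~1 ensuring both kets sit in the single sector $\mathcal{H}_{Q=0}$ (hence the superposition is physically allowed) and Condition~2 ensuring the two kets are genuinely distinct (hence the superposition is nontrivial). Your explicit use of the contrapositive for necessity and your care in pinning down ``valid'' operationally (observable relative phase plus distinguishability by a non-gauged observable) are slight sharpenings of presentation, but the underlying argument is the same as the paper's.
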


\begin{proof}
	We now show that the two conditions are necessary and sufficient for a single‐particle packaged superposition state to be physically allowed.
	
	\begin{enumerate}
		\item \textbf{Necessary:}
		Suppose Eq.~(\ref{EQ:ValidPackagedSuperpositionStatesOfSingleParticle}) is valid (physically allowed), then we have
		\begin{itemize}
			\item \textit{No Net Gauge Charge:}
			The superselection rules prevent coherent superpositions of states that carry different gauge charges.
			If either $|P\rangle$ or $|\bar{P}\rangle$ carried a nonzero gauge charge, then they would lie in different superselection sectors and could not be coherently superposed.
			Thus, requiring 
			\[
			\hat{Q}\lvert P\rangle = 0 \quad \text{and} \quad \hat{Q}\lvert \bar{P}\rangle = 0
			\]
			is necessary.
			
			\item \textit{Difference is Global, Not Gauged:} 
			If the states differed by a gauged quantum number (e.g., electric charge), then (even if they were both overall neutral) their internal structures would belong to different irreducible representations of the local gauge group (by the packaging principle) and coherent superpositions between different irreps are forbidden by superselection.
			So the difference must be in a quantity (like flavor or strangeness) that is a global quantum number, which is not subject to the superselection constraint.
		\end{itemize}

		\item \textbf{Sufficient:}
		Conversely, suppose the two conditions hold. Then
		\begin{itemize}
			\item \textit{No Net Gauge Charge:}
			If both states have zero gauge charge, then they lie in the same superselection sector (say, $\mathcal{H}_{Q=0}$).
			From the perspective of gauge-invariance, a coherent superposition is allowed.
			
			\item \textit{Difference is Global, Not Gauged:} 
			If the only distinction between the particle and its antiparticle is a global quantum number (i.e., the operator $\hat{F}$ acting nontrivially while $\hat{Q} = 0$ holds), then the states lie in the same superselection sector and any superposition 
			\[
			\Psi = \alpha\,\lvert P\rangle + \beta\,\lvert \bar{P}\rangle
			\]
			is allowed by the rules of quantum mechanics.
		\end{itemize}
		
		In other words, Eq.~(\ref{EQ:ValidPackagedSuperpositionStatesOfSingleParticle}) is valid.
	\end{enumerate}
	
	Thus, the two conditions are necessary and sufficient.
\end{proof}

\begin{remark}
	Remarks on the necessary and sufficient conditions:
	
	\begin{enumerate}
		\item The first condition $\hat{Q}\lvert P\rangle = 0,~ \hat{Q}\lvert \bar{P}\rangle = 0$ is to guarantee that the superposition is physical allowed. Only neutral particles and their antiparticles can be in the same charge sector (superselection sector) and therefore the superposition $\Psi = \alpha\,\lvert P\rangle + \beta\,\lvert \bar{P}\rangle$ is permitted.
		
		\item The second condition $\hat{F}\lvert P\rangle = f\,\lvert P\rangle,\quad \hat{F}\lvert \bar{P}\rangle = -f\,\lvert \bar{P}\rangle$ is to guarantee that the superposition is non-trivial. Otherwise, we have $\lvert P \rangle = \lvert \bar{P} \rangle$ and the superposition $\Psi = \alpha\,\lvert P\rangle + \beta\,\lvert \bar{P}\rangle$ is trivial. For example, photons are their own antiparticle, $\lvert \gamma\rangle = \lvert \bar{\gamma}\rangle$. We cannot use photons for packaged qubits.
	\end{enumerate}	
\end{remark}

\begin{corollary}\label{COR:SuperpositionStateGaugeInvariant}
	The superposition state in Eq.~(\ref{EQ:ValidPackagedSuperpositionStatesOfSingleParticle}) is gauge-invariant.
\end{corollary}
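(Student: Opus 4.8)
The plan is to use the standard characterization of gauge invariance: under the local $U(1)$ gauge group, a state transforms by the unitary $U(\theta) = e^{i\theta\hat{Q}}$, and it is gauge-invariant precisely when it is fixed by every such $U(\theta)$, equivalently when it lies in the kernel of the generator $\hat{Q}$. The key input will be the Zero Net Gauge Charge condition, Eq.~(\ref{EQ:ZeroNetGaugeChargeCond}), supplied by Proposition~\ref{PROP:NecessarySufficientConditionsSingleParticle}. I would stress at the outset that the relevant symmetry is the gauged charge $\hat{Q}$ alone, not the global quantum number $\hat{F}$, under which $\Psi$ is deliberately not invariant.

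First I would act with $U(\theta)$ on $\Psi$ and expand by linearity, $U(\theta)\Psi = \alpha\,e^{i\theta\hat{Q}}\lvert P\rangle + \beta\,e^{i\theta\hat{Q}}\lvert\bar{P}\rangle$. Expanding each exponential as a power series and invoking $\hat{Q}\lvert P\rangle = \hat{Q}\lvert\bar{P}\rangle = 0$ annihilates every term beyond the identity, so $e^{i\theta\hat{Q}}\lvert P\rangle = \lvert P\rangle$ and $e^{i\theta\hat{Q}}\lvert\bar{P}\rangle = \lvert\bar{P}\rangle$. Hence $U(\theta)\Psi = \alpha\lvert P\rangle + \beta\lvert\bar{P}\rangle = \Psi$ for all $\theta$, which is the claim. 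Equivalently, one may simply verify $\hat{Q}\Psi = \alpha\,\hat{Q}\lvert P\rangle + \beta\,\hat{Q}\lvert\bar{P}\rangle = 0$, placing $\Psi$ in the neutral sector $\mathcal{H}_{Q=0}$.

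The only real subtlety — and the step I would treat most carefully — is the extension beyond $U(1)$. For a non-abelian gauge group one must read ``zero net gauge charge'' as the statement that $\lvert P\rangle$ and $\lvert\bar{P}\rangle$ sit in the trivial (singlet) irrep, so that every generator $\hat{Q}^a$ (not merely a single Cartan charge) annihilates them; the packaging principle guarantees the IQNs form a definite irrep, and the singlet choice makes the argument go through verbatim with $U(g) = e^{i\theta^a\hat{Q}^a}$. I would close by noting a consistency check: the identical computation with $\hat{F}$ in place of $\hat{Q}$ fails, since $\hat{F}\Psi \neq 0$ in general, confirming that $\Psi$ carries nontrivial global content while remaining gauge-invariant — exactly the property required of a packaged information carrier.
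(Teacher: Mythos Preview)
Your proof is correct and follows essentially the same approach as the paper: act with the gauge transformation on each basis state, use the zero-charge condition to conclude both transform identically, and extend by linearity. The only cosmetic difference is that the paper phrases the result as $U_g\lvert P\rangle = e^{i\phi(g)}\lvert P\rangle$ and $U_g\lvert\bar P\rangle = e^{i\phi(g)}\lvert\bar P\rangle$ with a common (possibly nontrivial) phase, emphasizing that the two states lie in the \emph{same} sector, whereas you use $\hat{Q}=0$ directly to obtain the trivial phase $1$; in the neutral sector these coincide, and your added remarks on the non-abelian extension and the contrast with $\hat{F}$ go slightly beyond what the paper spells out.
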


\begin{proof}
	By definition, the particle state $|P\rangle$ and antiparticle state $|\bar{P}\rangle$ are packaged states.
	Therefore, they are gauge-invariant.	
	Let $G$ be a gauge group. Under any local gauge transformation $U_g ~ (g \in G)$, the single-particle packaged states transform as
	\[
	U_g\,\lvert P\rangle = e^{i\phi(g)}\,\lvert P\rangle,\quad U_g\,\lvert \bar{P}\rangle = e^{i\phi(g)}\,\lvert \bar{P}\rangle,
	\]
	where the phase $e^{i\phi(g)}$ is the same for both states because they lie in the same sector.
	
	Thus, the superposition state in Eq.~(\ref{EQ:ValidPackagedSuperpositionStatesOfSingleParticle}),
	\[
	\lvert \Psi\rangle = \alpha\,\lvert P\rangle + \beta\,\lvert \bar{P}\rangle,
	\]
	transforms as
	\[
		U_g\,\lvert \Psi\rangle 
		= \alpha\,U_g\,\lvert P\rangle + \beta\,U_g\,\lvert \bar{P}\rangle
		= e^{i\phi(g)} \left(\alpha\,\lvert P\rangle + \beta\,\lvert \bar{P}\rangle\right)
		= e^{i\phi(g)}\,\lvert \Psi\rangle.
	\]
	This shows that $\lvert \Psi\rangle$ is gauge-invariant and therefore remains in the same gauge sector (i.e. $\hat{Q}=0$).	 
\end{proof}

\begin{example}	\label{EXA:NeutralQuantumInformationCarrier}
	
	Examples of neutral quantum information carriers:
	
	\begin{enumerate}
		\item Photons:
		
		Since photons are their own antiparticles ($\gamma = \bar{\gamma}$), a state like $\lvert \gamma\rangle + \lvert \bar{\gamma}\rangle$ is trivial and does not lead to a distinct basis for quantum information.
				
		\item Neutral Mesons $(K^0, \bar{K}^0), (B^0, \bar{B}^0)$, etc.:
		
		Each pair has $\textit{net electric charge} = 0$.
		Their difference (e.g., strangeness for $K^0$ vs. $\bar{K}^0$) is a global quantum number, not a gauge charge.		
		Experiments show that $\alpha\,\lvert K^0\rangle + \beta\,\lvert \bar{K}^0\rangle$ states do exist physically (flavor oscillations, CP violation, etc.). 		
		Hence these can form coherent superpositions, making them candidate for quantum information carrier in principle. 
		In practical, the short lifetimes and detection complexities hamper their use in real quantum communication, but conceptually they circumvent the superselection constraints for electromagnetism.
		
		\item Neutrinos (Majorana or Dirac):
		
		If neutrinos are Majorana, each neutrino is its own antiparticle (similar to photons), then superposition of Eq.~(\ref{EQ:ValidPackagedSuperpositionStatesOfSingleParticle}) is trivial.
		No net gauge charge difference.		
		If neutrinos are Dirac but lepton number is effectively global (not a local gauge symmetry), partial superpositions of neutrino vs. antineutrino might appear in exotic contexts. 
		Practical communication is again limited by detection challenges.
	\end{enumerate}
\end{example}

\begin{remark}
	Although theoretically it is feasible to use neutral particles $(\lvert K^0\rangle, \lvert \bar{K}^0\rangle)$ as quantum information carriers, there are still challenges in practice at present:
	
	(1) Short Lifetimes and Oscillations:
	Many neutral mesons are short-lived or experience rapid flavor oscillations.
	$K^0$-$\bar K^0$ mixing lives only $ \sim 10^{-10}$ s \cite{GellMannPais1955,Good1961,Zohar2016}.
	This makes them less convenient as quantum information carrier.
	
	(2) Measurement Complexity: 
	Although experimental flavor factories do measure correlated oscillations,
	distinguishing $\lvert K^0\rangle$ from $\lvert \bar{K}^0\rangle$ or performing partial manipulations is not easy.
\end{remark}

\subsection{Pure-packaged Subspaces}
\label{SEC:PurePackagedSubspace}

We begin with the most elementary dynamical object in a gauge theory:
a Hilbert subspace that is completely insensitive to local gauge transformations yet still large enough to carry quantum information.
Such subspaces are the building blocks for all packaged qubits, qudits, codes and resource states introduced later.

In a lattice gauge theory with physical Hilbert space $\mathcal H$, we have
(i) a global conserved charge $\hat Q \;=\; \sum_{x\in\Lambda}\hat q_x$.
(ii) a continuous family of local gauge transformations $\{U_g\}_{g\in G}$ that act on $\mathcal H$.
Super‑selection forbids superpositions of different $\hat Q$ sectors.
Any dynamics that fails to commute with the gauge action will instantaneously leak a state out of the physical manifold.
Thus, we would like to single out subspaces that are automatically confined to a fixed charge sector and on which every $U_g$ acts only by a harmless global phase.

Intuitively, a pure-packaged subspace is a collection of packaged quantum states in a fixed charge sector.
We now give a formal definition for pure-packaged subspace:

\begin{definition}[Pure-packaged subspace]
	\label{DEF:PurePackagedSubspace}
	Let $G$ be a gauge group acting on the physical Hilbert space
	$\mathcal H$ via local unitaries $U_g$ and let $\hat Q$ be the conserved
	(global) charge.
	
	If a sub‑space
	$
	\mathcal H_{\mathrm{pack}}\subset\mathcal H
	$
	satisfies the following conditions:	
	
	\begin{enumerate}[label=(\alph*)]
		\item \textbf{Fixed charge.}\;
		There exists $Q_0$ such that
		$\hat Q\,\ket{\psi}=Q_0\,\ket{\psi}$
		for all $\ket{\psi}\in\mathcal H_{\mathrm{pack}}$;
		
		\item \textbf{Gauge‑invariance (singlet property).}\;
		For every $g\in G$ there is a one-dimensional unitary character
		$\chi(g)\in\mathrm U(1)$, independent of $\ket{\psi}$, such that
		\[
		U_g\,\ket{\psi}= \chi(g)\,\ket{\psi},
		\qquad\forall\,\ket{\psi}\in\mathcal H_{\mathrm{pack}}.
		\]
		(Equivalently, $U_g$ acts as $\chi(g)\,\mathbf 1$
		on~$\mathcal H_{\mathrm{pack}}$).
		\footnote{%
			If $G$ is non‑Abelian, then every one‑dimensional unitary
			representation $\chi$ factors through the Abelianisation
			$G/[G,G]$.
			The special case $\chi(g)\equiv1$
			gives the usual notion of a {\em gauge singlet}.}
	\end{enumerate}
	then we say that $\mathcal H_{\rm pack}$ is a \textbf{pure-packaged subspace} of $\mathcal H$.
\end{definition}

\paragraph{Remarks.}

\begin{itemize}
	\item[(i)]
	The definition is manifestly gauge‑group agnostic:
	it applies to any compact Lie group or discrete group, Abelian or non‑Abelian. 
	In the non‑Abelian case condition (b) simply says that $\mathcal H_{\rm pack}$ carries the trivial	1‑dimensional irrep (or any other 1‑dimensional character, if one exists).
	
	\item[(ii)]
	No dimensionality is imposed. 
	A 2‑dimensional packaged subspace will later be called a pure-packaged qubit. $d$‑dimensional ones give packaged qudits.
	
	\item[(iii)]
	Let $P_{\rm pack}$ be the projector onto $\mathcal H_{\rm pack}$.
	Then condition (b) implies $[U_g,P_{\rm pack}]=0$ and $[\hat Q,P_{\rm pack}]=0$.
\end{itemize}

A pure-packaged subspace has the following basic properties:

\begin{proposition}\label{prop:PackagedProperties}
	Let $\mathcal H_{\rm pack}$ satisfy
	Definition~\ref{DEF:PurePackagedSubspace}.
	Then
	\begin{enumerate}[label=\roman*.]
		\item \textbf{Stability.}\;
		Every bounded operator $V$ that commutes with $\hat Q$
		preserves the packaged subspace:
		$ V\,\mathcal H_{\rm pack}\subseteq\mathcal H_{\rm pack}$.
		
		\item \textbf{Existence (constructive).}\;
		Fix any charge sector $\mathcal H_{Q_0}$ with $\dim\mathcal H_{Q_0}\ge 1$,
		choose an orthonormal set $\{\ket{\psi_j}\}_{j=1}^k\subset\mathcal H_{Q_0}$
		and set $\mathcal H_{\rm pack}:=\mathrm{span}\{\ket{\psi_j}\}$.
		Because all $\ket{\psi_j}$ transform under the same 1‑dimensional representation $\chi$, the span is pure‑packaged.
	\end{enumerate}
\end{proposition}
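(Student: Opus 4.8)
The plan is to prove the two parts separately, since each reduces to checking conditions (a) and (b) of Definition~\ref{DEF:PurePackagedSubspace} against a single commutation relation rather than to any analytic estimate.

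\textbf{Part i (Stability).} I would start from an arbitrary $\ket{\psi}\in\mathcal H_{\rm pack}$ and use condition (a) to write $\hat Q\ket{\psi}=Q_0\ket{\psi}$. Feeding in $[V,\hat Q]=0$ gives $\hat Q(V\ket{\psi})=V\hat Q\ket{\psi}=Q_0(V\ket{\psi})$, so $V\ket{\psi}$ is again a charge-$Q_0$ eigenvector and hence lands in the ambient sector $\mathcal H_{Q_0}$. This is the easy half: charge conservation alone confines the image to the correct superselection sector. The main obstacle is the second half, namely upgrading ``stays in $\mathcal H_{Q_0}$'' to ``stays in $\mathcal H_{\rm pack}$'' when $\mathcal H_{\rm pack}$ is a \emph{proper} subspace of the charge sector: commuting with $\hat Q$ alone does not prevent $V$ from rotating a packaged vector into the orthogonal complement $\mathcal H_{\rm pack}^{\perp}\cap\mathcal H_{Q_0}$. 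I would resolve this by invoking Remark~(iii), which yields $[U_g,P_{\rm pack}]=0$, and by noting that the inclusion is clean in two regimes: when $\mathcal H_{\rm pack}=\mathcal H_{Q_0}$ it is immediate, and for a proper subspace one must additionally require $V$ to commute with the gauge action (equivalently with $P_{\rm pack}$), which in the abelian case already follows from $[V,\hat Q]=0$ because $U_g=e^{i\theta\hat Q}$. I expect to flag this as the delicate point and to state precisely the hypothesis under which preservation of the full packaged subspace holds.

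\textbf{Part ii (Existence).} Here I would simply verify the two defining conditions for the span. Condition (a) is immediate: each $\ket{\psi_j}\in\mathcal H_{Q_0}$ satisfies $\hat Q\ket{\psi_j}=Q_0\ket{\psi_j}$, so by linearity every vector in $\mathrm{span}\{\ket{\psi_j}\}$ is a charge-$Q_0$ eigenvector, and the span is a fixed-charge subspace. The substantive step is condition (b): the span must carry a single one-dimensional character $\chi$. I would argue this is automatic in the abelian case, where $U_g=e^{i\theta\hat Q}$ acts on the entire sector as the scalar $e^{i\theta Q_0}$, forcing $\chi(g)=e^{i\theta Q_0}$ and, by linearity, $U_g=\chi(g)\,\mathbf 1$ on the span.

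The only genuine obstacle I anticipate is the non-abelian bookkeeping in condition (b), not any hard calculation. For a non-abelian $G$ the charge sector $\mathcal H_{Q_0}$ can carry a higher-dimensional irrep, so ``all $\ket{\psi_j}$ transform under the same $1$-dimensional $\chi$'' is not automatic from membership in $\mathcal H_{Q_0}$ alone. I would therefore restrict the chosen $\ket{\psi_j}$ to the trivial (or a fixed $1$-dimensional, per the footnote to Definition~\ref{DEF:PurePackagedSubspace}) isotypic component, so that $U_g\ket{\psi_j}=\chi(g)\ket{\psi_j}$ with a common $\chi$; linearity then gives $U_g=\chi(g)\,\mathbf 1$ on the span. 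With (a) and (b) both established, the span satisfies Definition~\ref{DEF:PurePackagedSubspace} and is pure-packaged, completing the construction.
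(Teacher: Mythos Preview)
Your plan follows the same two-step commutation argument the paper uses, but you are considerably more careful about the hypotheses. For Part~i the paper writes: from $[V,\hat Q]=0$ deduce $V\ket{\psi}\in\mathcal H_{Q_0}$, then assert $U_gV\ket{\psi}=VU_g\ket{\psi}=\chi(g)V\ket{\psi}$ and conclude $V\ket{\psi}\in\mathcal H_{\rm pack}$. You correctly spot that the second step silently uses $[V,U_g]=0$, which is not in the hypothesis and only follows from $[V,\hat Q]=0$ in the abelian setting $U_g=e^{i\theta\hat Q}$; the paper does not flag this. You also correctly observe that even with both eigenvalue conditions verified, one has only placed $V\ket{\psi}$ in the $\chi$-isotypic part of $\mathcal H_{Q_0}$, not in $\mathcal H_{\rm pack}$ itself when the latter is a proper subspace---the paper elides this by implicitly treating $\mathcal H_{\rm pack}$ as the full singlet sector. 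For Part~ii the paper simply invokes the trivial character in one line; your separation of the abelian case (any orthonormal set in $\mathcal H_{Q_0}$ works) from the non-abelian case (one must pick from a fixed one-dimensional isotypic component) is a refinement the paper omits.

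One caution on your Part~i resolution: commuting with all $U_g$ is \emph{not} equivalent to commuting with $P_{\rm pack}$, contrary to your parenthetical. The former only forces $V$ to preserve each isotypic component; if $\mathcal H_{\rm pack}$ is strictly smaller than its isotypic component, $V$ can still rotate vectors out of it while commuting with every $U_g$. The clean fixes are either to take $\mathcal H_{\rm pack}$ equal to the full $\chi$-isotypic piece of $\mathcal H_{Q_0}$ (which is how the paper tacitly uses it downstream), or to impose $[V,P_{\rm pack}]=0$ directly as the stability hypothesis.
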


\begin{proof}
	\emph{(i)}\;
	If $[\hat Q,V]=0$ and $\hat Q\ket{\psi}=Q_0\ket{\psi}$ then
	$\hat Q(V\ket{\psi})=V\hat Q\ket{\psi}=Q_0(V\ket{\psi})$,
	so $V\ket{\psi}\in\mathcal H_{Q_0}$.
	Moreover
	$U_g V\ket{\psi}=V U_g\ket{\psi}
	=\chi(g)\,V\ket{\psi}$,
	hence $V\ket{\psi}\in\mathcal H_{\rm pack}$. 
	Linearity completes the argument.
	
	\emph{(ii)}\;
	The trivial character $\chi(g)\!\equiv\!1$ always exists, so the set
	chosen clearly fulfils conditions (a) and (b).
\end{proof}

The pure-packaged subspaces defined here are the foundation for
multi-particle packaged states (Sec.~\ref{SEC:MultiParticlePackagedStates}),
packaged qubits/qudits (Sec.~\ref{SEC:PackagedQubits}),
packaged gates (Sec.~\ref{SEC:PurePackagedQubitGates}),
packaged circuits (Sec.~\ref{SEC:PackagedQuantumCircuits}),
packaged messenger (Sec.~\ref{SEC:PackagedMessengers}),
and packaged resource states (Sec.~\ref{SEC:PackagedResourceStates}).

\subsection{Multi-Particle Packaged States}
\label{SEC:MultiParticlePackagedStates}

In Ref. \cite{Ma2025}, we showed that for each subspace $\mathcal{H}_Q$ with net charge $Q$, we can construct a complete orthonormal basis of packaged entangled states using a variant of the Gram-Schmidt procedure.
This property is crucial for applications in quantum information because it guarantees the existence of a maximal orthonormal set $\{\lvert \Psi_i\rangle\}$ of packaged entangled states spanning $\mathcal{H}_Q$. 
Consequently, Bell-like measurements become possible \cite{NielsenChuang2010}.

Before constructing entangled resources, we prove the closure property of pure‑packaged sectors.

\subsubsection{Tensor Product of Pure-Packaged Subspaces}
\label{sec:MultiParticlePackagedStates}

The building blocks of every many‑body protocol are the single‑particle
pure-packaged subspaces introduced in Definition~\ref{DEF:PurePackagedSubspace}.
We construct the following lemma to show how they combine.

\begin{lemma}[Tensor Product Closure of Pure-packaged Subspaces]
	\label{LEM:TensorProductPackaged}
	Let $\mathcal H^{(1)}_{\rm pack}\subset\mathcal H^{(1)}$ and
	$\mathcal H^{(2)}_{\rm pack}\subset\mathcal H^{(2)}$ be pure-packaged
	subspaces with the same fixed charge $Q_0$ (possibly on different physical
	systems).
	Then their tensor product
	\[
	\mathcal H^{(1)}_{\rm pack}\;\otimes\;
	\mathcal H^{(2)}_{\rm pack}
	\;\subset\;
	\bigl(\mathcal H^{(1)}\!\otimes\!\mathcal H^{(2)}\bigr)
	\]
	is also a pure-packaged subspace.
\end{lemma}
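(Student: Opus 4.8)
The plan is to verify directly the two defining conditions of Definition~\ref{DEF:PurePackagedSubspace} for the product subspace $\mathcal H^{(1)}_{\rm pack}\otimes\mathcal H^{(2)}_{\rm pack}$, exploiting the standard rule that additive conserved charges add under tensoring while group representations multiply. First I would fix notation for the composite system: the total charge operator acts as $\hat Q_{\rm tot}=\hat Q^{(1)}\otimes\mathbf 1^{(2)}+\mathbf 1^{(1)}\otimes\hat Q^{(2)}$, and the gauge group acts diagonally through $U_g^{\rm tot}=U_g^{(1)}\otimes U_g^{(2)}$. These are the canonical lifts of the single-system charge $\hat Q=\sum_x\hat q_x$ and the local unitaries $U_g$ to the joint Hilbert space.

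For condition (a), I would take an arbitrary product vector $\ket{\psi_1}\otimes\ket{\psi_2}$ with $\ket{\psi_i}\in\mathcal H^{(i)}_{\rm pack}$ and apply $\hat Q_{\rm tot}$. Since each factor is a $Q_0$-eigenstate of its own charge operator, the Leibniz-type action of $\hat Q_{\rm tot}$ yields the eigenvalue $Q_0+Q_0=2Q_0$. Every product basis vector therefore sits in the single sector $\mathcal H_{2Q_0}$, and because $\hat Q_{\rm tot}$ is linear this eigenvalue persists on arbitrary linear combinations; hence the span has a fixed charge, establishing (a). For condition (b), I would apply $U_g^{\rm tot}$ to the same product vector and use that each factor transforms by its own one-dimensional character, $U_g^{(1)}\ket{\psi_1}=\chi_1(g)\ket{\psi_1}$ and $U_g^{(2)}\ket{\psi_2}=\chi_2(g)\ket{\psi_2}$, so the product picks up the scalar $\chi_1(g)\chi_2(g)$. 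Setting $\chi(g):=\chi_1(g)\chi_2(g)$, I would note that the pointwise product of two $\mathrm U(1)$-valued group homomorphisms is again a $\mathrm U(1)$-valued homomorphism, so $\chi$ is a legitimate one-dimensional character, and it is manifestly independent of the chosen state.

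The only real subtlety — and the step I would treat most carefully — is that a generic element of the tensor product subspace is not a single product vector but a superposition $\sum_k c_k\,\ket{\phi_k}\otimes\ket{\eta_k}$. The argument goes through precisely because \emph{every} product basis vector shares the same charge eigenvalue $2Q_0$ and the same character $\chi$; this uniformity is exactly what the clause ``independent of $\ket{\psi}$'' in Definition~\ref{DEF:PurePackagedSubspace}(b) guarantees for each factor. I would therefore emphasize that linearity closes the argument only because the scalars $2Q_0$ and $\chi(g)$ do not vary across the spanning set — if the characters of the two factors were state-dependent, a superposition could fail to transform by a single global phase. With that uniformity in hand, both $\hat Q_{\rm tot}$ and $U_g^{\rm tot}$ act as fixed scalars on the entire span, so $\mathcal H^{(1)}_{\rm pack}\otimes\mathcal H^{(2)}_{\rm pack}$ satisfies conditions (a) and (b) and is therefore pure-packaged, completing the proof.
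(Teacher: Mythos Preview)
Your proposal is correct and follows essentially the same approach as the paper's proof, which also verifies the two conditions of Definition~\ref{DEF:PurePackagedSubspace} by noting that charges add to $2Q_0$ and that $U_g\otimes U_g$ acts as the product character times the identity. Your treatment is simply more detailed---you spell out the extension from product vectors to arbitrary superpositions via linearity and allow for possibly distinct characters $\chi_1,\chi_2$ on the two factors, whereas the paper's two-sentence proof tacitly assumes a common $\chi$ and leaves the linearity step implicit.
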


\begin{proof}
	Both factors lie in charge sector $Q_0$, so their tensor product lies
	in charge $2Q_0$ (or still $Q_0$ if the charges are additive mod
	something). 
	$\forall~ g \in G$, the action is
	$U_g\!\otimes\!U_g
	=\chi(g)\chi(g)\,{\bf1}\otimes{\bf1}$,
	which is a global phase.
	Thus, the tensor product is pure‑packaged.
\end{proof}

For a multi-particle system, it is much simple to stay in a fixed net-charge sector.
For simplicity, one may choose the neutral sector $Q_0=0$.
This is exactly the setting we will use in our packaged Shor, QKD and error‑correction constructions.

\subsubsection{Valid Packaged Superposition States of Multi‑Particle}
\label{SEC:ValidPackagedSuperpositionStatesOfMultiParticle}

In Sec.~\ref{SEC:ValidPackagedSuperpositionStatesOfSingleParticle}, we discussed the necessary and sufficient conditions for a single‑particle packaged superposition.
Here we do the same thing to multi‑particle.

For an $n$‑particle system, the physical Hilbert space is the tensor product
\[
\mathcal H_{\text{phys}}
\;=\;
\bigl(\mathcal H_{\text{single}}\bigr)^{\!\otimes n},
\qquad
\hat Q_{\text{tot}}
=\sum_{i=1}^{n}\hat Q^{(i)},
\]
where each factor carries the local action of the gauge group~$G$.
A packaged $n$‑particle state is any vector that lies in the simultaneous kernel of $\hat Q_{\text{tot}}$ and of all local Gauss-law generators
$\{\hat G_x\}_{x\in\Lambda}$.
Because $\ker\hat G_x$ is a pure‑packaged sub‑space
(Lemma~\ref{LEM:TensorProductPackaged}),
the set of $n$‑particle packaged states is itself pure‑packaged.

By Lemma \ref{LEM:TensorProductPackaged}, the tensor product of
packaged single‑particle states is still packaged.
The only remaining question is when superpositions of different such $n$‑particle vectors are physical?
We answer this question by building the next proposition.

\begin{proposition}[Necessary and Sufficient Conditions for Packaged Superposition of Multi-Particle]\label{PROP:NecessarySufficientConditionsMultiParticle}
	Let	$G$ be a compact gauge group that acts on the physical Hilbert space $\mathcal H$ with a global conserved charge operator $\hat Q$ and a family of local Gauss-law generators $\{\hat G_x\}_{x}$.	
	Let $\ket{\Psi_{1}},\ket{\Psi_{2}}\in\mathcal H$ be two $n$-particle packaged states.
	Consider the normalized superposition
	\begin{equation}\label{EQ:ValidPackagedSuperpositionStatesOfMultieParticle}
		\ket{\Psi}\;=\;\alpha\,\ket{\Psi_{1}}\;+\;\beta\,\ket{\Psi_{2}},
		\quad |\alpha|^{2}+|\beta|^{2}=1 .
	\end{equation}
	The state $\ket{\Psi}$ is valid (physically admissible and non‑trivial) if and only if the following conditions hold:
	\begin{description}
		\item[C1.] Fixed total charge:	
		$$
		\hat Q\,\ket{\Psi_{k}} \;=\; Q\,\ket{\Psi_{k}}, 
		\quad k=1,2 ,
		$$
		for one and the same eigenvalue $Q$.
		
		\item[C2.] Gauss law at every site:		
		$$
		\hat G_{x}\,\ket{\Psi_{k}} \;=\; 0,
		\quad \forall x,\;k=1,2 .
		$$
		
		\item[C3.] Identical local gauge character:
		Every local matter transformation $U_{g}^{(i)}$ acts on both states with the same one‑dimensional representation (character) $\chi(g)$:		
		$$
		U_{g}^{(i)}\ket{\Psi_{1}} = \chi(g)\,\ket{\Psi_{1}}, 
		\quad
		U_{g}^{(i)}\ket{\Psi_{2}} = \chi(g)\,\ket{\Psi_{2}},
		\quad \forall g\in G,\;\forall i.
		$$
		
		\item[C4.] Linear independence:
		$\ket{\Psi_1}$ and $\ket{\Psi_2}$ are not proportional:
		$$
		\bigl|\langle\Psi_1\bigl|\Psi_2\rangle\bigr|<1.
		$$
	\end{description}
\end{proposition}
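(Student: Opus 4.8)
The plan is to mirror the two-part (necessity/sufficiency) structure used for the single-particle case in Proposition~\ref{PROP:NecessarySufficientConditionsSingleParticle}, now lifting every argument to the $n$-fold tensor product by means of Lemma~\ref{LEM:TensorProductPackaged}. The sufficiency direction I expect to be essentially a linearity computation: assuming C1--C4, I would apply $\hat Q$, each $\hat G_x$, and each $U_g^{(i)}$ to $\ket{\Psi}=\alpha\ket{\Psi_1}+\beta\ket{\Psi_2}$ term by term. Condition C1 yields $\hat Q\ket{\Psi}=Q\ket{\Psi}$, so $\ket{\Psi}$ occupies a single charge sector; C2 yields $\hat G_x\ket{\Psi}=0$ for every $x$, so $\ket{\Psi}$ lies in the physical (Gauss-law) manifold; and C3 yields $U_g^{(i)}\ket{\Psi}=\chi(g)\ket{\Psi}$, exactly as in the argument of Corollary~\ref{COR:SuperpositionStateGaugeInvariant}, so the superposition transforms by a single global phase and is gauge-invariant. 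Finally C4 guarantees the state is genuinely two-dimensional rather than a rescaled copy of one component, which is what establishes non-triviality.

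For necessity I would argue the contrapositive for each condition in turn. If C1 fails, the two components lie in distinct eigenspaces of $\hat Q$ and superselection forbids their coherent superposition---the same obstruction identified in the single-particle discussion. If C2 fails at some site $x_0$, then $\ket{\Psi_k}$ leaves the kernel of $\hat G_{x_0}$ and is therefore not a physical packaged state at all, contradicting the hypothesis that $\ket{\Psi_1},\ket{\Psi_2}$ are packaged. If C3 fails---say $U_g^{(i)}$ sends $\ket{\Psi_1}\mapsto\chi_1(g)\ket{\Psi_1}$ and $\ket{\Psi_2}\mapsto\chi_2(g)\ket{\Psi_2}$ with $\chi_1\neq\chi_2$---then $U_g^{(i)}\ket{\Psi}=\alpha\,\chi_1(g)\ket{\Psi_1}+\beta\,\chi_2(g)\ket{\Psi_2}$ is not a scalar multiple of $\ket{\Psi}$, so the gauge action rotates $\ket{\Psi}$ into a distinct ray and the state is gauge-variant, hence inadmissible as a definite physical state. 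If C4 fails, $\ket{\Psi_1}$ and $\ket{\Psi_2}$ are proportional, the superposition collapses to a single packaged state, and the construction is trivial.

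The main obstacle I anticipate is disentangling the logical dependence among C1, C2, and C3 so that the conditions read as genuinely complementary rather than silently redundant. Since the $\hat G_x$ generate the connected local gauge transformations, C2 already fixes the character of each $\ket{\Psi_k}$ on that subgroup (forcing $\chi\equiv 1$ there), and in the Abelian case $U_g=e^{i\theta\hat Q}$ makes C1 subsume the content of C3 outright. I would therefore state carefully that C3 governs the residual global part of the gauge action---the one-dimensional character factoring through the abelianisation, in the sense of the footnote to Definition~\ref{DEF:PurePackagedSubspace}---while C1 fixes the associated global charge label. To make this precise I would invoke the stability property of Proposition~\ref{prop:PackagedProperties}(i): every operator commuting with $\hat Q$ preserves the packaged subspace, so the operative content of C3 is that $\ket{\Psi_1}$ and $\ket{\Psi_2}$ share not merely a charge sector but the same one-dimensional representation label, which is exactly what makes their span closed under the \emph{full} gauge action. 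Handling this cleanly---articulating what C3 adds beyond C2 in the non-Abelian setting, rather than treating the four conditions as a minimal independent set---is the delicate step.
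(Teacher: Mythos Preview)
Your proposal is correct and matches the paper's proof almost exactly: both directions proceed by direct linearity checks for sufficiency and contrapositive-style case analysis for necessity, with no further machinery beyond what you describe. Your final paragraph on disentangling the logical dependence among C1--C3 goes beyond what the paper attempts---the paper treats the four conditions as a bundle (jointly necessary and sufficient) and never claims or proves they are independent, so you need not resolve that issue to reproduce the original argument.
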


\begin{proof}
	C1, C2, and C3 guarantee physical admissibility, i.e., not forbidden by any superselection rule.
	C4 guarantees non‑triviality.
	
	\smallskip\noindent
	\emph{I. Necessity.}	
	Assume $\ket{\Psi}$ is valid $\Longrightarrow$ C1, C2, C3, and C4.
	
	\begin{enumerate}
		\item Physically admissible
		
		Because $\ket{\Psi}$ is physical, it must lie in a single superselection sector.
		That forces each component $\ket{\Psi_k}$ to share the same eigenvalue of $\hat Q$ and to satisfy $\hat G_x=0$.
		Otherwise the superposition would lie across orthogonal sectors and be forbidden.
		Thus, we obtain C1 and C2.
		
		Moreover gauge‐invariance of the superposition requires that all local gauge transformations $U_{g}^{(i)}$ act by one and the same phase on $\ket{\Psi_1}$ and $\ket{\Psi_2}$.
		Otherwise, $U_{g}^{(i)}\ket{\Psi}$ would contain relative phases depending on $g$ and the superposition would not transform by a single overall phase.
		This will violate superselection.
		Thus, we must have C3.
		
		\item Non‑triviality.
		If $\ket{\Psi_{1}}\propto\ket{\Psi_{2}}$ the superposition collapses to a single vector and carries no logical degree of freedom, contradicting the premise.
	\end{enumerate}	
	
	\smallskip\noindent
	\emph{II. Sufficiency.}
	
	Assume C1, C2, C3, and C4 $\Longrightarrow$ $\ket{\Psi}$ is physically admissible and non‑trivial.
	\begin{enumerate}
		\item Physically admissible.
		
		By C1 and C2, any linear combination of two common eigenvectors of $\hat Q$ (with eigenvalue $Q$) and of all $\hat G_x$ (with eigenvalue $0$) is again an eigenvector with the same set of eigenvalues.
		Thus, $\ket{\Psi}$ lies in the same super‑selection block as $\ket{\Psi_{1,2}}$.
		
		By C3, every local $U_{g}^{(i)}$ acts as $\chi(g)\mathbf 1$ on each component, then it also acts as $\chi(g)\mathbf 1$ on their superposition:	
		$$
		U_{g}^{(i)}\ket{\Psi}
		= \chi(g)\,\ket{\Psi},
		\qquad \forall g,i .
		$$	
		Thus $\ket{\Psi}$ is gauge‑invariant up to a global phase.
		
		Put these two together, we conclude that $\ket{\Psi}$ is physically admissible (not forbidden by any superselection rule).
		
		\item Non‑triviality.
		By C4, the two basis states are linearly independent, so $\ket{\Psi}$ genuinely spans a two‑dimensional logical subspace when $(\alpha,\beta)$ vary.
	\end{enumerate}
	
	Hence $\ket{\Psi}$ is a valid multi-particle packaged superposition state.
\end{proof}

\begin{remark}[Relation to single‑particle case]
	For $n=1$ the total‑charge operator reduces to $\hat Q$ and the
	Gauss‑law constraints act trivially, so
	Proposition~\ref{PROP:NecessarySufficientConditionsMultiParticle}
	reduces to
	Proposition~\ref{PROP:NecessarySufficientConditionsSingleParticle}.
\end{remark}

\begin{corollary}[Gauge‑invariance]
	\label{COR:MultiSuperpositionGaugeInvariant}
	Every state~$\ket{\Phi}$ satisfying the conditions of
	Proposition~\ref{PROP:NecessarySufficientConditionsMultiParticle}
	obeys
	$
	U_g^{\otimes n}\ket{\Phi}=e^{i\phi(g)}\ket{\Phi},
	\; \forall g\in G,
	$
	and thus remains inside the packaged sector.
\end{corollary}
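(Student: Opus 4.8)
The plan is to prove Corollary~\ref{COR:MultiSuperpositionGaugeInvariant} as a direct consequence of condition C3 in Proposition~\ref{PROP:NecessarySufficientConditionsMultiParticle}, together with linearity of the gauge action. The corollary is essentially a packaging-up of the gauge-invariance reasoning already carried out inside the sufficiency half of the proposition's proof, so the work is to assemble those pieces cleanly rather than to introduce new machinery. First I would recall that any state $\ket{\Phi}$ meeting the hypotheses is, by definition, a normalized superposition $\ket{\Phi}=\alpha\ket{\Psi_1}+\beta\ket{\Psi_2}$ of two $n$-particle packaged states both satisfying C1--C3; in particular each component transforms under every local gauge transformation by the \emph{same} one-dimensional character $\chi(g)$.

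Next I would set $\phi(g)$ by writing $\chi(g)=e^{i\phi(g)}$, which is legitimate precisely because $\chi(g)\in\mathrm U(1)$ is a one-dimensional unitary character. I would then compute the action of the global gauge transformation $U_g^{\otimes n}$ on $\ket{\Phi}$ by expanding over the two components and pulling the scalars $\alpha,\beta$ through the linear operator:
\[
U_g^{\otimes n}\ket{\Phi}
=\alpha\,U_g^{\otimes n}\ket{\Psi_1}+\beta\,U_g^{\otimes n}\ket{\Psi_2}
=\chi(g)\bigl(\alpha\ket{\Psi_1}+\beta\ket{\Psi_2}\bigr)
=e^{i\phi(g)}\ket{\Phi}.
\]
This reproduces exactly the eigen-phase relation claimed in the corollary. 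The final clause---that $\ket{\Phi}$ remains inside the packaged sector---then follows by invoking C1 and C2: since $\hat Q\ket{\Psi_k}=Q\ket{\Psi_k}$ and $\hat G_x\ket{\Psi_k}=0$ for both components, linearity gives $\hat Q\ket{\Phi}=Q\ket{\Phi}$ and $\hat G_x\ket{\Phi}=0$ for all $x$, so $\ket{\Phi}$ lies in the same fixed-charge, Gauss-law-satisfying subspace, which is pure-packaged by Lemma~\ref{LEM:TensorProductPackaged}.

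I do not anticipate a genuine obstacle here, as the statement is a corollary whose content is logically contained in the proposition just proved. The one point that deserves a word of care is the relationship between the per-site local transformations $U_g^{(i)}$ appearing in C3 and the global tensor-product operator $U_g^{\otimes n}$ written in the corollary: I would note that $U_g^{\otimes n}$ is the composition of the commuting local actions, so applying it factor by factor multiplies the state by $\chi(g)$ once per site, yet because each $\ket{\Psi_k}$ is already a simultaneous eigenvector of every $U_g^{(i)}$ with the \emph{same} character, the consistent interpretation---and the one matching the proposition---is that the global transformation yields the single overall phase $e^{i\phi(g)}$. Making this bookkeeping explicit is the only step requiring attention; everything else is immediate from linearity.
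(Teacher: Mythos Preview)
Your proposal is correct and mirrors the paper's approach: the paper states this corollary without a separate proof, treating it as immediate from the sufficiency argument of Proposition~\ref{PROP:NecessarySufficientConditionsMultiParticle}, where it already observes that C3 plus linearity give $U_g^{(i)}\ket{\Psi}=\chi(g)\ket{\Psi}$ for the superposition. Your write-up simply makes that implicit step explicit, and your cautionary remark about $U_g^{(i)}$ versus $U_g^{\otimes n}$ is well taken---the paper is loose on this bookkeeping, but since either reading yields a single overall phase the physical conclusion is unaffected.
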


\paragraph{Examples.}

\begin{enumerate}
	\item \textbf{$K^0\bar K^0$-pair Bell states.}\;
	Each neutral kaon is individually a packaged single‑particle
	state.
	The two‑particle Bell state
	$
	\tfrac1{\sqrt2}\bigl(\ket{K^0}\ket{\bar K^0}
	\pm\ket{\bar K^0}\ket{K^0}\bigr)
	$
	satisfies total electric charge $0$ (C1), Gauss-law $\hat G_{x}\,\ket{\Psi_{k}} = 0$ (C2), identical gauge
	character (C3), and are linearly independent (orthonormal).
	Thus, the Bell superposition is valid.
	
	\item \textbf{Electron-positron pair vs.\ muon-antimuon pair.}\;
	The states
	$
	\ket{e^{+}e^{-}}
	$
	and
	$
	\ket{\mu^{+}\mu^{-}}
	$
	carry the same gauge charge profile (each is a QED singlet) but
	differ by global flavour number.
	Coherent superpositions are allowed in principle, although
	unstable in practice because of electroweak interactions.
	
	\item \textbf{Photon pairs.}\;
	Each photon is its own antiparticle, so condition C4 fails
	(no global observable distinguishes the two basis vectors).
	The corresponding superposition is therefore trivial and cannot offer
	logical degree of freedom for packaged encoding.
\end{enumerate}

\subsubsection{Two-Particle Packaged Product States}

Consider a two-particle system in which $\lvert P \rangle$ and $\lvert \bar{P} \rangle$ are the packaged states of a particle and its antiparticle, respectively.
Under the charge conjunction $\hat{C}$, which swaps $\lvert P \rangle$ and $\lvert \bar{P} \rangle$, the product basis states are
\begin{equation}\label{EQ:ProductBasisStatesOfTwoParticleSystem}
	\{\lvert P\,P \rangle, \lvert P\,\bar{P} \rangle, \lvert \bar{P}\,P \rangle, \lvert \bar{P}\,\bar{P} \rangle\},
\end{equation}
where each term is defined as a tensor product (e.g., $\lvert P\,\bar{P}\rangle \equiv \lvert P \rangle \otimes \lvert \bar{P} \rangle$). 

If $P$ has charge $+q$ and $\bar{P}$ has charge $-q$,
then the states $\lvert P\,P \rangle$ and $\lvert \bar{P}\,\bar{P} \rangle$ lie in charge sectors with net charge $+2q$ and $-2q$, respectively, while only the neutral sector ($Q=0$) supports nontrivial entanglement. 

These product basis states then fall into three charge sectors:
\begin{itemize}
	\item Charge $Q=2q$: $\lvert P\,P \rangle$.
	\item Charge $Q=0$: $\lvert P\,\bar{P} \rangle, \lvert \bar{P}\,P \rangle$.
	\item Charge $Q=-2q$: $\lvert \bar{P}\,\bar{P} \rangle$.
\end{itemize}

\subsubsection{Two-Particle Packaged Entangled States}
\label{SEC:TwoParticlePackagedEntangledStates}

Because a particle and its antiparticle are distinguishable, the composite states are not subject to exchange-symmetry constraints.

\paragraph{(1) Pure Packaged Entangled States.}
To discuss the packaged entangled states, we assume the superselection rule that forbids superpositions of states with different net charge (here, the net difference between the number of $P$’s and $\bar{P}$’s).

Because entanglement requires a Hilbert space of at least two dimensions, 
only states from charge $Q=0$ sector satisfies this requirement.
Therefore, we select the physical Hilbert space: 
\[
\mathcal{H}_{Q=0} = \operatorname{span}\Bigl\{\,\lvert P\,\bar{P} \rangle,\; \lvert \bar{P}\,P \rangle\,\Bigr\}.
\]

This is smaller Hilbert space, which make the packaged quantum communication protocols simpler.
A general state in this two-dimensional subspace is given by
\[
\lvert \Psi \rangle = \alpha\,\lvert P\,\bar{P} \rangle + \beta\,\lvert \bar{P}\,P \rangle,\quad \alpha,\beta\in\mathbb{C}, \quad | \alpha |^2 + | \beta |^2 = 1.
\]

Applying the Gram-Schmidt procedure, we obtain a complete maximal orthonormal basis for the entire sector $\mathcal{H}_{Q=0}$:
\begin{equation}\label{EQ:2DOrthonormalBasis}
	\begin{aligned}
		\lvert \Psi_P^+ \rangle &= \frac{1}{\sqrt{2}}\Bigl(\lvert P\,\bar{P} \rangle + \lvert \bar{P}\,P \rangle\Bigr),\\[1mm]
		\lvert \Psi_P^- \rangle &= \frac{1}{\sqrt{2}}\Bigl(\lvert P\,\bar{P} \rangle - \lvert \bar{P}\,P \rangle\Bigr),
	\end{aligned}
\end{equation}
Here the states $\lvert \Psi^+ \rangle$, $\lvert \Psi^- \rangle$ satisfy the orthonormal conditions: $\langle \Psi^+ \lvert \Psi^+ \rangle = \langle \Psi^- \lvert \Psi^- \rangle = 1$ and $\langle \Psi^+ \lvert \Psi^- \rangle = 0$.

This complete basis enables Bell measurements and underpins the design of two-particle qubits, quantum communication, and quantum error correction schemes.

Although the above discussions fit to our intuition, it looks manually and lacks rigor.
Let us now apply the Clebsch-Gordan decomposition.
The total Hilbert space $\mathcal{H}$ splits into three subspace (charge sectors):
\[
\mathcal{H} = \mathcal{H}_{2q}\;\oplus\;\mathcal{H}_{0}\;\oplus\;\mathcal{H}_{-2q},
\]
where:
\begin{itemize}
	\item $\mathcal{H}_{2q} = \operatorname{span} \{\lvert P\,P \rangle\}$,
	
	\item $\mathcal{H}_{0} = \operatorname{span} \left\{\frac{1}{\sqrt{2}}\left(\lvert P\,\bar{P} \rangle + \lvert \bar{P}\,P \rangle\right), \frac{1}{\sqrt{2}}\Bigl(\lvert P\,\bar{P} \rangle - \lvert \bar{P}\,P \rangle\Bigr)\right\}$,
	
	\item $\mathcal{H}_{-2q} = \operatorname{span} \{\lvert \bar{P}\,\bar{P} \rangle\}$.
\end{itemize}
Then we obtain the same results.

\paragraph{(2) Hybrid-Packaged Entangled States.}

We now consider the hybrid packaging of external spin-$\frac{1}{2}$ and IQNs.

We first construct eight \textbf{fully hybrid-packaged entangled states}, where the external DOFs (spin) and IQNs are unfactorizable:
\begin{align}\label{EQ:HybridPES}
	\begin{aligned}
		&\lvert \Psi_{h,1}^{+} \rangle_{AB}
		\;=\;
		\frac{1}{\sqrt{2}}
		\Bigl(
		\lvert P, \uparrow\rangle_A \,\lvert \bar{P}, \downarrow\rangle_B
		\;+\;
		\lvert \bar{P}, \downarrow\rangle_A \,\lvert P, \uparrow\rangle_B
		\Bigr), \\
		&\lvert \Psi_{h,2}^{+} \rangle_{AB}
		\;=\;
		\frac{1}{\sqrt{2}}
		\Bigl(
		\lvert P, \downarrow\rangle_A \,\lvert \bar{P}, \uparrow\rangle_B
		\;+\;
		\lvert \bar{P}, \uparrow\rangle_A \,\lvert P, \downarrow\rangle_B
		\Bigr), \\
		&\lvert \Psi_{h,1}^{-} \rangle_{AB}
		\;=\;
		\frac{1}{\sqrt{2}}
		\Bigl(
		\lvert P, \uparrow\rangle_A \,\lvert \bar{P}, \downarrow\rangle_B
		\;-\;
		\lvert \bar{P}, \downarrow\rangle_A \,\lvert P, \uparrow\rangle_B
		\Bigr), \\
		&\lvert \Psi_{h,2}^{-} \rangle_{AB}
		\;=\;
		\frac{1}{\sqrt{2}}
		\Bigl(
		\lvert P, \downarrow\rangle_A \,\lvert \bar{P}, \uparrow\rangle_B
		\;-\;
		\lvert \bar{P}, \uparrow\rangle_A \,\lvert P, \downarrow\rangle_B
		\Bigr), \\
		&\lvert \Phi_{h,1}^{+} \rangle_{AB}
		\;=\;
		\frac{1}{\sqrt{2}}
		\Bigl(
		\lvert P, \uparrow\rangle_A \,\lvert \bar{P}, \uparrow\rangle_B
		\;+\;
		\lvert \bar{P}, \downarrow\rangle_A \,\lvert P, \downarrow\rangle_B
		\Bigr), \\
		&\lvert \Phi_{h,2}^{+} \rangle_{AB}
		\;=\;
		\frac{1}{\sqrt{2}}
		\Bigl(
		\lvert P, \downarrow\rangle_A \,\lvert \bar{P}, \downarrow\rangle_B
		\;+\;
		\lvert \bar{P}, \uparrow\rangle_A \,\lvert P, \uparrow\rangle_B
		\Bigr), \\
		&\lvert \Phi_{h,1}^{-} \rangle_{AB}
		\;=\;
		\frac{1}{\sqrt{2}}
		\Bigl(
		\lvert P, \uparrow\rangle_A \,\lvert \bar{P}, \uparrow\rangle_B
		\;-\;
		\lvert \bar{P}, \downarrow\rangle_A \,\lvert P, \downarrow\rangle_B
		\Bigr), \\
		&\lvert \Phi_{h,2}^{-} \rangle_{AB}
		\;=\;
		\frac{1}{\sqrt{2}}
		\Bigl(
		\lvert P, \downarrow\rangle_A \,\lvert \bar{P}, \downarrow\rangle_B
		\;-\;
		\lvert \bar{P}, \uparrow\rangle_A \,\lvert P, \uparrow\rangle_B
		\Bigr). \\
	\end{aligned}
\end{align}

By direct inner‐product calculation, one checks that all eight states in Eqs. (\ref{EQ:HybridPES}) are mutually orthonormal.

Thereafter, let us construct eight \textbf{semi-hybrid-packaged entangled states}, where the spin and IQNs are factorizable.

\begin{align}\label{EQ:SemiHybridPES}
	\begin{aligned}
		&\lvert \Psi_{P,s}^{+,+} \rangle_{AB}
		\;=\;
		\lvert \Psi_P^+ \rangle_{AB} ~ \lvert \Psi_s^+ \rangle_{AB}, \\
		&\lvert \Psi_{P,s}^{+,-} \rangle_{AB}
		\;=\;
		\lvert \Psi_P^+ \rangle_{AB} ~ \lvert \Psi_s^- \rangle_{AB}, \\
		&\lvert \Psi_{P,s}^{-,+} \rangle_{AB}
		\;=\;
		\lvert \Psi_P^- \rangle_{AB} ~ \lvert \Psi_s^+ \rangle_{AB}, \\
		&\lvert \Psi_{P,s}^{-,-} \rangle_{AB}
		\;=\;
		\lvert \Psi_P^- \rangle_{AB} ~ \lvert \Psi_s^- \rangle_{AB}, \\
		&\lvert \Phi_{P,s}^{+,+} \rangle_{AB}
		\;=\;
		\lvert \Psi_P^+ \rangle_{AB} ~ \lvert \Phi_s^+ \rangle_{AB}, \\
		&\lvert \Phi_{P,s}^{+,-} \rangle_{AB}
		\;=\;
		\lvert \Psi_P^+ \rangle_{AB} ~ \lvert \Phi_s^- \rangle_{AB}, \\
		&\lvert \Phi_{P,s}^{-,+} \rangle_{AB}
		\;=\;
		\lvert \Psi_P^- \rangle_{AB} ~ \lvert \Phi_s^+ \rangle_{AB}, \\
		&\lvert \Phi_{P,s}^{-,-} \rangle_{AB}
		\;=\;
		\lvert \Psi_P^- \rangle_{AB} ~ \lvert \Phi_s^- \rangle_{AB}, \\
	\end{aligned}
\end{align}
where
\begin{align}\label{SpinBellStates}
	\begin{aligned}
		&\lvert \Psi_s^{\pm} \rangle_{AB}
		\;=\;
		\tfrac{1}{\sqrt{2}}
		\Bigl(
		\lvert \uparrow\rangle_A \,\lvert \downarrow\rangle_B
		\;\pm\;
		\lvert \downarrow\rangle_A \,\lvert \uparrow\rangle_B
		\Bigr), \\
		&\lvert \Phi_s^{\pm} \rangle_{AB}
		\;=\;
		\tfrac{1}{\sqrt{2}}
		\Bigl(
		\lvert \uparrow \rangle_A \,\lvert \uparrow \rangle_B
		\;\pm\;
		\lvert \downarrow \rangle_A \,\lvert \downarrow \rangle_B
		\Bigr),
	\end{aligned}
\end{align}
are the four spin Bell states.
One checks by direct inner‐product calculation that all eight states in Eqs. (\ref{EQ:SemiHybridPES}) are mutually orthonormal.

It should be emphasized that in the fully \text{pack}-packaged entangled states Eqs.~(\ref{EQ:HybridPES}), the internal DOFs or external DOFs are un-factorizable.
Measuring either the internal DOF or external DOF, the entire state will collapse.
But in the semi-\text{pack}-packaged entangled states Eqs.~(\ref{EQ:SemiHybridPES}), the internal DOFs or external DOFs are factorizable, when measuring the internal DOF, the external DOF is not affectd, and vice versa.

\subsubsection{$N$-Particle Packaged States}

Let us now consider an $N$ particle system.
In many physical systems, conservation laws or superselection rules require that a state must have a fixed charge.
For the particle-antiparticle case, one may require that the net charge vanishes.
For example, if we associate the eigenvalue $+1$ to $P$ and $-1$ to $\bar{P}$, then a zero-charge state has an equal number of $P$’s and $\bar{P}$’s.

Consider an even number of particles, say
\[
N=2n,\quad n\in\mathbb{N}.
\]
Let $w(x)$ denote the number of occurrences of $P$ in the configuration $x$.
Then the balanced subspace (allowed Hilbert space) is
\[
\mathcal{H}_{\rm balanced} = \operatorname{span}\Bigl\{\,\lvert x \rangle \in \{\,\lvert P\rangle,\lvert \bar{P}\rangle\}^{\otimes 2n} :\; w(x)=n\,\Bigr\}.
\]
Here the function $w(x)$ is the Hamming weight of $P$ in the configuration $x$.
The dimension of $\mathcal{H}_{\rm balanced}$ is
\[
\dim \mathcal{H}_{\rm balanced} = \binom{2n}{n}.
\]

A general multi-particle packaged entangled state is an arbitrary normalized superposition over the allowed (balanced) configurations:
\[
\lvert \Psi \rangle = \sum_{x\in \mathcal{B}} c_{x}\,\lvert x \rangle,\quad \sum_{x \in \mathcal{B}} | c_x |^2 = 1,
\]
where
\[
\mathcal{B} = \Bigl\{\,x\in\{P,\bar{P}\}^{2n}: w(x)=n \Bigr\}.
\]

For $2n$ particles with $n>1$, therefore, one may choose a superposition with any number of terms - from two up to $\binom{2n}{n}$ terms. In particular, two natural examples are:

\paragraph{(1) Two-Term Packaged States.} 
One may select two complementary configurations. For instance, for $N=4$ (i.e. $n=2$) one may define
\begin{equation}\label{eq:N4TwoTermPES}
	\begin{aligned}
		\lvert \Psi^{(4),+} \rangle &= \frac{1}{\sqrt{2}}\Bigl(\lvert P\,\bar{P}\,P\,\bar{P} \rangle + \lvert \bar{P}\,P\,\bar{P}\,P \rangle\Bigr),\\[1mm]
		\lvert \Psi^{(4),-} \rangle &= \frac{1}{\sqrt{2}}\Bigl(\lvert P\,\bar{P}\,P\,\bar{P} \rangle - \lvert \bar{P}\,P\,\bar{P}\,P \rangle\Bigr).
	\end{aligned}
\end{equation}
Eq.~(\ref{eq:N4TwoTermPES}) are the direct analogues of the two-particle $\Psi^\pm$ states, but now on four particles.
These states will be used in packaged quantum error correction codes (see Sec.~\ref{SEC:PackagedQuantumErrorCorrectionCodes})

\paragraph{(2) Packaged Dicke State.} 
Another natural choice is the packaged balanced Dicke state \cite{Dicke1954}
\begin{equation}\label{DickeState}
	\lvert D_{n}^{2n} \rangle = \frac{1}{\sqrt{\binom{2n}{n}}} \sum_{x\in\mathcal{B}} \lvert x \rangle,
\end{equation}
where $\mathcal{B}$ is the subspace of all packaged product states with exactly $n$ particles in state $P$ and $n$ in state $\bar{P}$.

The packaged Dicke state $\lvert D_{n}^{2n} \rangle$ is a uniform superposition over all configurations with exactly $n$ particles in state $P$ and $n$ in state $\bar{P}$.
This packaged entangled state is highly symmetric (invariant under particle permutations) and involves $\binom{2n}{n}$ terms.

Thus, depending on the physical or informational requirements, one can choose a packaged entangled state with two terms, three terms, four terms, or in general, any superposition spanning a subset of the $\binom{2n}{n}$-dimensional balanced subspace.
The structure of these states can be exploited in quantum error correction or multi-qubit encoding.

\subsection{Advantages of Packaged States}

Although gauge‐invariance constraints look restrictive, packaged states offer many advantages for quantum‐information applications:

First, packaged states are robust against gauge‐violating errors. 
Since superselection forbids transitions out of a given gauge‐charge sector, some types of environmental noise are automatically disallowed.
This can act as a partial protection mechanism that prevents unphysical errors, which would jump across superselection boundaries.

Packaged states are also more secure in quantum communication. 
An eavesdropper attempting to measure partial or fractional charge DOFs would violate gauge-invariance, thereby revealing their presence.
Certain protocols can exploit superselection to limit an adversary’s manipulations, adding a layer of security.

In the packaged state formalism, gauge-invariance is strictly enforced and the operations like flipping a charge are forbidden.
This restriction simplifies theoretical work in analyzing errors or designing gates because fewer types of operations are needed.

Packaged states possess rich packaged IQNs. 
However, the external DOFs like spin, momentum, position, etc. remain unconstrained by the gauge symmetry.
Thus, one can still form high‐dimensional or multi‐particle entangled states.
The external DOFs remain free while internal charges stay packaged, resulting in a broad space of packaged entangled configurations.

By restricting operations to those that obey the superselection rule, packaged states naturally protect against certain types of errors and therefore offer a promising route toward robust quantum information processing.

In the following sections,
we shall show how to construct packaged qubits, packaged gates, and packaged circuits that respect gauge constraints.
We will also lift the conventional quantum algorithms and protocols into high-dimensional, hybrid-packaged subspace for enhanced robustness and security.

\section{Packaged Qubits, Gates, and Circuits}
\label{SEC:PackagedQubitsGatesAndCircuits}

In conventional quantum information science, we have
qubits \cite{Benioff1980,Feynman1982,Deutsch1985,Schumacher1995}, 
quantum gates \cite{Deutsch1989,Barenco1995,DiVincenzo1995,CiracZoller1995,DiCarlo2009}, 
and quantum circuits \cite{Yao1993,Shor1994,Monroe1995,Chuang1998,Coppersmith2002}.
In this section, we explore how to construct gauge-invariant packaged qubits, packaged gates, and packaged circuits using single-particle packaged states and multi-particle packaged states.
These form the foundation for later studies in gauge-invariant quantum algorithms and protocols based on packaged quantum states.

\subsection{Packaged Qubits}
\label{SEC:PackagedQubits}

Physically, a packaged qubit is a gauge-invariant, two-level system that resides within a fixed net-charge sector.
It is characterized not only by its two-dimensional Hilbert-space structure but also by the locking of internal quantum numbers (IQNs) due to gauge-invariance.

From a pure mathematical point of view, a packaged qubit is an abstract concept.
We require it to be a gauge-invariant two level system.
We also require the two levels to permits superposition so that we can construct a general packaged superposition state.
However, the inner structure of the packaged qubit is not limited.

\subsubsection{Definition of Pure Packaged Qubits}
\label{SEC:DefPackagedQubitsHybridQudits}

Let us first give a formal definition for pure-packaged qubits, which only utilize the internal quantum numbers, but ignore the external quantum numbers.

\begin{definition}[Pure-packaged qubit]
	\label{DEF:PurePackagedQubit}
	Let $\mathcal H_{\rm pack}$ be a pure-packaged subspace of dimension two.
	Let
	\[
	\mathcal{H}_{\rm qubit} = \operatorname{span}\{\lvert 0_P\rangle,\lvert 1_P\rangle\} ~ \subset \mathcal{H}_{\rm pack}
	\]
	and any packaged state $\lvert \Psi\rangle \in \mathcal{H}_{\rm pack}$ can be written as
	\begin{equation}\label{EQ:ExpandArbitraryPackagedState}
		\lvert \Psi\rangle = \alpha\,\lvert 0_P\rangle + \beta\,\lvert 1_P\rangle, \quad | \alpha |^2 + | \beta |^2 = 1,
	\end{equation}
	then we say that the pure-packaged subspace $\mathcal{H}_{\rm qubit}$ is a \textbf{pure-packaged qubit}.
\end{definition}

Since the packaged qubit basis states $\lvert 1_P\rangle$ and $\lvert 0_P\rangle$ are abstract symbols, the inner structure of the packaged qubit is not limited.
The packaged qubit can be either a single-particle or a two-level multi-particle system.

\begin{definition}[Pure packaged qudit of dimension $d$]
	Choose an orthonormal set
	$\{|0_P\rangle,\dots ,|d{-}1_P\rangle\}$
	inside a pure-packaged subspace of dimension $d$.
	Their span is a pure-packaged qudit. 
	For $d=2$ this reduces to Definition~\ref{DEF:PurePackagedQubit}.
\end{definition}

A pure-packaged qubit/qudit is a pure-packaged subspace, which is gauge-invariant according to Definition \ref{DEF:PurePackagedSubspace}.
Thus, a pure-packaged qubit/qudit is gauge-invariant.
This means that the gauge-invariance of a pure-packaged qubit/qudit is given by definition or construction, not a derived property.

In this section, we focus on pure-packaged qubit, while deferring the detailed discussion of qudit to Section Sec.~\ref{SEC:dDDimensionalHybridPackagedSpace}.

\subsubsection{Constructing Single-Particle Packaged Qubits}

A single-particle packaged qubit must satisfy the conditions given in Proposition \ref{PROP:NecessarySufficientConditionsSingleParticle}.
To construct a single-particle packaged qubit, let us consider a neutral particle $\lvert P\rangle$ and its antiparticle $\lvert \bar{P}\rangle$.
Both states are eigenstates of the net-charge operator $\hat{Q}$ with eigenvalue zero (Eq.~(\ref{EQ:ZeroNetGaugeChargeCond})), i.e.,
\[
\hat{Q}\lvert P\rangle = 0,\quad \hat{Q}\lvert \bar{P}\rangle = 0
\]
We assume that $\lvert P\rangle$ and $\lvert \bar{P}\rangle$ differ by a global quantum number.
Let $\hat{F}$ be a global quantum number operator (e.g., the flavor operator).
Then we have (Eq.~(\ref{EQ:DifferenceGlobalQuantumNumber})):
\[
\hat{F}\lvert P\rangle = f\,\lvert P\rangle, \quad \hat{F}\lvert \bar{P}\rangle = -f\,\lvert \bar{P}\rangle, \quad f \ne 0.
\]

According to Proposition \ref{PROP:NecessarySufficientConditionsSingleParticle}, the particle can have valid packaged superposition states (Eq.~(\ref{EQ:ValidPackagedSuperpositionStatesOfSingleParticle}))
\[
\lvert \Psi\rangle = \alpha\,\lvert P\rangle + \beta\,\lvert \bar{P}\rangle, \quad | \alpha |^2 + | \beta |^2 = 1.
\]

According to Corollary \ref{COR:SuperpositionStateGaugeInvariant}, $\lvert \Psi\rangle$ is gauge-invariant.
Specifically, for any local gauge transformation $U_g$ (with $g \in G$), $\lvert \Psi\rangle$ transforms as
\[
U_g \lvert \Psi\rangle = e^{i\phi(g)} \lvert \Psi\rangle.
\]

\paragraph{(1) Single-Particle Qubits Based on Packaged Rectilinear Basis.}
Let us first define a packaged qubit based on single-particle packaged rectilinear (or computational) basis as:
\begin{equation}\label{EQ:SingleParticlePackagedRectilinearBasis}
	\begin{aligned}
		\lvert 0_P\rangle &\;=\; \lvert P\rangle \\[1mm]
		\lvert 1_P\rangle &\;=\; \lvert \bar{P}\rangle,
	\end{aligned}
\end{equation}
where the quantum information is directly encoded in the global quantum number (or equivalently, in the labels that distinguish the two states).

The logical basis in Eq.~(\ref{EQ:SingleParticlePackagedRectilinearBasis}) is very direct:
one basis state is given by the particle $ \lvert P \rangle $ and the other by the antiparticle $ \lvert \bar{P} \rangle $.
This is conceptually simple and easy for state preparation.

If $ P $ and $ \bar{P} $ are chosen such that they are both neutral (i.e., have net charge $ Q=0 $) and differ only in global quantum numbers (for instance, flavor or isospin), then the states naturally lie in the same superselection sector, making them physically admissible without further processing.
However, it has limited flexibility in basis choice. 
Since the basis is fixed by the physical particle content, it may not be the optimal choice for implementing certain logical operations (e.g., Hadamard gates) directly on the computational degrees of freedom.

\paragraph{(2) Single-Particle Qubits Based on Packaged Diagonal Basis.}
Applying a Hadamard transformation to Eq.~(\ref{EQ:SingleParticlePackagedRectilinearBasis}), we obtain the single-particle packaged diagonal basis: $\lvert +_P\rangle = \frac{1}{\sqrt{2}}\left(\lvert 0_P\rangle + \lvert 1_P\rangle\right)$, $\lvert -_P\rangle = \frac{1}{\sqrt{2}}\left(\lvert 0_P\rangle - \lvert 1_P\rangle\right)$.
In this case, we can re-define a packaged qubit based on the packaged diagonal basis, i.e.,
\begin{equation}\label{EQ:SingleParticlePackagedDiagonalBasis}
	\begin{aligned}
		\lvert 0_P\rangle &= \frac{1}{\sqrt{2}} \left(\lvert P\rangle + \lvert \bar{P}\rangle\right) \\[1mm]
		\lvert 1_P\rangle &= \frac{1}{\sqrt{2}} \left(\lvert P\rangle - \lvert \bar{P}\rangle\right),
	\end{aligned}
\end{equation}
where the logical information is encoded in the relative phase between the two components.

These basis states are symmetric superpositions of $ \lvert P\rangle $ and $ \lvert \bar{P}\rangle $.
They are analogous to the diagonal basis states in conventional qubit theory and are well-suited for operations such as the Hadamard gate. 
The superposed form can be more convenient in certain quantum algorithms and error-correcting schemes because the logical information is encoded in the relative phase between the two components.
The disadvantages are that generating and maintaining coherent superpositions may be more challenging than directly using distinct particle states, particularly if the natural system tends to distinguish $ P $ from $ \bar{P} $ due to charge.

\begin{example}[Neutral mesons]
	Neutral mesons such as $K^0$ and $\bar{K}^0$ provide exact examples of packaged qubits.
	Although both have zero electric charge, they differ by strangeness, a global quantum number.
	Let us denote:
	\[
	\lvert 0_P\rangle \;=\; \lvert K^0\rangle, \quad
	\lvert 1_P\rangle \;=\; \lvert \bar{K}^0\rangle.
	\]	
	Hence, the state
	\[
	\lvert \Psi\rangle = \alpha\,\lvert 0_P\rangle + \beta\,\lvert 1_P\rangle
	\]
	is physically realizable and can act as a single-particle packaged qubit.

	Similarly, in systems where neutrinos (or other neutral particles) exhibit oscillatory behavior due to a global quantum number difference, one can imagine encoding information in a single-particle packaged qubit formed by the neutrino and its distinct antiparticle counterpart (if they differ globally).
	
	 These states are natural candidates for encoding qubits precisely because they bypass the restrictions imposed by superselection rules.
\end{example}

This single-particle packaged qubit encodes quantum information in the global quantum number difference rather than in the gauge charge itself.
It is particularly useful in quantum computation and quantum teleportation where one wishes to transfer an unknown state.

\subsubsection{Constructing Multi-Particle Packaged Qubits}

We have now constructed single-particle packaged qubits.
However, these packaged qubits require neutral particle excitations.
This limits the available physical systems or requires careful state engineering.

In fact, we can use multi-particle systems to construct packaged qubits because a multi-particle system is much easier to achieve a total zero charge.
For example, in a particle-antiparticle system $\{P, \bar{P}\}$, even if $ P $ and $ \bar{P} $ individually carry opposite gauge charges, the two-particle packaged product states $ \lvert P\,\bar{P}\rangle $ and $ \lvert \bar{P}\,P\rangle $ have a total net charge of zero.
This automatically satisfies the superselection rule.

In Sec.~\ref{SEC:TwoParticlePackagedEntangledStates}, we already studied the properties of a particle-antiparticle system, in which the charge zero sector has two levels and also allows superposition.
These results will enable us to construct packaged qubits using particle-antiparticle pairs.
We can use either the packaged product basis or packaged entangled basis.

\paragraph{(1) Packaged Qubits Based on Packaged Product Basis: $\{\lvert P\,\bar{P}\rangle, \lvert \bar{P}\,P\rangle\}$.}

These two basis states have the following properties:
\begin{enumerate}
	\item Zero Net Gauge Charge: 
	Both states are eigenstates of the net-charge operator $\hat{Q} \otimes \hat{Q}$ with eigenvalue zero (Eq.~(\ref{EQ:ZeroNetGaugeChargeCond})):
	\[
	\hat{Q} \otimes \hat{Q} ~ \lvert P\,\bar{P}\rangle = 0, \quad \hat{Q} \otimes \hat{Q} ~ \lvert \bar{P}\,P\rangle = 0.
	\]
	
	\item Zero Global Quantum Number: 
	Because the charge conjunction operator $\hat{C}: \lvert P\rangle \leftrightarrow \lvert\bar{P}\rangle$ also flip global quantum numbers, the total global quantum number is zero, i.e.,
	\[
	\hat{F} \otimes \hat{F} ~ \lvert P\,\bar{P}\rangle = 0, \quad \hat{F} \otimes \hat{F} ~ \lvert \bar{P}\,P\rangle = 0.
	\]
	This shows that both packaged product basis states are eigenstates of the total global quantum number operator $\hat{F} \otimes \hat{F}$, each with eigenvalue zero.
\end{enumerate}

Under any local gauge transformation $U_g ~ (g \in G)$, the particle-antiparticle packaged states transform as
\[
U_g\,\lvert P \bar{P}\rangle = e^{i\phi(g)}\,\lvert P \bar{P}\rangle,\quad U_g\,\lvert \bar{P} P\rangle = e^{i\phi(g)}\,\lvert \bar{P} P\rangle.
\]
Thus, any superposition state
\[
\lvert \Psi\rangle = \alpha\,\lvert P\bar{P}\rangle + \beta\,\lvert \bar{P}P\rangle, \quad | \alpha |^2 + | \beta |^2 = 1,
\]
remains in the same gauge sector (i.e. $\hat{Q}=0$) and transforms as
\[
U_g\,\lvert \Psi\rangle = e^{i\phi(g)}\,\lvert \Psi\rangle.
\]

According to Proposition \ref{PROP:NecessarySufficientConditionsMultiParticle}, $\lvert \Psi\rangle$ is a valid multi-particle superposition state.
Thus, we can define a packaged qubit based on \textbf{particle-antiparticle packaged rectilinear basis} (or the other way dependent on the convention):
\begin{equation}\label{EQ:ParticleAntiparticlePackagedRectilinearBasis}
	\begin{aligned}
		\lvert 0_P\rangle &\;=\; \lvert P \bar{P}\rangle \\[1mm]
		\lvert 1_P\rangle &\;=\; \lvert \bar{P} P\rangle,
	\end{aligned}
\end{equation}
where the logical information is encoded in the ordering correlation between the two constituents.

These packaged product states satisfy the relations:
\[
\langle 0_P\vert 0_P\rangle=\langle 1_P\vert 1_P\rangle=1,\quad \langle 0_P\vert 1_P\rangle=0,
\]

Packaged product states are easier to prepare experimentally since they do not require the delicate control to create coherent superpositions.
However, they do not exhibit quantum entanglement between the two particles.
This can be a disadvantage for protocols (like teleportation or superdense coding) that rely on entangled resources for enhanced performance or security.
Furthermore, errors in the packaged product states may act independently on each particle and therefore degrade the encoded information.
This is because the protection provided by entanglement tends to delocalize errors and is now absent.
Let us now start to construct packaged entangled basis.

\paragraph{(2) Packaged Qubits Based on Packaged Entangled Basis $\{ \lvert \Psi^+\rangle, \lvert \Psi^-\rangle\}$.}

In Sec.~\ref{SEC:MultiParticlePackagedStates} we have shown that a particle-antiparticle system has a complete orthonormal basis of packaged entangled states $\{ \lvert \Psi^+\rangle, \lvert \Psi^-\rangle\}$ (see Eq.~(\ref{EQ:2DOrthonormalBasis})).
This basis spans a 2D subspace 
$
\mathcal{H}_{Q=0} = \operatorname{span}\{ \lvert \Psi^+\rangle, \lvert \Psi^-\rangle\}.
$
It is indeed the Hadamard transformation of Eq.~(\ref{EQ:ParticleAntiparticlePackagedRectilinearBasis}), i.e., $\lvert +_P\rangle 
= \frac{1}{\sqrt{2}}\left(\lvert 0_P\rangle + \lvert 1_P\rangle\right)$, $\lvert -_P\rangle 
= \frac{1}{\sqrt{2}}\left(\lvert 0_P\rangle - \lvert 1_P\rangle\right)$.
This can be regarded as a particle-antiparticle packaged diagonal basis.

For the convenience of applications, we can re-define a packaged qubit based on particle-antiparticle packaged entangled basis:
\begin{equation}\label{EQ:ParticleAntiparticlePackagedEntangledBasis}
	\begin{aligned}
		\lvert 0_P\rangle 
		&= \frac{1}{\sqrt{2}}\Bigl(\lvert P\,\bar{P}\rangle + \lvert \bar{P}\,P\rangle\Bigr),\\[1mm]
		\lvert 1_P\rangle 
		&= \frac{1}{\sqrt{2}}\Bigl(\lvert P\,\bar{P}\rangle - \lvert \bar{P}\,P\rangle\Bigr).
	\end{aligned}
\end{equation}
where the logical information is encoded in the entanglement structure (entangled phase between the logical basis states).

These packaged entangled states satisfy the relations:
\[
\langle 0_P\vert 0_P\rangle=\langle 1_P\vert 1_P\rangle=1,\quad \langle 0_P\vert 1_P\rangle=0,
\]
and both $\lvert 0_P\rangle$ and $\lvert 1_P\rangle$ are non-factorizable superpositions over the constituent multi-particle excitations carrying inseparable IQNs.

Thus, any packaged states
$\lvert \Psi\rangle
\in
\mathcal{H}_{Q=0}=\operatorname{span}\{\lvert 0_P\rangle,\lvert 1_P\rangle\}
$
can be written as:
\[
\lvert \Psi\rangle = \alpha\,\lvert 0_P\rangle + \beta\,\lvert 1_P\rangle,\quad | \alpha |^2 + | \beta |^2 = 1,
\]
and $\lvert \Psi\rangle$ remains in the same gauge sector (i.e. $\hat{Q}=0$) and transforms as
\[
U_g\,\lvert \Psi\rangle = e^{i\phi(g)}\,\lvert \Psi\rangle.
\]
This guarantees that $\lvert \Psi\rangle$ is a valid physical state and superpositions are not forbidden by superselection rules.
Therefore, we can use the maximal orthonormal packaged entangled states $ \lvert \Psi^{\pm}\rangle$ as the packaged logical $\lvert 1_P\rangle$ and $\lvert 0_P\rangle$ for our multi-particle packaged qubit.

The maximally packaged entangled states are useful in quantum communication protocols (such as teleportation and superdense coding) because the entanglement can provide both enhanced security and robustness against certain types of errors.
Distributed entanglement often leads to error suppression because local noise is less likely to disrupt a globally entangled state.
In addition, even if one part of the pair suffers an error, the overall correlation might still be detected via Bell measurements.
Since the net charge of each Bell state is zero, the packaged entangled states automatically lie in the correct superselection sector.

The disadvantages are that creating and maintaining maximally entangled states is usually more challenging than preparing product states. The requirement for coherence over two particles means that experimental imperfections (such as decoherence \cite{Zurek1982} or imperfect control) can more easily destroy the entanglement.
Bell measurements, which are needed to fully exploit the entangled resource, are generally more complex to implement than measurements in a product basis. This might increase the overhead in any protocol that uses such states.

At this point we would like to emphasize that, for single-particle packaged qubits, the two conditions (neutrality under the gauge charge (ensuring both states lie in $\mathcal{H}_{Q=0}$) and a nontrivial (global) quantum number difference) are both necessary and sufficient to allow a physically valid coherent superposition
\[
\lvert \Psi\rangle = \alpha\,\lvert 0_P\rangle + \beta\,\lvert 1_P\rangle.
\]
For particle-antiparticle qubits, these conditions translate into that the two-particle basis states (such as $\lvert P\,\bar{P}\rangle$ and $\lvert \bar{P}\,P\rangle$) have both zero net gauge charge and zero net global quantum number.
This is naturally accomplished if the particle and antiparticle have opposite gauge charges and opposite values of the global quantum number.

Finally, let us summarize all the advantages and disadvantages of the four packaged qubit basis in Table \ref{TAB:AdvDisadvOfPackagedQubit}:

\begin{table}[H]
	\centering
	\caption{Comparison of Packaged Qubit Basis}
	\label{TAB:AdvDisadvOfPackagedQubit}
	\begin{tabular}[hbt!]{|p{2.9cm}|p{3.7cm}|p{4cm}|p{4cm}|}
		\hline\hline
		Type &Basis &Advantages &Disadvantages \\
		\hline
		Single-Particle, Rectilinear Basis &$\lvert 0_P\rangle = \lvert P\rangle$, $\lvert 1_P\rangle = \lvert \bar{P}\rangle$ &Simple and direct; minimal state preparation &Valid only if $P$ and $\bar{P}$ are in the same $ \mathcal{H}_Q $ (i.e. both neutral); may not support coherent superpositions if they have different gauge charges \\
		\hline
		Single-Particle, Diagonal Basis &$\lvert 0_P\rangle = \frac{1}{\sqrt{2}}(|P\rangle + |\bar{P}\rangle)$, $\lvert 1_P\rangle = \frac{1}{\sqrt{2}}(|P\rangle - |\bar{P}\rangle)$ &Provides a symmetric basis; ideal for Hadamard operation; optimized for coherent superposition and interference &Requires that $P$ and $\bar{P}$ can be coherently superposed (they differ only in global quantum numbers) \\
		\hline
		Two-Particle, Product Basis & $\lvert 0_P\rangle = |P\,\bar{P}\rangle$,\; $\lvert 1_P\rangle = |\bar{P}\,P\rangle$ & Automatically neutral; allowed even if individual particles have nonzero gauge charges; simpler state preparation & States are not entangled; more vulnerable to local errors since each particle is independent \\
		\hline
		Two-Particle, Entangled Basis & $\lvert 0_P\rangle = \frac{1}{\sqrt{2}}(|P\,\bar{P}\rangle + |\bar{P}\,P\rangle)$,\; $\lvert 1_P\rangle = \frac{1}{\sqrt{2}}(|P\,\bar{P}\rangle - |\bar{P}\,P\rangle)$ & Maximally entangled; provides enhanced security and error resilience; naturally resides in $ \mathcal{H}_{Q=0} $ & More challenging to generate and measure; requires high coherence and more complex control techniques \\
		\hline
	\end{tabular}
\end{table}

\subsection{Packaged Qubit Gates}
\label{SEC:PurePackagedQubitGates}

Recall that a unitary operator $V$ acting on a Hilbert space $\mathcal{H}$ is defined by
\[
V^\dagger V = V V^\dagger = \mathbb{I},
\]
which means that $V$ preserves the inner product.
In our case, we consider the packaged qubit subspace $\mathcal{H}_{\rm qubit}\subset\mathcal{H}_Q$, where all basis states (e.g., $\lvert 0_P\rangle$ and $\lvert 1_P\rangle$) reside in the same net-charge sector.
$\forall ~ U_g\in G$, we have
\[
U_g\,\lvert 0_P\rangle = e^{i\phi(g)}\,\lvert 0_P\rangle,\quad U_g\,\lvert 1_P\rangle = e^{i\phi(g)}\,\lvert 1_P\rangle.
\]
This property ensures that any state in $\mathcal{H}_{\rm qubit}$ is gauge-invariant (up to an overall phase).

\subsubsection{Definition of Packaged Quantum Gates}

A packaged quantum gate \cite{NielsenChuang2010} is a unitary operator that acts on the packaged qubit $\mathcal{H}_{\rm qubit}$.
Since both $\lvert 0_P\rangle$, $\lvert 1_P\rangle \in \mathcal{H}_{\rm qubit}$ belong to the same net-charge sector $\mathcal{H}_Q$, any operators defined on $\mathcal{H}_{\rm qubit}$ must preserve gauge-invariance and remain confined to a single charge sector $\mathcal{H}_Q$.

In particular, a unitary operator $V$ acting on $\mathcal{H}_{\rm qubit}$ must also satisfy the above conditions: preserve gauge-invariance and remain confined to a single charge sector $\mathcal{H}_Q$.
In other words, any physically admissible gate $V$ acting on $\mathcal{H}_{\rm qubit}$ must be gauge-invariant,
i.e., it must commute with the net-charge operator $\hat{Q}$:
\[
[V, \hat{Q}]=0.
\]
This condition implies that only those unitary operations that preserve the charge sector are allowed.
It is automatically true if for all local gauge transformations $U_g ~ (g \in G)$, we have
\[
U_g\,\lvert 0_P\rangle = e^{i\phi(g)}\lvert 0_P\rangle,\quad U_g\,\lvert 1_P\rangle = e^{i\phi(g)}\lvert 1_P\rangle.
\]

Because all packaged states acquire the same gauge phase under local gauge transformations, any packaged gate defined on the logical subspace (constructed from such states) is automatically gauge-invariant.
On the other hand, the packaged states reside entirely in $\mathcal{H}_Q$, every unitary operator $V$ that acts on $\mathcal{H}_{\rm qubit}$ is automatically compatible with the superselection rule.

For abelian gauge groups, we assume every physical gate that preserves net charge also preserves each irrep block.
For non-abelian ones, however, different irreps can mix within the same total $Q$.
We further restrict to block-diagonal operators in each irrep.

We now give a formal definition for the packaged qubit gates:

\begin{definition}[Packaged Qubit Gate]
	Let $\mathcal{H}_{\rm qubit}$ be a two-dimensional subspace of a fixed net-charge (or superselection) sector $\mathcal{H}_Q$ of the full Hilbert space $\mathcal{H}$. A \textbf{packaged qubit gate} is a unitary operator defined on $\mathcal{H}_{\rm qubit}$:
	\[
	V: \mathcal{H}_{\rm qubit} \to \mathcal{H}_{\rm qubit}
	\]
	that satisfies the gauge-invariance condition
	\[
	[V, \hat{Q}] = 0,
	\]
	i.e., for every $\lvert \psi \rangle\in \mathcal{H}_{\rm qubit}$ and every local gauge transformation $U_g$, we have
	\[
	U_g\, V \lvert \psi \rangle = V \,U_g \lvert \psi \rangle.
	\]
\end{definition}

From the definition, we see that any physically admissible gate on packaged qubits is automatically compatible with the superselection rule, since the gauge-invariance of the basis implies that
\[
[V, \hat{Q}] = 0.
\]

Furthermore, one can show that the composition of two packaged quantum gates is again a packaged quantum gate.

\subsubsection{Closure of Packaged Qubit Gates}

Indeed, if $V_1$ and $V_2$ are packaged qubit gates satisfying
\[
[V_1, \hat{Q}] = 0 \quad \text{and} \quad [V_2, \hat{Q}] = 0,
\]
then using the product rule of commutators, one can prove that the composed gate $V_1 V_2$ is also gauge-invariant.
We now state this fact as a lemma:

\begin{lemma}[Closure of Packaged Qubit Gates]\label{LEMMA:ClosureOfPackagedOperations}
	Let $ V_1 $ and $ V_2 $ be packaged qubit gates (unitary operators) that act on the packaged Hilbert space such that 
	\[
	[V_1, \hat{Q}] = 0 \quad \text{and} \quad [V_2, \hat{Q}] = 0.
	\]
	Then, the composition $ V = V_1 V_2 $ also satisfies 
	\[
	[V, \hat{Q}] = 0.
	\]
\end{lemma}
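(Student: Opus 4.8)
The plan is to reduce the claim to the standard Leibniz (product) rule for commutators, so that the hypotheses on the individual factors transfer to their product with essentially no computation. First I would recall the algebraic identity $[AB,C]=A[B,C]+[A,C]B$, which holds for any bounded operators $A,B,C$, and specialize it to $A=V_1$, $B=V_2$, and $C=\hat{Q}$. This gives
\[
[V_1V_2,\hat{Q}] = V_1\,[V_2,\hat{Q}] + [V_1,\hat{Q}]\,V_2 .
\]
Substituting the hypotheses $[V_1,\hat{Q}]=0$ and $[V_2,\hat{Q}]=0$, both terms on the right-hand side vanish, and we conclude $[V,\hat{Q}]=[V_1V_2,\hat{Q}]=0$, which is exactly the assertion.

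Equivalently, if I preferred to avoid quoting the Leibniz rule, I would argue by pushing $\hat{Q}$ through the two factors in turn: using $[V_2,\hat{Q}]=0$ first and $[V_1,\hat{Q}]=0$ second gives $V_1V_2\hat{Q}=V_1\hat{Q}V_2=\hat{Q}V_1V_2$, which is the same conclusion written as an equality of operators. Either route is immediate and purely algebraic.

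For completeness I would append two routine observations confirming that $V=V_1V_2$ is a bona fide packaged gate and not merely charge-commuting. First, $V$ is unitary, being a product of unitaries: $V^\dagger V = V_2^\dagger V_1^\dagger V_1 V_2 = \mathbb{I}$. Second, by the Stability property (part (i) of Proposition \ref{prop:PackagedProperties}), any operator commuting with $\hat{Q}$ leaves the packaged subspace invariant; combined with the fact that each factor already maps $\mathcal{H}_{\rm qubit}$ into itself, the composition keeps states inside the fixed-charge sector and respects the common gauge character.

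This statement carries no genuine obstacle: its entire content is the commutator product rule, and the result is in effect a restatement that the operators commuting with a fixed charge form an algebra, hence are closed under multiplication. The only point deserving mild care is symmetric bookkeeping — invoking the hypothesis for \emph{both} factors — together with the observation that unitarity and subspace-preservation are logically separate from the commutation relation and should be checked independently, since the lemma as stated isolates only the latter.
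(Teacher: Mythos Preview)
Your proof is correct and matches the paper's approach exactly: both apply the Leibniz product rule $[V_1V_2,\hat{Q}]=V_1[V_2,\hat{Q}]+[V_1,\hat{Q}]V_2$ and conclude immediately from the hypotheses. Your additional remarks on unitarity and subspace preservation go beyond what the paper includes but are perfectly sound supplementary observations.
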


\begin{proof}
	Using the Leibniz product rule for commutators, we have 
	\[
	[V_1 V_2, \hat{Q}] = V_1 [V_2, \hat{Q}] + [V_1, \hat{Q}] V_2.
	\]
	Using the assumptions 
	\[
	[V_2, \hat{Q}] = 0 \quad \text{and} \quad [V_1, \hat{Q}] = 0,
	\]
	we obtain 
	\[
	[V_1 V_2, \hat{Q}] = V_1 \cdot 0 + 0 \cdot V_2 = 0.
	\]
	This shows that $ V_1 V_2 $ commutes with $ \hat{Q} $, which proves the lemma. 
	$\Box$
\end{proof}

This lemma can be easily extended to $n$ unitary operators.
It has important applications in packaged qubit circuit, quantum algorithms, and quantum communication protocols.

\subsubsection{Packaged Single-Qubit Gates}

Consider a packaged qubit subspace
$
\mathcal H_{\text{qubit}}
=\operatorname{span}\{\ket{0_P},\ket{1_P}\}\subset\mathcal H_{Q=0},
$
where the net-charge operator $\hat{Q}$ satisfies
$
\hat{Q}\ket{0_P}=0,
$
$
\hat{Q}\ket{1_P}=0.
$
We can define the following packaged single-qubit gates on $\mathcal H_{\text{qubit}}$:

\begin{enumerate}
	\item Pauli-$X_P$ (Bit-Flip):
	\begin{equation}\label{EQ:PackagedPauliX}
		X_P = \lvert 0_P\rangle\langle 1_P\rvert + \lvert 1_P\rangle\langle 0_P\rvert.
	\end{equation}
	Represent by a $2\times 2$ unitary matrix:
	\begin{equation}\label{EQ:PackagedPauliXMatrix}
		X_P = \begin{pmatrix} 0 & 1 \\ 1 & 0 \end{pmatrix}
	\end{equation}
	Acting on the logical packaged basis states:
	\[
	X_P\lvert 0_P\rangle=\lvert 1_P\rangle,\quad X_P\lvert 1_P\rangle=\lvert 0_P\rangle,
	\]
	
	\item Pauli-$Y_P$:
	\begin{equation}\label{EQ:PackagedPauliY}
		Y_P = -i \lvert 0_P\rangle\langle 1_P\rvert + i \lvert 1_P\rangle\langle 0_P\rvert.
	\end{equation}
	Represent by a $2\times 2$ unitary matrix:
	\begin{equation}\label{EQ:PackagedPauliYMatrix}
		Y_P = \begin{pmatrix} 0 & -i \\ i & 0 \end{pmatrix}
	\end{equation}
	Acting on the logical packaged basis states:
	\[
	Y_P\lvert 0_P\rangle=i\lvert 1_P\rangle,
	\quad 
	Y_P\lvert 1_P\rangle=-i\lvert 0_P\rangle,
	\]
	
	\item Pauli-$Z_P$ (Phase-Flip):
	\begin{equation}\label{EQ:PackagedPauliZ}
		Z_P = \lvert 0_P\rangle\langle 0_P\rvert - \lvert 1_P\rangle\langle 1_P\rvert.
	\end{equation}
	Represent by a $2\times 2$ unitary matrix:
	\begin{equation}\label{EQ:PackagedPauliZMatrix}
		Z_P = \begin{pmatrix} 1 & 0 \\ 0 & -1 \end{pmatrix}
	\end{equation}
	Acting on the logical packaged basis states:
	\[
	Z_P\lvert 0_P\rangle=\lvert 0_P\rangle,\quad Z_P\lvert 1_P\rangle=-\lvert 1_P\rangle.
	\]
		
	\item Hadamard Gate $H_P$:
	\begin{equation}\label{EQ:PackagedHadamard}
		H_P = \frac{1}{\sqrt{2}} 
		\left(
		\lvert 0_P\rangle\langle 0_P\rvert +
		\lvert 0_P\rangle\langle 1_P\rvert + 
		\lvert 1_P\rangle\langle 0_P\rvert - 
		\lvert 1_P\rangle\langle 1_P\rvert
		\right).
	\end{equation}
	Represent by a $2\times 2$ unitary matrix:
	\begin{equation}\label{EQ:PackagedHadamardMatrix}
		H_P = \frac{1}{\sqrt{2}} \begin{pmatrix} 1 & 1 \\ 1 & -1 \end{pmatrix}.
	\end{equation}
	
	Acting on the logical packaged basis states:
	\[
	H_P\lvert 0_P\rangle = \frac{1}{\sqrt{2}}\left(\lvert 0_P\rangle + \lvert 1_P\rangle\right),\quad
	H_P\lvert 1_P\rangle = \frac{1}{\sqrt{2}}\left(\lvert 0_P\rangle - \lvert 1_P\rangle\right).
	\]	
	This operation preserves the net gauge charge while creating nontrivial superpositions of the packaged single-qubit basis states.
	
	\item $\tfrac\pi4$ phase $T_P$:
	\begin{equation}\label{EQ:PackagedT}
		T_P = \ket{0_P}\!\bra{0_P}+e^{i\pi/4}\ket{1_P}\!\bra{1_P}
	\end{equation}
	Represent by a $2\times 2$ unitary matrix:
	\begin{equation}\label{EQ:PackagedTmatrix}
		T_P = \begin{pmatrix} 1 &0 \\ 0 &e^{i\pi/4} \end{pmatrix}
	\end{equation}	
	 This gate is non-Clifford, which turns $X_P$ into $\tfrac{X_P+Y_P}{\sqrt2}$
\end{enumerate}

We now summarize the above defined packaged single-qubit gates in following table:

\begin{table}[H]
	\centering
	\caption{Packaged Single-Qubit Gates}
	\begin{tabular}[hbt!]{|p{2cm}|p{4.5cm}|p{2.1cm}|p{5cm}|}
		\hline\hline
		Name &Abstract definition &$2\times2$ matrix &Action on basis \\
		\hline
		Bit-flip $X_P$ & $\displaystyle X_P=\ket{0_P}\!\bra{1_P}+\ket{1_P}\!\bra{0_P}$ & $\begin{pmatrix}0&1\\[1pt]1&0\end{pmatrix}$ & $X_P\ket{0_P}=\ket{1_P},\;X_P\ket{1_P}=\ket{0_P}$ \\
		\hline
		$Y_P$ & $\displaystyle Y_P=-i\ket{0_P}\!\bra{1_P}+i\ket{1_P}\!\bra{0_P}$ & $\begin{pmatrix}0&-i\\[1pt]i&0\end{pmatrix}$ & $Y_P\ket{0_P}=i\ket{1_P},\;Y_P\ket{1_P}=-i\ket{0_P}$ \\
		\hline
		Phase-flip $Z_P$ & $\displaystyle Z_P=\ket{0_P}\!\bra{0_P}-\ket{1_P}\!\bra{1_P}$ & $\begin{pmatrix}1&0\\[1pt]0&-1\end{pmatrix}$ & $Z_P\ket{0_P}= \ket{0_P},\;Z_P\ket{1_P}=-\ket{1_P}$ \\
		\hline
		Hadamard $H_P$ & $\displaystyle H_P=\frac{1}{\sqrt2}\bigl(X_P+Z_P\bigr)$ & $\tfrac1{\sqrt2}\!\begin{pmatrix}1&1\\[2pt]1&-1\end{pmatrix}$ & Swaps $X_P\!\leftrightarrow\!Z_P$: $H_PX_PH_P^\dagger=Z_P$ \\
		\hline
		$\tfrac\pi4$ phase $T_P$ & $\displaystyle T_P=\ket{0_P}\!\bra{0_P}+e^{i\pi/4}\ket{1_P}\!\bra{1_P}$ & $\begin{pmatrix}1&0\\[2pt]0&e^{i\pi/4}\end{pmatrix}$ & Non-Clifford, turns $X_P$ into $\tfrac{X_P+Y_P}{\sqrt2}$ \\
		\hline
	\end{tabular}
\end{table}

Since states $\{\lvert 0_P\rangle,\lvert 1_P\rangle\}$ lie in $\mathcal{H}_{Q=0}$,
we have $\hat{Q}\ket{0_P}=0$ and $\hat{Q}\ket{1_P}=0$.
One can easily verify that all packaged single-qubit gates commute with $\hat{Q}$, i.e.,
\[
[\,X_P,\hat{Q}]=[\,Y_P,\hat{Q}]=[\,Z_P,\hat{Q}]=[\,H_P,\hat{Q}]=[\,T_P,\hat{Q}]=0.
\]

\subsubsection{Packaged Entangled Two-qubit Gate}

Consider two packaged qubits:
$\mathcal{H}^{(1)}_{\rm qubit} = \operatorname{span}\{\ket{0_P},\ket{1_P}\}$
and $\mathcal{H}^{(2)}_{\rm qubit} =\operatorname{span}\{\ket{0_P},\ket{1_P}\}$.
Their joint space is the tensor product space
$
\mathcal{H}^{(1)}_{\rm qubit} \otimes \mathcal{H}^{(2)}_{\rm qubit}\subset \mathcal{H}_{Q=0}\otimes \mathcal{H}_{Q=0},
$
whose basis is $\{\ket{0_P},\ket{1_P}\}^{\!\otimes2}$.

We define the \textbf{packaged controlled-NOT (CNOT) gate} as:
\begin{equation}\label{EQ:PackagedCNOT}
	\text{CNOT}_P
	=\ket{0_P}\!\bra{0_P}\!\otimes\!\mathbf 1
	+\ket{1_P}\!\bra{1_P}\!\otimes\!X_P,
\end{equation}
where the first qubit acts as control and the second as target,
the projector operators and $X_P$ are defined on packaged single-qubits.

The matrix form (computational ordering $\ket{00},\ket{01},\ket{10},\ket{11}$) of the packaged $\text{CNOT}_P$ gate is:
\begin{equation}\label{EQ:PackagedCNOTMatrix}
	\text{CNOT}_P=
	\begin{pmatrix}
		1&0&0&0\\
		0&1&0&0\\
		0&0&0&1\\
		0&0&1&0
	\end{pmatrix}.
\end{equation}

Since each packaged single-qubit has net charge $\hat{Q}=0$, the action of $\text{CNOT}_P$ remains in the net-zero sector, i.e.,
\[
[\text{CNOT}_P,\hat{Q}_1+\hat{Q}_2]=0.
\]
This ensures gauge-invariance of $\text{CNOT}_P$.

\subsubsection{Universality of Packaged Qubit Gate Set}

The universality of packaged qubit gate set is important in both theory and applications.
But here we focus on pure-packaged qubit gate to keep the discussion concise.
Later in Sec.~\ref{SEC:HybridPackagedQuditGates}, we provide rigorous proofs for ($d \times D$)-dimensional hybrid-packaged gate sets.

Similarly, we consider the ``Clifford + $T$'' library of packaged qubits:
\begin{equation}\label{EQ:CliffordPlusTPackagedQubit}
	\mathcal G_{(2)} \;=\;\{\,X_P,\;Z_P,\;H_P,\;\mathrm{CNOT}_P,\;T\,\}
	\;\subset\;\mathcal C_{\hat{Q}},
\end{equation}
where
\[
\mathcal C_{\hat{Q}} \;=\;\bigl\{\,V \in \mathrm U(2)\;\big|\;[V,\hat{Q}]=0\bigr\}.
\]
is the allowed commutant.
These five gates can be split into three groups:
\begin{enumerate}
	\item Packaged Single-qubit Clifford generators $\{X_P,Z_P,H_P\}$ (see Eqs.~(\ref{EQ:PackagedPauliX}), (\ref{EQ:PackagedPauliZ}), and (\ref{EQ:PackagedHadamard})):
	
	These obey exactly the same algebra as their conventional qubit counterparts:
	\[
	H_P\,X_P\,H_P = Z_P,
	\quad
	H_P\,Z_P\,H_P = X_P^\dagger,
	\quad
	X_P^2 = Z_P^2 = I.
	\]
	Thus, they generate the full single-qubit Clifford group on $\mathcal H_{Q=0}$.
	
	\item Packaged Entangled Clifford $\mathrm{CNOT}_P$ (see Eq.~(\ref{EQ:PackagedCNOT})):
	
	This gate commutes with the total charge $\hat{Q}_1+\hat{Q}_2$ and provides two-qubit Clifford entanglement.
	
	\item Packaged Non-Clifford diagonal $T$ (see Eq.~(\ref{EQ:PackagedT})):
	
	This gate commutes with $\hat{Q}$ and breaks out of the Clifford subgroup.
\end{enumerate}

The ``Clifford + $T$'' library, $\mathcal G_{(2)}$ is dense in $\mathrm{SU}(2^n)$.\cite{Deutsch1989,Lloyd1995,Barenco1995,Kitaev1997,NielsenChuang2010}
Because every generator lies in the commutant $\mathcal C_{\hat{Q}}:=\{V\mid[V,\hat{Q}]=0\}$, this remains true inside the neutral sector $\mathcal H_{Q=0}^{\!\otimes n}$,
which is restricted by superselection.

\begin{theorem}[Universality of a Gauge-Respecting Packaged Gate Set]
	Let $\hat{Q}$ be the conserved total charge on $\mathcal H_{Q=0}\cong\mathbb C^{2}$. 
	Then the ``Clifford + $T$'' library $\mathcal G_{(2)}$ of packaged qubits defined Eq.~(\ref{EQ:CliffordPlusTPackagedQubit}) is dense in $\mathrm{SU}(2^n)$, i.e., satisfies
	\[
	\overline{\langle\mathcal G_{(2)}\rangle}
	\;=\;
	\mathrm{SU}\bigl(\mathcal H_{Q=0}\bigr).
	\]
\end{theorem}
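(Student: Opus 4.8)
The plan is to reduce the statement to the classical Clifford$+T$ universality theorem by exhibiting an explicit unitary identification between the $n$-fold packaged neutral sector and the abstract $n$-qubit space. First I would fix the isometry $\Phi\colon \mathcal H_{Q=0}^{\otimes n}\to(\mathbb C^{2})^{\otimes n}$ determined on each tensor factor by $\ket{0_P}\mapsto\ket{0}$ and $\ket{1_P}\mapsto\ket{1}$; since the logical basis $\{\ket{0_P},\ket{1_P}\}$ is orthonormal by construction (Definition~\ref{DEF:PurePackagedQubit}), $\Phi$ is a unitary isomorphism of Hilbert spaces. Reading off the $2\times2$ matrices in Eqs.~(\ref{EQ:PackagedPauliXMatrix}), (\ref{EQ:PackagedPauliZMatrix}), (\ref{EQ:PackagedHadamardMatrix}), (\ref{EQ:PackagedTmatrix}) together with the $4\times4$ matrix in Eq.~(\ref{EQ:PackagedCNOTMatrix}), one checks immediately that conjugation by $\Phi$ carries each packaged generator to its textbook counterpart: $\Phi X_P\Phi^{\dagger}=X$, $\Phi Z_P\Phi^{\dagger}=Z$, $\Phi H_P\Phi^{\dagger}=H$, $\Phi T_P\Phi^{\dagger}=T$, and $\Phi\,\mathrm{CNOT}_P\,\Phi^{\dagger}=\mathrm{CNOT}$. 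This verification is purely a matter of comparing matrix entries and requires no further input.

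The second step is to invoke the classical result cited in the excerpt: the standard library $\{X,Z,H,\mathrm{CNOT},T\}$ generates a subgroup whose topological closure is all of $\mathrm{SU}(2^{n})$ (up to the usual overall global phase). Because $\Phi$ is a fixed unitary, conjugation by $\Phi$ is a homeomorphism of the unitary group that maps $\langle\mathcal G_{(2)}\rangle$ onto $\langle\{X,Z,H,\mathrm{CNOT},T\}\rangle$ and commutes with the operation of taking closures. Hence $\overline{\langle\mathcal G_{(2)}\rangle}=\Phi^{\dagger}\,\mathrm{SU}(2^{n})\,\Phi=\mathrm{SU}\bigl(\mathcal H_{Q=0}^{\otimes n}\bigr)$, which is exactly the claimed equality.

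The final and only genuinely new step is to confirm that this density is not secretly obstructed by the gauge constraint, i.e.\ that the reachable group is not forced into a proper subgroup of the commutant $\mathcal C_{\hat Q}$. Here I would use the single-character property of the packaged sector: every basis state of each qubit transforms under one and the same one-dimensional character $\chi$, so $U_g^{\otimes n}$ acts on the whole of $\mathcal H_{Q=0}^{\otimes n}$ as the scalar $\chi(g)^{n}\,\mathbf 1$. Consequently every unitary on this sector commutes with the gauge action and with $\hat Q$, so $\mathcal C_{\hat Q}$ restricted to the sector equals the full $\mathrm U(2^{n})$ and there is no internal block structure to preserve. By the closure Lemma~\ref{LEMMA:ClosureOfPackagedOperations}, each finite word in the generators remains inside $\mathcal C_{\hat Q}$, so the entire dense subgroup is gauge-respecting and the reduction in the first two paragraphs is legitimate. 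I expect the main obstacle to lie precisely in this paragraph: one must argue carefully, especially in the non-abelian case flagged in the text, that the two-dimensional space used for the qubit carries a single irreducible character rather than several mixing irreps, since otherwise the block-diagonality restriction noted before the definition would cut $\mathcal C_{\hat Q}$ down to a product of smaller unitary groups and density in the full $\mathrm{SU}(2^{n})$ would fail.
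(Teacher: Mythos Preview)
Your proposal is correct and follows the same approach as the paper: reduce to the classical Clifford$+T$ universality theorem via the tautological identification of $\{\ket{0_P},\ket{1_P}\}$ with the standard qubit basis, and observe that each generator commutes with $\hat Q$ so the density statement holds inside the neutral sector. The paper's treatment of this theorem is in fact considerably terser than yours---it simply cites the classical result and notes that every generator lies in $\mathcal C_{\hat Q}$---so your explicit isomorphism $\Phi$ and your third paragraph (arguing that $U_g^{\otimes n}$ acts as a scalar on the sector, hence the commutant is the full unitary group with no hidden block structure) supply detail the paper leaves implicit.
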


\begin{corollary}[Solovay-Kitaev Theorem \cite{Kitaev1997,DawsonNielsen2005} in Packaged Gates]
	For any $n$-qubit unitary $V \in \mathrm{SU}(2^{n})$ that preserves the neutral sector and any accuracy $\varepsilon\in(0,1]$, there exists a word	
	\[
	\widetilde V=G_{i_L}\cdots G_{i_1},\qquad 
	G_{i_\ell}\in\mathcal G_{\mathrm{(2)}},
	\]	
	such that	
	\[
	\|V-\widetilde V\|_{\mathrm{op}}\le\varepsilon,
	\quad
	L=O\!\bigl(\log^{\kappa}\varepsilon^{-1}\bigr),\;\kappa\le 3.97.
	\]
	Every prefix $G_{i_\ell}\cdots G_{i_1}$ commutes with the total charge $\hat{Q}_{\mathrm{tot}}$.
\end{corollary}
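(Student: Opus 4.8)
The plan is to deduce this statement directly from the classical Solovay--Kitaev theorem \cite{Kitaev1997,DawsonNielsen2005}, specialised to the compact Lie group of unitaries that preserve the neutral sector, and then to append the charge-commutation observation as a structural consequence of Lemma~\ref{LEMMA:ClosureOfPackagedOperations}. First I would identify the ambient group: the unit-determinant unitaries on $\mathcal H_{Q=0}^{\otimes n}$ form $\mathrm{SU}\bigl(\mathcal H_{Q=0}^{\otimes n}\bigr)\cong\mathrm{SU}(2^{n})$, a compact, connected, semisimple Lie group. The hypothesis that $V$ preserves the neutral sector is precisely the statement that $V$ lies in this group, so the approximation target is well posed.

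Next I would verify the two hypotheses that the Solovay--Kitaev construction imposes on a generating set. First, density: this is supplied verbatim by the preceding universality theorem, which asserts $\overline{\langle\mathcal G_{(2)}\rangle}=\mathrm{SU}(\mathcal H_{Q=0})$. Second, closure under inverses: the generators $X_P,Z_P,H_P,\mathrm{CNOT}_P$ are Hermitian involutions, hence each is its own inverse, while $T^{-1}=T^{7}$ is itself a finite word in $T$ (since $T^{8}=\mathbb I$); the generated group is therefore inverse-closed, and no inverse leaves the commutant. With both hypotheses in place, the Dawson--Nielsen recursive group-commutator algorithm produces a word $\widetilde V$ with $\|V-\widetilde V\|_{\mathrm{op}}\le\varepsilon$ whose length obeys the recursion $\ell_{k}=5\,\ell_{k-1}$ against an error contraction $\varepsilon_{k}\sim\varepsilon_{k-1}^{3/2}$, yielding $L=O(\log^{\kappa}\varepsilon^{-1})$ with $\kappa=\log 5/\log(3/2)\approx 3.97$.

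Finally I would establish the prefix property. Each letter $G_{i_\ell}\in\mathcal G_{(2)}\subset\mathcal C_{\hat Q}$ satisfies $[G_{i_\ell},\hat Q_{\mathrm{tot}}]=0$ by construction, and applying Lemma~\ref{LEMMA:ClosureOfPackagedOperations} inductively to the product $G_{i_\ell}\cdots G_{i_1}$ shows that every prefix again commutes with $\hat Q_{\mathrm{tot}}$. Hence the synthesised circuit remains inside the neutral superselection sector at every intermediate step, so the approximation never transiently leaves the physical manifold.

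The main obstacle I anticipate is not the length-bound arithmetic but a point of conceptual hygiene: one must confirm that Solovay--Kitaev is being executed \emph{inside} the restricted commutant $\mathcal C_{\hat Q}\cap\mathrm{SU}$ rather than in the full unitary group, so that no step of the recursion secretly invokes a charge-violating rotation. The resolution is that operators commuting with $\hat Q$ are block-diagonal across charge sectors, so $\mathcal C_{\hat Q}\cap\mathrm{SU}\bigl(\mathcal H\bigr)$ restricted to the neutral block is precisely $\mathrm{SU}(\mathcal H_{Q=0}^{\otimes n})$, a closed Lie subgroup in which the universality theorem already guarantees density; the group-commutator and inversion steps then stay within this subgroup automatically, so the charge constraint is preserved for free rather than enforced by hand.
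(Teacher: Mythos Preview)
Your proposal is correct and matches the paper's approach. The paper does not give a standalone proof for this qubit corollary, but the explicit proof it provides for the hybrid-packaged analogue (Corollary~\ref{COR:SolovayKitaevTheoremInHybridPackagedSpace}) proceeds exactly as you outline: density from the universality theorem, inverse-closure of the gate library, the Dawson--Nielsen recursion with the $L_{k+1}\le 5L_k$ length estimate, and the observation that the commutant $\mathcal C_{\hat Q}$ is closed under the two SK primitives (inversion and group commutator) so that every intermediate word stays gauge-invariant.
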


In other words, any gauge-respecting $n$-qubit unitary can be approximated to precision $\varepsilon$ by a word over $\mathcal G_{(2)}$ of length $L=O(\log^{\kappa}\!\varepsilon^{-1})$ with $\kappa\le3.97$ \cite{DawsonNielsen2005}.
Every intermediate circuit element $V$ keeps $[V,\hat{Q}]=0$.

This shows that using packaged qubits does not degrade computational power.
Every familiar single- and two-qubit gate has a direct packaged analogue and satisfies the same algebraic relations.
Adding a non-Clifford $T$, they form a finite and fault-tolerant \cite{KnillLaflamme1886,Aharonov2008,Boykin1999} universal set within the superselection sector.

\subsection{Packaged Qubit Circuits}
\label{SEC:PackagedQuantumCircuits}

A packaged qubit circuit is obtained by sequencing a set of packaged gates: 
\[
V_{\rm circuit} = V_k\,V_{k-1}\,\cdots\,V_2\,V_1,
\]
where each packaged gate $V_j$ satisfies
\[
[V_j,\hat{Q}]=0,
\]
i.e., each $V_j$ is gauge-invariant and acts on one or more single-particle packaged qubits.

\subsubsection{Definition of Packaged Qubit Circuits}

We now give the formal definition of a packaged qubit circuit.

\begin{definition}[Packaged Qubit Circuit]\label{DEF:PackagedQubitCircuit}
	Let $V_{\rm circuit}$ be a unitary operator 
	\[
	V_{\rm circuit} : \mathcal{H}_{\rm qubit}^{\otimes n} \to \mathcal{H}_{\rm qubit}^{\otimes n}
	\]
	that is constructed as a sequential composition of unitary operators $V_j$, i.e.,
	\[
	V_{\rm circuit} = V_k\,V_{k-1}\,\cdots\,V_2\,V_1,
	\]
	If each $ V_j $ is a packaged quantum gate that acts on the logical packaged subspace $\mathcal{H}_{\rm qubit}$ and satisfies
	\[
	[V_j, \hat{Q}] = 0,
	\]
	then we say that $ V_{\rm circuit} $ is a \textbf{packaged qubit circuit}.
\end{definition}

In other words, a packaged qubit circuit is a circuit composed entirely of packaged qubit gates that preserve gauge-invariance and remain confined to the same superselection sector $\mathcal{H}_Q$.

Given Definition \ref{DEF:PackagedQubitCircuit} and Lemma \ref{LEMMA:ClosureOfPackagedOperations}, we intuitively conclude that $\bigl[ V_{\rm circuit}, \hat{Q} \bigr] = 0$.

\subsubsection{Properties of Packaged Qubit Circuits}

In Definition \ref{DEF:PackagedQubitCircuit}, we do not explicitly require that a packaged qubit circuit to be gauge-invariant.
Thus, we need to prove that any packaged qubit circuit is gauge-invariant.

\begin{theorem}[Gauge-Invariance of Packaged Qubit Circuits]
	\label{THM:GaugeInvariantOfPackagedQuantumCircuits}
	A packaged qubit circuit is gauge-invariant.
\end{theorem}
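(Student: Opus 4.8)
The plan is to reduce the statement to the closure lemma already proved for pairs of gates (Lemma~\ref{LEMMA:ClosureOfPackagedOperations}) and then translate the resulting charge-commutation into the transformation law that \emph{defines} gauge-invariance. Concretely, I would fix notation by writing $V_{\rm circuit} = V_k V_{k-1} \cdots V_1$ with each factor a packaged gate obeying $[V_j,\hat{Q}]=0$ by Definition~\ref{DEF:PackagedQubitCircuit}, and then run an induction on the number of gates $k$.

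First I would establish the base case $k=1$, which holds by hypothesis. For the inductive step I would set $W_{k-1} = V_{k-1}\cdots V_1$, assume $[W_{k-1},\hat{Q}]=0$, and apply Lemma~\ref{LEMMA:ClosureOfPackagedOperations} to the pair $(V_k, W_{k-1})$ to obtain $[V_k W_{k-1},\hat{Q}] = [V_{\rm circuit},\hat{Q}] = 0$. This shows that the entire circuit commutes with the net-charge operator. Because the closure lemma is stated for two factors and explicitly remarked to extend to $n$ operators, this induction is essentially immediate and constitutes the routine part of the argument.

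Next I would convert the charge-commutation into the gauge-transformation property. For an Abelian gauge group the local gauge action is generated by the charge, $U_g = e^{i\alpha(g)\hat{Q}}$, so $[V_{\rm circuit},\hat{Q}]=0$ at once yields $[V_{\rm circuit}, U_g]=0$, i.e. $U_g V_{\rm circuit} = V_{\rm circuit} U_g$ for every $g\in G$. Combining this with Corollary~\ref{COR:SuperpositionStateGaugeInvariant}, which guarantees that every packaged basis state acquires a common phase $e^{i\phi(g)}$ under $U_g$, I would conclude that for any input $\ket{\Psi}$ in the packaged subspace the output transforms as $U_g V_{\rm circuit}\ket{\Psi} = V_{\rm circuit} U_g\ket{\Psi} = e^{i\phi(g)} V_{\rm circuit}\ket{\Psi}$, which is precisely gauge-invariance up to the harmless global phase.

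The hard part will be the non-Abelian case, where $[V,\hat{Q}]=0$ alone does \emph{not} force $[V,U_g]=0$, because a single total-charge eigenvalue can still host several distinct irreps that $V$ might mix. The cleanest way to close this gap is to invoke the standing restriction adopted just before the gate definitions, namely that each packaged gate is taken block-diagonal within every irrep of a fixed $\hat{Q}$-sector, so that each $V_j$ commutes with the full gauge action $U_g$ and not merely with $\hat{Q}$. Under that restriction the same Leibniz-rule induction runs verbatim with $\hat{Q}$ replaced by $U_g$, and the gauge-covariance conclusion then holds for an arbitrary compact gauge group, completing the proof.
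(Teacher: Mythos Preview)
Your induction via the Leibniz rule is exactly the paper's proof: the paper simply shows by induction on $k$ that $[V_{\rm circuit},\hat Q]=0$ and stops there, because in this manuscript ``gauge-invariant'' for a gate or circuit is \emph{defined} to mean commutation with $\hat Q$ (see the definition of a packaged qubit gate). Your additional step of upgrading $[V,\hat Q]=0$ to $[V,U_g]=0$, with separate Abelian and non-Abelian discussions, is correct and more conceptually complete, but it is not required for the theorem as the paper states it; the paper never claims or needs commutation with every $U_g$ at the circuit level, only preservation of the fixed charge sector.
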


\begin{proof}
	Consider a packaged qubit circuit
	\[
	V_{\rm circuit} = V_k\,V_{k-1}\,\cdots\,V_2\,V_1,
	\]
	where each gate $ V_j $ is a packaged qubit gate that satisfies
	$
	[V_j, \hat{Q}] = 0.
	$
	Then we need to prove
	$
	\bigl[ V_{\rm circuit}, \hat{Q} \bigr] = 0.
	$
	Let us now prove the theorem by induction on the number $ k $ of gates.
	
	\begin{enumerate}
		\item \textbf{Base case:} For $ k=1 $, the circuit is simply $ V_{\rm circuit} = V_1 $, and by assumption,
		\[
		[V_1, \hat{Q}] = 0.
		\]
		
		\item \textbf{Inductive step:} Assume that for a product of $ n $ gates,
		\[
		W_{n} = V_n\,V_{n-1}\,\cdots\,V_1,
		\]
		we have
		\[
		\left[ W_{n}, \hat{Q} \right] = 0.
		\]
		Now consider the product of $ n+1 $ gates,
		\[
		W_{n+1} = V_{n+1}\,W_{n}.
		\]
		Using the product rule for commutators, we have
		\[
		\left[ W_{n+1}, \hat{Q} \right] = V_{n+1} \left[ W_{n}, \hat{Q} \right] + \left[ V_{n+1}, \hat{Q} \right] W_{n}.
		\]
		By the induction hypothesis, $\left[ W_{n}, \hat{Q} \right] = 0$, and by assumption, $\left[ V_{n+1}, \hat{Q} \right] = 0$. Therefore,
		\[
		\left[ W_{n+1}, \hat{Q} \right] = V_{n+1} \cdot 0 + 0 \cdot W_{n} = 0.
		\]
		Thus, by induction, we conclude that for any packaged qubit circuit $ V_{\rm circuit} $,
		\[
		\left[ V_{\rm circuit}, \hat{Q} \right] = 0.
		\]
	\end{enumerate}
\end{proof}

Because the entire packaged qubit circuit is gauge-invariant (confined to the net-zero gauge sector), it respect the superselection rules and maintain the packaged structure.
Thus, it guarantees that the circuit operations are physically realizable.
This constraint also acts like a built-in stabiliser:
errors that would change the net charge anticommute with $\hat{Q}$ are forbidden, while gauge-conserving errors remain correctable by ordinary techniques. 
Hence packaged qubits, gates and circuits provide a native layer of error filtering that we will use in the new quantum-communication protocols.

\begin{example}
	Let $\lvert \Psi_{\rm in}\rangle \in \mathcal{H}_{\rm qubit}^{\otimes n}$ be an input state with net charge zero (that is, each packaged qubit lies in the subspace $\mathcal{H}_{Q=0}$ where the net‐charge operator $\hat{Q}$ has eigenvalue zero).
	Because every gate in a packaged circuit commutes with the total-charge operator $\hat{Q}$, the entire evolution
	\[
	|\Psi_{\text{out}}\rangle
	=\;V_kV_{k-1}\!\cdots V_1\,|\Psi_{\text{in}}\rangle , 
	\qquad [V_j,\hat{Q}]=0 ,
	\]
	remains inside the neutral sector $\mathcal H_{Q=0}^{\otimes n}$. 
	Equivalently, for any local gauge transformation $U_g$,
	\[
	U_g |\Psi_{\text{out}}\rangle
	=\;e^{i\phi(g)} |\Psi_{\text{out}}\rangle ,
	\]
	so packaged circuits are automatically gauge-respecting and physically realizable.
\end{example}

\section{Packaged States for Quantum Communication}
\label{SEC:PackagedStatesForQuantumCommunication}

In Secs.~\ref{SEC:GaugeInvariantPackagedStates} and \ref{SEC:PackagedQubitsGatesAndCircuits}, we have shown that every packaged state resides in a fixed superselection sector (e.g., with net charge $Q$) and is therefore immune to operations that would mix distinct gauge sectors.
Here we develop a mathematical foundation for employing such packaged states as robust resource states in quantum communication protocols.

\subsection{Packaged Messengers}
\label{SEC:PackagedMessengers}

In quantum communication, we need messengers (quantum states) to carry and transfer quantum information. 
Because we are now developing the quantum communication protocols using packaged states, the state superposition is subject to superselection rules \cite{WWW1952,DHR1971,DHR1974,StreaterWightman2001}.
Therefore, we meet the similar problems as those in constructing packaged qubits (see Sec.~\ref{SEC:PackagedQubits}),
so we now address it in detail.

\subsubsection{Definition of Packaged Messengers}

We prefer to define a packaged messenger as the packaged Hilbert space of a particle rather than as the particle itself.
In this way, the logical information is explicitly encoded in the packaged space and its IQNs remain locked by gauge-invariance.
Defining a packaged messenger as the packaged space explicitly expresses that all IQNs are locked together within a fixed superselection sector.
This ensures that the packaged messenger is invariant under local gauge transformations.
It thus naturally protects against errors that could otherwise mix different gauge sectors.
Furthermore, by focusing on the packaged space, we abstract away from the specific nature of the particle and focus on the properties required for robust quantum information processing.
Finally, in communication protocols, what is transmitted, measured, and manipulated is the quantum state in the packaged space, not the particle itself.
Hence, defining a packaged messenger as the packaged Hilbert space clearly identifies the carrier of quantum information.

A packaged messenger at least satisfies the conditions of a pure-packaged subspace given in Definition \ref{DEF:PurePackagedSubspace}.
In addition to this, a packaged messenger must be transmittable.
Similarly to the definition of a pure-packaged qubit (see Definition \ref{DEF:PurePackagedQubit}), we formally define a packaged messenger as:

\begin{definition}[Packaged Messenger]\label{DEF:PackagedMessenger}
	A \textbf{packaged messenger} is a two-dimensional pure-packaged subspace 
	$\mathcal H_{\rm mess} = \operatorname{span}\{\lvert 0_P\rangle,\lvert 1_P\rangle\} ~ \subset \mathcal H_{Q=0}$ that is used to carry and transfer quantum information.
	$\forall~ |\Psi_M\rangle \in \mathcal H_{\rm mess}$, we can write it as 
	\begin{equation}\label{EQ:PackagedMessengerState}
		\boxed{|\Psi_M\rangle \;=\; \alpha\,|0_P\rangle \;+\;\beta\,|1_P\rangle,}
	\end{equation}
	and called it a \textbf{packaged messenger state}.
	The packaged messenger thus inherits the intrinsic error-protection provided by the superselection rules. 
\end{definition}

\subsubsection{Properties of a Packaged Messenger}

According to Definition \ref{DEF:PackagedMessenger}, a packaged messenger has the following properties:

\paragraph{(1) A Packaged Messenger is Gauge-Invariant.}

Let $G$ be a local gauge group.
Because a packaged messenger $\mathcal H_{\rm mess} \subset \mathcal H_{Q=0}$, 
for all $\lvert \Psi_M\rangle \in \mathcal H_{\rm mess}$ and $g \in G$, we have
\[
U_g\,\lvert \Psi_M\rangle = e^{i\phi(g)}\,\lvert \Psi_M\rangle,
\]
This ensures that the messenger state remains in the desired superselection sector and therefore is robust against gauge-violating noise.
It is necessary for a packaged messenger state to be valid in a packaged quantum communication protocol.

\paragraph{(2) Global IQNs differ in $\{|0_M\rangle,|1_M\rangle\}$ of a Single-Particle Messenger.}

If the packaged messenger is built from single-particle packaged qubit,
then according to Proposition \ref{PROP:NecessarySufficientConditionsSingleParticle}, we require that
\[
\hat{Q}\lvert 0_M\rangle = 0,\quad \hat{Q}\lvert 1_M\rangle = 0,
\]
and for a global operator $\hat{F}$ (which is not subject to local gauge constraints),
\[
\hat{F}\lvert 0_M\rangle = f\,\lvert 0_M\rangle,\quad \hat{F}\lvert 1_M\rangle = -f\,\lvert 1_M\rangle,
\]
with $f\neq 0$.
This ensures that a superposition
\[
\lvert \Psi_M\rangle = \alpha\,\lvert 0_M\rangle + \beta\,\lvert 1_M\rangle.
\]
is physically allowed and can serve as a messenger state in protocols such as quantum teleportation.

In real implementations, one would employ neutral mesons $(K^0,\bar K^0)$ or entangled atomic pairs rather than photons ($|\gamma\rangle, |\bar\gamma\rangle$), since the latter satisfy $|\gamma\rangle=|\bar\gamma\rangle$ and the superposition $\lvert \Psi_M\rangle$ is trivial.

\paragraph{(3) Global IQNs are Identical in $\{|0_M\rangle,|1_M\rangle\}$ of a Particle-Antiparticle Messenger.}

If the packaged messenger is built from a particle-antiparticle qubit, then $\lvert 0_M\rangle$ and $\lvert 1_M\rangle$ must have both zero net gauge charge
\[
\hat{Q}\lvert 0_M\rangle = 0,\quad \hat{Q}\lvert 1_M\rangle = 0,
\]
and zero net global quantum number
\[
\hat{F}\lvert 0_M\rangle = 0, \quad \hat{F}\lvert 1_M\rangle = 0.
\]
This is naturally accomplished if the particle and antiparticle are distinct.

\subsection{Packaged Resource States}
\label{SEC:PackagedResourceStates}

In quantum communication (such as those used in teleportation, superdense coding, or QKD), a resource state usually refers to an entangled state that is shared among parties.
It may serve as a quantum channel.
In this subsection, we extend the resource states to packaged entangled states.

\subsubsection{Definition of Packaged Resource States}

In a packaged subspace, a resource state's components must belong to the same net-charge sector and obey the superselection rules.
It must be a packaged entangled state and is gauge-invariant.

\begin{definition}[Packaged Resource States]
	Let $\lvert \Psi\rangle$ be a bipartite packaged state shared between Alice and Bob.
	If $\lvert \Psi\rangle$ satisfies the following conditions:
	\begin{enumerate}
		\item Single-Sector Condition:
		No term in $\lvert \Psi\rangle$ has a different net charge, i.e.,
		$
		\lvert \Psi\rangle \in \mathcal{H}_{Q}
		$
		for a fixed $Q$.
		This ensure that $\lvert \Psi\rangle$ obeys superselection and is therefore physical,
		
		\item Entanglement: 
		$\lvert \Psi\rangle$ is non-factorizable with respect to the bipartite partition, i.e., $\lvert \Psi\rangle$ cannot be written as $\lvert \phi_A\rangle\otimes\lvert \phi_B\rangle$),
		
		\item Gauge-Invariance: $\lvert \Psi\rangle$ is gauge-invariant, i.e.,
		$\forall ~ g\in G$, we have
		$
		U_g\,\lvert \Psi\rangle = e^{i\phi(g)}\,\lvert \Psi\rangle,
		$
	\end{enumerate}
	then we say that $\lvert \Psi\rangle$ is a \textbf{packaged resource state}.
\end{definition}

For application purpose, we construct the packaged resource state on top of abstract packaged qubit states as developed in Sec.~\ref{SEC:PackagedQubits}.

\subsubsection{Constructing Full Packaged Bell Basis as Resource States}

Let us now consider the packaged resource states of a particle-antiparticle pair.
In our framework, if one directly uses a bare particle state $|P\rangle$ and its antiparticle state $|\bar{P}\rangle$, only the two maximal packaged entangled states
\[
\lvert \Psi_P^{\pm} \rangle = \frac{1}{\sqrt{2}}\Bigl(\lvert P\,\bar{P} \rangle \pm \lvert \bar{P}\,P \rangle\Bigr)
\]
are available (see Sec.~\ref{SEC:TwoParticlePackagedEntangledStates}).
The reason is that in many cases the bare states may not belong to the same superselection sector if they carry different gauge charges.

To overcome the above limitations, we instead define logical packaged qubit basis states $\{|0_P\rangle,\,|1_P\rangle\}$ that both lie within the zero net-charge subspace $\mathcal{H}_{Q=0}$.
These packaged logical states are constructed, for example, by combining the bare states in such a way that the overall gauge charge vanishes.
Specifically, the four packaged product states $|0_P\rangle_A\otimes|1_P\rangle_B$, $|1_P\rangle_A\otimes|0_P\rangle_B$, $|0_P\rangle_A\otimes|0_P\rangle_B$, and $|1_P\rangle_A\otimes|1_P\rangle_B$ all remain within the same zero charge sector.
Thus, the superposition between $|0_P\rangle_A\otimes|1_P\rangle_B$ and $|1_P\rangle_A\otimes|0_P\rangle_B$, $|0_P\rangle_A\otimes|0_P\rangle_B$ and $|1_P\rangle_A\otimes|1_P\rangle_B$ are all allowed because each term has zero charge.

With such a packaged 2-qubit basis, we can now construct a full packaged Bell basis with four maximally packaged entangled states:
\begin{align}\label{EQ:FullPackagedBellBasis}
	\boxed{\begin{aligned}
		|\Psi_P^+\rangle_{AB} &= \frac {1}{\sqrt {2}} \Bigl(|0_P\rangle_A\otimes|1_P\rangle_B + |1_P\rangle_A\otimes|0_P\rangle_B\Bigr), \\
		|\Psi_P^-\rangle_{AB} &= \frac {1}{\sqrt {2}} \Bigl(|0_P\rangle_A\otimes|1_P\rangle_B - |1_P\rangle_A\otimes|0_P\rangle_B\Bigr), \\
		|\Phi_P^+\rangle_{AB} &= \frac {1}{\sqrt {2}} \Bigl(|0_P\rangle_A\otimes|0_P\rangle_B + |1_P\rangle_A\otimes|1_P\rangle_B\Bigr), \\
		|\Phi_P^-\rangle_{AB} &= \frac {1}{\sqrt {2}} \Bigl(|0_P\rangle_A\otimes|0_P\rangle_B - |1_P\rangle_A\otimes|1_P\rangle_B\Bigr).
	\end{aligned}}
\end{align}

As a result, by working with the packaged logical qubit basis rather than with the bare particle states, one gains access to the full four-dimensional Bell basis.
This is exactly what we need to adapt all quantum algorithms and communication protocols from conventional quantum information theory into the packaged framework.

One can easily verify that, under the charge operator $Q$,
\[
\hat{Q}\,\lvert \Psi^{\pm}_P\rangle_{A,B} = 0, \quad \hat{Q}\,\lvert \Phi^{\pm}_P\rangle_{A,B} = 0.
\]
and under a local gauge transformation $U_g$,
\[
U_g\,\lvert \Psi^{\pm}_P\rangle_{A,B} = e^{i\psi(g)}\,\lvert \Psi^{\pm}_P\rangle_{A,B},
\quad
U_g\,\lvert \Phi^{\pm}_P\rangle_{A,B} = e^{i\phi(g)}\,\lvert \Phi^{\pm}_P\rangle_{A,B}.
\]
Thus, $\lvert \Psi^{\pm}_P\rangle_{A,B}$ and $\lvert \Phi^{\pm}_P\rangle_{A,B}$ qualify as packaged resource states.

Using Eq.~(\ref{EQ:PackagedMessengerState}) and Eq.~(\ref{EQ:FullPackagedBellBasis}), we can bypass the superselection rules and translate the conventional quantum communication protocols into gauge-invariant packaged protocols.

\subsubsection{Multi-Party and Multi-Particle Resource States}

For quantum communication protocols that involves more than two parties (or multi-particle entangled states), the same principle applies.
For example, if we consider a multipartite state of the form
\[
\lvert \Psi\rangle = \sum_{n} \alpha_n \, \lvert \Theta_n\rangle_A\,\lvert \Theta'_n\rangle_B,
\]
where each term $\lvert \Theta_n\rangle_A\,\lvert \Theta'_n\rangle_B$ lies in the same net-charge sector, i.e.,
\[
\hat{Q}\,\lvert \Theta_n\rangle_A\,\lvert \Theta'_n\rangle_B = Q\, \lvert \Theta_n\rangle_A\,\lvert \Theta'_n\rangle_B,
\]
where $\hat Q = \hat Q_A+\hat Q_B$ is the total charge operator and each term lies in its $Q$-eigenspace.
Thus, the overall state is gauge-invariant.

\subsubsection{Distribution and Preservation}

Once a packaged resource state is generated, it must be distributed between spatially separated parties (Alice and Bob) without loss of gauge-invariance.

Let $V_{\rm channel}$ represent the unitary evolution describing the communication channel.
To preserve gauge-invariance, we require that
\[
[V_{\rm channel}, \hat{Q}] = 0.
\]
Then, if
\[
\lvert \Psi_{\rm in}\rangle \in \mathcal{H}_Q,
\]
the transmitted state is
\[
\lvert \Psi_{\rm out}\rangle = V_{\rm channel}\,\lvert \Psi_{\rm in}\rangle,
\]
and because $V_{\rm channel}$ commutes with $\hat{Q}$,
\[
\hat{Q}\,\lvert \Psi_{\rm out}\rangle = Q\,\lvert \Psi_{\rm out}\rangle.
\]
Thus, the state remains in the same superselection sector throughout transmission.

\subsubsection{Robustness Under Noise}

In a gauge-invariant framework, any physically allowed noise process is described by Kraus operators $\{E_k\}$, which must commute with the net-charge operator:
\[
[E_k, \hat{Q}] = 0.
\]
This constraint means that errors which would mix different gauge sectors are forbidden. As a result, the internal structure of packaged resource states cannot be altered by gauge-violating errors.
This means that the packaged resource states enjoy an inherent form of error protection.

\section{Lifting to (\texorpdfstring{$d \times D$})-Dimensional Hybrid-Packaged Space}
\label{SEC:dDDimensionalHybridPackagedSpace}

In Secs.~\ref{SEC:PackagedQubitsGatesAndCircuits} and \ref{SEC:PackagedStatesForQuantumCommunication}, we constructed packaged qubits, qubit gates, and qubit circuits, and packaged resource states.
All these objects are 2-dimensional and essentially associated with $\mathbb{Z}_2$ symmetry \cite{MaGroup2025}.

The essence of packaged states is that the internal quantum numbers (IQNs) are locked into an inseparable block.
As shown in Ref.~\cite{MaGroup2025}, there exist various packaged states associated with different finite or compact groups, such as $\mathbb{Z}_N$, $\mathrm{SU}(N)$, and $p$-form symmetries.
This means that packaged states are inherently high-dimensional, such as the SU(3) color states.
Therefore, it is natural to generalize the IQNs and packaged states into a high-dimensional space \cite{Rechtsman2013,Erhard2020}.

In this section, we describe the details of building a ($d \times D$)-dimensional hybrid-packaged subspace, and constructing ($d \times D$)-dimensional hybrid-packaged qudits, qudit gates, and qudit circuits.
We provide detailed mathematical derivations and physical interpretations.

\subsection{($d \times D$)-Dimensional Hybrid-Packaged Space}
\label{SEC:dDDimensionalHybridPackagedQudits}

In this subsection, we first construct a $d$-dimensional internal Hilbert space and then a $D$-dimensional external Hilbert space.
Finally, we take a tensor product to get the ($d \times D$)-dimensional hybrid-packaged subspace.

\subsubsection{Constructing $d$-Dimensional Packaged Hilbert Space}
\label{SEC:ConstructingdDimensionalPackagedHilbertSpace}

The internal quantum numbers (IQNs) are determined by physical constraints, such as gauge-invariance, which package all IQNs into an inseparable block.
We define the internal Hilbert space as
\begin{equation}\label{EQ:dDimensionalInternalHilbertSpace}
	\mathcal{H}_{\mathrm{int}}^{(d)} = \operatorname{span}\Bigl\{\,\lvert 0_P\rangle,\,\lvert 1_P\rangle,\,\dots,\,\lvert (d-1)_P\rangle\Bigr\}\,.
\end{equation}
whose basis is
\begin{equation}\label{EQ:dDimensionalInternalBasis}
	\mathcal{B}_{\mathrm{int}}^{(d)} = \{|0_P\rangle, |1_P\rangle, \dots, |(d-1)_P\rangle\} \subset \mathcal{H}_{Q}.
\end{equation}

Each basis vector in this space satisfies the packaging condition: an eigenstate of the net-charge operator has zero eigenvalue, i.e.,
\[
\hat{Q}\,\lvert j_P\rangle = 0,\quad \text{for } j=0,1,\dots,d-1\,.
\]

Under any local gauge transformation $U_g$, we have
\[
U_g\,\lvert j_P\rangle = e^{i\phi(g)}\,\lvert j_P\rangle\,,\quad \forall j,\quad \forall\, g\in G\,.
\]

This implies that, although the internal space is $d$-dimensional and carries rich logical structure (e.g., different flavor or color configurations), every state in $\mathcal{H}_{\mathrm{int}}^{(d)}$ remains neutral under the conserved gauge charge.
Their transformation only results in an overall phase factor.
Thus, all states reside in the same superselection sector (usually $\mathcal{H}_{Q=0}$).

\subsubsection{Constructing $D$-Dimensional External Hilbert Space}
\label{SEC:ConstructingDDimensionalExternalHilbertSpace}

The external degrees of freedom (DOFs) can come from various physical sources such as orbital angular momentum, polarization, or spatial modes.
We now assume that the external space is $D$-dimensional:
\begin{equation}\label{EQ:DDimensionalExternalHilbertSpace}
	\mathcal{H}_{\mathrm{ext}}^{(D)} = \operatorname{span}\Bigl\{\,\lvert 0_E\rangle,\,\lvert 1_E\rangle,\,\dots,\,\lvert (D-1)_E\rangle\Bigr\}\,.
\end{equation}
whose basis is
\begin{equation}\label{EQ:DDimensionalExternalBasis}
	\mathcal{B}_{\mathrm{ext}}^{(D)} = \{\lvert 0_E\rangle,\,\lvert 1_E\rangle,\,\dots,\,\lvert (D-1)_E\rangle\},
\end{equation}

The gauge transformation acts trivially (or in a fixed way) on the external DOFs:
\[
U_g\,\lvert k_E\rangle = \lvert k_E\rangle\,,\quad \forall\, k,\quad \forall\, g\in G\,.
\]

Because these degrees of freedom are not involved in carrying the gauge quantum numbers, they are free of the constraints imposed by the packaging principle.
In many experiments, such external spaces are naturally high‑dimensional (e.g., orbital angular momentum states can be very high-dimensional) and provide extra channels for encoding information.

\subsubsection{Constructing ($d \times D$)-Dimensional Hybrid-Packaged Hilbert Space}

The total hybrid-packaged Hilbert space is now
\begin{equation}\label{EQ:dDHybridPackagedHilbertSpace}
	\mathcal{H}_{\text{hyb}}^{(d \times D)} = \mathcal{H}_{\mathrm{int}}^{(d)} \otimes \mathcal{H}_{\mathrm{ext}}^{(D)}\,,
\end{equation}
which has a dimension
$
\dim \left(\mathcal{H}_{\text{hyb}}^{(d \times D)}\right) = d \times D\,.
$

Let $G$ be the local gauge group, $\hat{Q}$ be the total conserved charge 
(e.g.\ electric charge or color), and 
$\{\hat G_x\}_{x\in\Lambda}$ be the generators of local Gauss-law constraints, where $\Lambda$ is the spatial coordinates. 
If $|\Psi\rangle$ is a physical state, then it must satisfy
\begin{equation}
	\hat{Q}\,|\Psi\rangle = Q\,|\Psi\rangle,
	\qquad 
	\hat G_x\,|\Psi\rangle = 0
	\quad\forall x\in\Lambda.
	\label{EQ:GaussLawConstraints}
\end{equation}

We note that it is more natural to define a packaged subspace than to define a packaged space due to the following reasons:
(1) Mathematically, in a charge sector (a subspace) $\mathcal{H}_Q$, all superselection constraints are automatically satisfied and we never invoke unphysical superpositions across $Q$. 
(2) Physically, experiments always prepare states inside a definite charge sector. The packaged subspace tells us exactly which states and operations are allowed. 
(3) When we speak of gauge‑conserving errors, the commutator condition $[E_k,\hat{Q}]=0$ is interpreted relative to the fixed sector.
Leakage outside $\mathcal{H}_{\text{hyb}}^{(d \times D)}$ is naturally classified as gauge‑violating.

\paragraph{(1) Definition of Hybrid-Packaged Space.}

Due to all these concerns, let us first give a formal definition of a hybrid-packaged subspace:
\begin{definition}[Hybrid-Packaged Subspace]\label{DEF:HybridPackagedSubspace}
	Let $G$ be a gauge group and $\mathcal H_Q$ be a charge (super‑selection) sector of $d$-dimension.
	\begin{enumerate}
		\item Let 
		\[
		\mathcal H_{\text{int}}^{(d)}
		=\text{span}\{\;|0_P\rangle,\dots,|(d-1)_P\rangle\}
		\subseteq \mathcal H_Q,
		\quad
		\hat{Q}|j_P\rangle = Q\,|j_P\rangle ,
		\] 
		be a $d$-dimensional internal Hilbert subspace (a pure-packaged subspace) on which every $U_g\in G$ acts by a common phase. 
		
		\item Let 
		\[\mathcal H_{\text{ext}}^{(D)}=\text{span}\{|0_E\rangle,\dots,|(D-1)_E\rangle\}
		\] 
		be an $D$-dimensional external Hilbert subspace on which $G$ acts trivially.
	\end{enumerate} 
	Then we say that
	\[
	\mathcal{H}_{\text{hyb}}^{(d \times D)} :=
	\mathcal H_{\text{int}}^{(d)}
	\otimes
	\mathcal H_{\text{ext}}^{(D)}
	\]
	is a \textbf{$(d\times D)$-dimensional hybrid-packaged subspace} (or simply a \textbf{packaged subspace}) in the sector $Q$.
\end{definition}

Referring to the total Hilbert space decomposition, we give definition to global hybrid-packaged space:

\begin{definition}[Global Hybrid-Packaged space]
	Let $G$ be a gauge group.
	The direct sum (total Hilbert space)
	\[
	\mathcal{H}_{\text{hyb}} :=
	\bigoplus_{Q\in\text{Spec}\,\hat{Q}} 
	\mathcal{H}_{\text{hyb}}^{(Q)}
	\]
	is block diagonally invariant under $G$. 
	We call $\mathcal{H}_{\text{hyb}}$ the \textbf{global hybrid-packaged space}.
\end{definition}

We now define a projection operator that project the global hybrid-packaged space into a particular hybrid-packaged subspace:

\begin{definition}[Projection operator]
	A \textbf{projection operator} $\Pi_{\text{hyb}}^{(Q)}$ is an orthogonal projector onto 
	$\mathcal{H}_{\text{hyb}}^{(Q)}$, i.e.,
	\begin{equation}
		\Pi_{\text{hyb}}^{(Q)}
		:=\!
		\prod_{x\in\Lambda}\!\bigl(\tfrac12(\mathbf1+\hat G_x)\bigr)
		\;
		\left(\delta_{\hat{Q},Q}\right),
		\qquad
		\left(\Pi_{\text{hyb}}^{(Q)}\right)^2 = \Pi_{\text{hyb}}^{(Q)},
	\end{equation}
	where $\delta_{\hat Q,Q}$ is the projector onto the $\hat Q=Q$ eigenspace.
\end{definition}

In other words, a projection operator is a \textbf{packaging map}
$
\mathcal P: \mathcal{H}_{\text{hyb}} \to \mathcal{H}_{\text{hyb}}^{(Q)}
$
transforms as: $\mathcal P: |\phi\rangle\mapsto\Pi_{\text{hyb}}^{(Q)}|\phi\rangle$.
Here $\prod_x(\tfrac12(1+G_x))$ enforces the local Gauss law at every site, while $\delta_{\hat Q,Q}$ projects onto the fixed global‐charge sector.
Together they package all IQNs into a single neutral block.

\paragraph{(2) Two-index Basis of Hybrid-Packaged Space.}

We construct the \textbf{two-index hybrid-packaged basis} by taking the tensor product of the $d$-dimensional internal basis Eq.~(\ref{EQ:dDimensionalInternalBasis}) and the $D$-dimensional external basis Eq.~(\ref{EQ:DDimensionalExternalBasis}), i.e.,
\begin{equation}\label{EQ:dDHybridPackagedBasis}
	\mathcal{B}_{\text{hyb}}^{(d \times D)} = \{\,|j_P\rangle \otimes |k_E\rangle \,:\, j=0,1,\dots,d-1,\; k=0,1,\dots,D-1 \}.
\end{equation}
This is a natural orthonormal basis of the hybrid-packaged subspace 
$
\mathcal H_{\text{hyb}}^{(Q)}
:=\mathcal H_{\text{int}}^{(d)}\otimes\mathcal H_{\text{ext}}^{(D)}
$.

Thus, a general normalized state in $\mathcal{H}_{\text{hyb}}^{(d \times D)}$ can be expanded as
\begin{equation}\label{EQ:GeneralNormalizedHybridPackagedState}
	|\Psi\rangle = \sum_{j=0}^{d-1} \sum_{k=0}^{D-1} c_{jk} \; |j_P\rangle\otimes|k_E\rangle\,,\quad \text{with } \sum_{j,k} |c_{jk}|^2 =1\,.
\end{equation}

\paragraph{(3) Single-Index Basis (Computational (Z) Basis) of Hybrid-Packaged Space.}
The two-index basis $\lvert j_P\rangle\otimes\lvert k_E\rangle$ (or $\ket{j,k}$) is a natural choice.
However, it is often convenient to re-index the two-index into a single-index
\begin{equation}\label{EQ:SingleIndex}
	J:=jD+k,
	\quad 
	j=0,\dots,d-1,
	\quad
	k=0,\dots,D-1,
	\quad
	J=0,\dots,N-1,
\end{equation}
where $N = dD$.
Then we define a single-index basis vector as
\begin{equation}\label{EQ:dDSingleIndexHybridPackagedBasisVector}
	\lvert J\rangle
	:=\lvert j_P\rangle\otimes\lvert k_E\rangle
	\in\mathcal H_{\text{hyb}}^{(Q)},
\end{equation}
and the \textbf{single-index hybrid-packaged basis} as
\begin{equation}\label{EQ:dDSingleIndexHybridPackagedBasis}
	\mathcal{B}_{\text{hyb}}^{(N)} = \{\,|J\rangle \,:\, J = 0, 1, \dots, N-1 \},
\end{equation}
so that the hybrid-packaged subspace is isomorphic to $\mathbb{C}^{dD}$.
Using this notation, later we will see that the generalized Bell basis, Fourier transforms, and other protocols carry over by replacing $N\mapsto dD$.

The set $\{\lvert J\rangle\}_{J=0}^{N-1}$ is an orthonormal basis (the
hybrid Z‑basis). 
Because each factor in Eq.~\eqref{EQ:dDSingleIndexHybridPackagedBasisVector} satisfies
$\hat{Q}\,\lvert j_P\rangle=0$ and $\hat{Q}\,\lvert k_E\rangle=0$, we
have
\[
\hat{Q}\,\lvert J\rangle=0,\qquad
U_g\,\lvert J\rangle=e^{i\phi(g)}\lvert J\rangle,
\]
so every logical basis state already lives in the desired sector
$\mathcal H_{Q=0}$.

\subsubsection{Properties of ($d \times D$)-Dimensional Hybrid-Packaged Space}

Let us now prove two basic properties of the hybrid-packaged Hilbert space.

\begin{property}
	A major difference between $\mathcal H_{\text{int}}^{(d)}$ and $\mathcal H_{\text{ext}}^{(D)}$ is that a local gauge transformation $U_g$ only acts on $\mathcal H_{\text{int}}^{(d)}$, but does not act on $\mathcal H_{\text{ext}}^{(D)}$.
\end{property}

\begin{proof}
	Using Eq.~(\ref{EQ:dDHybridPackagedHilbertSpace}),
	\[
	\mathcal{H}_{\text{hyb}}^{(d \times D)} = \mathcal{H}_{\mathrm{int}}^{(d)} \otimes \mathcal{H}_{\mathrm{ext}}^{(D)},
	\]
	where $\mathcal H_{\text{int}}^{(d)}$ is the internal packaged space that carries a (finite-dimensional) unitary representation
	$\rho : G \to \mathrm V\!\bigl(\mathcal H_{\text{int}}^{(d)}\bigr)$
	of the local gauge group $G$, whereas $\mathcal H_{\text{ext}}^{(D)}$ is the external space that is gauge-blind (it realises only the trivial representation).

	By definition, a local gauge transformation $U_g$ (with $g\in G$) acts as
	\[
	U_g \;=\; \rho(g)\;\otimes\; \mathbf 1_{\text{ext}} ,
	\]
	where $\mathbf 1_{\text{ext}}$ denotes the identity on $\mathcal H_{\text{ext}}^{(D)}$.
	For every factorised basis vector $|\alpha\rangle_{\text{int}}\otimes |x\rangle_{\text{ext}} \in \mathcal{H}_{\text{hyb}}$, we have
	\[
	U_g \bigl(|\alpha\rangle_{\text{int}}\otimes |x\rangle_{\text{ext}}\bigr)
	\;=\;
	\bigl(\rho(g)|\alpha\rangle_{\text{int}}\bigr)
	\,\otimes\,
	|x\rangle_{\text{ext}}.
	\]
	This shows that $U_g$ only acts on the internal component and leaves the external component untouched.

	Since any state $|\psi\rangle_{\text{hyb}} \in \mathcal{H}_{\text{hyb}}$ can be written as a superposition of such product states $|\alpha\rangle_{\text{int}}\otimes |x\rangle_{\text{ext}}$ and $U_g$ is linear operator, the statement extends to the whole space $\mathcal{H}_{\text{hyb}}$.
\end{proof}

The internal part comes from the packaging of IQNs and ensures that the state remains in $\mathcal{H}_{Q=0}$.
The external part is now generalized to dimension $D$ and increases the effective alphabet (or computational space) for information encoding.
Combining these two parts together, we can encode a hybrid-packaged state in $d \times D$ levels, which is naturally protected from gauge-violating noise.

\begin{corollary}[Gauge‑invariance of Hybrid-packaged Subspace]
	\label{cor:HybridGaugeInvariance}
	The $(d\times D)$-dimensional hybrid-packaged subspace
	given in Definition \ref{DEF:HybridPackagedSubspace}, i.e.,
	\[
	\mathcal H_{\text{\emph{hyb}}}^{(d\times D)}
	=\mathcal H_{\mathrm{int}}^{(d)}
	\otimes\mathcal H_{\mathrm{ext}}^{(D)},
	\]	
	is gauge-invariant.
\end{corollary}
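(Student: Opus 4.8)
The plan is to verify directly that $\mathcal H_{\text{hyb}}^{(d\times D)}$ meets the defining clauses of Definition~\ref{DEF:PurePackagedSubspace} — a fixed charge eigenvalue and a uniform one‑dimensional character under $G$ — by checking them on the product basis and then extending to arbitrary states by linearity. Since the internal factor is assumed pure‑packaged and the external factor is gauge‑blind, the corollary should follow as essentially routine bookkeeping once the factorized action of $U_g$ established in the preceding Property is invoked.

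First I would recall that, by the Property just proven, every local gauge transformation acts on the hybrid space as $U_g = \rho(g)\otimes\mathbf 1_{\text{ext}}$, touching only the internal tensor factor. I would then evaluate $U_g$ on a product basis vector $|j_P\rangle\otimes|k_E\rangle$. Because $\mathcal H_{\mathrm{int}}^{(d)}$ is by hypothesis a pure‑packaged subspace, Definition~\ref{DEF:PurePackagedSubspace}(b) supplies $\rho(g)\,|j_P\rangle = e^{i\phi(g)}\,|j_P\rangle$ with a phase independent of $j$, while the external factor obeys $\mathbf 1_{\text{ext}}|k_E\rangle = |k_E\rangle$. Combining these gives $U_g\bigl(|j_P\rangle\otimes|k_E\rangle\bigr) = e^{i\phi(g)}\,|j_P\rangle\otimes|k_E\rangle$, with one and the same phase $e^{i\phi(g)}$ for every pair $(j,k)$.

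Next I would extend to a general state $|\Psi\rangle = \sum_{j,k} c_{jk}\,|j_P\rangle\otimes|k_E\rangle$. As $U_g$ is linear and the phase $e^{i\phi(g)}$ is common to all basis vectors, it factors out of the sum, yielding $U_g|\Psi\rangle = e^{i\phi(g)}|\Psi\rangle$, which is precisely the singlet/character condition of Definition~\ref{DEF:PurePackagedSubspace}(b). I would also note that the fixed‑charge clause holds automatically: since the external factor carries zero charge, $\hat Q\bigl(|j_P\rangle\otimes|k_E\rangle\bigr) = Q\,|j_P\rangle\otimes|k_E\rangle$, so every state of $\mathcal H_{\text{hyb}}^{(d\times D)}$ is a $\hat Q$‑eigenstate with the same eigenvalue $Q$. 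Both conditions of Definition~\ref{DEF:PurePackagedSubspace} are therefore satisfied, and the hybrid‑packaged subspace is gauge‑invariant.

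The point to guard against — more a subtlety than a genuine obstacle — is the \emph{uniformity} of the phase across the internal basis. Gauge‑invariance of the full subspace, as opposed to invariance of individual basis rays, requires that $\phi(g)$ not depend on $j$; otherwise a superposition would acquire $j$‑dependent relative phases and the subspace would transform reducibly rather than by a single global phase. This uniformity is exactly what the pure‑packaged hypothesis on $\mathcal H_{\mathrm{int}}^{(d)}$ guarantees, so the proof reduces to citing that hypothesis correctly and then appealing to linearity.
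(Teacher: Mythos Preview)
Your proof is correct. The paper, however, takes a slightly different route: rather than verifying Definition~\ref{DEF:PurePackagedSubspace} directly on the product basis, it observes that $\mathcal H_{\mathrm{ext}}^{(D)}$, on which $G$ acts trivially, can itself be regarded as a pure-packaged subspace, and then invokes Lemma~\ref{LEM:TensorProductPackaged} (Tensor Product Closure of Pure-packaged Subspaces) to conclude that the tensor product is pure-packaged and hence gauge-invariant. Your approach is more self-contained --- you essentially unfold the content of that lemma in the special case at hand, using the preceding Property that $U_g=\rho(g)\otimes\mathbf 1_{\text{ext}}$ --- and has the merit of making the uniformity-of-phase issue explicit. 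The paper's version is terser because the bookkeeping has already been packaged into the lemma; yours trades brevity for transparency.
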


\begin{proof}
	The hybrid-packaged subspace $\mathcal H_{\text{\emph{hyb}}}^{(d\times D)}$ is a tensor product of $\mathcal H_{\mathrm{int}}^{(d)}$ and $\mathcal H_{\mathrm{ext}}^{(D)}$.
	By definition, $\mathcal H_{\mathrm{int}}^{(d)}$ is a pure-packaged subspace.
	$\mathcal H_{\mathrm{ext}}^{(D)}$ is an external subspace on which $G$ acts trivially and we can also treat it as a ``packaged subspace''.
	According to Lemma~\ref{LEM:TensorProductPackaged}, the tensor product $\mathcal H_{\text{\emph{hyb}}}^{(d\times D)}$ is a packaged subspace and is therefore gauge-invariant.
\end{proof}

\subsection{Hybrid-Packaged Qudits}
\label{SEC:HybridPackagedQudits}

In many applications (see Sec.~\ref{SEC:QuantumComputationInPackagedSpace} and \ref{SEC:QuantumCommunicationAndCryptography}), we need to encode additional degrees of freedom (DOFs) that are not subject to the gauge constraints.
For example, one may combine the internal DOFs with an external DOF (e.g., spin, momentum, or polarization) to form a higher-dimensional quantum system, or qudit.

In this subsection, we explore how to build the $(d \times D)$‑dimensional hybrid-packaged qudits.
We require that every basis vector lies in the neutral charge sector.

\subsubsection{Warm-up Example: (2 $\times$ 2)-dimensional Hybrid-Packaged Qudits}

In the simplest case, by taking the tensor product of a packaged qubit (2-dimensional) with an external qubit (also 2-dimensional), one obtains a hybrid system of dimension 4.

\begin{example}[(2 $\times$ 2)-dimensional Hybrid-Packaged Qudit]\label{EXM:4DHybridPackagedQudit}
	Let
	$
	\mathcal{H}_{\rm int}^{(2)} \subset \mathcal{H}_Q
	$
	be the packaged internal Hilbert space with
	\[
	\mathcal{H}_{\rm int}^{(2)} = \operatorname{span}\{\lvert 0_P\rangle, \lvert 1_P\rangle\},
	\]
	and let $\mathcal{H}_{\rm ext}^{(2)}$ be an external Hilbert space with a two-dimensional orthonormal basis $\{\lvert 0_E\rangle, \lvert 1_E\rangle\}$, i.e.,
	\[
	\mathcal{H}_{\rm ext}^{(2)} = \operatorname{span}\{\lvert 0_E\rangle, \lvert 1_E\rangle\}.
	\]

	Then the tensor product
	\[
	\mathcal{H}_{\text{hyb}}^{(2 \times 2)} = \mathcal{H}_{\rm int}^{(2)} \otimes \mathcal{H}_{\rm ext}^{(2)}.
	\]
	is a (2 $\times$ 2)-dimensional hybrid-packaged qudit.

	A natural computational basis for the (2 $\times$ 2)-dimensional $\mathcal{H}_{\text{hyb}}$ is given by
	\[
	\Bigl\{ 
	\lvert 0_P\rangle \otimes \lvert 0_E\rangle,\; 
	\lvert 0_P\rangle \otimes \lvert 1_E\rangle,\; 
	\lvert 1_P\rangle \otimes \lvert 0_E\rangle,\; 
	\lvert 1_P\rangle \otimes \lvert 1_E\rangle \Bigr\}.
	\]
	For brevity, we denote these four basis states as $\lvert 0\rangle, \lvert 1\rangle, \lvert 2\rangle, \lvert 3\rangle$.
	Therefore, $\mathcal{H}_{\text{hyb}} \cong \mathbb{C}^4$.
\end{example}

Hybrid packaged qudits extend the packaged qubit concept by incorporating external degrees of freedom.
They offer a higher-dimensional Hilbert space for encoding quantum information while preserving gauge-invariance.

\subsubsection{($d \times D$)-dimensional Hybrid-Packaged Qudits.}
\label{SEC:dDDimensionalHybridPackagedQudits}

After warming up by (2 $\times$ 2)-dimensional hybrid-packaged qudits, let us now generalize it to ($d \times D$)-dimensional hybrid-packaged qudits:

\begin{definition}[($d \times D$)-dimensional Hybrid-Packaged Qudit]
	\label{DEF:dDDimensionalHybridPackagedQudit}
	Let $\mathcal{H}_{\rm int}^{(d)} \subset \mathcal{H}_Q^{(d)}$
	be a packaged internal Hilbert space with a $d$-dimensional orthonormal basis, i.e.,
	\[
	\mathcal{H}_{\mathrm{int}}^{(d)} = \operatorname{span}\Bigl\{\,\lvert 0_P\rangle,\,\lvert 1_P\rangle,\,\dots,\,\lvert (d-1)_P\rangle\Bigr\}\,,
	\]
	and let $\mathcal{H}_{\rm ext}$ be an external Hilbert space with a $D$-dimensional orthonormal basis, i.e.,
	\[
	\mathcal{H}_{\mathrm{ext}}^{(D)} = \operatorname{span}\Bigl\{\,\lvert 0_E\rangle,\,\lvert 1_E\rangle,\,\dots,\,\lvert (D-1)_E\rangle\Bigr\}\,.
	\]
	Then the tensor product $\mathcal{H}_{\text{hyb}}^{(d \times D)} = \mathcal{H}_{\rm int}^{(d)} \otimes \mathcal{H}_{\rm ext}^{(D)}$ is a hybrid-packaged Hilbert space with a ($d \times D$)-dimensional orthonormal basis, i.e.,
	\[
	\mathcal{H}_{\text{hyb}}^{(d \times D)} = \operatorname{span} \bigl\{\ket{j,k}\equiv\ket{j_P}\otimes\ket{k_E}\;\big|\;j=0,\dots,d-1,\;k=0,\dots,D-1\bigr\}\,.
	\]
	We say that the tensor product $\mathcal{H}_{\text{hyb}}^{(d \times D)}$ is a \textbf{($d \times D$)-dimensional hybrid-packaged qudit}.
\end{definition}

In Definition \ref{DEF:dDDimensionalHybridPackagedQudit}, we used two-index basis Eq.~(\ref{EQ:dDHybridPackagedBasis}).
The basis vectors satisfy
\[
\hat{Q}\,\ket{j,k}=0,
\quad
U_g\,\ket{j,k}=e^{i\phi(g)}\,\ket{j,k}.
\]
Thus, each $\ket{j,k}$ lies in the zero net-charge sector $\mathcal H_{Q=0}$.

A normalized state in the hybrid-packaged subspace can be written as
\[
\ket{\Psi}
=\sum_{j=0}^{d-1}\sum_{k=0}^{D-1}c_{jk}\,\ket{j,k},
\qquad
\sum_{j,k}|c_{jk}|^2=1.
\]
which is linear combination of basis vectors that are zero net-charge vectors.
This guarantee
$\hat{Q}\ket{\Psi}=0$ and $U_g\ket{\Psi}=e^{i\phi(g)}\ket{\Psi}$.

We can also used single-index basis Eq.~(\ref{EQ:dDSingleIndexHybridPackagedBasis}).
The basis vectors satisfy
\[
\hat{Q}\,\lvert J\rangle=0,\qquad
U_g\,\lvert J\rangle=e^{i\phi(g)}\lvert J\rangle,
\]
Thus, every logical basis state lives in the desired sector
$\mathcal H_{Q=0}$.

Similarly, we have
\[
\ket{\Psi}
=\sum_{J=0}^{N-1} c_{J} \,\ket{J},
\qquad
\sum_{J=0}^{N-1} |c_J|^2=1,
\]
$\hat{Q}\ket{\Psi}=0$, and $U_g\ket{\Psi}=e^{i\phi(g)}\ket{\Psi}$.

\subsection{Hybrid-Packaged Bell Basis (Resource States)}
\label{SEC:HybridPackagedBellBasis}

For high-dimensional quantum communication, we need a complete orthonormal basis of maximally entangled states in $\mathcal{H}_{\text{hyb}}$.
In subsection \ref{SEC:dDDimensionalHybridPackagedQudits}, we have developed ($d \times D$)-dimensional hybrid-packaged qudits.
Based on these ideas, we now generalize the Bell basis to ($d \times D$)-dimensional hybrid-packaged subspace.

\paragraph{(1) Two-index Bell Basis.}

Referring to Eq.~(\ref{EQ:dDHybridPackagedBasis}), we introduce two phase labels $\mu\in\{0,\dots ,d-1\}$ and $\nu\in\{0,\dots ,D-1\}$ and two shift labels 
$n_{\text{int}}\in\{0,\dots ,d-1\}$, $n_{\text{ext}}\in\{0,\dots ,D-1\}$. 
Define
\begin{equation}\label{EQ:dDDimensionalBellBasis_CORR}
	\;|\Phi_{\mu,\nu;\,n_{\text{int}},n_{\text{ext}}}\rangle_{AB}
	=\frac{1}{\sqrt{dD}}
	\sum_{j=0}^{d-1}\sum_{k=0}^{D-1}
	\omega_{d}^{\,\mu j}\,\omega_{D}^{\,\nu k}\;
	|j,k\rangle_A\;\otimes\;
	|j\oplus n_{\text{int}},\,k\oplus n_{\text{ext}}\rangle_B,
\end{equation}
where $\omega_{d}=e^{2\pi i/d}$, $\omega_{D}=e^{2\pi i/D}$.
Here $j\oplus n_{\rm int}$ means $(j+n_{\rm int})\bmod d$ and $k\oplus n_{\rm ext}$ means $(k+n_{\rm ext})\bmod D$.

It is straightforward to verify that Eq.~(\ref{EQ:dDDimensionalBellBasis_CORR}) satisfy the following orthonormality:
\[
\langle\Phi_{\mu,\nu;n_{\text{int}},n_{\text{ext}}}|
\Phi_{\mu',\nu';n_{\text{int}}',n_{\text{ext}}'}\rangle
=\delta_{\mu,\mu'}\,\delta_{\nu,\nu'}\,
\delta_{n_{\text{int}},n_{\text{int}}'}\,
\delta_{n_{\text{ext}},n_{\text{ext}}'} ,
\]
so the set in \eqref{EQ:dDDimensionalBellBasis_CORR} are $d^{2}D^{2}\!=\!(dD)^{2}$ orthonormal, maximally‑entangled states.
This is exactly the number required for a complete Bell basis in the $(dD)$-dimensional bipartite space.

Reversing Eq.~(\ref{EQ:dDDimensionalBellBasis_CORR}), we obtain the following inverse (reverse) identity:
\begin{equation}\label{EQ:ReversedDDimensionalBellBasis_CORR}
	|j,k\rangle_{A}\otimes
	|j\oplus n_{\text{int}},\,k\oplus n_{\text{ext}}\rangle_{B}
	=
	\frac{1}{\sqrt{dD}}
	\sum_{\mu=0}^{d-1}\sum_{\nu=0}^{D-1}
	\omega_{d}^{-\mu j}\,\omega_{D}^{-\nu k}\;
	|\Phi_{\mu,\nu;\,n_{\text{int}},n_{\text{ext}}}\rangle_{AB}.
\end{equation}
Eq.~(\ref{EQ:ReversedDDimensionalBellBasis_CORR}) shows that, for any fixed shifts $n_{\text{int}},n_{\text{ext}}$, the computational product state factorises into the Bell basis.

\paragraph{(2) Single-index Bell Basis.}

Referring to Eq.~(\ref{EQ:dDSingleIndexHybridPackagedBasis}), we re-write Eq.~(\ref{EQ:dDDimensionalBellBasis_CORR}) as 
\begin{equation}\label{EQ:NDimensionalBellBasis_CORR}
	|\Phi_{m,n}\rangle
	=\frac{1}{\sqrt{N}}\sum_{J=0}^{N-1}
	\omega_N^{mJ}\,|J\rangle_{A}\otimes|J\oplus n\rangle_{B},
	\qquad m,n=0,\dots ,N-1,
\end{equation}
where $N = dD$, $\omega_N=e^{2\pi i/N}$, and $J \oplus n$ means $(J + n) \mod N$.
This is the familiar qudit form (see Sec.\ref{SEC:dDDimensionalHybridPackagedQudits}).

Since each basis state is a superposition of computational basis states that themselves lie in $\mathcal{H}_{Q=0}$, we have
\[
\hat{Q}\,|\Phi_{m,n}\rangle = 0,\quad U_g\,|\Phi_{m,n}\rangle = e^{i\phi(g)}|\Phi_{m,n}\rangle\,.
\]

The inverse identity of Eq.~(\ref{EQ:NDimensionalBellBasis_CORR}) is
\begin{equation}\label{EQ:ReverseNDimensionalBellBasis_CORR}
	|J\rangle_{A}\otimes|J\oplus n\rangle_{B}
	=
	\frac{1}{\sqrt{N}}
	\sum_{m=0}^{N-1}\omega_N^{-mJ}\,
	|\Phi_{m,n}\rangle_{AB}.
\end{equation}

Both representations live entirely in the gauge‑neutral sector 
($\hat{Q} |\Phi\rangle =0$) and transform only by an overall phase under any gauge operation $U_g$.
Therefore, the whole Bell family is gauge-invariant.

\subsection{Hybrid-Packaged Qudit Gates}
\label{SEC:HybridPackagedQuditGates}

Corresponding to the hybrid-packaged qudit given in Definition \ref{DEF:dDDimensionalHybridPackagedQudit}, let us now discuss hybrid-packaged qudit gates.

\subsubsection{Definition of Hybrid-Packaged Qudit Gates}

\begin{definition}[Hybrid-Packaged Qudit Gate]\label{DEF:HybridPackagedQuditGates}
	Let 
	$
	\mathcal{H}_{\text{hyb}}^{(d \times D)} = \mathcal{H}_{\rm int}^{(d)} \otimes \mathcal{H}_{\rm ext}^{(D)}
	$
	be a single-particle hybrid-packaged subspace and
	$
	\hat{Q}=\hat{Q}_{\text{int}}\otimes\mathbf 1_{\text{ext}}
	$	
	be the total (conserved) charge.
	Let		
	$$
	V: \mathcal H_{\text{hyb}}^{(d\times D)}\longrightarrow
	\mathcal H_{\text{hyb}}^{(d\times D)}
	$$
	be a unitary operator.
	If $V$ preserves the gauge-invariance, i.e.,
	\begin{equation}\label{EQ:CommuteWithQ}
		[V, \hat{Q}] = 0,
	\end{equation}
	then we say that $V$ is a \textbf{hybrid-packaged qudit gate}.
\end{definition}

Eq.~(\ref{EQ:CommuteWithQ}) means that $V$ maps every state in the physical (net-zero) sector $\mathcal H_{Q=0}$ back into the same sector.
Hence gauge super-selection is never violated.

\begin{example}
	Suppose $ V $ can be written as
	$
	V = V_{\rm int} \otimes V_{\rm ext},
	$
	where $ V_{\rm int} $ acts on $\mathcal{H}_{\rm int}$ and satisfies
	\[
	[V_{\rm int}, \hat{Q}] = 0,
	\]
	and $ V_{\rm ext} $ is any unitary operator defined on $\mathcal{H}_{\rm ext}$ (which is automatically gauge-invariant), then
	\[
	[V, \hat{Q}] = [V_{\rm int} \otimes V_{\rm ext}, \hat{Q}\otimes I] = [V_{\rm int}, \hat{Q}] \otimes V_{\rm ext} = 0.
	\]
	This shows that the hybrid operation preserves the packaged structure.
\end{example}

Generally, a unitary operator $V$ acting on $\mathcal{H}_{\text{hyb}}$ must satisfy
$
V^\dagger V = V V^\dagger = \mathbb{I}.
$
It must also be compatible with the superselection constraint, i.e.,
$
V\,\mathcal{H}_Q \subset \mathcal{H}_Q
$
(or equivalently $[V, \hat{Q}] = 0$).

\begin{remark}
	In Definition \ref{DEF:HybridPackagedQuditGates}, we require that a hybrid-packaged qudit gate to be gauge-invariant.
	In other words, we only pick out those gauge-invariant unitary operator as our hybrid-packaged qudit gate.
	This means that the gauge-invariance of a packaged gate is given by definition.
	On the contrary, the gauge-invariance of a packaged state is a property of the state, but not given by definition.
\end{remark}

\subsubsection{Elementary Hybrid-Packaged Gates}
\label{SEC:ElementaryHybridPackagedGates}

In later sections, we need shift- and phase- type unitary operations that act within $\mathcal H_{\text{hyb}}$ and commute with $\hat{Q}$.
Let us now split them into different groups.

\paragraph{(1) Clifford Single-Qudit Gates.}

\begin{enumerate}
	\item Internal Weyl block 
	
	\begin{equation}\label{EQ:InternalWeylBlock}
		X_d=\sum_{j=0}^{d-1}\lvert (j \oplus 1)_P\rangle \!\langle j_P|,
		\quad
		Z_d=\sum_{j=0}^{d-1}\omega_d^{\,j}\lvert j_P\rangle\!\langle j_P|,
		\quad
		H_d =\frac1{\sqrt d}\sum_{j,k=0}^{d-1}\omega_d^{jk}\,|j_P\rangle\langle k_P|,
	\end{equation}
	where $\omega_d=e^{2\pi i/d}$ and $j \oplus 1$ means $(j+1)\bmod d$.
	
	They satisfy
	\begin{equation}\label{EQ:InternalHadamardSwaps}
		X_d Z_d \;=\;\omega_d\,Z_d X_d,
		\quad
		H_d\,X_{d}\,H_d^{\dagger}=Z_{d},
		\quad
		H_d\,Z_{d}\,H_d^{\dagger}=X_{d}^{\!\dagger}.
	\end{equation}
	
	These act on $\mathcal H_{\mathrm{int}}^{(d)}$ as
	\[
	X_d\ket{j_P}=\ket{(j \oplus 1)_P}, 
	\quad
	Z_d\ket{j_P}=\omega_d^{\,j}\ket{j_P},
	\quad
	H_d \ket{k_P} = \frac1{\sqrt d}\sum_{j=0}^{d-1}\omega_d^{jk}\,|j_P\rangle,
	\]
	but act trivially on $\mathcal H_{\mathrm{ext}}^{(D)}$.
	
	\item External Weyl block
	
	\begin{equation}\label{EQ:ExternalWeylBlock}
		X_D=\sum_{k=0}^{D-1}\lvert (k \oplus 1)_E\rangle\!\langle k_E|,
		\quad
		Z_D=\sum_{k=0}^{D-1}\omega_D^{\,k}\lvert k_E\rangle\!\langle k_E|,
		\quad
		H_D =\frac1{\sqrt D}\sum_{j,k=0}^{D-1}\omega_D^{jk}\,|j_E\rangle\langle k_E|,
	\end{equation}
	where $\omega_D=e^{2\pi i/D}$ and $k \oplus 1$ means $(k+1)\bmod D$.
	
	They satisfy
	\begin{equation}\label{EQ:ExternalHadamardSwaps}
		X_D Z_D \;=\;\omega_D \,Z_D X_D,
		\quad
		H_D\,X_{D}\,H_D^{\dagger}=Z_{D},
		\quad
		H_D\,Z_{D}\,H_D^{\dagger}=X_{D}^{\!\dagger}.
	\end{equation}
	
	These act on $\mathcal H_{\mathrm{ext}}^{(D)}$ as
	\[
	X_D\ket{k_E}=\ket{ (k \oplus 1)_E}, 
	\quad
	Z_D\ket{k_E}=\omega_D^{\,k} \ket{k_E},
	\quad
	H_D \ket{k_E} =\frac1{\sqrt D}\sum_{j=0}^{D-1}\omega_D^{jk}\,|j_E\rangle,
	\]
	but act trivially on $\mathcal H_{\mathrm{int}}^{(d)}$.

	\item Hybrid Weyl block
	
	Label the hybrid computational basis by a single-index
	\[
	J \;=\; j\,D + k,
	\qquad j=0,\dots,d-1,\;k=0,\dots,D-1,
	\quad
	N=dD,
	\]
	so that $\lvert J\rangle=\lvert j_P\rangle\otimes\lvert k_E\rangle$.
	Therefore, we have the $N$-dimensional hybrid Weyl block
	\begin{equation}\label{EQ:HybridWeylBlock}
		X_{N} \;=\;\sum_{J=0}^{N-1}\lvert J\oplus 1\rangle\!\langle J\rvert,
		\quad
		Z_{N} \;=\;\sum_{J=0}^{N-1}\omega_{N}^{\,J}\,\lvert J\rangle\!\langle J\rvert,
		\quad
		H_{N} \;=\;\frac{1}{\sqrt{N}} \sum_{J,K=0}^{N-1} \omega_{N}^{\,J K}\; \lvert J\rangle\!\langle K\rvert,
	\end{equation}
	where $\omega_{N}=e^{2\pi i/N}$ and $J \oplus 1$ means $(J+1)\bmod N$.
	We immediately have relations
	\begin{equation}\label{EQ:CommuteHybridWeylblock}
		X_{N}Z_{N} \;=\;\omega_{N}\,Z_{N}X_{N}.
	\end{equation}
	Since each $\lvert J\rangle$ lies in the neutral sector $\mathcal H_{Q=0}$, both $X_N$ and $Z_N$ commute with $\hat{Q}$.
	
	1. Unitarity
	
	\[
	H_{N}\,H_{N}^{\dagger}
	=\;H_{N}^{\dagger}\,H_{N}
	=\;\mathbb{I}.
	\]
	
	2. Conjugation relations 
	
	\[
	H_{N}\,X_{N}\,H_{N}^{\dagger}
	\;=\;Z_{N},
	\qquad
	H_{N}\,Z_{N}\,H_{N}^{\dagger}
	\;=\;X_{N}^{\!\dagger}.
	\]
	
	3. Twirling property
	 
	Using Eq.~(\ref{EQ:CommuteHybridWeylblock}), we have
	\[
	H_{N}\,X_{N}\,Z_{N}\,H_{N}^{\dagger}
	=\;H_{N}\,(\omega_{N}Z_{N}X_{N})\,H_{N}^{\dagger}
	=\;\omega_{N}\,Z_{N}\,X_{N}
	\;=\;
	\omega_{N}\,X_{N}\,Z_{N}.
	\]
	
	4. Square gives parity
	 
	Define the parity operator
	\[
	P=\sum_{J=0}^{N-1}\lvert -J\,(\bmod\,N)\rangle\langle J\rvert,
	\]
	then
	\[
	H_{N}^{2} = P,
	\qquad
	H_{N}^{4} = P^{2} = \mathbb{I}.
	\] 
\end{enumerate}

\paragraph{(2) Clifford Two-Qudit Gates.}

Hybrid packaged entanglers 

\begin{enumerate}
	\item Internal-internal SUM
	\begin{equation}\label{EQ:InternalInternalSUM}
		\text{CSUM}_{d}
		=\sum_{j=0}^{d-1}\ket{j_P}\!\bra{j_P}\otimes X_{d}^{\,j}.
	\end{equation}	
	
	\item Internal-to-external controlled phase
	\begin{equation}\label{EQ:InternalToExternalControlledPhase}
		\text{C}\Phi_{d,D}
		=\sum_{k=0}^{D-1}\ket{k_E}\!\bra{k_E}\otimes Z_{d}^{\,k}.
	\end{equation}		
\end{enumerate}
or in a single-index:
\begin{equation}\label{EQ:HybridCSUM}
	\text{CSUM}_{N}
	=\sum_{J=0}^{N-1}\ket{J}\!\bra{J}\otimes X_{N}^{\,J}.
\end{equation}

Both gates simply permute basis states inside the $Q=0$ manifold and thus satisfy Eq.~(\ref{EQ:CommuteWithQ}).

\paragraph{(3) Non-Clifford Single-Qudit Gates.}

We define a non-Clifford diagonal phase
\begin{equation}\label{EQ:NonCliffordDiagonalPhase}
	\Theta_r=\sum_{J=0}^{N-1}\exp\!\Bigl(\tfrac{2\pi i r}{N^{2}}J^{2}\Bigr)\ket{J}\!\bra{J},
	\quad
	r \in \mathbb Z, \; \gcd(r,N)=1, \; r \notin \{1,2,4\}.
\end{equation}

$\Theta_r$ is diagonal in the single-index computational basis.
Each $\lvert m\rangle$ obeys $\hat{Q}\,\lvert m\rangle=0$, we have $[\Theta_r,\hat{Q}^{(i)}]=0$.
Thus, $\Theta_r$ is gauge-invariant.

$\Theta_r$ is used to kick out of the Clifford group and makes the whole set dense.

\subsubsection{Finite Universal Hybrid-Packaged Gate Sets}
\label{SEC:AFiniteUniversalGateLibrary}

A finite universal gate library is not unique.
It is only required to
\begin{itemize}
	\item contain finite many, exactly implementable gates;
	
	\item lie in the commutant
	$
	\mathcal C_{\hat{Q}}\;=\;\bigl\{V \in \mathrm U(N)\;\bigl|\;[V,\hat{Q}]=0\bigr\}
	$;
	
	\item generate a group that is dense in $\mathrm{SU}\bigl(\mathcal H_{Q=0}\bigr)$. 
\end{itemize}

Based on these requirements, we list three primary finite universal gate libraries:

\paragraph{(1) Clifford-Two-Index + $\Theta_r$.}
Most naturally, we consider a gate set consists of separate index:
inter index Eq.~(\ref{EQ:InternalWeylBlock}),
external index Eq.~(\ref{EQ:ExternalWeylBlock}),
internal-external entanglers Eq.~(\ref{EQ:InternalInternalSUM}) and Eq.~(\ref{EQ:InternalToExternalControlledPhase}),
and non-Clifford gate Eq.~(\ref{EQ:NonCliffordDiagonalPhase}):

\begin{equation}\label{EQ:FiniteUniversalGateLibrary1}
	\mathcal{G}_{\text{sep}} \;=\;
	\bigl\{ \,X_{d}, Z_{d}, H_{d}, X_{D}, Z_{D}, H_{D},
	\text{CSUM}_{d}, \text{C}\Phi_{d,D}, \Theta_r \bigr\},
\end{equation}
All members of $\mathcal{G}_{sep}$ commute with $\hat{Q}$.
Let us check the universality of $\mathcal{G}_{sep}$:

\begin{itemize}
	\item Internal qudit control $\{X_d,Z_d,H_d\}$ already topologically generates SU(d) inside the neutral sector. 
	
	\item External qudit control $\{X_D,Z_D,H_D\}$ does the same on $\mathscr{H}_{ext}$. 
	
	\item The entangler between two subsystems ($\text{CSUM}_{d}$ or $\text{C}\Phi_{d,D}$) lets you reach the full Lie product space $\mathfrak{su}(d) \otimes \mathfrak{su}(D)$. 
	
	\item The non-Clifford single-qudit gate ($\Theta_r$) kicks you out of the Clifford group and makes the whole set dense.
\end{itemize}

Thus $\mathcal{G}_{sep}$ is universal, but it is not minimal.
The detail proof is given in next subsection.

\paragraph{(2) Clifford-Single-Index + $\Theta_r$.}

Let us now consider working exclusively in the single-index computational basis $\{|J\rangle\}_{J=0}^{N-1}$ of dimension $N=dD$:
hybrid Weyl block Eq.~(\ref{EQ:HybridWeylBlock}),
one single-index entangler Eq.~(\ref{EQ:HybridCSUM}) (or any hybrid controlled-phase that entangles two different qudits),
and one non-Clifford diagonal phase Eq.~(\ref{EQ:NonCliffordDiagonalPhase}):
\begin{equation}\label{EQ:FiniteUniversalGateLibrary2}
	\mathcal G_{\text{single}}
	=\{\,X_N, Z_N, H_N, \text{CSUM}_{N}, \Theta_r\}
\end{equation}
Here the hybrid Weyl block $\{X_N, Z_N, H_N\}$ already give the full Clifford group on each logical qudit,
a single two-qudit $\text{CSUM}_{N}$ makes that Clifford action entangling,
adjoining one non-Clifford phase $\Theta_r$ makes the entire set exactly the classical ``Clifford + T'' construction for qudits.
This is known to be universal and compilation-friendly.
Because every $|J\rangle$ lives in $\mathcal H_{Q=0}$, all five members in $\mathcal G_{single}$ commute with $\hat{Q}$.

The proof of universality of $\mathcal G_{\text{single}}$ is similar to that of conventional ``Clifford + T''.
We put it in appendix \ref{APD:CliffordSingleIndexThetar}.

\paragraph{(3) Clifford-Single-Index + Magic-State Injection.}

Finally, let us consider drop the non-Clifford phase $\Theta_r$ from the gate set Eq.~(\ref{EQ:FiniteUniversalGateLibrary2}) and later inject it via magic states.
Then we obtain a pure single-index Clifford hybrid-packaged gate Set:
\begin{equation}\label{EQ:FiniteUniversalGateLibrary3}
	\mathcal G_{\text{Cl}}
	=\{X_N, Z_N, H_N, \text{CSUM}_{N}\}
\end{equation}
This is a strictly Clifford set, but it is not universal by itself.
It needs magic-state distillation to restore universality at the cost of ancillas and classical feed-forward.
Then we should say ``$\mathcal G_{\text{Cl}}$ is dense after allowing magic-state injection''.

\subsubsection{Universality of a Gauge-Respecting Hybrid-Packaged Gate Set}

With the above preparation, we now prove an important theorem about the universality of hybrid-packaged gate set.

\begin{theorem}[Universality of a Gauge-Respecting Hybrid-Packaged Gate Set]
	\label{THM:UniversalGaugeRespectingGateSet}
	Consider a hybrid-packaged Hilbert space 
	\[
	\mathcal H_{\text{hyb}}^{(d \times D)}\;=\;
	\mathcal H_{\rm int}^{(d)}\;\otimes\;
	\mathcal H_{\rm ext}^{(D)}.
	\]
	Let $\hat{Q}_{\rm tot}$ be the total charge operator whose
	zero‑eigenspace is the physical subspace
	$\mathcal H_{Q=0}\subset\mathcal H_{\text{hyb}}^{(d \times D)}$.
	The group generated by $\mathcal G$ is dense in
	$\mathrm{SU}\bigl(\mathcal H_{Q=0}\bigr)$.
\end{theorem}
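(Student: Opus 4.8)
The plan is to reduce the gauge-constrained claim to the textbook density of a qudit ``Clifford $+$ non-Clifford'' library, and then to close the argument with a Lie-algebraic irreducibility step. First I would observe that, by construction, every computational basis vector $\lvert J\rangle=\lvert j_P\rangle\otimes\lvert k_E\rangle$ satisfies $\hat{Q}\,\lvert J\rangle=0$, so the neutral sector exhausts the whole hybrid space: $\mathcal H_{Q=0}=\mathcal H_{\text{hyb}}^{(d\times D)}\cong\mathbb C^{N}$ with $N=dD$. Consequently $\hat{Q}$ acts as the zero operator there, the commutant condition $[V,\hat{Q}]=0$ of Eq.~(\ref{EQ:CommuteWithQ}) holds for \emph{every} unitary, and $\mathcal C_{\hat{Q}}=\mathrm U(N)$. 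The superselection rule thus imposes no restriction on the reachable group, and the statement collapses to the ordinary assertion that $\mathcal G$ generates a dense subgroup of $\mathrm{SU}(N)$ (for $n$ qudits, of $\mathrm{SU}(N^{n})$). That every word remains gauge-invariant is then automatic from Corollary~\ref{cor:HybridGaugeInvariance} and Lemma~\ref{LEMMA:ClosureOfPackagedOperations}.

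Next I would assemble the group-theoretic skeleton. The hybrid Weyl operators $X_N,Z_N$ of Eq.~(\ref{EQ:HybridWeylBlock}) obey $X_N Z_N=\omega_N Z_N X_N$ (Eq.~(\ref{EQ:CommuteHybridWeylblock})), so the $N^{2}$ operators $\{X_N^{a}Z_N^{b}\}$ are linearly independent and span $\mathrm{End}(\mathbb C^{N})$; by Schur's lemma the Weyl--Heisenberg (Pauli) group acts irreducibly on $\mathbb C^{N}$. The Fourier gate $H_N$ exchanges $X_N\leftrightarrow Z_N$ and, together with the Weyl generators, produces the full single-qudit Clifford group, while $\mathrm{CSUM}_N$ of Eq.~(\ref{EQ:HybridCSUM}) lifts this to the multi-qudit Clifford group. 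This Clifford group is \emph{finite} and normalizes the Pauli group, acting on the labels $(a,b)\in\mathbb Z_N^{2}$ through the symplectic group.

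The decisive step is to show that adjoining the non-Clifford phase $\Theta_r$ of Eq.~(\ref{EQ:NonCliffordDiagonalPhase}) yields density. I would first confirm that $\Theta_r$ is genuinely non-Clifford: its phase $\exp(2\pi i r J^{2}/N^{2})$, defined modulo $N^{2}$ rather than the modulo-$N$ quadratic phase $\omega_N^{cJ^{2}}$ of a Clifford gate, conjugates $X_N$ into an operator that is not a Pauli whenever $N\ge 3$, and the conditions $\gcd(r,N)=1$ and $r\notin\{1,2,4\}$ are imposed to exclude the degenerate values at which the generated group would close into a proper subgroup. Then I would pass to the closure $K=\overline{\langle\mathcal G\rangle}\subseteq\mathrm{SU}(N)$, a compact Lie group, and examine its Lie algebra $\mathfrak k\subseteq\mathfrak{su}(N)$. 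Because $K$ contains the Clifford group, $\mathfrak k$ is invariant under the Clifford adjoint action; since that action permutes the traceless Paulis — the basis of $\mathfrak{su}(N)$ — via the symplectic action on labels, $\mathfrak{su}(N)$ decomposes into Clifford-isotypic blocks. The traceless part of $-i\log\Theta_r$ is a nonzero diagonal element, hence a nontrivial combination of the diagonal Paulis $Z_N^{k}$, so $\mathfrak k\neq 0$; spreading this element over its Clifford orbit then forces $\mathfrak k$ to fill every block, giving $\mathfrak k=\mathfrak{su}(N)$ and $K=\mathrm{SU}(N)$. For $n$ qudits I would conclude by combining this single-qudit density with the entangling $\mathrm{CSUM}_N$, which by the standard ``single-qudit gates $+$ one entangler'' criterion upgrades density from $\mathrm{SU}(N)$ to $\mathrm{SU}(N^{n})$.

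The main obstacle is this final density step, sharpened by the fact that $N=dD$ is in general \emph{composite}. When $N$ is prime the symplectic group acts transitively on the nonzero labels, $\mathfrak{su}(N)$ is Clifford-irreducible, and the orbit argument is immediate. For composite $N$ the labels break into several symplectic orbits (distinguished by the order of $(a,b)$ in $\mathbb Z_N^{2}$), so $\mathfrak{su}(N)$ carries several Clifford-invariant components, and one must verify that the Clifford orbit of the $\Theta_r$ generator has nonzero projection onto \emph{each} of them. Establishing that the arithmetic conditions $\gcd(r,N)=1$ and $r\notin\{1,2,4\}$ suffice for this — rather than merely proving $\Theta_r$ non-Clifford — is the technical heart of the proof. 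A less computational alternative is to invoke the Solovay--Kitaev framework together with an established composite-dimension result on the universality of Clifford $+$ one non-Clifford diagonal gate, importing the density conclusion as a cited black box.
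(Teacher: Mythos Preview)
Your approach is sound but follows a genuinely different route from the paper's. You work with the \emph{single-index} library $\{X_N,Z_N,H_N,\mathrm{CSUM}_N,\Theta_r\}$ of Eq.~(\ref{EQ:FiniteUniversalGateLibrary2}) and reduce everything to the standard qudit ``Clifford $+$ one non-Clifford'' density question on $\mathbb C^N$, correctly flagging the composite-$N$ Clifford-orbit analysis as the crux. The paper instead proves the theorem for the \emph{two-index} library $\mathcal G_{\text{sep}}$ of Eq.~(\ref{EQ:FiniteUniversalGateLibrary1}), exploiting the tensor decomposition $\mathfrak{su}(N)=(\mathfrak{su}(d)\otimes\mathbf 1_D)\oplus(\mathbf 1_d\otimes\mathfrak{su}(D))\oplus(\mathfrak{su}(d)\otimes\mathfrak{su}(D))$: it builds the internal block from $\{X_d,Z_d,H_d\}$, the external block from $\{X_D,Z_D,H_D\}$, and the cross block via commutators with the internal--external entangler $\mathrm{C}\Phi_{d,D}$, then appends $\Theta_r$ and cites Boykin et al.\ and Brennen et al.\ for the passage from algebra generation to group density. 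The payoff of the paper's block-by-block construction is that it never confronts the composite-$N$ symplectic-orbit obstruction you identify, since it only needs Weyl--Clifford completeness in the factor dimensions $d$ and $D$ separately. Your route is conceptually cleaner (and your opening observation that $\hat Q$ vanishes identically on $\mathcal H_{\text{hyb}}$, so the gauge commutant is all of $\mathrm U(N)$, is sharper than the paper's running gauge-invariance checks), but it is essentially the argument the paper relegates to Appendix~\ref{APD:CliffordSingleIndexThetar} for $\mathcal G_{\text{single}}$; there too the composite-$N$ density step is closed by citation rather than by a self-contained orbit computation.
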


\begin{proof}
To show that a gate set $\mathcal G$ is universal (dense), we may equivalently show that the real Lie algebra generated by
$\log\mathcal G$ is the full $\mathfrak{su}(N)$ restricted to
$\mathcal H_{Q=0}$.
Connectedness then implies density by Lie’s third theorem.

We decompose
\begin{equation}\label{EQ:suNDecomposition}
	\mathfrak{su}(N)\;=\;
	\Bigl(
	\underbrace{\mathfrak{su}(d)\otimes\mathbf 1_D}_{\text{internal block}}
	\Bigr)
	\;\;\oplus\;\;
	\Bigl(
	\underbrace{\mathbf 1_d\otimes\mathfrak{su}(D)}_{\text{external block}}
	\Bigr)
	\;\;\oplus\;\;
	\Bigl(
	\underbrace{\mathfrak{su}(d)\otimes\mathfrak{su}(D)}_{\text{cross block}}
	\Bigr).
\end{equation}

We must generate all three pieces with commutators of logarithms of elements of $\mathcal G$.

\begin{itemize}
	\item Step 1: Internal block $\mathfrak{su}(d)\otimes\mathbf 1_D$
	
	From Eq.~(\ref{EQ:InternalWeylBlock}), we have set 
	\[
	\mathcal A_d=\{X_{d},Z_{d},H_{d}\}.
	\]
	
	Using Eq.~(\ref{EQ:InternalHadamardSwaps}), the adjoint action of $H_{d}$ swaps $X_{d}$ and $Z_{d}$, i.e.,
	\[
	H_{d}X_{d}H_{d}^{\dagger}=Z_{d},\quad
	H_{d}Z_{d}H_{d}^{\dagger}=X_{d}^{\dagger}.
	\]
	
	Consequently, 
	$\operatorname{ad}_{\mathcal A_d}$-closure yields every pair of
	traceless Gell-Mann matrices
	$E_{ab}=\ket{a_P}\!\bra{b_P}\;(a\ne b)$ and
	$F_{aa}-F_{bb}$, hence the full $\mathfrak{su}(d)$.
	Tensoring with $\mathbf 1_D$ embeds that block in
	$\mathfrak{su}(N)$.

	\item Step 2: External block $\mathbf 1_d\otimes\mathfrak{su}(D)$
	
	For the external block, using Eq.~(\ref{EQ:ExternalWeylBlock}) and Eq.~(\ref{EQ:ExternalHadamardSwaps}), the same argument with 
	$\mathcal A_D=\{X_{D},Z_{D},H_{D}\}$
	gives via adjoint action the whole
	$\mathbf 1_d\otimes\mathfrak{su}(D)$.

	\item Step 3: Cross block $\mathfrak{su}(d) \otimes \mathfrak{su}(D)$
	
	Using Eq.~(\ref{EQ:InternalToExternalControlledPhase}), take the hybrid controlled phase	
	\[
	\Omega:=\sum_{k=0}^{D-1}\ket{k_E}\!\bra{k_E}\otimes Z_{d}^{\,k}
	=\sum_{k}Z_{d}^{\,k}\otimes\ket{k_E}\!\bra{k_E}.
	\]
	
	One off-diagonal: apply external shift $X_{D}^{\ell}$	
	\[
	X_{D}^{\ell}\,\Omega\,X_{D}^{-\ell}
	=\sum_{k}Z_{d}^{\,k}\otimes\ket{k\!\oplus\!\ell_E}\!\bra{k\!\oplus\!\ell_E}.
	\]
	
	Sweep over internal: conjugation of $\Omega$ by internal $X_{d}$ maps
	$Z_{d}^{\,k}\mapsto Z_{d}^{\,k}X_{d}$ etc.
	Re-expandg this in the Weyl basis and subtract traceless parts, we get
	\[
	E_{ab}\otimes \ket{k_E}\!\bra{k_E}
	\quad\text{for all }a \ne b.
	\]

	Commutators with the external block from Step 2 \cite{Brennen2005arxiv,Brennen2005PRA}, we obtain
	$E_{ab}\otimes F_{pq}$ for all $(a,b,p,q)$. 
	Those span precisely
	$\mathfrak{su}(d) \otimes \mathfrak{su}(D)$.
	
	Thus, the Lie algebra generated so far is (Eq.~(\ref{EQ:suNDecomposition}))
	\[
	\mathfrak h
	=
	\bigl(\mathfrak{su}(d)\otimes\mathbf 1_D\bigr)
	\;\oplus\;
	\bigl(\mathbf 1_d\otimes\mathfrak{su}(D)\bigr)
	\;\oplus\;
	\bigl(\mathfrak{su}(d)\otimes\mathfrak{su}(D)\bigr)
	=
	\mathfrak{su}(N).
	\]
	
	Since every generator commutes with $\hat{Q}$, by restricting to the zero‐charge sector, we can keep the whole algebra.

	\item Step 4: Connectedness versus density 
	
	The gates used so far (all Clifford-type) already exponentiate to a connected Lie subgroup $G_{\text{Cl}}\cong\!\mathrm{SU}(N)\cap$ Clifford. 
	Being connected, $G_{\text{Cl}}$ is automatically dense in its own Lie group, but one still needs a generator outside the Clifford group to avoid being stuck in a proper sub-lattice of phase factors.
	
	We add a single non-Clifford one-qudit phase
	\[
	\Theta_r
	=\operatorname{diag}\!\bigl(1,e^{2\pi i/r},\dots\bigr),
	\qquad r\notin\{1,2,4\}.
	\]
	Because $\Theta_r\in\mathcal C_{\hat{Q}}$ it acts within the physical
	sector.	
	Clifford + single non-Clifford on one qudit is already dense in
	$\mathrm{SU}(N)$.
	The argument was first used by Boykin et al. for qubits \cite{Boykin1999} and generalised to qudits in, e.g., Brennen et al. 2005 \cite{Brennen2005arxiv,Brennen2005PRA}.
	
	Hence
	\[
	\overline{\langle\mathcal G\rangle}
	=
	\mathrm{SU}\!\bigl(\mathcal H_{Q=0}\bigr),
	\]
	which proves the Theorem \ref{THM:UniversalGaugeRespectingGateSet}.
\end{itemize}

\begin{remark}
	Physical interpretation 
	
	Steps 1-3 demonstrate that packaged symmetry does not restrict
	computational power:
	every traceless observable compatible with the gauge constraint is reachable by nested commutators of physically admissible Hamiltonians.
	
	Step 4 guarantees algorithmic universality:
	by supplementing the normally finite Clifford subgroup with a single irrational phase
	$\Theta_r$ we can approximate any desired neutral-sector unitary to
	arbitrary precision (Solovay-Kitaev then bounds the circuit length).
	
	Thus, even though all gates commute with $\hat{Q}$, they collectively
	span the entire dynamical group of the physical Hilbert space.
\end{remark}
\end{proof}

\begin{corollary}[Solovay-Kitaev Theorem in Hybrid-Packaged Space]
	\label{COR:SolovayKitaevTheoremInHybridPackagedSpace}
	Let
	$\mathcal H_{Q=0}\cong\mathbb C^{\,N}$ (with $N=dD$)
	be a single neutral hybrid-packaged qudit,
	$
	\mathcal C_{\hat{Q}}\;=\;\bigl\{V\in\mathrm U(N)\;\bigl|\;[V,\hat{Q}]=0\bigr\}
	$
	be the gauge-invariant commutant,
	and $\mathcal G\subset\mathcal C_{\hat{Q}}$ be the finite, inverse-closed gate
	library defined by Eq.~(\ref{EQ:FiniteUniversalGateLibrary1}) and Theorem \ref{THM:UniversalGaugeRespectingGateSet}.
	Then for every integer $n \ge 1$, every unitary 
	$V\in\mathrm{SU}\!\bigl((\mathcal H_{Q=0})^{\!\otimes n}\bigr)$ 
	and every accuracy $\varepsilon\in(0,1]$, there exists a hybrid-packaged gate	
	\[
	\widetilde V\;=\;G_{i_L}\cdots G_{i_2}G_{i_1},\qquad
	G_{i_\ell}\in \mathcal G
	\]	
	such that	
	\[
	\bigl\|\,V-\widetilde V\,\bigr\|_{\!\text{op}}\;\le\;\varepsilon ,
	\]	
	and whose length $L$ obeys the Solovay-Kitaev bound	
	\[
	L\;=\;O\!\bigl(\,\log^{\,\kappa}\!\varepsilon^{-1}\bigr),
	\qquad
	\kappa\lesssim 4.
	\]
	Every intermediate hybrid-packaged gate $G_{i_\ell}\!\cdots G_{i_1}$ also lies in
	$\mathcal C_{\hat{Q}}^{\otimes n}$.
\end{corollary}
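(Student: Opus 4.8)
The plan is to deduce this corollary from the classical Solovay--Kitaev theorem by verifying that its two standing hypotheses---density of the generating set and closure under inverses---hold \emph{inside} the neutral sector, and then to check separately that gauge-invariance survives along the entire approximating word. The gauge clause will be the easy part; the real work is guaranteeing that the stated length scaling transfers intact to the charge-constrained subspace $(\mathcal H_{Q=0})^{\otimes n}$.

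First I would establish density. Theorem~\ref{THM:UniversalGaugeRespectingGateSet} already yields $\overline{\langle\mathcal G\rangle}=\mathrm{SU}(\mathcal H_{Q=0})$ for a \emph{single} hybrid-packaged qudit of dimension $N=dD$. To lift this to $n$ qudits I would invoke the standard fact that single-qudit universality together with one genuine two-qudit entangling gate generates a dense subgroup of $\mathrm{SU}(N^{n})$; the required inter-qudit entangler is $\mathrm{CSUM}_{N}$ (Eq.~(\ref{EQ:HybridCSUM})) acting between two distinct hybrid qudits. Conjugating it repeatedly by the single-qudit Clifford generators produces all cross-qudit generators $E_{ab}\otimes F_{pq}$, so that nested commutators close on the whole of $\mathfrak{su}(N^{n})$ restricted to the neutral sector. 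Finiteness of $\mathcal G$ is immediate from Eq.~(\ref{EQ:FiniteUniversalGateLibrary1}); inverse-closure is assumed in the hypotheses, and if it were absent one would simply pass to $\mathcal G\cup\mathcal G^{-1}$, which leaves the generated group unchanged.

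With density and inverse-closure secured, I would apply the Dawson--Nielsen form of the Solovay--Kitaev theorem verbatim on the compact, connected Lie group $\mathrm{SU}\bigl((\mathcal H_{Q=0})^{\otimes n}\bigr)\cong\mathrm{SU}(N^{n})$. For any target $V$ and any $\varepsilon\in(0,1]$ this returns a word $\widetilde V=G_{i_L}\cdots G_{i_1}$ with $\|V-\widetilde V\|_{\mathrm{op}}\le\varepsilon$ and length $L=O(\log^{\kappa}\varepsilon^{-1})$, where $\kappa\approx3.97\lesssim4$ is precisely the constant quoted there. Because $n$ is fixed, the ambient dimension $N^{n}$ is a constant, so the hidden prefactors in the $O(\cdot)$ are finite.

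Finally I would verify the gauge clause. Each generator $G\in\mathcal G$ satisfies $[G,\hat Q]=0$ by construction, and Lemma~\ref{LEMMA:ClosureOfPackagedOperations}, applied inductively (its proof carries over verbatim from qubits to qudits), shows that any product of such operators again commutes with $\hat Q$. Hence every prefix $G_{i_\ell}\cdots G_{i_1}$ lies in $\mathcal C_{\hat Q}^{\otimes n}$ and never leaks out of the physical sector. The main obstacle I anticipate is the second step: one must confirm that the inter-qudit entangler genuinely produces density in the \emph{full} $\mathrm{SU}(N^{n})$ rather than in some proper subgroup, and keep track of how the Solovay--Kitaev prefactors grow with $N^{n}$. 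Both points are the multi-qudit generalisation of Theorem~\ref{THM:UniversalGaugeRespectingGateSet}, which is the genuine technical content of the argument.
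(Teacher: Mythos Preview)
Your proposal is correct and follows essentially the same route as the paper: establish density from Theorem~\ref{THM:UniversalGaugeRespectingGateSet}, invoke the classical Dawson--Nielsen Solovay--Kitaev construction on $\mathrm{SU}(N^{n})$, and verify gauge-preservation by noting that the commutant $\mathcal C_{\hat Q}$ is closed under the operations SK uses. The paper phrases the gauge clause slightly differently---it observes that the SK recursion employs only group inversion and group commutators, both of which preserve $\mathcal C_{\hat Q}$ because it is a subgroup---whereas you appeal to the product-closure Lemma~\ref{LEMMA:ClosureOfPackagedOperations} applied to prefixes; these are equivalent and yield the same conclusion.
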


\begin{proof}
	The argument follows the classical Solovay-Kitaev (SK) construction \cite{Kitaev1997,DawsonNielsen2005} with one additional observation:
	the commutant of $\hat{Q}$ is closed under commutators and group inversion. 
	Hence every algebraic step used by SK remains in $\mathcal C_{\hat{Q}}$.
	
	\begin{enumerate}
		\item Preliminaries: 
		
		Let $G := \langle\mathcal G\rangle$ be the group generated by $\mathcal G$. 
		By Theorem \ref{THM:UniversalGaugeRespectingGateSet}, we have 
		\[
		\overline{G}\;=\;\mathrm{SU} \! \bigl(\mathcal H_{Q=0}\bigr),
		\]
		where the closure is taken in the operator norm. 
		Also $\mathcal G$ is balanced: together with each $G_j$, it contains $G_j^{\!\dagger}$.
		
		Let $G_0 \subset G$ be a finite $\delta_0$-net of $\mathrm{SU}(N)$ with 
		$\delta_0<1/4$. 
		Standard volume arguments give $|G_0|=O(\delta_0^{-N^2})$. 
		Because $G$ is dense, we can produce $G_0$ by brute‑force search to depth $t_0=O(\log\!1/\delta_0)$, i.e.,
		\[
		G_0
		=\bigl\{W_1,\dots ,W_M\bigr\} \subset G
		\]
		such that every element of $\mathrm{SU}(N)$ is within $\delta_0$
		(in operator norm) of some $W_j$. 
		All $W_j$ commute with $\hat{Q}$ because $\mathcal C_{\hat{Q}}$ is a group.

		\item Closure of $\mathcal C_{\hat{Q}}$ under SK primitives:
		
		The SK recursion only uses two primitives:	
		group inversion $G\mapsto G^{\!\dagger}$ and group commutator 
		$\,{\rm Comm}(A,B)=ABA^{\!\dagger}B^{\!\dagger}$.
		
		If $A,B\in\mathcal C_{\hat{Q}}$ then 
		\[
		[A,\hat{Q}]=[B,\hat{Q}]=0
		\;\;\Longrightarrow\;\;
		[A^{\!\dagger},\hat{Q}]=0,\quad
		[ABA^{\!\dagger}B^{\!\dagger},\hat{Q}]=0.
		\]
		Therefore, every word produced during the SK algorithm is gauge‑invariant.

		\item Recursive construction: 
		
		Let $\text{SK}_k(V)$ be the approximation after $k$ recursion levels.
		Let $\delta_k=\|\,V-\text{SK}_k(V)\,\|_{\rm op}$. 
		The usual SK estimate \cite{DawsonNielsen2005} shows	
		\[
		\delta_{k+1}\;<\;c\,\delta_k^{\,1+\alpha},\qquad
		c\!=\!O(1),\; \alpha>0 .
		\]
		
		Starting from $\delta_0<1/4$ one obtains	
		\[
		\delta_k\;\le\;\left(\tfrac14\right)^{(1+\alpha)^{k}}
		\;<\;
		\varepsilon
		\quad\text{once}\;\;
		k=O(\log\log 1/\varepsilon).
		\]

		\item Length estimate:
		
		Let $L_k$ be the length of the word $\text{SK}_k(V)$. 
		Recursion gives	
		\[
		L_{k+1}\le 5\,L_k + O(1),
		\]
		because each commutator needs four calls to $\text{SK}_k$ plus constant overhead for net look‑ups. 
		With $L_0=O(\log\!1/\delta_0)$ this solves to	
		\[
		L_k\;=\;O\!\bigl(5^{k}\bigr).
		\]
		
		Insert $k$ from (4) and set $\kappa=\log_2 5/(1+\alpha)\lesssim 4$ to obtain	
		\[
		L\;=\;L_k\;=\;O\!\bigl(\log^{\,\kappa}\!\varepsilon^{-1}\bigr).
		\]
		
		The constant in the big‑$O$ depends polynomially on $N=dD$ (through the size of the first‑level net) exactly as in the ordinary SK theorem.

		\item Extension to $n$ hybrid-packaged qudits:
		
		Because $\mathcal G$ acts on a fixed local dimension $N$ and is closed under tensoring with identity, the SK algorithm applies independently to each qudit or to any finite collection from that. 
		For $n$ qudits, one works inside $\mathrm{SU}(N^{n})$.\cite{Kliuchnikov2016}
		All bounds hold with $N \mapsto N^{n}$, which gives the same asymptotic length (up to a polynomial in $n$).
	\end{enumerate}
\end{proof}

\subsubsection{Fault-Tolerant Encoded Hybrid-Packaged Gates} 
\label{SEC:FTEncodedHybridPackagedGates}

We now promote every elementary hybrid-packaged gate introduced in Secs. \ref{SEC:ElementaryHybridPackagedGates}-\ref{SEC:AFiniteUniversalGateLibrary} to a fault-tolerant logical operation on top of a stabiliser (or surface-type) code whose physical qudits are themselves hybrid packaged qudits. 
Throughout, we fix the physical Hilbert space
\[
\mathscr H_{\rm phys}
=\bigl(\mathcal H_{\text{hyb}}^{(Q=0)}\bigr)^{\!\otimes n},
\qquad 
\hat{Q}_{\rm tot}=\sum_{i=1}^{n}\hat{Q}^{(i)},
\]
and a stabiliser group 
$
\mathcal S=\langle S_1,\dots,S_r\rangle\subset \mathcal C_{\hat{Q}_{\rm tot}}
$
generated by commuting, $\hat{Q}_{\rm tot}$-respecting Pauli-Weyl operators. 
The code subspace is 
\[
\mathscr C=\bigl\{\;|\psi\rangle\in\mathscr H_{\rm phys}\;:\;S_a|\psi\rangle=|\psi\rangle,\;a=1,\dots,r\bigr\},
\qquad 
k=\log_{N}\!\dim\mathscr C .
\]

We focus on 
(i) universal Clifford gates, 
(ii) the non-Clifford diagonal phase $\Theta_r$, and 
(iii) error filtering provided by the code and gauge super-selection.

\paragraph{(1) Clifford layer: transversal or locality-preserving.}

\begin{enumerate}
	\item Purely internal block $\;X_d,Z_d,H_d,\mathrm{CSUM}_d$
	
	Each physical qudit carries a tensor factor 
	$
	\mathcal H_{\rm int}^{(d)}=\operatorname{Span}\{|j_P\rangle\}_{j=0}^{d-1}
	$.
	Define
	\[
	\bar X_d \;=\;\bigotimes_{i\in\Gamma_X}\!X_d^{(i)},
	\qquad
	\bar Z_d \;=\;\bigotimes_{i\in\Gamma_Z}\!Z_d^{(i)},
	\]
	where $\Gamma_X,\Gamma_Z\subseteq\{1,\dots,n\}$ are any two homologically non-trivial, disjoint strings on the underlying lattice. Because every $X_d^{(i)}$ (or $Z_d^{(i)}$) commutes with $\hat{Q}^{(i)}$, both operators commute with $\hat{Q}_{\rm tot}$ and with every stabiliser $S_a$.
	Thus
	\[
	\overline{X}_d,\;\overline{Z}_d\;:\;\mathscr C \longrightarrow \mathscr C
	\quad\text{are logical Pauli generators.}
	\]
	
	They are transversal (weight-1 on each code block), hence propagate no error strings.

	\item Purely external block $\;X_D,Z_D,H_D$
	
	The external tensor factor 
	$
	\mathcal H_{\rm ext}^{(D)}=\operatorname{Span}\{|k_E\rangle\}_{k=0}^{D-1}
	$
	lives, e.g., on the faces of a surface-code lattice. 
	Define logical strings in complete analogy:
	\[
	\bar X_D \;=\;\bigotimes_{f\in\Gamma'_X}\!X_D^{(f)},
	\qquad
	\bar Z_D \;=\;\bigotimes_{f\in\Gamma'_Z}\!Z_D^{(f)}.
	\]
	
	Because the external DOF is gauge-inert, the same commutation arguments hold; $\bar X_D,\bar Z_D$ are transversal and fault-tolerant.

	\item Internal-internal entangler 
	$
	\mathrm{CSUM}_d=\sum_{j}\lvert j\rangle\!\langle j|\!\otimes X_d^{\,j}
	$
	
	Place control and target on neighbouring physical qudits within every code block. 
	The operator preserves the stabiliser group (conjugates Pauli’s into Pauli’s) and commutes with $\hat{Q}_{\rm tot}$.
	Consequently, the logical entangler
	\[
	\overline{\mathrm{CSUM}}_d
	\;=\;
	\bigotimes_{\langle i,t\rangle}
	\mathrm{CSUM}_d^{(i\to t)}
	\]
	is locality-preserving, which spreads a single-qudit fault to at most one qudit per block and leaves $\mathscr C$ invariant.

	\item Internal $\to$ external controlled phase 
	$
	\mathrm{C}\Phi_{d,D}
	=\sum_{k}\lvert k_E\rangle\!\langle k_E|\otimes Z_d^{\,k}
	$
	
	Choose control (external) and target (internal) on the same block. 
	Conjugation inside the stabiliser group and commutation with $\hat{Q}_{\rm tot}$ once more guarantee fault-tolerance.
\end{enumerate}

Now we see that the set
\[
\mathcal C_{\rm Cl}
=\bigl\{
\overline{X}_d,\overline{Z}_d,\overline{H}_d,\;
\overline{X}_D,\overline{Z}_D,\overline{H}_D,\;
\overline{\mathrm{CSUM}}_d,\;\overline{\mathrm{C}\Phi}_{d,D}
\bigr\}
\subset
\mathcal N_{\mathcal S}\cap\mathcal C_{\hat{Q}_{\rm tot}}
\]
is the logical Clifford group.
Every element is either transversal or locality-preserving of radius 1, so single-qudit faults remain correctable by the code.

\paragraph{(2) Non-Clifford resource: the diagonal phase $\Theta_r$.}

A single logical non-Clifford is enough for universality
(Theorem \ref{THM:UniversalGaugeRespectingGateSet}). 
We choose
\[
\Theta_r:=\sum_{m=0}^{N-1}\!\exp\!\Bigl(\tfrac{2\pi i r}{N^{2}}\,m^{2}\Bigr)\lvert m\rangle\!\langle m|,
\quad
\gcd(r,N)=1,\;\;N=dD.
\]

Because $\Theta_r$ is diagonal in the computational basis and each $\lvert m\rangle$ obeys $\hat{Q}\,\lvert m\rangle=0$, we have $[\Theta_r,\hat{Q}^{(i)}]=0$.
Thus, $\Theta_r$ is gauge-invariant and admissible on every physical qudit.

Here we outline the steps for encoded implementation of $\Theta_r$ via magic-state injection:
\begin{enumerate}	
	\item Prepare $k$ physical blocks in the state 
	$\lvert +\rangle^{\!\otimes n}=(H_{N})^{\otimes n}\lvert 0\rangle^{\otimes n}$.
	
	\item Apply $\Theta_r$ transversally only on qudits whose stabiliser
	syndrome equals $+1$.
	Discard when faults are detected. 
	Finally, we obtain the result:
	\[
	\lvert M_r\rangle
	=\Theta_r^{\otimes n}\,\lvert +\rangle^{\otimes n}
	\;\in\;\mathscr H_{\rm phys}.
	\]
	
	\item Detect error: measure all stabilisers $S_a$.
	If any $-1$ outcome appears, restart.
	Otherwise the post-selected $\rho_{M_r}$ has error rate $p_\text{in}^{\,2}$ (first-order noise is filtered by both the code and the $\hat{Q}$-superselection).
	
	\item Gate-teleportation: consume one $\lvert M_r\rangle$ plus Clifford
	operations to enact the logical $\overline{\Theta}_r$ on any code block \cite{Bennett1993,Anwar2012}.
\end{enumerate}

After $\ell{=}\!O(k)$ rounds of Bravyi-Haah distillation \cite{Anwar2012,Bullock2005}),
the logical error drops as $\eta\simeq (p_\text{in})^{\,2^\ell}$
while the physical cost scales poly-logarithmically in $\eta^{-1}$.

\paragraph{(3) Joint protection: code + gauge super-selection.}

Any physical error $E$ decomposes into gauge-respecting term $E_{\parallel}$ and gauge-violating term $E_{\perp}$:
\[
E=E_{\parallel}+E_{\perp},
\qquad
[E_{\parallel},\hat{Q}_{\rm tot}]=0,
\quad
\{E_{\perp},\hat{Q}_{\rm tot}\}\neq0.
\]

Gauge-respecting term $E_{\parallel}$ is a linear combination of Weyl errors $Z_N^{s}X_N^{t}$. 
These are exactly the Pauli-type errors corrected by the code $\mathscr C$.
However, gauge-violating term $E_{\perp}$ maps the state outside $\mathscr H_{\rm phys}$. 
Because no subsequent (gauge-respecting) gate can bring it back, such faults are deterministically detected at the next syndrome extraction.
They appear as a violation of Gauss' law and are discarded.

Thus, the effective logical noise rate is
\[
p_{\rm eff}=p_{\parallel}\;P_{\rm fail}(d_{\rm code}) 
\quad\text{with}\;\;
p_{\parallel}\le p_{\rm phys}, 
\quad
p_{\perp}=p_{\rm phys}-p_{\parallel},
\]
where $P_{\rm fail}(d_{\rm code})$ is the usual code-specific logical-error suppression ($d_{\rm code}$ = distance). 
Because $p_{\perp}$ is projected out before it reaches the code, the threshold $p_{\rm th}$ is strictly higher than in an unconstrained architecture.

From above discussion, we see that
all Clifford-type hybrid packaged gates are either transversal or locality-preserving of radius $\le 1$.
They automatically commute with the total charge and with every stabiliser, so they are fault-tolerant.
The single non-Clifford gate $\Theta_r$ is introduced via gauge-respecting magic-state injection. The standard distillation overheads apply and now with additional first-order error filtering from
super-selection.
Error channels that violate the packaging‐induced Gauss constraints are detected and removed prior to decoding.
The residual errors are the familiar Pauli-Weyl faults handled by the stabiliser code.
Therefore, the encoded gate set 
\[
\bigl\{
\overline{X}_d,\overline{Z}_d,
\overline{X}_D,\overline{Z}_D,
\overline{H}_d,\overline{H}_D,
\overline{\mathrm{CSUM}}_d,\overline{\mathrm{C}\Phi}_{d,D},
\overline{\Theta}_r
\bigr\}
\]
is universal, fault-tolerant, and gauge-invariant.

\subsubsection{Mutually Unbiased Bases (MUBs)}
\label{SEC:LogicalAndMUBs}

In quantum-key-distribution (QKD) protocols like six‑state QKD, one need three mutually unbiased bases (MUBs).
Let us now produce them inside the $(d\times D)$‑dimensional hybrid-packaged subspace $\mathcal H_{\text{hyb}}\cong\mathbb C^{N}$ ($N=dD$).

\paragraph{(1) Canonical MUB Triplet ($Z_N,X_N,Z_NX_N$).}

Let $\{\ket{J}\}_{J=0}^{N-1}$ be the single-index computational basis introduced in Eq.~(\ref{EQ:dDSingleIndexHybridPackagedBasisVector}).
Define the hybrid Weyl block as in Eq.~(\ref{EQ:HybridWeylBlock}).
Their eigenbases are
\begin{align}\label{EQ:MUB1}
	\begin{aligned}
		\mathcal B_Z &=\bigl\{\ket{n}\bigr\}_{n=0}^{N-1},\\[2mm]
		\mathcal B_X &=\Bigl\{\,
		\ket{\widetilde n}
		:=\tfrac1{\sqrt N}\!\sum_{m=0}^{N-1}\omega_N^{\,nm}\ket{m}\Bigr\}_{n=0}^{N-1},\\[2mm]
		\mathcal B_{XZ} &=\Bigl\{\,
		\ket{\widetilde{\widetilde n}}
		:= \tfrac1{\sqrt N}\!\sum_{m=0}^{N-1}
		\omega_N^{\,\tfrac12 m(m+1)-nm}\,\ket{m}\Bigr\}_{n=0}^{N-1}.
	\end{aligned}
\end{align}
Here
$\mathcal B_Z$ diagonalises $Z_N$ by definition,
$\mathcal B_X$ is obtained from $\mathcal B_Z$ by the discrete Fourier transform $H_N$ and therefore it diagonalises $X_N$,
and $\mathcal B_{XZ}$ diagonalises $X_NZ_N$.
The quadratic phase $\tfrac12 m(m+1)$ is the solution of the eigen-value equation in $\mathbb Z_N$ because 
$
(X_NZ_N)\ket{\widetilde{\widetilde n}}
=\omega_N^{-n}\ket{\widetilde{\widetilde n}}.
$
(When $N$ is even, replace $m(m+1)/2$ by $\frac{m(m+N)}2$ so that the factor $1/2$ is well-defined modulo $N$.)

The triplet in Eq.~\eqref{EQ:MUB1} is mutual unbiased because for any two different bases $\mathcal B_{*},\mathcal B_{*'}$, one has $|\langle\psi|\phi\rangle|^{2}=1/N$. 
This generalize the six-state protocol to arbitrary dimension $N=dD$.

The triplet in Eq.~\eqref{EQ:MUB1} is gauge-invariant because every $\ket{J}$ lies in the neutral sector $\mathcal H_{Q=0}$.
The superpositions in Eq.~\eqref{EQ:MUB1} therefore also satisfy $\hat{Q}\ket{\psi}=0$ and are merely multiplied by a global phase under a gauge transformation $U_g$. 
Thus the whole triplet is compatible with the packaging principle.

\paragraph{(2) Extending the Set of MUBs.}

If $N$ is a prime power ($N=p^{\,t}$) dimension, then a complete set of $N+1$ MUBs exists and can be generated by the finite‑field construction.

However, if $N$ is not a prime power (composite) dimension, then a full set is generally unknown. 
A practical solution is to use the tensor product structure
$\mathcal H_{\text{hyb}}=\mathcal H_{\mathrm{int}}^{(d)}\otimes\mathcal H_{\mathrm{ext}}^{(D)}$. 
Let 
$\{\mathcal B_{\text{int}}^{(i)}\}_{i=1}^{d+1}$ be a complete MUB set on
$\mathcal H_{\mathrm{int}}^{(d)}$ (available when $d$ is a prime power), and
$\{\mathcal B_{\text{ext}}^{(j)}\}_{j=1}^{D+1}$ any MUB family on
$\mathcal H_{\mathrm{ext}}^{(D)}$.
Then the product bases
\[
\mathcal B^{(i,j)} :=\mathcal B_{\text{int}}^{(i)}\;\otimes\;\mathcal B_{\text{ext}}^{(j)}
\quad
i=1,\dots,d+1,\; j=1,\dots,D+1,
\]
are easy to prepare and measure.
They remain mutually unbiased with respect to the canonical triplet $(\mathcal B_Z,\mathcal B_X,\mathcal B_{XZ})$ and already suffice to force an eavesdropper’s optimal guess probability down to the information-theoretic limit $1/N$.

\subsection{Hybrid-Packaged Qudit Circuits}
\label{SEC:HybridPackagedQuantumCircuits}

We have constructed hybrid-packaged qudits and hybrid-packaged qudit gates.
Now it is time to construct hybrid-packaged qudit circuit architecture in the neutral sector of the $(d\times D)$-dimensional hybrid Hilbert space 
$\mathcal H_{\text{hyb}}=\mathcal H_{\rm int}^{(d)}\otimes\mathcal
H_{\rm ext}^{(D)}$.
Throughout we fix a total charge operator
$\hat{Q}_{\rm tot}=\sum_{i}\hat{Q}^{(i)}$ whose zero‑eigenspace is the
only physically allowed sector.

\subsubsection{Definition of Hybrid-Packaged Qudit Circuits}

In Definition \ref{DEF:PackagedQubitCircuit}, we defined packaged qubit circuits.
Let us now lift it to ($d \times D$)-dimensional hybrid Hilbert space.

\begin{definition}[Hybrid-Packaged Qudit Circuit]\label{DEF:HybridPackagedQuditCircuit}
	Let
	$
	V_{\rm circuit} : \mathcal{H}_{\rm qudit}^{\otimes n} \to \mathcal{H}_{\rm qudit}^{\otimes n}
	$
	be a unitary operator that is constructed as a finite sequential composition of unitary operators $V_j$, i.e.,
	\[
	V_{\rm circuit} = V_k\,V_{k-1}\,\cdots\,V_2\,V_1.
	\]
	If each $ V_j $ is a hybrid-packaged qudit gate that acts on the logical hybrid-packaged subspace $\mathcal{H}_{\rm qudit}$ and satisfies
	\[
	[V_j, \hat{Q}] = 0,
	\]
	then we say that $ V_{\rm circuit} $ is a \textbf{hybrid-packaged qudit circuit}.
\end{definition}

In other words, a hybrid-packaged qudit circuit is a circuit composed entirely of hybrid-packaged qudit gates that preserve gauge-invariance and remain confined to the same superselection sector $\mathcal{H}_Q$.

\subsubsection{Gauge-Invariance of Hybrid-Packaged Qudit Circuits}

When defining the hybrid-packaged qudit gate, we require the gate to be gauge-invariant.
But when defining the hybrid-packaged qudit circuit, we do not explicitly require the circuit to be gauge-invariant, but only require each unitary operators $V_j$ to be gauge-invariant (packaged).
Thus, it is necessary to prove that the entire circuit is gauge-invariant.

\begin{theorem}
	A hybrid-packaged qudit circuit is gauge-invariant.
\end{theorem}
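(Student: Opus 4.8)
The plan is to proceed exactly as in the proof of Theorem~\ref{THM:GaugeInvariantOfPackagedQuantumCircuits}, by induction on the number $k$ of elementary gates, since the only property of the gates used there was that each commutes with the conserved charge. Here the relevant charge is the total operator $\hat{Q}=\hat{Q}_{\rm tot}=\sum_{i}\hat{Q}^{(i)}$ acting on $\mathcal H_{\rm qudit}^{\otimes n}$, and Definition~\ref{DEF:HybridPackagedQuditCircuit} guarantees $[V_j,\hat{Q}]=0$ for every factor $V_j$. The goal is therefore to establish $[V_{\rm circuit},\hat{Q}]=0$, which we take (as in the qubit case, and consistently with Corollary~\ref{cor:HybridGaugeInvariance}) as the meaning of gauge-invariance for the circuit.

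The computational engine is the Leibniz rule for commutators, $[AB,\hat{Q}]=A[B,\hat{Q}]+[A,\hat{Q}]B$. First I would dispose of the base case $k=1$, where $V_{\rm circuit}=V_1$ and $[V_1,\hat{Q}]=0$ holds by hypothesis. For the inductive step I would set $W_n=V_n\cdots V_1$, assume $[W_n,\hat{Q}]=0$, and write $W_{n+1}=V_{n+1}W_n$; the Leibniz rule then gives $[W_{n+1},\hat{Q}]=V_{n+1}[W_n,\hat{Q}]+[V_{n+1},\hat{Q}]W_n=0$, where both terms vanish by the induction hypothesis and by $[V_{n+1},\hat{Q}]=0$. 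This is precisely the closure statement already isolated in Lemma~\ref{LEMMA:ClosureOfPackagedOperations}, now applied to operators on the hybrid space, so I would invoke that lemma to keep the write-up short rather than re-deriving the induction in full.

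The only point requiring a word of care --- and it is bookkeeping rather than a genuine obstacle --- is that a gate $V_j$ typically acts nontrivially on only one or two of the $n$ tensor factors while being padded by identities elsewhere. I would note that $[V_j,\hat{Q}_{\rm tot}]=0$ still holds for such a padded operator: writing $\hat{Q}_{\rm tot}$ as a sum of single-site charges, $V_j$ commutes with the charges on the factors it touches (by its definition as a hybrid-packaged qudit gate, Definition~\ref{DEF:HybridPackagedQuditGates}) and trivially with all the others, so the full commutator vanishes term by term. With this understood the induction closes and $[V_{\rm circuit},\hat{Q}]=0$.

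Finally, to connect the commutator identity to the physical notion of gauge-invariance, I would observe that $[V_{\rm circuit},\hat{Q}]=0$ forces $V_{\rm circuit}$ to preserve every charge eigenspace, in particular the neutral sector $\mathcal H_{Q=0}^{\otimes n}$; hence for any admissible input and any local gauge transformation one has $U_g^{\otimes n}V_{\rm circuit}|\psi\rangle=e^{i\phi(g)}V_{\rm circuit}|\psi\rangle$, so the output remains packaged. This mirrors the example following Theorem~\ref{THM:GaugeInvariantOfPackagedQuantumCircuits} and completes the argument.
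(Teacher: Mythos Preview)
Your proposal is correct and follows essentially the same induction-on-gate-count argument as the paper, using the Leibniz rule $[V_{n+1}W_n,\hat{Q}]=V_{n+1}[W_n,\hat{Q}]+[V_{n+1},\hat{Q}]W_n$ to propagate the commutator identity. The paper additionally records a one-line group-theoretic variant (each $V_j$ lies in the centraliser $\mathcal C_{\hat Q}$, which is a subgroup, hence closed under products), but your induction is the paper's primary proof; your extra remarks on identity-padding across tensor factors and on the physical reading of $[V_{\rm circuit},\hat Q]=0$ are sound elaborations not present in the paper's terse version.
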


\begin{proof}
	Let $V_{\rm circuit} = V_k V_{k-1} \cdots V_1$ be a hybrid-packaged qudit circuit.
	According to Definition \ref{DEF:HybridPackagedQuditCircuit}, each unitary operators $V_j$ is a hybrid-packaged qudit gate.
	Thus, $V_j$ is gauge-invariant and obeys $[V_j,\hat{Q}_{\rm tot}]=0$.	
	We prove the theorem with two different approaches:
	
	\begin{enumerate}
		\item Induction Proof:
		
		Define composite operator $W_r = V_r V_{r-1} \cdots V_1$.
		\begin{enumerate}
			\item Base $r=1$.
			
			By assumption, we have
			$[W_1, \hat{Q}_{\rm tot}] = [V_1, \hat{Q}_{\rm tot}] =0 $.
			
			\item Induction step.
			
			Assume $[W_{r-1}, \hat{Q}_{\rm tot}] = 0$, then we have
			\[
			[W_{r},\hat{Q}_{\rm tot}]
			=[V_rW_{r-1},\hat{Q}_{\rm tot}]
			=V_r[W_{r-1},\hat{Q}_{\rm tot}]
			+[V_r,\hat{Q}_{\rm tot}]W_{r-1}=0.
			\]
			
			By induction, this shows that $W_k = V_{\rm circuit}$ commutes with
			$\hat{Q}_{\rm tot}$.
			In other words, $V_{\rm circuit}$ is gauge-invariant.
		\end{enumerate}
		
		\item Group‑Theoretic Proof:
		
		Because each $V_j$ lies in the centraliser
		$\mathcal C_{\hat{Q}}:=\{V\in\mathrm U(N)\mid [V,\hat{Q}_{\rm tot}]=0\}$,
		and $\mathcal C_{\hat{Q}}$ is a subgroup, any finite product of its
		elements also lies in $\mathcal C_{\hat{Q}}$.	
		Therefore
		$V_{\rm circuit}\in\mathcal C_{\hat{Q}}$
		and commutes with
		$\hat{Q}_{\rm tot}$.
	\end{enumerate}
\end{proof}

\subsubsection{Circuit‑depth and size estimates}

Let $V \in \mathrm{SU}(N^{n})$ be the target algorithm on $n$ neutral hybrid-packaged qudits and let $\varepsilon$ be the desired operator‑norm accuracy.

\paragraph{(1) Solovay-Kitaev compilation cost.}

By Corollary \ref{COR:SolovayKitaevTheoremInHybridPackagedSpace}, there exists a
$\mathcal G$-circuit of length 

\[
L_{\text{SK}}(\varepsilon)=
O\!\bigl(\log^{\,\kappa}\!\varepsilon^{-1}\bigr),
\qquad
\kappa\lesssim 4,
\]
independent of $n$ up to a polynomially bounded constant. 
If parallelism is allowed one obtains a depth
\[
D_{\text{SK}}(\varepsilon)=O(L_{\text{SK}}/n)\;\;\;\hbox{(worst case)}.
\]

\paragraph{(2) Overhead from magic‑state distillation.} 

Assume gate infidelity $p$ and target logical error $\eta$. 
Encoded $\Theta_r$ uses Bravyi-Haah protocols:
\[
N_{\text{magic}}=O\!\bigl(\log\!\eta^{-1}\bigr),
\qquad
D_{\text{magic}}=O\!\bigl(\log\!\eta^{-1}\bigr),
\]
and contributes additively to (5.1)-(5.2).
Because the distillation Clifford sub‑circuits operate exclusively with gauge‑respecting gates, no extra penalty from superselection appears.

\subsubsection{Special Cases of Hybrid-Packaged Quantum Circuits}

We now list several limits for the hybrid-packaged quantum circuits in the following table:

\begin{table}[H]
	\centering
	\caption{Special Cases of Hybrid-Packaged Quantum Circuits}
	\begin{tabular}[hbt!]{|p{1.8cm}|p{2.1cm}|p{6cm}|p{4.3cm}|}
		\hline\hline
		Internal $d$ &External $D$ &Physical Realisation &Comment \\
		\hline
		2 & 1 & packaged qubit (meson doublet, Majorana pair) & recovers the packaged qubit circuits from Sec. \ref{SEC:PackagedQuantumCircuits} \\
		\hline
		$d>2$ & 1 & multi‑flavour neutral atom with fixed net charge & pure‑internal qudit \\
		\hline
		1 & $D>2$ & spin/orbital angular momentum photons & pure‑external HD qudit \\
		\hline
		$d>2$ & $D>2$ & hybrid photon-atom, circuit‑QED cat modes & full model \\
		\hline
	\end{tabular}
\end{table}

\begin{remark}
	Hybrid-packaged qudit circuits inherit the same algorithmic expressions as conventional qudit circuits of dimension $N=dD$.
	Universality is retained via a finite gate library $\mathcal G$.
	Solovay-Kitaev compilation and magic-state distillation proceed without modification.

	But hybrid-packaged qudit circuits operate inside a symmetry-protected subspace. Gauge invariance filters out charge-changing noise channels.
	Although depth/width overheads are identical to the standard model, the physical error threshold is higher because many error operators are forbidden outright.

	These features make the hybrid-packaged architecture useful for high-dimensional quantum algorithms, communication protocols, and precision metrology.
\end{remark}

\subsection{Translation into Hybrid-Packaged Space}
\label{SEC:TranslationIntoHybridPackagedSpace}

We now show that any conventional quantum algorithms or protocols can be
embedded in the hybrid-packaged Hilbert space by means of an encoding
isometry.

\subsubsection{Existence of the Encoding Isometry}

Let $\mathcal H_{\mathrm{logic}}$ be the conventional quantum logical space and $\dim \mathcal H_{\mathrm{logic}} = K$.
Let $\mathcal H_{Q = 0}\subset\mathcal H_{\mathrm{hyb}}^{(d\times D)}$ be
the gauge-invariant (charge-zero) sector, with
$\dim\mathcal H_{Q = 0}=N\ge K$.
In the architectures considered here, the size of $\mathcal H_{Q = 0}$ grows at least exponentially with the number of physical qudits, so the inequality is always satisfied.

For every local gauge transformation $U_g$, because $N \ge K$, we can choose an orthonormal set of $K$ states
$\{|x\rangle_{\mathrm{pack}}\}_{x=0}^{K-1}\subset\mathcal H_{Q = 0}$
such that
$
\hat Q\,|x\rangle_{\mathrm{pack}} = 0
$
and
$
U_g\,|x\rangle_{\rm pack} = e^{i\phi(g)}\,|x\rangle_{\rm pack}
$
for all $x$.
Then we can define a linear map
\begin{equation}\label{EQ:IsometryHybridPackaged}
	\mathcal U:\mathcal H_{\mathrm{logic}}\longrightarrow\mathcal H_{Q = 0},
	\quad
	\mathcal U\,|x\rangle = |x\rangle_{\mathrm{pack}}.
\end{equation}
Here $\mathcal U$ preserves inner products,
Therefore, it is an isometry.
One can further extend it by Gram-Schmidt to a unitary on all of $\mathcal H_{Q = 0}$.
But the extension is not needed for what follows.

\subsubsection{Encoding Logical Qubits}
\label{SSS:Isometry}

Let
$
\mathcal H_{\mathrm{pack}}
:= \operatorname{span}\{|x\rangle_{\mathrm{pack}}\}_{x=0}^{K-1}.
$
Because $\mathcal U$ is unitary onto this $K$-dimensional code subspace,
every bounded operator $V$ on the logical register lifts to
\[
V_{\mathrm{pack}}
:= \mathcal U\,V\,\mathcal U^\dagger,
\quad
V_{\mathrm{pack}}:\mathcal H_{\mathrm{pack}}\to\mathcal H_{\mathrm{pack}}.
\]
Since $U_g = e^{i\alpha(g)\hat Q}$, the commutator
$[V_{\mathrm{pack}}, \hat Q]=0$ is equivalent to
$U_g V_{\mathrm{pack}} U_g^\dagger = V_{\mathrm{pack}}$
for all $g\in G$.
Thus, every encoded operation is gauge-invariant.

Any algorithm designed for $\mathcal H_{\mathrm{logic}}$ is now lifted to hybrid-packaged subspace $\mathcal H_{\mathrm{pack}}$.
Therefore, we can physically implement the algorithm with packaged circuit
$
\mathcal U^\dagger \, V_{\mathrm{pack}} \,\mathcal U,
$
which is passively protected by the gauge symmetry.

\subsubsection{Reconstructing Protocols in the Full Hybrid-Packaged Space}
\label{SSS:HybridReconstruction}

Beyond the direct encoding as introduced above, one can exploit the whole hybrid-packaged subspace (see Eq. \eqref{EQ:dDHybridPackagedHilbertSpace})
\[
\mathcal H_{\mathrm{hyb}}^{(d\times D)}
= \mathcal H_{\mathrm{int}}^{(d)}\otimes
\mathcal H_{\mathrm{ext}}^{(D)}
\]
to design genuinely high-dimensional packaged algorithms or protocols.
The two-index basis
$
\{|j,k\rangle\}_{j=0}^{d-1}\!{}_{k=0}^{D-1}
$
introduced in Eq. \eqref{EQ:dDHybridPackagedBasis} obeys
$
\hat Q|j,k\rangle = 0
$
and
$
U_g |j,k\rangle
= \mathrm e^{i\phi_{j,k}(g)}|j,k\rangle,
$
so the entire hybrid space is contained in $\mathcal H_{Q = 0}$.

Therefore, one can directly reconstruct algorithms in the natural, gauge-invariant computational framework of dimension $dD$, i.e., the two-index qudit basis given by Eq. \eqref{EQ:dDHybridPackagedBasis}
or the single-index qudit basis given by Eq. \eqref{EQ:dDSingleIndexHybridPackagedBasis}.
This often yield higher error thresholds or throughput while still enjoying automatic gauge symmetry.

\section{Error Analysis and Fault Tolerance in Packaged Space}
\label{SEC:ErrorAnalysisAndFaultToleranceHyb}

The quantum computing devices are inevitably subject to noise, decoherence, and control imperfections.\cite{Steane1996,Kitaev2003,Bombin2006}
In conventional qubit models, errors like bit-flips, phase-flips, depolarizing channels, or amplitude-damping may occur.\cite{NielsenChuang2010}
However, when quantum information is encoded in packaged qubits, superselection rules impose additional constraints errors by restricting the allowed error processes.

In this section, we derive the error model and discuss its impact on fault tolerance using the foundational principles established in Sections~\ref{SEC:GaugeInvariantPackagedStates} and \ref{SEC:dDDimensionalHybridPackagedSpace}.
Throughout, we work in the neutral sector
\[
\mathcal H_{Q=0}^{(N)}
=\;
\bigl(\mathcal H_{\text{int}}^{(d)}
\otimes
\mathcal H_{\text{ext}}^{(D)}
\bigr)^{\!\otimes n},
\qquad N=dD.
\]
The total-charge operator
\[
\hat{Q}_{\text{tot}}
\;=\;
\sum_{i=1}^{n}\hat{Q}^{(i)}
\]
has eigenvalue $0$ on the entire computational subspace.
All gates obey $[V,\hat{Q}_{\text{tot}}]=0$.

\subsection{Error Model and Noise in Hybrid-Packaged Space}

Quantum noise models include depolarizing channel \cite{Bennett1996}, amplitude and phase damping \cite{Preskill1998}, and Pauli twirling and noise tailoring \cite{Emerson2007}.

\subsubsection{Error Operators}
\label{SEC:ErrorModel}

Since all physical states and operations lie in $\mathcal H_{Q=0}^{(N)}$, the error operators \cite{Kraus1971} that describe the noise must also preserve this property.
In other words, every error operator $E_k: \mathcal H_{Q=0}^{(N)} \mapsto \mathcal H_{Q=0}^{(N)}$ without inducing transitions between different gauge sectors.
We formalize this with the following definition:

\begin{definition}[Gauge-Conserving Error Operators]\label{DEF:GaugeConservingErrorOperators}
	Let $ \{E_k\} $ be a set of Kraus operators that act on $\mathcal H_{Q=0}^{(N)}$,
	If for every $ k $,
	\[
	[E_k, \hat{Q}_{\text{tot}}] = 0,
	\]
	then we say that $ \{E_k\} $ is \textbf{gauge-conserving (GC) error operator}.
\end{definition}

This condition guarantees that the physical error processes do not violate the superselection rules established by the packaging principle.

\begin{lemma}[Closure of Gauge-Conserving Error Operators]
	Let $ E_1 $ and $ E_2 $ be two gauge-conserving error operators, i.e.,
	$
	[E_1, \hat{Q}_{\text{tot}}] = 0
	$
	and
	$
	[E_2, \hat{Q}_{\text{tot}}] = 0.
	$
	Then the composed operator $ E_1 E_2 $ also satisfies
	$
	[E_1 E_2, \hat{Q}_{\text{tot}}] = 0.
	$
\end{lemma}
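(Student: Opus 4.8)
The plan is to prove this directly from the Leibniz (product) rule for commutators, exactly as was done for Lemma~\ref{LEMMA:ClosureOfPackagedOperations}. The key identity is
\[
[E_1 E_2, \hat{Q}_{\text{tot}}] = E_1\,[E_2, \hat{Q}_{\text{tot}}] + [E_1, \hat{Q}_{\text{tot}}]\,E_2,
\]
which holds for any three operators sharing a common domain. First I would confirm this identity by expanding $E_1 E_2 \hat{Q}_{\text{tot}} - \hat{Q}_{\text{tot}} E_1 E_2$ and inserting the telescoping term $\pm\,E_1 \hat{Q}_{\text{tot}} E_2$ to regroup the four resulting products into the two commutator blocks. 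Substituting the two hypotheses $[E_1,\hat{Q}_{\text{tot}}]=0$ and $[E_2,\hat{Q}_{\text{tot}}]=0$ then collapses both terms to zero, so that $[E_1 E_2,\hat{Q}_{\text{tot}}] = E_1\cdot 0 + 0\cdot E_2 = 0$, which is the claim.

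A point worth emphasizing is that, unlike the gate-closure lemma in which $V_1,V_2$ were unitary, here $E_1,E_2$ are merely Kraus operators and need not be unitary, normal, or even invertible. The product rule, however, is a purely algebraic identity that invokes none of these properties, so the one-line argument carries over verbatim. The only genuine prerequisite is that the composite $E_1 E_2$ be well defined on $\mathcal H_{Q=0}^{(N)}$, which is automatic because each factor, by the gauge-conserving property of Definition~\ref{DEF:GaugeConservingErrorOperators}, maps the neutral sector into itself.

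I expect no substantive obstacle: the statement is an instance of the general fact that the commutant of a fixed operator is closed under multiplication, i.e.\ it forms an associative subalgebra. If I wanted a more structural formulation, I would simply observe that $\{E : [E,\hat{Q}_{\text{tot}}]=0\}$ is the commutant of $\hat{Q}_{\text{tot}}$ and hence closed under composition, with the extension to any finite product $E_1 E_2 \cdots E_m$ following by the same induction used in Theorem~\ref{THM:GaugeInvariantOfPackagedQuantumCircuits}. The result is therefore immediate, and the proof reduces to citing the commutator product rule and the two hypotheses.
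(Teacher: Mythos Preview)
Your proof is correct and follows exactly the same route as the paper: apply the Leibniz product rule $[E_1 E_2,\hat{Q}_{\text{tot}}]=E_1[E_2,\hat{Q}_{\text{tot}}]+[E_1,\hat{Q}_{\text{tot}}]E_2$ and substitute the two hypotheses to get zero. Your additional remarks on non-unitarity and the commutant-subalgebra viewpoint are sound elaborations but go beyond what the paper records.
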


\begin{proof}
	By the product rule for commutators, we have
	\[
	[E_1 E_2, \hat{Q}_{\text{tot}}] = E_1 [E_2, \hat{Q}_{\text{tot}}] + [E_1, \hat{Q}_{\text{tot}}] E_2.
	\]
	Since $ [E_2, \hat{Q}_{\text{tot}}] = 0 $ and $ [E_1, \hat{Q}_{\text{tot}}] = 0 $, it follows that
	\[
	[E_1 E_2, \hat{Q}_{\text{tot}}] = E_1 \cdot 0 + 0 \cdot E_2 = 0.
	\]
\end{proof}

By definition, all packaged operations satisfy $ [V, \hat{Q}_{\text{tot}}] = 0 $, the effective error model is restricted to gauge-conserving errors $ \{E_k\} $ with $ [E_k, \hat{Q}_{\text{tot}}] = 0 $.
This restriction is crucial for ensuring that the entire computation remains in the physical subspace $ \mathcal{H}_Q $.
It is a main advantage for robust and fault-tolerant quantum information processing.

\subsubsection{Error Channels}

We now analyze two types of error channels:

\paragraph{(1) Gauge-Conserving (GC) Errors.}
Let $\rho$ be the density matrix of a packaged qubit (or a register of multiple packaged qubits) satisfying $\hat{Q}\rho = Q\rho$.
A general error channel is expressed in its Kraus form:
\[
\rho \mapsto \sum_{k} E_k\,\rho\,E_k^\dagger,\qquad \sum_k E_k^\dagger E_k = \mathbb{I}.
\]
In a gauge-invariant system, the physical errors are restricted by the superselection rule.
Specifically, any physically allowed Kraus operator $E_k$ must obey
\[
[E_k, \hat{Q}_{\text{tot}}] = 0.
\]
Such operators act entirely within the superselection sector and usually include conventional noise channels.
Let us analyze this by referring to the single-index hybrid packaged Weyl block, Eq.~\eqref{EQ:HybridWeylBlock}
\[
X_N=\sum_{J=0}^{N-1} \lvert J\!\oplus\!1\rangle\!\langle J\rvert,
\quad
Z_N=\sum_{J=0}^{N-1} \omega_N^{\,J}\lvert J\rangle\!\langle J\rvert,
\]
which commutes with $\hat{Q}^{(i)}$ on every qudit.
The error algebra within $\mathcal H_{Q=0}^{(N)}$ is spanned by the hybrid-packaged Pauli group
\begin{equation}\label{EQ:HybridPackagedPauliGroup}
	\mathcal P_N
	=\bigl\{\; \omega_N^{\alpha}\,X_N^{s}\,Z_N^{t}\;\bigl|\;
	\alpha,s,t\in\mathbb Z_{N}\bigr\}^{\!\otimes n}.
\end{equation}
This is similar to that in conventional $N$-level qudit system.
All higher-level noise models (depolarising, dephasing, amplitude-damping restricted to charge-neutral manifolds, etc.) decompose into linear combinations of GC Pauli errors.

Thus, such errors only modify the logical information without moving the state out of the physical subspace.

\paragraph{(2) Gauge-Violating (GV) Errors.}
In contrast, gauge-violating errors do not commute with $\hat{Q}$.
That is,
\[
[E_k^\prime, \hat{Q}_{\text{tot}}] \neq 0.
\]
These errors would transfer the state out of $\mathcal{H}_Q$ by altering the net charge (or net color), i.e., induce transitions between different superselection sectors.
For example, an operator that would convert an electron (with net charge $-e$) to a vacuum state or a two-electron state would necessarily change the eigenvalue of $\hat{Q}$.

Generally, these operations are either energetically forbidden or strongly suppressed.
For example, if a gauge theory is realized with an energy penalty term in the Hamiltonian, $H=H_{\text{g.i.}}+\lambda\sum_x\hat G_x^{\,2}$ with $\lambda\gg k_BT$ (e.g., $\lambda (\hat{G}_x)^2$ for local Gauss law violations), then the probability of a gauge-violating error occurring is suppressed by a factor $\sim e^{-\Delta/k_B T}$, where $\Delta$ is the energy gap to gauge-violating excitations.

\begin{proposition}[Suppression of Gauge-Violating Errors]
	Consider a physical system whose dynamics are governed by a gauge-invariant Hamiltonian $ H $, i.e., $ [H, \hat{Q}_{\text{tot}}] = 0 $.
	If an error operator $ E_k' $ does not commute with $ \hat{Q} $, i.e., $ [E_k', \hat{Q}_{\text{tot}}] \neq 0 $, then such gauge-violating error processes are suppressed by an energy penalty $ \Delta $.
	In a thermal environment at temperature $T$, the amplitude for a gauge-violating error is suppressed by a factor of order
	\begin{equation}\label{EQ:GVNoiseBoltzmannSuppression}
		P(E') \propto e^{-\Delta/(k_BT)},
	\end{equation}
	where $k_B$ is Boltzmann's constant.
\end{proposition}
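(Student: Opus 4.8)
The plan is to treat this as an equilibrium-statistical statement that follows from the block structure imposed by charge conservation together with an energy gap separating the neutral sector from the charged ones. First I would use $[H,\hat{Q}_{\text{tot}}]=0$ to simultaneously diagonalise $H$ and $\hat{Q}_{\text{tot}}$, which splits the full Hilbert space into charge sectors $\mathcal{H}=\bigoplus_Q\mathcal{H}_Q$ on each of which $H$ acts as an invariant block. A physical (packaged) state is supported entirely in $\mathcal{H}_{Q=0}$, while a gauge-violating error $E_k'$ with $[E_k',\hat{Q}_{\text{tot}}]\neq0$ necessarily produces a component in some sector $Q\neq0$. Invoking the penalty Hamiltonian alluded to in the surrounding text, $H=H_{\text{g.i.}}+\lambda\sum_x\hat{G}_x^{\,2}$ with $\lambda$ large, I would define the gap $\Delta$ as the minimal energy cost of leaving the neutral manifold, i.e.\ the smallest positive eigenvalue of $H$ restricted to $\bigoplus_{Q\neq0}\mathcal{H}_Q$ measured relative to the neutral-sector ground state.

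Second, I would connect this gap to the suppression factor through detailed balance. Coupling the system weakly to a thermal bath at temperature $T$, the error $E_k'$ realises a transition from a neutral-sector state to a charged-sector state of energy $\geq\Delta$; by the Boltzmann detailed-balance condition, the ratio of the up-rate (absorbing energy $\Delta$ from the bath) to the reverse down-rate equals $e^{-\Delta/(k_BT)}$. Hence in steady state the population of any gauge-violating configuration, and therefore the probability $P(E')$ of a realised gauge-violating error, is bounded by a factor of order $e^{-\Delta/(k_BT)}$. An equivalent and slightly more microscopic route is Fermi's golden rule: the transition rate is proportional to the bath spectral density at frequency $\Delta/\hbar$ weighted by its thermal occupation, which in the regime $\Delta\gg k_BT$ reduces to the same exponential.

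The main obstacle is not the algebra but pinning down precisely what $P(E')$ denotes and under which assumptions the exponential law is exact rather than merely heuristic. I would therefore state explicitly the hypotheses used, namely weak system-bath coupling, approach to thermal equilibrium, and a genuine spectral gap $\Delta>0$ between the neutral sector and all charged sectors, and make clear that $P(E')$ is interpreted as an equilibrium occupation (or a thermally activated transition probability) rather than a coherent amplitude. With these assumptions the proof reduces to the standard statement that excitations costing energy $\Delta$ are thermally suppressed by $e^{-\Delta/(k_BT)}$; the only genuinely new ingredient is the identification of $\Delta$ with the gauge-symmetry (Gauss-law) penalty, which guarantees that every gauge-violating error incurs at least this cost.
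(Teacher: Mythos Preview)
Your proposal is correct and follows essentially the same approach as the paper: block-diagonalise via $[H,\hat Q_{\text{tot}}]=0$, observe that a gauge-violating $E_k'$ moves amplitude into a charged sector, identify $\Delta$ with the energy penalty (e.g.\ from $\lambda\sum_x\hat G_x^{2}$), and conclude Boltzmann suppression in thermal equilibrium. Your version is considerably more detailed than the paper's brief sketch---in particular your explicit invocation of detailed balance/Fermi's golden rule and your careful enumeration of the hypotheses (weak coupling, equilibrium, genuine spectral gap) are absent from the paper's proof, which simply asserts the Boltzmann factor without further justification.
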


\begin{proof}
	Since $H$ is gauge-invariant, its eigenstates are confined to fixed superselection sectors.
	An operator $E'_k$ with $[E'_k, \hat{Q}_{\text{tot}}] \neq 0$ would induce a transition between different sectors, which is forbidden by the dynamics of $H$ unless a significant energy $\Delta$ is supplied (e.g., via a penalty term like $\lambda \hat{G}_x^2$ in lattice gauge theories \cite{Trotter1959,Suzuki1990}).
	In thermal equilibrium, the probability for such a transition is suppressed by the Boltzmann factor $e^{-\Delta/(k_B T)}$.
\end{proof}

\subsubsection{Correlated Errors and Non-Markovian Dephasing}
\label{SEC:CorrelatedErrorsAndNonMarkovianDephasing}

\paragraph{(1) Correlated String Errors.}
\cite{Terhal2005,Aharonov2008,Gutierrez2013,Petrenko2014,GoogleQuantumAI2021}

In lattice gauge implementations (e.g. Rydberg or superconducting-cQED
simulators) physical noise often excites flux strings which appear, in
the projected $Q=0$ sector, as pair-correlated Pauli errors
\[
E_{(\mathbf r,\mathbf r')}^{\pm}
\;=\;
X_N^{(\mathbf r)}\,X_N^{(\mathbf r')}
\quad(\text{or }Z_N^{(\mathbf r)}Z_N^{(\mathbf r')}).
\]

The two-site joint distribution obeys 
$\Pr[E_{(\mathbf r,\mathbf r')}] = \xi(|\mathbf r-\mathbf r'|)\,p^{2}$
with a correlation length $\lambda_{\rm corr}$. 
Following Brown-Al-Sammaneh (2023) one rewrites the matching decoder
cost function to include an edge weight 
$w_{(\mathbf r,\mathbf r')}=-\log \xi$ instead of the Manhattan
distance. Simulation (not shown) indicates the threshold decreases only
when $\lambda_{\rm corr}\gtrsim d_{\rm code}$, in which case a
renormalisation-group decoder recovers most of the lost performance.

\paragraph{(2) Non-Markovian (low-frequency) Dephasing.}
\cite{ViolaLloyd1998,Cywinski2008,Oreshkov2009,Sung2019,Marciniak2021}

External DOFs (OAM, microwave cat modes) couple to $1/f$ technical
noise that is well modelled by
\[
\mathcal E_{t}(\rho)
=\int\!\!d\varphi\,
e^{-i\varphi Z_N/2}\,\rho\,e^{+i\varphi Z_N/2}\,
\frac{e^{-\varphi^{2}/2\sigma^{2}(t)}}{\sqrt{2\pi\sigma^{2}(t)}},
\qquad
\sigma^{2}(t)\propto\log t .
\]

Because $Z_N$ commutes with $\hat{Q}$ the error remains GC, but the
variance grows logarithmically, violating the usual time-independent-Kraus assumption.
Using the filter-function formalism, one can insert periodic dynamical-decoupling sequences built from $\{X_N,H_N\}$ (all GC) that suppress the noise spectral density below the code gap.
The residual effective error after a cycle of length $\tau_{\rm DD}$ scales as
$\tilde p\sim (\tau_{\rm DD}/T_{\varphi})^{\alpha}$ with $\alpha\ge 2$.

Correlated strings can be decoded with minimum-weight perfect matching on a weighted graph or with RG decoders.
Both algorithms are gauge compatible because the weights depend only on edge length, not on charge. 

Non-Markovian phase noise is naturally GC but requires calibrated DD-pulses that themselves commute with $\hat{Q}$.
The hybrid Weyl pair $(X_N, H_N)$ suffices.

\subsubsection{Leakage Errors and Their Mitigation in Hybrid-Packaged Space}
\label{SEC:LeakageErrorsAndTheirMitigationInHPS}

A quantum state can leak into levels that do not belong to the code-space and are therefore invisible to the standard syndrome checks.\cite{AliferisTerhal2007,Motzoi2009,Battistel2021}
Once leaked, a qudit can
(i) stop participating in stabiliser measurements,
(ii) spread correlated errors to its neighbours,
and (iii) stay leaked for many cycles because ordinary Clifford circuits are unitary and conserve probability.
Leakage is now recognised as one of the main obstacles on today’s superconducting-qubit and trapped-ion processors \cite{Camps2024}.

In a hybrid-packaged qudit, every physical site already carries two independently addressable Hilbert-space factors
\[
\mathcal H_{\text{phys}}^{(d \times D)}
\;=\;
\mathcal H_{\text{int}}^{(d)}
\;\otimes\;
\mathcal H_{\text{ext}}^{(D)},
\quad
N=dD.
\]
Here we show how the internal block (used as an always-on watch-dog) allows us to detect and recycle leakage events on the external block with almost no extra hardware.

\paragraph{(1) Definition of Leakage Error}

Consider a logical code subspace 
\[
\mathscr C
= \mathcal H_{\text{int}}^{(d)}
\otimes
\mathcal H_{\text{ext}}^{(D)} 
\subset
\mathcal H_{\text{phys}}.
\]
Let $\mathcal H_{\rm ext}^\infty$ be the full infinite‐dimensional external Hilbert space (e.g. all cavity Fock states).
Then a leakage operator is a Kraus operator $L$ that satisfies
\[
[\;L,\hat{Q}\;]=0,
\qquad
L:\mathscr C \longrightarrow 
\mathcal H_{\text{leak}}
:= \mathcal H_{\text{int}}^{(d)} 
\otimes
\bigl(\mathcal H_{\text{ext}}^{\infty}\!\setminus\!\mathcal H_{\text{ext}}^{(D)}\bigr),
\]
i.e. the charge stays neutral, but the external degree of freedom jumps to a level $\ket{\ell_E}$ with $\ell\ge D$ (e.g. an unwanted higher Rydberg manifold or cavity-Fock level).

Because $\mathcal H_{\text{ext}}^{\infty}$ is usually bigger and less coherent than $\mathcal H_{\text{ext}}^{(D)}$, leaked population decoheres quickly and poisons subsequent gates if left unchecked.

\paragraph{(2) Using the Internal Block as an In-Situ Flag.}

We observe that
\[
[X_d , Z_D] = [X_D , Z_d]=0,
\]
so we can define a mixed (IQN) stabiliser on every physical qudit 
\[
F := X_d \, Z_D .
\]
If the external state lies inside the code range ($k<D$),
then $Z_D\ket{k_E}=\omega_D^{\,k}\ket{k_E}$ and $F$ acts as an ordinary phase rotation that is tracked by the decoder. 
If the external state leaks ($k\ge D$), then $Z_D$ is not defined.
So $F$ anticommutes with the stabiliser projector and flips its measurement outcome.

Thus, we can use one qudit to measure two things at once: ordinary Pauli faults (phase-flips) and leakage out of the work-space.

\paragraph{(3) Gauge-Respecting Leakage-Reduction Unit (LRU).}

The goal of an LRU is to detect a leaked qudit and pull it back into the computational subspace without disturbing its internal (charge-locked) partner. 
Because every ingredient commutes with the total-charge operator $\hat{Q}$, we can drop an LRU between any two error-syndrome rounds and remain inside the physical sector.

A compact CPTP description of the conditional reset (step 2) is
\[
\mathcal R(\rho)=
\Pi_{\text{code}}\rho\Pi_{\text{code}}
\;+\;
\operatorname{Tr}_{\text{ext}}\!\bigl[\Pi_{\text{leak}}\rho\Pi_{\text{leak}}\bigr]
\otimes\ket{0_E}\!\bra{0_E},
\]
with projectors 
\[
\Pi_{\text{code}}
=\mathbf1_{d}\!\otimes\!\sum_{k=0}^{D-1}\ket{k_E}\!\bra{k_E},
\qquad
\Pi_{\text{leak}}=\mathbf1-\Pi_{\text{code}}.
\]
This map 
removes leaked amplitude from the uncontrolled tail of $\mathcal H_{\text{ext}}^{\infty}$, 
re-injects it at $\ket{0_E}$ inside the legal range,
and preserves the internal qudit and therefore the global charge.

Measuring $F$ and applying $\mathcal R$ require no dedicated new qubits:
one can re‐use existing syndrome ancillas and mid‐circuit measurements to read out $F$.
As used in error-rate estimate, the LRU adds only a small fixed latency (one flag measurement + one fast reset pulse) and keeps the circuit depth overhead bounded by $O(t_L)$.

\paragraph{(4) Effective Error Rate.}

Each qudit performs a Bernoulli trial of leakage every cycle. 
If we invoke an LRU every $t_L$ cycles, then the probability that a given qudit is in the leaked sector is (to first order),
\[
P(\text{leaked at LRU})
\;=\;
1-(1-p_{\text{leak}})^{t_L}
\;\approx\;
p_{\text{leak}}\;t_L
\quad(\text{for }p_{\text{leak}}\!t_L\!\ll1).
\]
This linearised form is accurate whenever leaks are rare (the regime where quantum error correction is meaningful).

If the qudit is not flagged as leaked, then the qudit simply keeps whatever Pauli noise it had already accumulated (counted inside $p_{\text{GC}}$). 
If it is flagged, then we measure the mixed stabiliser $F=X_d Z_D$ and apply the reset map $\mathcal R$ (pumping back to $\ket{0_E}$).

From the code's viewpoint, measurement back-action plus the reset pulse together behave like at most one additional Pauli-type fault on that qudit (phase-flip with probability $\tfrac12$ is the usual worst-case bound). 
That fault usually appears only once in the very cycle where the LRU fires.
Thus, each leaked qudit contributes one equivalent Pauli error with probability $p_{\text{leak}}t_L$.

Combining the GC faults $p_{\text{GC}}$ and converted leakage faults $p_{\text{leak}}t_L$, we have
\[
	p_{\text{eff}}
	\;=\;
	p_{\text{GC}}
	\;+\;
	p_{\text{leak}}\;t_L,
\]
where we ignored all higher-order terms $O(p_{\text{leak}}^2 t_L^2)$ in the low-error regime.
To keep the code's logical-failure rate curve unchanged, we need
\[
p_{\text{eff}}
\;\;\le\;\;
p_{\text{th}}
\quad\Longrightarrow\quad
t_L
\;\lesssim\;
\frac{p_{\text{th}}-p_{\text{GC}}}{p_{\text{leak}}}
\;\approx\;
\frac{p_{\text{th}}}{p_{\text{leak}}}
\quad(\text{if }p_{\text{GC}}\ll p_{\text{th}}).
\]

Because $p_{\text{th}}\propto 1/(N-1)$, higher-dimensional hybrid-packaged qudits raise the threshold and let us run LRUs less frequently for the same target logical fidelity.

\subsection{Gauge-conserving quantum error correction}

We now discuss how conventional quantum error-correcting codes can be adapted in our framework.

\paragraph{(1) Stabilizer Codes.}\cite{SteaneCode1996,CalderbankShor1996,BravyiKitaev1998,Kitaev2003,GoogleQuantumAI2021} 
Let $S_i$ denote the stabilizer generators of a code that acts on packaged qubits.
Since each physical qubit is a packaged state in $\mathcal{H}_Q$, we require that
\[
S_i = \prod_{j \in \mathcal{N}(i)} V_{\text{hyb}}^{(j)}
\]
satisfies
\[
[S_i, \hat{Q}_{\text{tot}}] = 0.
\]
This ensures that syndrome measurements detect only gauge-conserving errors. For example, in the surface code, the star and plaquette operators are defined as products of Pauli operators that acts on neighboring qubits.
In our case, these are replaced by packaged Pauli operators ($X_P$, $Z_P$) that act only within $\mathcal{H}_Q$.

All standard stabiliser constructions can transfer to the hybrid-packaged subspace by replacing single-qubit Pauli operators with hybrid-packaged Pauli operators $X_N^sZ_N^t$.

\begin{example}
	surface code on packaged qudits:
	
	\begin{itemize}
		\item Check operators: 
		\[
		A_v=\prod_{e\supset v} X_N^{(e)},\qquad
		B_p=\prod_{e\subset p} Z_N^{(e)},
		\]
		each commuting with $\hat{Q}_{\text{tot}}$.
		
		\item Syndrome:
		detects only GC errors (strings of $X_N^s$ or $Z_N^t$).
		
		\item GV noise:
		would create open strings that anticommute with some $A_v$ or $B_p$ and is already exponentially suppressed by (6.2).
	\end{itemize}
\end{example}

Logical operators $\overline X_N,\overline Z_N$ are the usual non-contractible strings.
Therefore, the gauge symmetry does not obstruct their action.

\paragraph{(2) Superselection Firewall.}

The superselection rule itself acts as a natural error filter.
Since errors that would change the net charge (or other gauged IQNs) are energetically forbidden, the effective noise model includes only gauge-conserving errors.
For example, an amplitude-damping error that would take a packaged state $\lvert \psi\rangle \in \mathcal{H}_Q$ into a state in $\mathcal{H}_{Q'}$ (with $Q' \neq Q$) is not allowed by the physical dynamics.
This additional protection enhances the robustness of encoded logical information.

\paragraph{(3) Correcting Residual Gauge-Conserving Errors.}

Even though the gauge-violating errors are suppressed, the remaining gauge-conserving noise (e.g., bit-flips and phase-flips) must be corrected.
The conventional error-correction procedure is as follows:
\begin{enumerate}
	\item \textbf{Syndrome Extraction:} 
	Measure the stabilizer generators $S_i$, which are constructed from packaged operations that satisfy $[S_i,\hat{Q}_{\text{tot}}] = 0$.
	A nontrivial syndrome indicates that a gauge-conserving error occurred.
	
	\item \textbf{Recovery Operation:} 
	Based on the syndrome measurement results, apply a corrective packaged unitary $R$ (with $[R, \hat{Q}_{\text{tot}}] = 0$) to restore the logical state.
	
	\item \textbf{Fault Tolerance:} 
	In the packaged framework, the effective error rate is lower because the only errors that need correction are those within $\mathcal{H}_Q$, which translates to higher fault-tolerance thresholds.
\end{enumerate}

\subsection{Fault-Tolerant Implementation}

After developed the gauge-conserving quantum error-correction framework, let us now turn to its fault-tolerant implementation \cite{KnillLaflamme1886,Aharonov2008,Boykin1999}.
We show how to use the reduced error space and elevated thresholds (by the gauge symmetry) to realize robust logical operations.

\subsubsection{Reduced Error Space and Elevated Fault-Tolerant Thresholds}

\paragraph{(1) Reduced Error Space.}
In an unconstrained system, the full set of single-qudit errors can occur.
But in our gauge-invariant framework, only errors $E_k$ that commute with $\hat{Q}_{\text{tot}}$ occur.

We can split the physical error rate $p_{\text{phys}}$ into
\[
p_{\text{phys}}
=p_{\text{GC}}+p_{\text{GV}}
\;\approx\;
p_{\text{GC}}+p_{\text{GV}}^{(0)}e^{-\Delta/k_BT}
\;\;\approx\; p_{\text{GC}}.
\]
Therefore, the code faces a smaller effective noise strength.
The effective error channel is then restricted to the subset of gauge-conserving errors (of the form Eq.~\eqref{EQ:HybridPackagedPauliGroup}).
As a result, the noise model becomes thinner and the effective error probability is lower.

\paragraph{(2) Enhanced Fault-Tolerant Thresholds.} 
A smaller error space usually leads to a higher fault-tolerant threshold.
If the generic physical error rate is $p_{\text{gen}}$ and gauge-violating errors are suppressed by a factor $\sim e^{-\Delta/(k_B T)}$, then the effective error rate is
\[
p_{\mathrm{eff}} \approx p_{\mathrm{GC}} \ll p_{\mathrm{gen}}.
\]
This allows error-correcting codes (e.g., stabilizer codes) to operate with higher tolerable physical error rates.

\paragraph{(3) Lower Bound for GC Pauli noise.} 

For an $(L \times L)$ surface (or planar) code that is built from hybrid-packaged qudits of dimension $N\!=\!dD$, the minimum-weight decoder fails when the total weight of physical errors in one error-chain first runs across a non-trivial homology class.
Adapting the standard union-bound argument \cite{Dennis2002} to $N$-level, we obtain
\[
p_{\text{fail}}(L)\;\le\;
\sum_{\ell=L}^{\infty}
\underbrace{\mathcal N(\ell)}_{\text{\# chains}}
\; (N-1)^{\ell}\;
p^{\ell}(1-p)^{\alpha(\ell)},
\]
where
$\mathcal N(\ell)\le 6\,4^{\ell}$ is the number of self-avoiding
length-$\ell$ paths on the square lattice, 
$(N-1)^{\ell}$ counts the $N-1$ non-trivial Pauli labels per site, 
and $p$ is the physical probability of a weight-1 GC Pauli error.

Bounding $\mathcal N(\ell)$ and converting the series into a geometric
one, we obtain a sufficient condition for the logical error rate to vanish
as $L \to \infty$:
\begin{equation}\label{EQ:SufficientConditionLogicalErrorRateVanish}
	p\;<\;p_{\rm th}^{(\mathrm{GC})},
	\qquad
	p_{\rm th}^{(\mathrm{GC})}\;\ge\;\frac{1}{2( N-1)}.
\end{equation}

Gauge-violating (GV) faults cost an energy $\Delta$, so their probability of happening in one cycle is
\[
q \;=\;A\,e^{-\Delta/k_{B}T}\;\propto\;e^{-\Delta/k_{B}T},
\]
where the Boltzmann factor $e^{-\Delta/k_{B}T}$ makes $q$ exponentially small at low temperature. 
When such a fault occurs, it is projected back into the physical subspace and looks to the decoder like an ordinary Pauli error on one of the $N-1$ non-trivial error labels with effective strength $\simeq q/(N-1)$.
Thus, the physical error probability that the decoder must handle is the sum of the usual gauge-conserving rate $p$ and this extra contribution $q$.

Putting $p\!\rightarrow\!p+q$ in the earlier threshold bound
\[
p_{\text{fail}}(L)\le\sum_{\ell=L}^{\infty}\!6\cdot4^{\ell}(N-1)^{\ell}(p+q)^{\ell}(1-p-q)^{\alpha(\ell)},
\]
where $\alpha(\ell)$ is the number of untouched edges.
We have the new sufficient condition for reliable decoding
\begin{equation}\label{EQ:SufficientConditionLogicalErrorRateVanish2}
	p_{\rm th}\;\gtrsim\;\frac{1}{\,2(N-1)}\left(1-e^{-\Delta/k_{B}T}\right)
\end{equation}
This is only a sufficient bound (not necessary).
For an ordinary qubit ($N=2$) and a vanishing gap ($\Delta\!\to\!0$), this reproduces the familiar value $p_{\rm th}\approx0.104$. 
Eq.~\eqref{EQ:SufficientConditionLogicalErrorRateVanish2} shows two intuitive effects in a single line:
\begin{enumerate}
	\item Larger local dimension $N=dD$: it spreads the same total noise over more Pauli types and pushes the threshold up by the factor $1/(N-1)$. 
	
	\item A finite energy gap $\Delta$: it exponentially suppresses GV faults and gives an extra boost through the factor $1-e^{-\Delta/k_{B}T}$.
\end{enumerate}

Together, bigger $N$ and a non-zero gap make the code noticeably harder to break.

\subsubsection{Solovay-Kitaev Compilation under GC constraint}

Corollary \ref{COR:SolovayKitaevTheoremInHybridPackagedSpace} states that,
for any target unitary $V \in \mathrm{SU}\bigl((\mathcal H_{Q=0}^{(N)})^{\!\otimes n}\bigr)$ and accuracy $\varepsilon$, there exists a circuit over the finite library
\[
\mathcal G_N=\{X_N,Z_N,H_N,\text{CSUM}_N,\Theta_r\}
\subset\mathcal C_{\hat{Q}}
\]
of length $L=O(\log^\kappa\!\varepsilon^{-1})$ ($\kappa\le3.97$).
Every intermediate word commutes with $\hat{Q}_{\text{tot}}$.
Thus, error-corrected hybrid-packaged computation is as compilable as its unconstrained counterpart and benefits from the super-selection error filter.

\subsubsection{Magic-State Distillation for the Non-Clifford Phase $\Theta_r$}
\label{SEC:MagicStates}

\paragraph{(1) No-go for transversal $\Theta_r$.} 

For $N\!>\!2$, the diagonal gate 
\[
\Theta_r=\sum_{J=0}^{N-1}\exp\!\bigl(\tfrac{2\pi i r}{N^{2}}J^{2}\bigr)|J\rangle\langle J|
\] 
is outside the Clifford hierarchy level-2.
Bravyi-König and Pastawski-Yoshida no-go theorems forbid a locality-preserving in 2-D topological stabiliser codes.
Thus, we must inject it via \textbf{magic-state distillation (MSD)}.

One may try the following ``packaged 15-to-1 MSD protocol'':

\begin{enumerate}
	\item Preparing ancilla graph state $G_{15}$ of 15 hybrid qudits with gauge-respecting Clifford gates (controlled-$X_N$, $H_N$, $\text{CSUM}_N$). 
	
	\item Constructing stabiliser group 
	$S=\langle K_1,\dots,K_{14}\rangle$ where each $K_i$ is a monomial in $\{X_N^{\pm1},Z_N^{\pm1}\}$ and thus commutes with $\hat{Q}$. 
	
	\item Noisy resource:
	each input qudit is in 
	$\rho_{\rm raw}= (1-\epsilon)\ket{T}\!\bra{T}+\epsilon\,\tau$ with 
	$\ket{T}=\Theta_r\ket{+}$ and $\tau$ an arbitrary gauge-conserving state. 
	
	\item Measuring Syndrome:
	using packaged ancillas to measure all $K_i$ and post-select on $+1$. 
\end{enumerate}

A standard stabiliser calculation (unchanged because all operators commute with $\hat{Q}$) shows the output error scales as $\epsilon_{\rm out}\approx35\,\epsilon^{3}$. Choosing $\epsilon<\epsilon_{\rm MSD}\approx0.067$, we obtain quadratic convergence.
After trying two or three rounds, we can get $\epsilon_{\rm out}\!\ll\!p_{\rm th}$.

\paragraph{(2) Resource Usage.}

Based on the above operations, we need to consume the following resource:

\begin{itemize}
	\item Gate count:
	each round uses $O(100)$ Clifford-packaged gates. 
	
	\item Leakage robustness:
	because every stabiliser acts within $\mathcal H_{Q=0}$, any leaked ancilla triggers a $-1$ syndrome and the batch is discarded (no undetected logical corruption). 
	
	\item Yields:
	this gives $y=1/15$ per round and the total cost scales as $15^{k}$ for $k$ rounds.
\end{itemize}

Due to superselection rules, MSD is simplified:
gauge-violating noise is either energetically suppressed or detected by the stabilisers.

\section{Quantum Error-Correction Codes in Packaged Space}
\label{SEC:PackagedQuantumErrorCorrectionCodes}

Quantum error-correcting codes (QECCs) protect quantum information by encoding a logical qubit into a larger physical Hilbert space.
In conventional error models, a full set of local errors (bit-flips, phase-flips, amplitude-damping, etc.) \cite{NielsenChuang2010} is allowed.
In contrast, when quantum information is encoded in packaged qubits, the superselection rules restrict the allowed physical errors.
In particular, only gauge-conserving errors (those preserving the net charge or other conserved quantum numbers) are permitted, while gauge-violating processes are either energetically suppressed or strictly forbidden.

In this section, we adapt three canonical QECCs (Shor code \cite{Shor1995}, Steane code \cite{SteaneCode1996,CalderbankShor1996,SteanePRA1996}, and surface code \cite{Kitaev2003,BravyiKitaev1998,Dennis2002}) into the hybrid-packaged subspace.
Our discussion will be based on the foundations we developed in Sec.~\ref{SEC:ErrorAnalysisAndFaultToleranceHyb}.
We present definitions and proofs to show that the entire error-correction process remains within the physical subspace $ \mathcal{H}_Q $ defined by the net-charge operator $ \hat{Q} $.

\subsection{Shor-Like Code in Hybrid-Packaged Space}
\label{SEC:ShorLikeCodeInHybridPackagedSpace}

Shor’s original 9-qubit code \cite{Shor1995} is the first quantum error-correcting code.
It protects one logical qubit against any single-qubit error by combining two ideas: 
\begin{enumerate}
	\item Protecting phase with a 3-qubit repetition code in the Hadamard ( + / - ) basis,
	 
	\item Protecting bit-flip by repeating the whole block three times. 
\end{enumerate}

In this subsection, we reconstruct Shor code with hybrid-packaged qudits (see Sec.~\ref{SEC:dDDimensionalHybridPackagedSpace}) of total dimension $N = d\,D$, where $d$ is the dimension of internal space and $D$ is the dimension of external space. 
We keep every step inside the neutral gauge sector $ \mathcal H_{Q=0} $. 
For clarity, we write the construction for one logical hybrid qudit of dimension $N$. 
Setting $d=D=2$, we can reproduce the ordinary 9-qubit Shor code.

\subsubsection{Logical Basis Inside the Hybrid-Packaged Space}

The hybrid-packaged subspace includes two parts:
\begin{enumerate}
	\item Internal basis $\{|j_P\rangle\}_{j=0}^{d-1}$: this carries all gauge-locked quantum numbers (net charge = 0 for every state). 
	
	\item External basis $\{|k_E\rangle\}_{k=0}^{D-1}$: this includes all the gauge-free physical degrees of freedom (spin, photon OAM, cavity Fock level etc.).
\end{enumerate}
We use the single-index hybrid-packaged basis Eq.~(\ref{EQ:dDSingleIndexHybridPackagedBasis}).
A single hybrid-packaged logical qudit can be written as
\[
|\psi\rangle_L \;=\;
\sum_{J=0}^{N-1} \alpha_J\,|J\rangle,
\qquad
\sum_{J=0}^{N-1}|\alpha_J|^2=1.
\]

\subsubsection{Generalization of Shor’s $3 \times 3$ Block}

We now generalize Shor's $3 \times 3$ block into hybrid-packaged subspace whose properties are summarized in the following table:

\begin{table}[H]
	\centering
	\caption{Generalized Shor’s 3 × 3 Block}
	\begin{tabular}[hbt!]{|p{2.8cm}|p{3.5cm}|p{3.5cm}|p{4.5cm}|}
		\hline\hline
		Level &Purpose &Size &Package Condition \\
		\hline
		Inner block & Correct phase-type errors & 3 identical hybrid qudits & Each basis state has net charge 0 \\
		\hline
		Outer repetition & Correct shift-type errors & 3 Copies of the inner block & Whole 9-qudit codeword stays in $Q=0$ \\
		\hline
	\end{tabular}
\end{table}

In this way, our code uses 9 physical hybrid-packaged qudits exactly like Shor’s code uses 9 physical qubits.

\subsubsection{Steps for Encoding Hybrid-Packaged Circuit}

Here we use the hybrid Hadamard (Fourier) gate $H_N$ that acts as:
\[
H_N|J\rangle = N^{-1/2}\sum_{K=0}^{N-1} \omega_N^{JK}|K\rangle,
\]
where $J$ is the single-index notation defined in Eq.~(\ref{EQ:SingleIndex}) and $\omega_N=e^{2\pi i/N}$,
and the hybrid-packaged controlled-shift $\text{CSUM}_N$ that acts as:
\[
\text{CSUM}_N =\sum_{J=0}^{N-1} |J\rangle\!\langle J|\otimes X_N^{\,J}.
\]
Both $H_N$ and $\text{CSUM}_N$ commute with $\hat{Q}$.

\paragraph{(1) Preparing 9 physical qudits.}

\[
|\,\psi\rangle_L\;\otimes\;|0\rangle^{\otimes8}.
\]

\paragraph{(2) Encoding Phase-error (inner).} 

For each of the three columns, we perform:

1. Apply $H_N$ on the first qudit of the column.

2. Two $\text{CSUM}_N$ gates spread the state:
\[
|J\rangle\;\mapsto\;
\tfrac1{\sqrt{N}}\sum_{K=0}^{N-1}\!
\omega_N^{JK}\;
|K,K,K\rangle .
\]

After this step every column is the maximally entangled phase repetition state 
\[
|\widetilde{K}\!\>\!_{(3)}\;\equiv\;
\tfrac1{\sqrt{N}}\sum_J \omega_N^{KJ}\,|J,J,J\rangle .
\]

\paragraph{(3) Encoding Shift-error (outer).}

Take the three columns as control blocks and repeat the usual 3-qudit repetition:
for each row, we apply two $\text{CSUM}_N$ gates
\[
|A\rangle|0\rangle|0\rangle\rightarrow|A\rangle|A\rangle|A\rangle.
\]
Finally, we obtain the logical codeword
\[
|J\rangle_L
\;=\;
\frac1{N^{3/2}}
\sum_{K,L,M}
\omega_N^{J(K+L+M)}
\;
|K,K,K\rangle\;|L,L,L\rangle\;|M,M,M\rangle .
\]
Because each $|K\rangle$ lies in net-zero charge sector, all nine physical qudits are still in the net-zero charge sector.

\subsubsection{Extracting Syndrome and Correcting Errors}

\paragraph{(1) Error model.}

In the hybrid-packaged subspace, only gauge-conserving hybrid-packaged Paulis can act:
\[
X_N^{s}\,Z_N^{t},
\qquad
s,t\in\{0,\dots,N-1\}.
\]
These generalize bit-flip ($X$) and phase-flip ($Z$) errors.
Therefore, a single-qudit noise operator is
\[
E = \sum_{s,t} c_{s,t}\,X_N^{s} Z_N^{t}.
\]

\paragraph{(2) Syndrome Operators.}

Inside every column, the phase-check operators are
\[
S_Z^{(1)}=Z_N\otimes Z_N^{\!\dagger}\otimes\mathbf1,\quad
S_Z^{(2)}=\mathbf1\otimes Z_N\otimes Z_N^{\!\dagger}.
\]
Across the three columns (row wise), the shift-check operators can be written as
\[
S_X^{(1)}=X_N\otimes X_N^{\!\dagger}\otimes\mathbf1, \quad
S_X^{(2)}=\mathbf1\otimes X_N\otimes X_N^{\!\dagger}.
\]
All these operators commute with $\hat{Q}$.
By measuring all six stabilisers, we can identify which qudit was hit and which
kind of hybrid Pauli acted (shift-type $\leftrightarrow$ phase-type).

\paragraph{(3) Recovery.}
The recovery generally includes two categories:
\begin{itemize}
	\item If a shift-type error $X_N^{s}$ is flagged on one column, then apply
	$X_N^{-s}$ on that qudit. 
	
	\item If a phase-type error $Z_N^{t}$ is flagged inside a column, then apply
	$Z_N^{-t}$.
\end{itemize}
Because both recovery operators commute with $\hat{Q}$, the corrected state
must return to the code space inside $ \mathcal H_{Q=0} $.

\subsubsection{Distance, Gauge-Invariance, and Fault-Tolerance}

In this hybrid-packaged subspace, distance $d=3$, which is exactly as in Shor’s code. 
With the hybrid-packaged Paulis $X_N^{s} Z_N^{t}$, we can correct any single-qudit errors.

The code stays gauge-invariant because every gate used ($H_N$,$\text{CSUM}_N$, projective stabilisers) commutes with the total charge $\hat{Q}_{\rm tot}$. 
All basis states and superpositions remain in the neutral sector.
Recovery operations are of the same hybrid-Pauli form, hence also respect
gauge symmetry.

The entire encode $\rightarrow$ error $\rightarrow$ syndrome $\rightarrow$ recover cycle never leaves $\mathcal H_{Q=0}$.
Thus, gauge-violating errors never enter the decoder because they change the net charge and therefore are forbidden by the super-selection rule or energetically suppressed.
This enables the hybrid-packaged Shor code to be fault-tolerant.

\begin{example}
	Consider a simple case where $d = 2$ (internal qubit), $D = 3$ (external qutrit), and therefore $N=6$.	
	We have the physical basis $|j_P\rangle\in\{|0_P\rangle,|1_P\rangle\}$,
	$|k_E\rangle\in\{|0_E\rangle,|1_E\rangle,|2_E\rangle\}$. 
	The logical zero can be encoded as	
	\[
	|0\rangle_L
	=\frac1{6^{3/2}}
	\!\!\sum_{K,L,M=0}^{5}
	|K,K,K\rangle\,|L,L,L\rangle\,|M,M,M\rangle .
	\]
	Thus, we can detect and correct any single-site operator $X_6^{s}Z_6^{t}$ (36 possibilities) exactly as above.
\end{example}

\subsubsection{Advantages of Hybrid-Packaged Shor Code}

With this lifted Shor code we obtain a fault-tolerant, gauge-respecting quantum memory that works in any $(d\times D)$-dimensional hybrid-packaged platform.

First, the structure is unchanged ($3 \times 3$ layout), distance $d=3$, and it corrects any single-site error.
Everything is now hybrid and each qudit is already a package combining internal (gauge-locked) and external (free) degrees of freedom.
The gauge symmetry automatically suppressed or forbidden errors that would change net charge.
Finally, this new code scales naturally: choose $d$ and $D$ to match the available
physical system (neutral atoms with many Zeeman levels, superconducting
circuits with cat-codes, ...).

\subsection{Steane-Like CSS Code in Hybrid-Packaged Space}

The original Steane code \cite{SteaneCode1996} is a Calderbank-Shor-Steane (CSS) code \cite{CalderbankShor1996,SteanePRA1996} that encodes one logical qubit into seven physical qubit code derived from classical Hamming codes. It corrects arbitrary single-qubit errors by leveraging properties of quantum superposition.

In this subsection, we generalize the 7-qubit Steane code to the $(d \times D)$-dimensional hybrid-packaged subspace.

\subsubsection{Original Steane code}

In the original Steane code, the stabilisers are:
\[
\begin{aligned}
	S_1&=IIIXXXX,\; &S_2&=IXXIIXX,\; &S_3&=XIXIXIX,\\[2pt]
	S_4&=IIIZZZZ, &S_5&=IZZIIZZ,\; &S_6&=ZIZIZIZ,
\end{aligned}
\]
the logical states are:
\[
\ket{\overline0}
=\frac1{\sqrt{8}}
\sum_{c \in G}
\ket{c}
\,,
\quad
\ket{\overline1}
=X^{\otimes7}\ket{\overline0},
\]
and the distance is $d=3$.

\subsubsection{Constructing Hybrid-Packaged Stabilisers}

We now need to map each physical qubit from the original Steane code to one hybrid-packaged qudit.
Referring to Sec.~\ref{SEC:HybridPackagedQuditGates}, we replace $X\!\to\!X_N,\;Z\!\to\!Z_N$ and obtain the hybrid-packaged stabilisers:
\[
\begin{aligned}
	S^{(N)}_1&=III\,X_NX_NX_NX_N,\\
	S^{(N)}_2&=I\,X_NX_NIIX_NX_N,\\
	S^{(N)}_3&=X_NIX_NIX_NIX_N,\\[4pt]
	S^{(N)}_4&=III\,Z_NZ_NZ_NZ_N,\\
	S^{(N)}_5&=I\,Z_NZ_NIIZ_NZ_N,\\
	S^{(N)}_6&=Z_NIZ_NIZ_NIZ_N .
\end{aligned}
\]
All six stabilisers obey $[S^{(N)}_a,\hat{Q}_{\text{tot}}]=0$. 
Their common $+1$ eigenspace $\mathscr C\subset(\mathcal H_{Q=0}^{(N)})^{\otimes7}$ has dimension $N$ and will encode one logical qudit.

\subsubsection{Constructing Logical Codewords}

Let $\mathcal G\subset\mathbb F_2^7$ be the classical $[7,4]$ Hamming code used in the original Steane code. 
Replacing every binary phase $(-1)^{x\cdot y}$ with $\omega_N^{\,x\cdot y}$, we have
\[
	|\overline0\rangle
	=\frac{1}{\sqrt{|\mathcal G|}}
	\sum_{c\in\mathcal G}|c\rangle,\qquad
	|\overline1\rangle=X_N^{\otimes7}|\overline0\rangle .
\]
These states are gauge-invariant because each basis $|c\rangle$ is a tensor product of states $|J\rangle$ that are individually neutral.
So every term stays in the neutral sector and superpositions of them are therefore physical.

\subsubsection{Extracting Syndrome and Correcting Errors}

\paragraph{(1) Error Model in Hybrid-Packaged Space}

In hybrid-packaged subspace, we usually have two types of errors:
\begin{enumerate}
	\item Gauge-conserving (GC) Pauli errors $X_N^sZ_N^t$ on at most one site. These are physical allowed,
	
	\item Gauge-violating (GV) errors that change total charge amplitude $\propto e^{-\Delta/k_BT}$. These are physically suppressed or forbidden. 
\end{enumerate}

Thus, we only correct the GC part of errors, which is exactly the same Pauli algebra as in the conventional Steane analysis, but with a smaller error-rate $p_{\text{GC}}$.

\paragraph{(2) Syndrome Table and Recovery}

The six hybrid-packaged stabilisers measure a 6-bit hybrid-packaged syndrome $\mathbf s$.
We list them in the following table:

\begin{table}[H]
	\centering
	\caption{Hybrid-packaged syndrome, errors, and recovery}
	\begin{tabular}[hbt!]{|p{4cm}|p{4cm}|p{4cm}|}
		\hline\hline
		Syndrome $\mathbf s$ &Error $E$ &Recovery $R$ \\
		\hline
		$100000$ & $X_N$ on qudit 1 & Apply $X_N^{-1}$ \\
		\hline
		$010000$ & $X_N$ on qudit 2 & ... \\
		\hline
		... & ... & ... \\
		\hline
		$000100$ & $Z_N$ on qudit 5 & Apply $Z_N^{-1}$ \\
		\hline
		$\cdots$ & $\cdots$ & $\cdots$ \\
		\hline
	\end{tabular}
\end{table}

This lookup table is similar to that of the original qubit Steane code, where we replaced $X\!\to\!X_N, Z\!\to\!Z_N$. 
Because $R$ commutes with $\hat{Q}$, the state stays physical.

\paragraph{(3) Applying Logical Clifford/non-Clifford Operations}

The Clifford gates act transversally, one hybrid-packaged qudit at a time:
\[
\overline H_N = H_N^{\otimes7},\quad
\overline S_N = Z_N^{\tfrac12\otimes7},\quad
\overline{\text{CNOT}}_N
=\bigl(\text{CSUM}_N\bigr)^{\otimes7}.
\]
Each factor $H_N, Z_N^{1/2}, \text{CSUM}_N$ preserves the gauge sector.
So the logical operation is fault-tolerant and gauge-respecting.

However, the non-Clifford $T$ gates act through hybrid-packaged magic state:
\begin{enumerate}
	\item Single-qudit phase:
	$\Theta_r=\sum_{J=0}^{N-1} e^{2 \pi i rJ^{2}/N^{2}}|J\rangle\langle J|$ 
	($r\notin\{1,2,4\}$) commutes with $\hat{Q}$.
	
	\item Prepare noisy $|M_r\rangle=\Theta_r|+\rangle$:
	Distill with Clifford operations (which we already have). 
	
	\item Gate-teleport:
	consume $|M_r\rangle$ to enact logical $\overline T_N=\overline\Theta_r$.
\end{enumerate}

Thus, the full $\{\text{Clifford}+\Theta_r\}$ library is universal inside the packaged code.
This means that we can correct ant GC errors by applying Clifford/non-Clifford operations.

\subsubsection{Fault-Tolerance}

This hybrid-packaged code reduces error space because only gauge-conserving faults matter.
This results in an effective rate $p_{\text{eff}}\simeq p_{\text{GC}}\ll p_{\text{phys}}$.
Furthermore, the high dimension leads to higher threshold (see Eq.~(\eqref{EQ:SufficientConditionLogicalErrorRateVanish2}))
\[
p_{\text{th}}
\;\gtrsim\;
\frac{1}{\,2(N-1)}
\Bigl[1-e^{-\Delta/k_BT}\Bigr],
\quad N=dD .
\]

The hybrid-packaged Steane code has the same circuit depth/size as original Steane code.
There is no extra cost.

\subsection{Surface Code (Topological Code) in Hybrid-Packaged Space}
\label{SEC:SurfaceCodeInhybridPackagedSpace}
	
Surface Code \cite{Kitaev2003,BravyiKitaev1998,Dennis2002} is a topological \cite{BravyiKoning2013} quantum error correction code that possess high error threshold and suitability for scalable quantum computing.
More specifically, the conventional surface code place a qubit ($d=2$) on every edge of an $L \times L$ square lattice and achieves the highest threshold of any stabiliser code that uses only geometrically local interactions.

In this subsection, we reformulate the conventional surface (toric) code in the $(d \times D)$-dimensional hybrid-packaged subspace.
It should be mentioned that we used single-index label $J=jD+k\in\{0,\dots,N-1\}$ in hybrid-packaged Shor code and Steane code because these codes look like well-mixed blocks. 
All $d\!\times\!D$ microscopic degrees of freedom inside each physical qudit behave the same way and no longer need to be told where they sit.
So we can fold the two labels into a single running label $J=1,\ldots,dD$.
But the hybrid-packaged surface code is geometric and qudits live on the edges of a square lattice.
Every stabiliser talks only to its four neighbours.
To keep that nearest-neighbour structure in a $d \times D$-dimensional hybrid-packaged subspace, we must keep the two labels separately.
If we collapsed them into a single $J$, we would lose the ability to say ``this hybrid-packaged qudit is north of that one'' and the star/plaquette constraints would no longer commute locally.

\subsubsection{Algorithmic Steps}

\paragraph{(1) Assigning Hybrid-Packaged Qudit on Every Edge.}

On each edge, we assign:
\begin{itemize}
	\item Internal label: $\ket{j_P}$ with $j=0,\dots ,d-1$. 
	
	\item External label: $\ket{k_E}$ with $k=0,\dots ,D-1$.
\end{itemize}

Then the full state is: $\ket{j_P}\otimes\ket{k_E}$.

These satisfy the following local operator algebra:
\[
\begin{aligned}
	X_d\ket{j_P}&=\ket{j_P+1\bmod d}, & Z_d\ket{j_P}&=\omega_d^{\,j}\ket{j_P},\\[4pt]
	X_D\ket{k_E}&=\ket{k_E+1\bmod D}, & Z_D\ket{k_E}&=\omega_D^{\,k}\ket{k_E},
\end{aligned}
\]
with $\omega_d=e^{2\pi i/d}$ and $\omega_D=e^{2\pi i/D}$.
The two pairs commute because they act on different factors.

\paragraph{(2) Assigning Four Stabilisers per Cell.}

For every vertex $v$ (a star), we assign:
\[
A_v^{(d)}=\prod_{e\ni v} X_d^{(e)},\qquad
A_v^{(D)}=\prod_{e\ni v} X_D^{(e)} .
\]

For every plaquette $p$ (a face), we assign:
\[
B_p^{(d)}=\prod_{e\in\partial p} Z_d^{(e)},\qquad
B_p^{(D)}=\prod_{e\in\partial p} Z_D^{(e)} .
\]

Because $X_d$ commutes with $Z_D$ and $X_D$ commutes with $Z_d$,
all four families commute.

\paragraph{(3) Constructing Mixed Stabiliser (Optional).}

Pick an orientation (say, edges pointing north-east). 
Define on each edge
\[
M_e \;=\; X_d^{(e)} \, Z_D^{(e)} .
\]
Then on a closed loop $\gamma$
$\displaystyle M_\gamma=\prod_{e\in \gamma}M_e$ commutes with every
$A_v^{(\cdot)}$ and $B_p^{(\cdot)}$. 
Adding a subset of these mixed loops gives extra checks without breaking
commutation.

We let the internal dial ($X_d$) and the external phase ($Z_D$) share information so an error on one leaves a footprint on the
other.
In practice, one interleaves measurements of $A_v^{(d)},A_v^{(D)},B_p^{(d)},B_p^{(D)},$ and a subset of $M_\gamma$ loop operators so that any shift/phase error shows up in at least two syndrome layers.

\paragraph{(4) Applying Logical Operators.}

The code Hilbert space is the simultaneous $+1$ eigenspace of all
stabilisers.
Because there are twice as many independent checks, the dimension of the code space becomes
\[
\dim\mathscr C \;=\; d\;D ,
\]
Thus, we have two logical qudits: one $d$-level (internal) and one $D$-level
(external).
If we fix one of them by design, the other remains.

Choose a path $\Gamma_{NS}$ from the north to the south boundary:
\begin{itemize}
	\item External logical shift 
	$\bar X^{(D)}=\prod_{e\in\Gamma_{NS}} X_D^{(e)}$.
	
	\item Internal logical shift 
	$\bar X^{(d)}=\prod_{e\in\Gamma_{NS}} X_d^{(e)}$.
\end{itemize}
Similarly, choose an east-to-west dual path $\Gamma_{EW}^\star$ and define
$\bar Z^{(D)}$ and $\bar Z^{(d)}$ with $Z_D$ and $Z_d$.

Each pair obeys its own Weyl algebra,
\[
\bar Z^{(d)}\bar X^{(d)}=\omega_d\bar X^{(d)}\bar Z^{(d)},\quad
\bar Z^{(D)}\bar X^{(D)}=\omega_D\bar X^{(D)}\bar Z^{(D)},
\]
and every internal operator commutes with every external operator.

\paragraph{(5) Error Model and Decoding.}

There are two types of errors:
\begin{enumerate}
	\item Gauge-conserving (GC) noise
	 
	Errors usually appear as the random products of $X_d,\,Z_d$ and $X_D,\,Z_D$. 
	They trigger the corresponding vertex or face syndrome in their own
	layer. 
	We therefore run two parallel minimum-weight perfect matchings, one on
	the $d$-graph and one on the $D$-graph.
	
	\item Gauge-violating (GV) noise
	 
	Operators that do not commute with the charge $\hat{Q}$ must act on the
	wrong sector and cost extra energy $\Delta$. 
	If such an error still occurs, then it excites both layers and very likely the 	mixed stabiliser $M_\gamma$.
	A simple rule (if any mixed check flips, restart the round) already removes most leakage.
	More advantaged schemes may feed the information into a joint decoder.
\end{enumerate}

Because a single physical edge now carries two checks, the syndrome is
more informative.

\subsubsection{Advantages of Hybrid-Packaged Surface Code}

In the hybrid-packaged code, we gave every edge two knobs instead of one: 
an inside knob that can be turned in $d$ steps and an outside knob with $D$ steps.
The inside knobs talk to each other (their own star and face checks) 
and the outside knobs talk to each other (their own checks).
One can even let an inside knob point at an outside dial (the mixed stabiliser).
This means that breaking one knob almost always jiggles the other knob.
Thus, the error is easier to spot.

Because we have more ways to see an error, the code tolerates more noise before it fails and can store twice as much quantum information on the same patch.

\begin{table}[H]
	\centering
	\caption{Advantages of Hybrid-Packaged Surface Code}
	\begin{tabular}[hbt!]{|p{2.8cm}|p{3.3cm}|p{5cm}|p{3.5cm}|}
		\hline\hline
		Feature &Original Surface Code &Hybrid-Packaged Surface Code &Gain/comment \\
		\hline
		Physical object & 2-level qubit on each edge & $d\times D$-level hybrid-packaged qudit & Richer local Hilbert space \\
		\hline
		Symmetry/ super-selection & None & U(1)-like gauge charge $\hat{Q}$, internal states are locked in the neutral sector & Second layer of protection \\
		\hline
		Stabiliser weight & 4 Pauli-$X$ or $Z$ & 4 Weyl-$X_N$ or $Z_N$ (same locality) & No extra locality cost \\
		\hline
		Check circuit depth & 2 × (4 CX + 2 H) & 2 × (4 SUM + 2 Fourier) & Almost identical \\
		\hline
		Ancillas per check & 1 qubit & 1 hybrid-packaged qudit (same line count) & Constant overhead \\
		\hline
		Decoder input & One syndrome layer & Two independent layers + optional mixed layer & More information to exploit \\
		\hline
		Leakage handling & Must add dedicated repair circuit & Leakage raises $\hat{Q}$ (shows up as forbidden charge) $\rightarrow$ ``discard \& restart'' & Cheaper fault tolerance \\
		\hline
		Logical space & 1 qubit ($N\! =\! 2$) & 1 qudit ($N=dD$) & Higher dim \& logicals possible \\
		\hline
		Hardware match & Any qubit platform & Trapped ions, Rydberg atoms, and color-centres that already have hyper-fine \& Optical levels & Wider applicability \\
		\hline
		Overall effect & Topological only & Topological+ superselection & Two independent shields against noise \\
		\hline
	\end{tabular}
\end{table}

The hybrid-packaged code keeps all geometric costs the same but earns extra noise resilience and a larger logical alphabet.

\section{Quantum Computation and Algorithms in Packaged Space}
\label{SEC:QuantumComputationInPackagedSpace}

In conventional quantum computation theory, there are a number of important quantum algorithms, such as 
Quantum Fourier Transform (QFT) \cite{Coppersmith2002},
Quantum Phase Estimation (QPE) \cite{Kitaev1995},
Quantum Walks \cite{Lambrecht1998,Kempe2003},
Grover’s Algorithm \cite{Grover1996},
and Harrow-Hassidim-Lloyd (HHL) Algorithm \cite{HHL2009,Cai2013,Barz2014,Pan2014}.

In the following subsections, we reconstruct the quantum algorithms in high-dimensional hybrid-packaged subspace.
The resulting algorithms and protocols will work with packaged qubits, packaged gates, and packaged circuits.
Therefore, they are naturally gauge-invariant and acquire a number of new features with enhanced speed, security, and error suppression capabilities.

After the reconstruction, BB84 QKD and B92 QKD are the special cases of the packaged QKD protocol.
E91 and BBM92 QKD are the special cases of device‑independent Bell‑test QKD when $d=D=2$.

\subsection{Quantum Fourier Transform (QFT) in Hybrid-Packaged Space}
\label{SEC:PackagedQFT}

Quantum Fourier Transform (QFT) \cite{Coppersmith2002} is the application of classical discrete Fourier transform to the amplitudes of a quantum state vector.
QFT is often used as a subroutine in other quantum algorithms \cite{Shor1994,Kitaev1995,HHL2009}.

In this subsection, we extend the conventional QFT algorithm to a hybrid-packaged subspace given in Eq.~(\ref{EQ:dDHybridPackagedHilbertSpace}),
where the internal Hilbert space has dimension $d$ (composed of packaged quantum states) and the external Hilbert space is a $D$-dimensional system (which encodes spin, orbital, or polarization degrees of freedom).

\subsubsection{Lifting Conventional QFT to the Hybrid-Packaged Space}

In a conventional $N$-dimensional Hilbert space, the quantum Fourier transform is defined by
\[
F \colon \, |x\rangle \mapsto \frac{1}{\sqrt{N}} \sum_{y=0}^{N-1} e^{2\pi i\,xy/N}\,|y\rangle\,.
\]

To lift this to the hybrid-packaged subspace of dimension $N=dD$, we introduce a unitary isometry (see Eq. \eqref{EQ:IsometryHybridPackaged})
\[
\mathcal U : \mathcal H_{\mathrm{logic}} \longrightarrow \mathcal H_{Q = 0},
\]
that maps a conventional logical computational basis $\{|x\rangle : x=0,1,\dots,N-1\}$ to their packaged counterparts:
\[
\mathcal{U}|x\rangle = |x\rangle_{\text{hyb}} = |j,k\rangle\,,
\]
with $x = jD+k$ is the single-index defined in Eq.~(\ref{EQ:SingleIndex}).

Then, any conventional QFT operator $F$ defined on $\mathcal{H}_{\mathrm{logic}}$ is lifted to the hybrid-packaged subspace by
\[
\mathcal{F}_{\text{hyb}} = \mathcal{U}\, F \,\mathcal{U}^{-1}\,.
\]

Because $\mathcal{U}$ is unitary and maps into $\mathcal{H}_{Q=0}$, for every packaged state, we have
\[
\hat{Q}\,|x\rangle_{\text{hyb}} = 0,\quad \text{and}\quad [\mathcal{F}_{\text{hyb}},\,\hat{Q}] = 0\,.
\]

This ensures that the QFT process always stays in the gauge-invariant subspace.
It should be emphasized that $\hat{Q}\,|x\rangle_{\text{hyb}} = 0$ ($\hat Q$ annihilates the state) does not follow solely from unitarity.
It is because the image lies in the neutral sector.

\subsubsection{Algorithmic Steps of the Hybrid QFT Algorithm}

We now outline the algorithmic steps of the hybrid-packaged QFT algorithm in $d \times D$ dimensions:

\begin{itemize}
	\item \textbf{Step 1. Preparing State.}
	
	Prepare a state in the $N$-dimensional hybrid-packaged subspace.
	For example, set the control register of a QFT-based algorithm to a computational state
	\[
	|x\rangle_{\text{hyb}} = |j,k\rangle\,.
	\]
	
	Alternatively, one may prepare an equal superposition state by applying a single-index hybrid-packaged generalized Fourier (or $d$-dimensional Hadamard‑like) gate $H_N$ (see Eq.~(\ref{EQ:HybridWeylBlock})) on the hybrid-packaged subspace.
	Then we have
	\[
	\frac{1}{\sqrt{N}} \sum_{x=0}^{N-1} |x\rangle_{\text{hyb}}\,.
	\]
	
	Since the Hadamard gate is constructed from superpositions of the $d$ internal basis states $|j_P\rangle\,(j=0,\dots,d-1)$ and the corresponding operators on the external space, its hybrid-packaged version automatically satisfies
	\[
	[H_N,\,\hat{Q}] = 0\,.
	\]

	\item \textbf{Step 2. Applying the Packaged QFT Operator.}
	
	We apply the packaged QFT operator $\mathcal{F}_{\text{hyb}}$ to an arbitrary hybrid-packaged state, i.e.,
	\[
	\mathcal{F}_{\text{hyb}}\,|x\rangle_{\text{hyb}} = \frac{1}{\sqrt{N}} \sum_{y=0}^{N-1} e^{2\pi i\,xy/N}\,|y\rangle_{\text{hyb}}\,.
	\]
	
	If the initial state is the equal superposition
	\[
	|\psi\rangle_{\text{hyb}} = \frac{1}{\sqrt{N}} \sum_{x=0}^{N-1} |x\rangle_{\text{hyb}}\,,
	\]
	then the packaged QFT maps it to a state
	\[
	\mathcal{F}_{\text{hyb}}\,|\psi\rangle_{\text{hyb}} = \frac{1}{N} \sum_{x,y=0}^{N-1} e^{2\pi i\,xy/N}\,|y\rangle_{\text{hyb}}\,.
	\]
	Due to the orthogonality relations of the discrete Fourier transform, this state is identical to the original state in a transformed basis (up to a phase).
	More generally, for any input state
	\[
	|\psi\rangle_{\text{hyb}} = \sum_{x} \alpha_x\,|x\rangle_{\text{hyb}}\,,
	\]
	the QFT produces
	\[
	\mathcal{F}_{\text{hyb}}\,|\psi\rangle_{\text{hyb}} = \frac{1}{\sqrt{N}} \sum_{y=0}^{N-1} \left(\sum_{x=0}^{N-1}\alpha_x\,e^{2\pi i\,xy/N}\right)|y\rangle_{\text{hyb}}\,.
	\]
	All operations are performed with operators that commute with the net-charge operator.
	Thus, the state remains confined to $\mathcal{H}_{Q=0}$.

	\item \textbf{Step 3. Decomposing Circuit.}
	
	The conventional QFT can be decomposed into a sequence of single-qudit Hadamard gates and controlled phase rotation gates.
	In our hybrid-packaged subspace, this decomposition is written as	
	\[
	\mathcal{F}_{\text{hyb}} = \prod_{j=0}^{n-1} \left[ H_N^j\,\prod_{k=0}^{j-1} R_N^{k,j} \right] \, S_N\,,
	\]
	where
	$H_N^j$ denotes the single-index hybrid-packaged Hadamard gate on the $j$-th qudit of the control register,
	$R_N^{k,j}$ is the hybrid-packaged controlled-phase rotation gate, for example,	
	\[
	R_N^{k,j}
	=\sum_{x=0}^{N-1}
	|x\rangle \langle x|^{(k)}\otimes
	\exp(2\pi i\,x\,d^{-(j-k+1)}),
	\]
	defined suitably on the hybrid control register),
	and $S_N$ is the bit-reversal permutation used in the usual in-place QFT.

	Because every elementary gate $G_N \in \{H_N, R_N, S_N\}$ is constructed from operations on the internal and external bases and because both parts satisfy
	\[
	[H_N,\hat{Q}] = [R_N,\hat{Q}] = [S_N,\hat{Q}] = 0\,,
	\]
	their product $\mathcal{F}_{\text{hyb}}$ satisfies
	\[
	[\mathcal{F}_{\text{hyb}}, \hat{Q}] = 0\,.
	\]
	Thus, the hybrid-packaged QFT circuit is gauge-invariant.
	
	\item \textbf{Step 4. Measurement and Interpretation.}
	
	After the application of $\mathcal{F}_{\text{hyb}}$, one usually measures in the packaged computational basis
	$|y\rangle_{\text{hyb}}$.
	By the properties of discrete Fourier transform, the measurement results are probabilistically related to the Fourier components of the input state's amplitudes.
	In other words, if the state to be transformed contained phase information to be extracted, then after the QFT the probability to observe a particular outcome $y$ will be
	\[
	P(y) = \left|\frac{1}{\sqrt{N}} \sum_{x} \alpha_x\, e^{2\pi i\,xy/N} \right|^2\,.
	\]
	
	Because we have lifted all operations into the hybrid-packaged subspace, the measurement results are fully packaged.
	This ensures that the IQNs are preserved.
\end{itemize}

\subsubsection{Physical Interpretations and Advantages}

This packaged QFT preserves gauge-invariance.
From the mapping of conventional basis (via $\mathcal{U}$) to the application of controlled and single-qudit operations, we see that every step is performed by operators that satisfy the relation
\[
[\,\cdot\,, \hat{Q}] = 0.
\]
Therefore, the entire QFT algorithm is executed within the fixed superselection sector $\mathcal{H}_{Q=0}$.
Any error process that could drive the system out of this sector (by changing the net charge) is energetically or physically forbidden.

The packaged QFT uses a high-dimensional hybrid-packaged Hilbert space, which can increase precision.
The effective dimension of the hybrid-packaged subspace is $N = d \times D$.
By increasing either the internal dimension $d$ or the external dimension $D$, one exponentially increases the control register space in algorithms like QFT.
In principle a larger logical dimension $N=dD$ allows finer phase‐resolution $O(1/N)$, provided the underlying packaged gates can be implemented to the required fidelity.
This increase can improve the resolution in phase estimation (for QPE) or allow more complex Fourier analyses.
Therefore, it can enhance the performance of quantum algorithms reliant on the QFT subroutine.

The packaged QFT can enhance robustness and error suppression.
Since errors that could mix packaged states with different internal quantum numbers are restricted by the superselection rules, the net error space is reduced.
Therefore, the QFT implemented in this framework benefits from an intrinsic error-protection mechanism.
In addition, the separation of internal and external degrees of freedom enables the possibility of tailoring error‐correction schemes to further shield against decoherence.

\subsection{Quantum Phase Estimation (QPE) in Hybrid Packaged Space}

Quantum phase estimation (QPE) \cite{Kitaev1995} is a quantum algorithm for estimating the phase $\theta$ of an eigenvalue $e^{2\pi i \theta}$ of a unitary operator $\hat{V}$.

In this subsection, we extend the QPE algorithm to the hybrid-packaged subspace $\mathcal H_{Q = 0}$ as shown in Eq.~(\ref{EQ:dDHybridPackagedHilbertSpace}) where the internal space has dimension $d$ and the external space has dimension $D$.
In this formulation every state in the hybrid-packaged subspace automatically satisfies
\[
\hat{Q}\,|\psi\rangle = 0,\quad U_g\,|\psi\rangle = e^{i\phi(g)}\,|\psi\rangle,\quad \forall\,g\in G\,,
\]
so that all operations can be designed to be gauge-invariant.

\subsubsection{Problem Description}

In QPE, we wish to estimate the phase $\theta$ of a packaged unitary operator $V_{\text{hyb}}$ that acts on the hybrid-packaged subspace $\mathcal H_{Q = 0}$.
Specifically, let
\[
V_{\text{hyb}}\,|u\rangle = e^{i2\pi\theta}\,|u\rangle,
\]
with $\theta\in [0,1)$ and $|u\rangle\in \mathcal H_{Q = 0}$.
By definition, $|u\rangle$ satisfies the packaged conditions.
Our goal is to find out the approximate value of $\theta$.

\subsubsection{Algorithmic Procedure}

\paragraph{(1) Prepare Initial State.}

We prepare two registers:
\begin{enumerate}
	\item Control Register:
	We use an $n$-qudit register (each qudit being encoded in a hybrid-packaged subspace) to record the phase information.
	Denote its Hilbert space by $\mathcal{H}_{\mathrm{control}}$.
	Because the control register has $n$ qudits, each of dimension $N=dD$, its Hilbert‑space dimension is $(dD)^n$.
	
	\item Target (Eigenstate) Register:
	We use another register to hold the eigenstate $|u\rangle$.
\end{enumerate}

Both registers are constructed from hybrid-packaged states so that the entire composite state lies within the physical subspace $\mathcal{H}_{Q=0}$.

The control register is initially prepared in an equal superposition. This is accomplished by applying a packaged generalized Hadamard operation $H_{\text{hyb}}$ on each qudit such that
\[
|+\rangle^{\otimes n} = \frac{1}{\sqrt{N_C}}\sum_{x=0}^{N_C-1}|x\rangle_{\text{hyb}}\,.
\]
Thus, the initial state of the full system is
\[
|\Psi_{\mathrm{init}}\rangle = \frac{1}{\sqrt{N_C}}\sum_{x=0}^{N_C-1}|x\rangle_{\text{hyb}} \otimes |u\rangle\,.
\]
Because all components are in packaged spaces, they satisfy the gauge-invariance
\[
\hat{Q}|\Psi_{\mathrm{init}}\rangle = 0,\quad U_g\,|\Psi_{\mathrm{init}}\rangle = e^{i\phi(g)}\,|\Psi_{\mathrm{init}}\rangle\,.
\]

\paragraph{(2) Apply Controlled Unitary Packaged Operators.}

The next step is to apply a controlled-$V_{\text{hyb}}$ gate to the state of control register, which has the form
\[
C\text{-}V_{\text{hyb}} = \sum_{x=0}^{N_C-1} |x\rangle\langle x|_{\text{hyb}} \otimes (V_{\text{hyb}})^x\,.
\]
Mathematically, applying this operator to the initial state results in
\[
\begin{aligned}
	|\Psi_1\rangle & = \Bigl(\sum_{x=0}^{N_C-1} |x\rangle\langle x|_{\text{hyb}}\otimes (V_{\text{hyb}})^x \Bigr) \,\frac{1}{\sqrt{N_C}}\sum_{x'=0}^{N_C-1}|x'\rangle_{\text{hyb}} \otimes |u\rangle \\
	& = \frac{1}{\sqrt{N_C}} \sum_{x=0}^{N_C-1} |x\rangle_{\text{hyb}} \otimes (V_{\text{hyb}})^x |u\rangle\,.
\end{aligned}
\]
Since $V_{\text{hyb}}|u\rangle = e^{i2\pi\theta}|u\rangle$, we have
\[
(V_{\text{hyb}})^x |u\rangle = e^{i2\pi\theta x}|u\rangle\,.
\]
Thus, the state becomes
\[
|\Psi_1\rangle = \frac{1}{\sqrt{N_C}} \sum_{x=0}^{N_C-1} e^{i2\pi\theta x}\,|x\rangle_{\text{hyb}} \otimes |u\rangle\,.
\]
Throughout the controlled operation, each gate is implemented via packaged operators and therefore commutes with $\hat{Q}$.
Thus, the state remains in the same net-charge sector.

\paragraph{(3) Apply Inverse Quantum Fourier Transform (QFT).}

Next, we apply an inverse QFT on the control register to transform the phase information from the amplitudes into a computational basis measurement.

Since the conventional QFT has the form
\[
F |x\rangle = \frac{1}{\sqrt{N_C}}\sum_{y=0}^{N_C-1} e^{2\pi i\,xy/N_C}\,|y\rangle\,,
\]
its inverse is given by
\[
F^{-1}|x\rangle = \frac{1}{\sqrt{N_C}}\sum_{y=0}^{N_C-1} e^{-2\pi i\,xy/N_C}\,|y\rangle\,.
\]

Mapping this inverse QFT into the hybrid-packaged subspace using the isometry $\mathcal{U}$, we obtain the packaged inverse QFT
\[
\mathcal{F}_{\text{hyb}}^{-1} = \mathcal{U}\, F^{-1} \,\mathcal{U}^{-1}\,.
\]
Applying this to the control register, we have
\[
\begin{aligned}
	|\Psi_2\rangle &= \left(\mathcal{F}_{\text{hyb}}^{-1} \otimes I\right)|\Psi_1\rangle \\
	&= \left(\mathcal{F}_{\text{hyb}}^{-1} \otimes I\right)\,\frac{1}{\sqrt{N_C}} \sum_{x=0}^{N_C-1} e^{i2\pi\theta x}\,|x\rangle_{\text{hyb}} \otimes |u\rangle \\
	&= \frac{1}{N_C} \sum_{x=0}^{N_C-1}\sum_{y=0}^{N_C-1} e^{i2\pi\theta x}\,e^{-i2\pi x y /N_C}\,|y\rangle_{\text{hyb}} \otimes |u\rangle.
\end{aligned}
\]
The amplitude for obtaining the outcome $y$ is proportional to
\[
\frac{1}{N_C}\sum_{x=0}^{N_C-1} e^{i2\pi x(\theta - y/N_C)}\,,
\]
which is large only when $\theta$ is close to $y/N_C$.
Thus, a measurement of the control register in the packaged computational basis gives an estimate $\tilde{\theta} \approx y/N_C$ of the true phase $\theta$.

Because all elementary operations (controlled-$V_{\text{hyb}}$, the inverse QFT, and the measurements) are implemented by packaged gates that satisfy
\[
[V_{\text{hyb}}, \hat{Q}]=0,
\]
the entire QPE algorithm is guaranteed to occur within the physical subspace $\mathcal{H}_{Q=0}$.

\subsubsection{Physical Interpretation and Advantages}

This packaged QPE algorithm preserves gauge-invariance. 
At every step, the operations are lifted to the packaged space by the unitary isomorphism $\mathcal{U}$.
Since 
$
[\mathcal{U}\, V \,\mathcal{U}^{-1}, \hat{Q}] = 0,
$
the evolution never leaves the fixed superselection sector.
In a noisy environment, any error process that may lead to a change in the net gauge charge is forbidden or energetically suppressed.
Thus, the packaged QPE algorithm in the hybrid-packaged formalism benefits from an additional layer of intrinsic gauge error protection.

The packaged QPE algorithm enhanced precision through high dimensionality. 
The effective Hilbert space dimension for the control register in the hybrid space is $N_C = (dD)^n$.
By increasing the internal dimension $d$ or the external dimension $D$, one can improve the phase estimation accuracy, i.e., the resolution scales as $\sim 1/N_C$.
Thus, the packaged QPE algorithm can offer improved precision relative to conventional implementations on qubits or qudits of fixed small dimension.

The packaged QPE algorithm may be applied to quantum simulation and machine learning.
The ability to embed the logical Hilbert space into a larger hybrid-packaged subspace means that algorithms requiring higher-dimensional state spaces (like quantum simulations of many-body systems or quantum machine learning algorithms) can be naturally implemented.
This extra capacity not only allows for representing more complex states, but also inherits the error suppression provided by the gauge-invariance.

\subsection{Quantum Walks in Hybrid Packaged Space}

Quantum walks \cite{Lambrecht1998,Kempe2003} are the quantum analogue of classical random walks, using superposition and interference to explore graphs more efficiently than any classical walker can.

In this subsection, we embed both the coin and position registers into a hybrid-packaged Hilbert space $\mathcal{H}_{Q=0}$ of dimension $N=dD$ using the isometry Eq. \eqref{EQ:IsometryHybridPackaged}.
By definition, every operator commutes with the charge operator $\hat{Q}$.
This ensures that the walker never leaves the gauge‑invariant sector.

\subsubsection{Position and Coin in the Hybrid Space}

In the discrete‑time coined quantum walk, we usually introduce two registers:
\begin{enumerate} 
	\item Coin $\;\mathcal H_{\rm coin}=\mathbb C^q$ (with coin‑flip unitaries in SU$(q)$),
	
	\item Position $\; \mathcal H_{\rm pos}=\operatorname{span}\{|x\rangle\}_{x\in V}$. 
\end{enumerate}

Here, we label the coin register with the internal DOF ($q=d$) and the position register with the external DOF ($D=|V|$).
We set the internal dimension equal to the graph degree, $d=\deg(k)$.
Specifically,
\[
\mathcal H_{\rm walk}
= \mathcal H_{\rm coin}\otimes\mathcal H_{\rm pos}
= \mathcal H_{\rm int}^{(d)}\otimes\mathcal H_{\rm ext}^{(D)}
\;\cong\;
\mathcal H_{\rm hyb}.
\]
Thus, a basis state can be written as $\lvert j,k\rangle\equiv\lvert j_P\rangle\otimes\lvert k_E\rangle$, where $j$ labels the internal coin outcome and $k$ the graph vertex.

\subsubsection{Discrete‑Time Quantum Walk (DTQW)}

In DTQW, we usually split the one‐step evolution
\[
V \;=\; S\,(C\otimes I_{\rm ext})
\]
into two parts:

\begin{enumerate}
	\item Coin‑flip $C$:
	this is an SU$(d)$ unitary that acts only on the internal register,
	\[
	C \;=\;\exp\!\bigl(-i\,\theta\,\mathbf n\!\cdot\!\boldsymbol\lambda_P\bigr),
	\quad
	[\hat{Q},\,C]=0.
	\]
	A common choice is the generalized Grover coin
	\[
	C_G = \frac{2}{d}\,J - I_d,
	\qquad
	J_{ij}=1\quad\forall\,i,j.
	\]
	Because $C$ never touches $\mathcal H_{\rm ext}$, it cannot break the gauge superselection rules.
	
	\item Shift $S$:
	This is a conditional permutation on the external register, which is guided by the internal label.
	If the graph is regular (each vertex of degree $\delta$), then we label the $\delta$ edges at vertex $k$ by $j=0,\dots,d-1$ and set
	\[
	S
	= \sum_{k=0}^{D-1}\sum_{j=0}^{d-1}
	|j_P\rangle \langle j_P|\;\otimes\;|\sigma_j (k_E) \rangle \langle k_E|,
	\]
	where $\sigma_j$ is the permutation that maps vertex $k$ to the neighbour reached by the $j^{\mathrm{th}}$ edge.
	We have the relation $S^{-1}=S^\dagger$ and $[\hat{Q},S]=0$.	
\end{enumerate}

Because $C$ and $S$ act on different tensor factors, we can diagonalize $S$ by going to the external momentum basis:
\[
|p\rangle = \frac{1}{\sqrt D}\sum_{k=0}^{D-1}e^{2\pi i\,p k/D}\,|k_E\rangle,
\quad
S\bigl|j_P\rangle\otimes|p\rangle\bigr.
= e^{i\varphi_{j,p}}\;|j_P\rangle\otimes|p\rangle.
\]
In this basis,
\[
V \bigl|j_P\rangle\otimes|p\rangle\bigr.
=\;\sum_{j'=0}^{d-1}\;C_{j,j'}\;e^{i\varphi_{j',p}}\;\bigl|j'_P\rangle\otimes|p\rangle\bigr.,
\]
so $V$ splits into a direct sum of $d \times d$ blocks (one for each momentum $p$).
All the results about the spectral gap (how quickly the system reaches equilibrium) and mixing time (time to randomize) for quantum walks based on SU($d$) symmetry still apply.
The dimension of total state space of the system is $N = d \times D$.

\subsubsection{Continuous‑Time Quantum Walk (CTQW)}

The CTQW Hamiltonian is the graph Laplacian $L$ that acts on $\mathcal H_{\rm ext}$, which extends trivially on the internal sector, i.e.,
\[
H_{\rm CTQW}
\;=\;
I_{\rm int}\otimes L,
\qquad
|\psi(t)\rangle=e^{-i\,t\,H_{\rm CTQW}}\,|\psi(0)\rangle.
\]

Because $[L, \hat{Q}] = 0$, the packaged register is unaffected by any external decoherence.
One can further add an internal Hamiltonian 
$\lambda\,(H_{\rm int}\otimes I_{\rm ext})$ with $H_{\rm int}\in\mathfrak{su}(d)$ and then simulate it via split‐operator methods.

\subsubsection{Mixing, Hitting Times, and Search}

Define the time‑averaged state
\[
M(t)=\frac1t\sum_{s=0}^{t-1} V^s\,\rho_0\,V^{-s}.
\]
On any connected non‑bipartite graph with a non‑degenerate coin, we have $M(t) \to I_N/N$. 
The hitting time scales as
\[
\tau_{\rm hit}=O\bigl(\sqrt{N}\bigr)=O\bigl(\sqrt{dD}\bigr).
\]

Thus, doubling the internal dimension $d$ is equal to double the graph size $D$ (in terms of quadratic speed‑up).
In search problems (Ambainis-Magniez style), one replaces the Grover diffusion step with one DTQW step, which achieves $O\!\bigl(\sqrt{dD/M}\bigr)$ speedup and queries for $M$ marked nodes.

\subsubsection{Advantages of Hybrid-Packaged Walks}

By packaging the coin in a gauge‑invariant internal space, we develop arbitrary coin‐dimensional quantum walks, maintain strong symmetry protection, and carry over all familiar algorithmic speed‑ups into a single unified framework.
We summarize the results in following table:

\begin{table}[H]
	\centering
	\caption{Conventional $q$-coin walk and Hybrid-Packaged Walks}
	\label{TAB:ConventionalAndHybridPackagedWalks}
	\begin{tabular}[hbt!]{|p{5cm}|p{5cm}|p{4.5cm}|}
		\hline\hline
		Feature &Conventional $q$-coin walk &Hybrid packaged walk \\
		\hline
		Coin dimension      &limited by physical DOF  &freely chosen $d$ \\
		\hline
		Gauge protection     &none           &built‑in superselection \\
		\hline
		Coin $times$ Position algebra &SU$(q)$$\otimes$graph &SU$(d)$$\times$graph \\
		\hline
		Gates per step      &$O(q)$         &$O(d)+O(\deg)$ \\
		\hline
		Error localisation    &spreads across registers &mostly confined to pos. \\
		\hline
	\end{tabular}
\end{table}

\subsection{Grover’s search algorithm in Hybrid-Packaged Space}

Grover's Algorithm \cite{Grover1996} is a quantum algorithm designed for efficiently searching an unstructured database.
The search space of the conventional Grover’s algorithm is an $N$-dimensional Hilbert space with $N$ basis states.
It provides a quadratic speedup comparing with classical algorithms.

In this subsection, we extend the conventional Grover’s search algorithm to a hybrid-packaged Hilbert space $\mathcal{H}_{Q=0}$ of dimension $N=dD$ using the isometry Eq. \eqref{EQ:IsometryHybridPackaged}.

\subsubsection{Algorithmic Procedure}

We now reconstruct Grover’s search algorithm in the new $N=dD$-dimensional hybrid-packaged subspace.

\begin{itemize}
	\item \textbf{Step 1. Initialization.}
	
	The first step is to prepare the equal superposition state in the hybrid-packaged subspace:
	\[
	|s\rangle_{\text{hyb}} = \frac{1}{\sqrt{N}} \sum_{x=0}^{N-1} |x\rangle_{\text{hyb}}\,.
	\]
	Because each $|x\rangle_{\text{hyb}}$ is in $\mathcal{H}_{Q=0}$, the state remains packaged (i.e. gauge-invariant). In practice, we prepare the state by applying a generalized packaged Hadamard operator on the hybrid-packaged subspace. Since the Hadamard state (and more generally, any linear combination of packaged basis states) is constructed from the internal basis states and operations are packaged gates, it satisfies
	\[
	[H_N,\,\hat{Q}] = 0\,.
	\]

	\item \textbf{Step 2. Oracle Operator.}
	
	Suppose a certain marked state $|w\rangle_{\text{hyb}} \in \mathcal{H}_{\text{hyb}}$ is the desired target (the search item).
	Then we define the packaged oracle operator as
	\[
	O_w = I - 2 |w\rangle_{\text{hyb}}\langle w|\,.
	\]
	Because $|w\rangle_{\text{hyb}}$ is a state in the packaged space, this operator acts only within $\mathcal{H}_{Q=0}$ and satisfies
	\[
	[O_w, \hat{Q}] = 0\,.
	\]
	Physically, the oracle is implemented by a packaged gate that recognizes the marked state and inverts its phase.
	
	\item \textbf{Step 3. Diffusion Operator.}
	
	We define the packaged diffusion (inversion-about-the-mean) operator as
	\[
	D = 2|s\rangle_{\text{hyb}}\langle s| - I\,.
	\]
	Again, since $|s\rangle_{\text{hyb}}$ is constructed from packaged basis states, we have
	\[
	[D,\hat{Q}] = 0.
	\]
	Therefore, the diffusion operator is packaged and remains within the allowed superselection sector.
	
	\item \textbf{Step 4. Grover Iteration.}
	
	We define the packaged Grover operator (or iteration operator) as
	\[
	G = D\,O_w\,.
	\]
	Since both $D$ and $O_w$ commute with $\hat{Q}$, by the closure property (see Lemma~\ref{LEMMA:ClosureOfPackagedOperations}) we have
	\[
	[G, \hat{Q}] = 0\,.
	\]
	Thus, each iteration of packaged Grover’s algorithm is packaged and remains confined to $\mathcal{H}_{Q=0}$.
	
	\item \textbf{Step 5. Amplitude Amplification Analysis.}
	
	The packaged Grover’s algorithm can be analyzed in the two-dimensional subspace spanned by
	\[
	|w\rangle_{\text{hyb}} \quad \text{and} \quad |w^\perp\rangle_{\text{hyb}},
	\]
	where
	\[
	|s\rangle_{\text{hyb}} = \sin\theta\,|w\rangle_{\text{hyb}} + \cos\theta\,|w^\perp\rangle_{\text{hyb}}\,.
	\]
	For $N\gg1$, $\theta\approx1/\sqrt{dD}$.
	Since the uniform state has equal amplitudes, it follows that
	\[
	\sin\theta = \frac{1}{\sqrt{N}} = \frac{1}{\sqrt{dD}},\quad \cos\theta = \sqrt{1-\frac{1}{dD}}\,.
	\]
	For large $N$, we may set $\theta \approx \sin\theta = 1/\sqrt N = 1/\sqrt{dD}$.

	The action of the Grover operator $G$ on this two-dimensional subspace is equivalent to rotate an angle $2\theta$.
	After $k$ iterations,
	\[
	G^k\,|s\rangle_{\text{hyb}} = \sin\bigl((2k+1)\theta\bigr)|w\rangle_{\text{hyb}} + \cos\bigl((2k+1)\theta\bigr)|w^\perp\rangle_{\text{hyb}}\,.
	\]
	We choose a $k$ such that
	\[
	(2k+1)\theta \approx \frac{\pi}{2}\,,
	\]
	so that we can obtain a high probability of measuring the marked state $|w\rangle_{\text{hyb}}$.
	In the limit $N \gg 1$, the number of iterations required is
	\[
	k \approx \frac{\pi}{4}\sqrt{N} = \frac{\pi}{4}\sqrt{dD}\,.
	\]
	We take $k=\lfloor\frac{\pi}{4}\sqrt{dD}\rfloor$.
	Thus, the scaling of packaged Grover’s algorithm is preserved after mapping to hybrid-packaged subspace.
\end{itemize}

\subsubsection{Physical Interpretations and Error Considerations}

This packaged Grover’s algorithm preserves gauge-invariance.
Throughout all the above steps, every operator (initialization via Hadamard, the oracle $O_w$, the diffusion operator $D$, and hence the Grover operator $G$) is constructed from packaged operations that commute with $\hat{Q}$.
This guarantees that the entire algorithm runs in the gauge-invariant sector $\mathcal{H}_{Q=0}$.
In other words, errors that may cause amplitude outside the allowed sector are energetically unfavorable and therefore are strongly suppressed.

The hybrid space has a total dimension $N=dD$.
The search space can be enlarged by increasing either the internal dimension $d$ or the external dimension $D$.
The number of iterations scales as $\sqrt{dD}$.
The increased capacity might increase possibilities for encoding information or for implementing algorithms with a larger alphabet.
In addition, the hybrid encoding may provide extra resources for error correction or robustness in applications.

One usually implement the conventional Grover’s algorithm in an $N$-dimensional Hilbert space with $N$ determined by the number of qubits.
Here, we implement the packaged Grover’s algorithm (although the mathematical rotation structure is the same) in a hybrid-packaged subspace.
All packaged operations are automatically protected by the superselection rules.
Therefore, the packaging principle will reduce the effective noise and error channels.
This potentially leads to a more robust algorithm in practice.

\section{Quantum Communication and Cryptography in Packaged Space}
\label{SEC:QuantumCommunicationAndCryptography}

In conventional quantum communication theory, there are a number of important quantum protocols. For example:
Teleportation Protocol \cite{Bennett1993,Bouwmeester1997,Ren2017},
Superdense Coding \cite{BennettWiesner1992,Schaetz2004,Williams2017},
Quantum Swapping Protocol \cite{YurkeStoler1992,Zukowski1993},
BB84 QKD \cite{BB84,BB84Bennett1992,BB842014,BB84arXiv2020},
B92 QKD \cite{Bennett1992},
Six-State QKD \cite{Bruss1998,PasquinucciGisin1999},
E91 QKD \cite{Ekert1991},
BBM92 QKD \cite{BBM92}.

In this section, we adapt the primary quantum communication protocols with the
packaged messenger states and packaged resource states.
Using packaged states for quantum communication has a number of advantages. For
example, gauge-invariance may render some error channels or eavesdropping strategies
physically impossible. This potentially enhances robustness and security.
While true superselection rules energetically suppress charge-violating attacks, an eavesdropper could still exploit gauge-neutral loss or detector side-channels.

\subsection{Quantum Teleportation Protocol in Hybrid Packaged Space}

Quantum teleportation \cite{Bennett1993,Bouwmeester1997,Ren2017} is a foundational protocol that uses classical communication and quantum entangled states (resource states) to transfer an unknown quantum state (messenger) from one location to another.

In this subsection, we reconstruct the quantum teleportation protocol in ($d \times D$)-dimensional hybrid-packaged subspace.
We assume that all logical states have been mapped into a fixed superselection sector (e.g., $ \mathcal{H}_{Q=0} $) by the isometry Eq. \eqref{EQ:IsometryHybridPackaged}.
We then implement all operations (state preparation, Bell measurement, and conditional corrections) by packaged gates that commute with the net-charge operator $\hat{Q}$ and therefore preserves gauge-invariance.

\subsubsection{Algorithmic procedure}

\paragraph{(1) Preparing Messenger.}
Suppose a third party prepares or encodes the unknown messenger as a packaged state:
\begin{equation}\label{EQ:Messenger2}
	|\psi\rangle_X = \sum_{k=0}^{N-1} \alpha_k \,|k\rangle_X\,,\quad \text{with } \sum_k|\alpha_k|^2=1,
\end{equation}
where the index $X$ denotes ``unknown'' and $N = dD$.
The packaged basis states $ |k\rangle_X \in \mathcal{H}_{Q=0}$ satisfy
\[
U_g\,\lvert k\rangle_X = e^{i\phi(g)}\,\lvert k\rangle_X,
\]
so that $\lvert \psi\rangle_X \in \mathcal{H}_{Q=0}$.

The messenger is called unknown state because it remains unknown to Alice throughout the process.
Alice's role is to teleport the messenger to Bob using their pre-shared entangled pair and a classical channel.
But Alice cannot measure or clone it due to quantum principles like the no-cloning theorem.

\paragraph{(2) Pre-Sharing Packaged Resource State.}
Alice and Bob pre-share a hybrid-packaged entangled resource (see Eq.~(\ref{EQ:NDimensionalBellBasis_CORR})),
\[
|\Phi_{0,0}\rangle_{AB} = \frac{1}{\sqrt{N}}\sum_{J=0}^{N-1} |J\rangle_A \otimes |J\rangle_B\,,
\]
where $N = dD$.
Thus, the total initial state is
\begin{equation}\label{EQ:TotalInitialState2}
	|\Psi_{\mathrm{in}}\rangle = |\psi\rangle_X \otimes |\Phi_{0,0}\rangle_{AB} 
	=\frac{1}{\sqrt{N}} \sum_{k=0}^{N-1} \sum_{J=0}^{N-1} \alpha_k\,|k\rangle_X\otimes|J\rangle_A\otimes|J\rangle_B\,.
\end{equation}

\paragraph{(3) Alice's Bell Measurement in Hybrid-Packaged Subspace}

Alice has modes $X$ and $A$.
Rewriting Eq.~(\ref{EQ:ReverseNDimensionalBellBasis_CORR}), we obtain the reversed identity (see, e.g., conventional derivation in qudit teleportation)
\[
\lvert k\rangle_X \otimes \lvert J\rangle_A
\;=\;
\frac{1}{\sqrt{N}}
\sum_{m=0}^{N-1}
\omega^{-m k}\,
\lvert \Phi_{m,\;J\ominus k}\rangle_{XA},
\]
where $J\ominus k\equiv (J-k)\bmod N$.
Let $n := J\ominus k$ and substitute this identity into Eq.~(\ref{EQ:TotalInitialState2}), 
the total initial state becomes
\begin{equation}\label{EQ:TotalInitialState3}
	|\Psi_{\mathrm{in}}\rangle = \frac{1}{N}\sum_{m,n=0}^{N-1}|\Phi_{m,n}\rangle_{XA} \otimes \left( \sum_{k=0}^{N-1} \alpha_k \,\omega^{-mk} |k\oplus n\rangle_B \right).
\end{equation}
This expression decomposes the three-party state in terms of the generalized hybrid-packaged Bell states on modes $X$ and $A$ and corresponding states on mode $B$.

Alice now performs a joint Bell measurement on her two subsystems $X$ and $A$ in the generalized Bell basis $\{|\Phi_{m,n}\rangle_{XA}\}$.
The total three-particle state then collapses.
Alice has the result $|\Phi_{m,n}\rangle_{XA}$ (defined in Eq.~(\ref{EQ:NDimensionalBellBasis_CORR})) and Bob has the result $\sum_{k=0}^{N-1} \alpha_k \,\omega^{-mk} |k\oplus n\rangle_B$.

Using the notations $Z\,|J\rangle = \omega^J|J\rangle,\;
X\,|J\rangle = |J \oplus 1\rangle$, the entire set $\{Z^m X^n\}$ forms an orthonormal operator basis on $\Bbb C^N$.
We can now re-write $|\Phi_{m,n}\rangle_{XA}$ as
\begin{equation}\label{EQ:OrthonormalOperatorBasis}
	|\Phi_{m,n}\rangle_{XA}
	= (Z^m \otimes X^n)\;|\Phi_{0,0}\rangle_{XA}
\end{equation}

Eq.~(\ref{EQ:OrthonormalOperatorBasis}) shows that Alice's result $\{|\Phi_{m,n}\rangle_{XA}\}$ is transformed from $|\Phi_{0,0}\rangle_{XA}$ by the operations $\{Z^m \otimes X^n\}$, i.e., the Bell basis state is generated by perform $m$ times ``phase on $X$'' and $n$ times ``shift on $A$''.
Thus, Alice only needs to tell Bob her result $(m,n)$ through the classical channel.

\paragraph{(4) Bob's Measurement and Conditional Correction.}

After Alice's Bell measurement and obtained a particular outcome $(m,n)$, Bob simultaneously obtained a state (up to normalization, see Eq.~(\ref{EQ:TotalInitialState3}))
\[
|\psi_{m,n}\rangle_B = \sum_{k=0}^{N-1} \alpha_k \,\omega^{-mk}\,|k\oplus n\rangle_B\,.
\]
Alice’s measurement labels $(m,n)$ tell Bob to ``apply $Z^{m}$ and then apply $X^{-n}$''.
Bob’s correcting unitary is 
\[
V_{m,-n}
=X^{-n}\,Z^{m}
=\sum_{J=0}^{N-1}\omega^{m J}\,\lvert J\ominus n\rangle\langle J|\,,
\]
which satisfies
\[
V_{m,-n}\,|\psi_{m,n}\rangle_B = \sum_{k=0}^{N-1} \alpha_k \,|k\rangle_B = |\psi\rangle_B\,.
\]
Bob uses these unitary corrections to send his state into the form of Eq.~(\ref{EQ:Messenger2}).
Finally, Bob obtains $\lvert \psi\rangle_B$ to decode the messenger $\lvert \psi\rangle_X$, i.e., the values of $\alpha_k$'s.

\subsubsection{Gauge Invariance of Packaged Quantum Teleportation Protocol}

Let us now prove that every state, every unitary, and every measurement in the packaged quantum teleportation lives entirely inside the gauge‑neutral sector $\mathcal H_{Q=0}$.
Therefore, the whole protocol is gauge-invariant.

\begin{proposition}
	The packaged quantum teleportation protocol is gauge‑invariant.
\end{proposition}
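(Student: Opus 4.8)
The plan is to verify gauge invariance stage by stage, establishing at each point that (i) every state resides in the neutral sector $\mathcal{H}_{Q=0}$, (ii) every unitary commutes with the total charge $\hat{Q}=\hat{Q}_X+\hat{Q}_A+\hat{Q}_B$, and (iii) the measurement projectors also commute with $\hat{Q}$. Since any local gauge transformation can be written $U_g=e^{i\alpha(g)\hat{Q}}$, commutation with $\hat{Q}$ is equivalent to covariance under $U_g$ up to a global phase, so these three facts together imply that the full teleportation channel maps $\mathcal{H}_{Q=0}$ into itself and acquires only an overall phase under $U_g$.

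First I would treat the states. The messenger $|\psi\rangle_X$ is packaged by construction: each basis vector obeys $U_g|k\rangle_X=e^{i\phi(g)}|k\rangle_X$, whence $\hat{Q}|\psi\rangle_X=0$. The shared resource $|\Phi_{0,0}\rangle_{AB}$ is one of the hybrid-packaged Bell states shown gauge-invariant in Sec.~\ref{SEC:HybridPackagedBellBasis}, so $\hat{Q}|\Phi_{0,0}\rangle_{AB}=0$ and $U_g|\Phi_{0,0}\rangle_{AB}=e^{i\phi(g)}|\Phi_{0,0}\rangle_{AB}$. Invoking the tensor-product closure result (Lemma~\ref{LEM:TensorProductPackaged}), the joint input $|\Psi_{\mathrm{in}}\rangle=|\psi\rangle_X\otimes|\Phi_{0,0}\rangle_{AB}$ lies in the neutral sector of the three-mode space and transforms by a single overall phase.

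Next I would handle Alice's Bell measurement and Bob's correction. The measurement projects onto $\{|\Phi_{m,n}\rangle_{XA}\}$; because each Bell vector satisfies $\hat{Q}_{XA}|\Phi_{m,n}\rangle_{XA}=0$, every projector $|\Phi_{m,n}\rangle\langle\Phi_{m,n}|$ commutes with $\hat{Q}$, so the measurement cannot link distinct charge sectors. Additivity and conservation of $\hat{Q}$ then force the unmeasured mode $B$ into the neutral sector in each outcome branch. Bob's correction $V_{m,-n}=X^{-n}Z^{m}$ is a product of hybrid Weyl operators, each of which commutes with $\hat{Q}$ by the Weyl-block relations of Eq.~(\ref{EQ:HybridWeylBlock}); hence $[V_{m,-n},\hat{Q}]=0$ and the corrected state on Bob's side remains packaged.

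Finally I would assemble these facts through the closure property for gauge-conserving operations (Lemma~\ref{LEMMA:ClosureOfPackagedOperations}): any finite composition of $\hat{Q}$-commuting maps again commutes with $\hat{Q}$. The hard part is conceptual rather than algebraic: unlike the purely unitary circuit theorems proved earlier, teleportation relies on a non-unitary projective measurement followed by classically conditioned feedforward, so one must confirm that this probabilistic branching does not secretly inject a gauge-violating step. The resolution is that Bob's operation depends only on the classical label $(m,n)$ and is drawn from the fixed family $\{V_{m,-n}\}\subset\mathcal{C}_{\hat{Q}}$; since every branch individually stays inside $\mathcal{H}_{Q=0}$ and the transmitted classical side-information carries no charge, no branch can leak out of the neutral sector. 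Collecting all stages then yields that the complete packaged teleportation protocol is gauge-invariant.
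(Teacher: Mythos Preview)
Your proposal is correct and follows essentially the same stage-by-stage verification as the paper's proof: both establish that the messenger, the shared Bell resource, and the joint input are gauge-neutral, that the Bell-basis projectors and the Weyl corrections $V_{m,-n}$ commute with $\hat{Q}$, and then assemble these facts to conclude invariance of the whole protocol. Your explicit appeal to Lemma~\ref{LEM:TensorProductPackaged} and Lemma~\ref{LEMMA:ClosureOfPackagedOperations}, together with the remark that classical feedforward carries no charge and so cannot leak amplitude out of $\mathcal{H}_{Q=0}$, is a helpful sharpening but not a departure from the paper's argument.
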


\begin{proof}
	We split the proof into five steps:
	
	\paragraph{(1) Preliminaries.}
	
	Let $\hat{Q}$ be the net‑charge operator and $U_g=e^{i\,\phi(g)\,\hat{Q}}$ be the corresponding gauge transformation. 
	By definition, all logical packaged basis states $|j_P\rangle$ are gauge-invariant, i.e., satisfy 
	\[
	\hat{Q}\,|j_P\rangle = 0
	\quad\Longrightarrow\quad
	U_g\,|j_P\rangle = |j_P\rangle,
	\]
	and all external states obey $\hat Q\,|k_E\rangle=0$ (and hence $U_g|k_E\rangle=|k_E\rangle$).
	Therefore, every computational basis state $|j\rangle \;=\; |j_P\rangle \otimes |k_E\rangle$ is gauge-invariant, i.e.,
	\[
	\hat{Q}\,|j\rangle=0,\;
	U_g\,|j\rangle=|j\rangle.
	\]

	\paragraph{(2) The Initial Messenger.}
	
	The unknown messenger is gauge-invariant
	\[
	|\psi\rangle_X=\sum_k\alpha_k\,|k\rangle_X
	\quad\Longrightarrow\quad
	\hat{Q}\,|\psi\rangle_X=0,
	\quad
	U_g\,|\psi\rangle_X=|\psi\rangle_X.
	\]

	\paragraph{(3) Shared Resource: Bell pair.}
	In the pre-shared Bell pair 
	\[
	|\Phi_{0,0}\rangle_{AB}
	=\frac1{\sqrt N}\sum_J |J\rangle_A\otimes|J\rangle_B,
	\]
	each term has $\hat{Q}_A+\hat{Q}_B=0$, so 
	$\hat{Q}_{AB}|\Phi_{0,0}\rangle=0$ and $U_g^{(A)}\otimes U_g^{(B)}|\Phi_{0,0}\rangle=|\Phi_{0,0}\rangle$. In other words, the Bell pair is gauge-invariant.
	
	Consequently, the total input state
	\[
	|\Psi_{\rm in}\rangle_{XAB}
	=|\psi\rangle_X\otimes|\Phi_{0,0}\rangle_{AB}
	\quad\Longrightarrow\quad
	\hat{Q}_{XAB}\,|\Psi_{\rm in}\rangle=0,
	\quad
	(U_g)^{\otimes3}\,|\Psi_{\rm in}\rangle=|\Psi_{\rm in}\rangle,
	\]
	i.e., the total input state is gauge-invariant.

	\paragraph{(4) Alice’s Bell Measurement.}
	
	The projectors onto the generalized Bell basis are
	\[
	\Pi_{m,n}
	=|\Phi_{m,n}\rangle\langle\Phi_{m,n}|\;,\qquad
	|\Phi_{m,n}\rangle = \frac{1}{\sqrt N} \sum_J \omega^{mJ}\,|J\rangle_X\otimes|J\oplus n\rangle_A.
	\]
	Using the fact: $[\hat{Q}_X+\hat{Q}_A\,,\,\Pi_{m,n}]=0$, or equivalently 
	\[
	(U_g)^{(X)}\otimes(U_g)^{(A)}\;\Pi_{m,n}\;
	\bigl((U_g)^{(X)}\otimes(U_g)^{(A)}\bigr)^\dagger
	\;=\;\Pi_{m,n}.
	\]
	We see that Alice’s measurement cannot transfer amplitude out of $\mathcal H_{Q=0}$, and each outcome occurs with the same probability before and after a global gauge rotation.
	In other words, Alice’s measurement is gauge-invariant.

	\paragraph{(5) Bob’s Conditional Correction.}
	
	After Alice's outcome $(m,n)$, Bob’s state is
	\[
	|\psi_{m,n}\rangle_B
	=\sum_k\alpha_k\,\omega^{-mk}\,\bigl|k\oplus n\bigr\rangle_B,
	\quad
	\hat{Q}_B\,|\psi_{m,n}\rangle_B=0,
	\]
	so still gauge‑neutral.
	
	Bob applies
	\[
	V_{m,-n}
	=X^{-n}Z^{m}
	=\sum_J\omega^{mJ}\,\bigl|J \ominus n\bigr\rangle\!\langle J|\,.
	\]
	Since $X$ and $Z$ are built from the same $\hat{Q}$\!-commuting Weyl operators, 
	\[
	[\hat{Q}_B,\,V_{m,n}]=0
	\;\Longrightarrow\;
	U_g\,V_{m,n}\,U_g^\dagger = V_{m,n}.
	\]
	The corrected state is
	$\;V_{m,n}\,|\psi_{m,n}\rangle_B = |\psi\rangle_B \in \mathcal H_{Q=0}$,
	i.e., the corrected state $|\psi\rangle_B$ is still gauge-invariant.

	\paragraph{(6) Conclusion.}
	
	From state‐preparation, entangling resource, Bell measurement, classical communication of $(m,n)$ to Bob’s local correction, we have either states that lie in the $\hat{Q}\!=\!0$ sector so $U_g$ acts as the identity, or operations (unitaries or projectors) that commute with $\hat{Q}$.
	Therefore, no amplitude ever leaks into charged sectors.
	The local gauge transformation $U_g$ does not affect the protocol.
	Finally, we conclude that the packaged quantum teleportation protocol is gauge‑invariant.
\end{proof}

\subsubsection{Physical Interpretation}

In this packaged quantum teleportation protocol, the unknown state $\lvert \psi\rangle_X$ and the shared entangled resource $|\Phi_{0,0}\rangle_{AB}$ are encoded as packaged qubits.
Thus, they lie entirely in a fixed superselection sector (e.g., $Q=0$).
By performing a Bell measurement on modes $X$ and $A$ in the packaged Bell basis, Alice obtains one of the allowed outcomes ($m, n$) without causing any cross-sector transitions.
Bob’s state then requires a combination of packaged gates $Z^mX^{-n}$ to correct, which ensure that the operation is gauge-invariant.

Since every step (state preparation, measurement, and correction) is implemented by packaged operations that commute with the net-charge operator $\hat{Q}$, the entire teleportation protocol remains in $\mathcal{H}_{Q=0}$ and is gauge-invariant. This obeys the superselection rules that suppress errors and therefore offer enhanced robustness.

\subsection{Super‐Superdense Coding Protocol in Hybrid Packaged Space}
\label{SEC:PackagedSuperSuperdenseCodingProtocol}

Superdense coding \cite{BennettWiesner1992,Schaetz2004,Williams2017} is a quantum communication protocol that uses only one qubit (the sender and receiver pre-share entanglement) to transmit two classical bits of information.
It exhibits the power of quantum entanglement as a resource for enhanced communication.

In this subsection, we derive a super‐superdense coding protocol in a general $(d\times D)$-dimensional hybrid-packaged subspace, where $d$ is the dimension of internal space and $D$ is the dimension of the external space.

\subsubsection{Algorithmic Procedure}

We use a hybrid-packaged subspace of dimension $N = dD$.
We choose a shared maximally hybrid-packaged entangled state and then apply hybrid-packaged qudit gates on one subsystem to encode classical information into one of $N^2$ mutually orthogonal states.
Thus, we can encode up to
\[
C = \log_2(N^2) = 2\log_2(dD)
\]
bits of classical information per transmitted qudit.

\paragraph{(1) Shared Packaged Resource State.}

Alice and Bob pre-share a hybrid-packaged resource state (maximally hybrid-packaged entangled state) given by
\[
|\Phi^+\rangle_{AB} = \frac{1}{\sqrt{dD}} \sum_{j=0}^{d-1} \sum_{k=0}^{D-1} |j,k\rangle_A \otimes |j,k\rangle_B\,.
\]
Because under a local gauge transformation $U_g$, every packaged basis state transforms only with a global phase,
\[
U_g |j,k\rangle = e^{i\phi(g)} |j,k\rangle\,,
\]
we have
\[
U_g^{\otimes 2} |\Phi^+\rangle_{AB} = e^{i2\phi(g)} |\Phi^+\rangle_{AB}\,.
\]
This means that the packaged resource state remains in the gauge-invariant subspace $\mathcal{H}_{Q=0}^{\otimes 2}$ (up to a global phase).

\paragraph{(2) Encoding Message (by Alice).}

Alice wishes to encode a classical message $m$ from the set $\{0,1,\ldots, (dD)^2 -1\}$, which contains $(dD)^2$ elements.
To do so, Alice uses the generalized Pauli (Weyl-Heisenberg) operators in an $N$-dimensional Hilbert space (where $N = dD$):

\begin{enumerate}
	\item Generalized Pauli Operators:
	
	Let $\omega_N \equiv e^{2\pi i/N}$. Define the unitary packaged operators $X$ and $Z$ on the hybrid-packaged subspace as
	\[
	X |n\rangle = |n\oplus 1\rangle,\quad Z |n\rangle = \omega_N^n |n\rangle\,,
	\]
	with the computational (composite) index $n$ defined via
	\[
	n = j\,D + k,\quad j\in \{0,\ldots,d-1\},\quad k\in \{0,\ldots,D-1\}\,,
	\]
	and where $\oplus$ denotes addition modulo $dD$. Then, for any pair of integers $(a,b)$ with
	\[
	a,b \in \{0,1,\ldots, dD-1\}\,,
	\]
	we define
	\[
	V_{a,b} = Z^a X^b\,.
	\]
	
	These operators form an orthonormal basis (with respect to the Hilbert-Schmidt inner product) for the space of linear operators on $\mathcal{H}_{\text{hyb}}$.
	There are $d^2D^2$ such operators.

	Since the entire operation is defined on the full hybrid-packaged qudit space, the set $\{V_{a,b}\}$ can be used by Alice to encode $2\log_2(dD)$ classical bits.

	\item Encoding Procedure:
	
	Alice applies the unitary packaged operator $V_m$ (corresponding to message index $m$) to her subsystem.
	Specifically, if her message is $m$ and
	\[
	m \in \{0,1,\ldots, d^2D^2-1\}\,,
	\]
	then she applies	
	\[
	V_m\,|\Phi^+\rangle_{AB} = (V_m\otimes I)|\Phi^+\rangle_{AB} \equiv |\Phi_m\rangle_{AB}\,.
	\]

	Because the operators $V_m$ are constructed from $X$ and $Z$ which are defined on the full hybrid-packaged subspace, and since every basis state in the computational basis is packaged (i.e. they all lie in $\mathcal{H}_{Q=0}$), one has
	\[
	[V_m, \hat{Q}] = 0\,.
	\]
	Thus, $ |\Phi_m\rangle_{AB}$ remains in the same superselection sector.
	In other words, $V_m$ are gauge-invariant.
\end{enumerate}

\paragraph{(3) Transmission and Bob's Decoding.}

Alice sends her encoded hybrid-packaged qudit to Bob over a quantum channel that is modeled by a unitary $V_{\rm channel}$ with
\[
[V_{\rm channel}, \hat{Q}] = 0\,.
\]
Thus, the overall state remains in the desired physical (gauge-invariant) subspace.

After receiving Alice's qudit, Bob now has both halves of the packaged entangled resource state.
He then performs a joint measurement in the generalized Bell basis on the two qudits.

Define the generalized Bell basis states
\[
|\Phi_{a,b}\rangle_{AB} \equiv (V_{a,b}\otimes I)|\Phi^+\rangle_{AB}\,,
\]
for all $a,b\in \{0,1,\ldots,dD-1\}$.
These states are mutually orthogonal:
\[
\langle \Phi_{a',b'}|\Phi_{a,b}\rangle = \delta_{aa'}\,\delta_{bb'}\,.
\]
This is an $N^2$‐outcome measurement.
By performing a measurement in this basis, Bob can identify the message $m$ that Alice encoded.

Because the encoded set comprises $d^2D^2$ orthogonal states, Bob can decode $2\log_2(dD)$ bits of classical information from a single transmitted hybrid qudit.

\subsubsection{Security and Advantages}

This packaged super-superdense coding protocol has several advantages:

First, the packaged super-superdense coding protocol reduced eavesdropper (Eve)’s correct measurement probability.
If an eavesdropper (Eve) intercepts the transmitted qudit and performs an optimal measurement, her maximum probability $P_{\text{Eve}}$ of correctly guessing the encoded message is upper bounded by
\[
P_{\text{Eve}} \le \frac{1}{d^2D^2}\,,
\]
which assumes that she has no prior information.
Compared with the original 2-dimensional case (with $D = 2$) where $P_{\text{Eve}} \le \frac{1}{4}$, increasing $d$ and/or $D$ further reduces Eve’s correct guess probability proportionally to $1/(d^2D^2)$.

The packaged super-superdense coding protocol also enhance information capacity.
Using a hybrid system with dimension $dD$, the protocol transmits
\[
2\log_2(dD)
\]
classical bits per transmitted qudit.
This is beneficial for increasing the communication capacity beyond that of the conventional qubit-based (or even 4-dimensional) protocol.

This packaged protocol provides an inherent physical protection from gauge style errors. 
All operations (encoding, transmission, and measurement) are restricted to the gauge-invariant subspace $\mathcal{H}_{Q=0}$. Any error process that would drive the state out of this subspace is forbidden (or energetically suppressed). This adds an inherent layer of error protection.

This packaged protocol is highly scalable (limited in practice by how large an internal manifold one can coherently address).
The packaged state is often easier to scale to higher dimensions because it relies on degrees of freedom (e.g., flavor, color, or other quantum numbers) that naturally come in larger multiplets, especially in high-energy physics or quantum simulators.
The external space can also be chosen to be high-dimensional (for example, using orbital angular momentum of photons), so the overall communication alphabet is greatly expanded.

The protocol enhances security that is proportional to the space dimension: 
The larger the dimension $dD$, the lower the probability that an eavesdropper can correctly identify the transmitted state when forced to choose among $d^2D^2$ possibilities.
In an ideal scenario, Eve’s correct measurement probability scales as $1/(d^2D^2)$ and therefore enhances the security of the QKD protocol.

\subsection{QKD Protocol in Hybrid Packaged Space}
\label{SEC:QKDProtocolInHybridPackagedSpace}

The six-state protocol \cite{Bruss1998,PasquinucciGisin1999} is a quantum cryptographic protocol that improves the security of BB84 protocol by encoding information in three mutually unbiased bases (instead of two).
This increases the difficulty for an eavesdropper (Eve) to intercept the key without detection.

In this subsection, we incorporate external and internal DOFs in constructing the hybrid-packaged qudits.
We also incorporate the idea of six-state QKD protocol to a hybrid-packaged qudit system of dimension $d \times D$.

\subsubsection{Constructing the MUBs in the $(d\times D)$-Dimensional Hybrid-Packaged Space}

\paragraph{(1) Computational (Z) Basis.}

We define the computational (Z) basis for $\mathcal{H}_{\text{hyb}}$ as the set of orthonormal states:
\[
\mathcal{B}_Z = \{|n\rangle \;:\; n = 0,1,2,\dots, N-1\}\,,
\]
where the composite label is defined via a bijective mapping between the pair of indices $(j,k)$ and a single integer $n$ by
\[
n = j\,D + k\,.
\]
By construction, each $|n\rangle$ is
\[
|n\rangle \equiv |j,k\rangle = |j_P\rangle\otimes|k_E\rangle\,.
\]
Because both factors are in the correct gauge-invariant subspaces, these states satisfy
\[
\hat{Q}|n\rangle = 0\quad \text{and}\quad U_g\,|n\rangle = e^{i\phi(g)}\,|n\rangle\,.
\]

\paragraph{(2) Fourier (X) Basis.}

The Fourier transform of the Z-basis yields a second mutually unbiased basis $\mathcal{B}_X$. Define
\[
|n_x\rangle = \frac{1}{\sqrt{N}} \sum_{m=0}^{N-1} \omega^{nm}\,|m\rangle\,,\quad \text{with } \omega \equiv e^{2\pi i/N}\,.
\]
Because
\[
|\langle m|n_x\rangle|^2 = \frac{1}{N}\,,
\]
the X-basis is mutually unbiased with respect to the Z-basis.

Note that every $|n_x\rangle$ is an equally weighted superposition of computational states. Since the coefficients are scalars and every component $|m\rangle$ lies in $\mathcal{H}_{Q=0}$, each $|n_x\rangle$ also satisfies
\[
\hat{Q}\,|n_x\rangle = 0,\quad U_g\,|n_x\rangle = e^{i\phi(g)}\,|n_x\rangle\,.
\]

\paragraph{(3) A Third MUB (Y-Basis).}

A third mutually unbiased basis, $\mathcal{B}_Y$, can be constructed using a modified Fourier transform with additional phase factors. For example, define
\[
|n_y\rangle = \frac{1}{\sqrt{N}} \sum_{m=0}^{N-1} \omega^{nm} e^{i\theta(m)}\,|m\rangle\,,
\]
where the phases $\{\theta(m)\}$ are chosen so that the bases $\mathcal{B}_Z$, $\mathcal{B}_X$, and $\mathcal{B}_Y$ are mutually unbiased, say, $\theta(m)=\pi m(m+1)/N$ for odd prime $N$.
A common choice in the two-dimensional case is $\theta(0)=0$ and $\theta(1)=\pi/2$.
In higher dimensions, one must choose a set of phases so that $|\langle m|n_y\rangle|^2=\frac{1}{N}$ for all $m,n$.
By construction, the Y-basis states remain in $\mathcal{H}_{\text{hyb}}$ (and hence are packaged states) because they are superpositions of $|m\rangle$ with constant coefficients, i.e.,
\[
\hat{Q}\,|n_y\rangle=0,\quad U_g\,|n_y\rangle = e^{i\phi(g)}|n_y\rangle\,.
\]

\subsubsection{Algorithmic Procedure}

We now outline the algorithmic procedure for the packaged QKD protocol in the $(d\times D)$-dimensional hybrid space using the three MUBs defined above ($\mathcal{B}_Z$, $\mathcal{B}_X$, and $\mathcal{B}_Y$:

\begin{enumerate}
	\item Step 1: Preparing State.
	
	Alice prepares a state by randomly choosing one of the three MUBs among $\mathcal{B}_Z$, $\mathcal{B}_X$, and $\mathcal{B}_Y$.
	In the chosen basis, she then randomly selects one of the $N$ basis states.
	For example, if she chooses the Z-basis, then she picks one state 
	\[
	|n_z\rangle \in \mathcal{B}_Z\,,
	\]
	where $n \in \{0,1,\dots, N-1\}$.
	Analogous choices are made for the X and Y bases. 
	Since each basis state is a linear combination (or is a basis element) from a set of packaged states, it remains gauge-invariant.
	In other words, regardless of the basis, we always have
	\[
	\hat{Q}\,|\psi\rangle=0,\quad U_g\,|\psi\rangle=e^{i\phi(g)}|\psi\rangle\,.
	\]
	
	\item Step 2: Transmission.
	 
	Alice sends the prepared state $|\psi\rangle$ to Bob through a quantum channel $V_{\rm channel}$,
	which is implemented by packaged operations that satisfy
	\[
	[V_{\rm channel}, \hat{Q}] = 0.
	\]
	In this way, the transmitted state remains in $\mathcal{H}_{Q=0}$.
	
	\item Step 3: Measurement.
	
	Bob does not know which basis was used.
	He randomly chooses one of the three MUBs ($\mathcal{B}_Z$, $\mathcal{B}_X$ or $\mathcal{B}_Y$) for his measurement.
	If his choice matches Alice’s preparation basis, then he obtains the prepared state with certainty (ignoring any channel noise).
	If he measures in a different basis, then his results is uniformly distributed among the $N$ basis states and therefore yield no useful information.
	
	\item Step 4: Sifting.
	
	Over a public authenticated classical channel, Alice and Bob compare which measurement bases they used for each transmission.
	They discard any events in which they used different bases.
	
	\item Step 5: Error Rate Estimation and Key Extraction.
	
	From the sifted data, Alice and Bob first determine the quantum bit error rate (QBER).
	Then they perform conventional classical error correction and privacy amplification to distill a shared secret key.
\end{enumerate}

\subsubsection{Security Analysis: Eve’s Correct Measurement Probability.}

Assume that Eve intercepts the transmission and makes a measurement in one of the three bases chosen at random. Two cases arise:
\begin{enumerate}
	\item Correct Basis Choice: 
	There is a probability $1/3$ that Eve randomly guesses the same basis that Alice used.
	In this case, Eve will measure the state correctly with probability 1.
	
	\item Wrong Basis Choice: 
	There is a probability $2/3$ that Eve uses one of the other two MUBs.
	In this case, because any two MUBs are mutually unbiased, the overlap between the state prepared by Alice and the state used by Eve (in the wrong basis) is
	\[
	|\langle \psi_{\text{Alice}}|\psi_{\text{Eve}}\rangle|^2 = \frac{1}{N}\,,
	\]
	where $N = dD$.
\end{enumerate}

Thus, Eve's overall correct guess probability is
\[
P_{\mathrm{Eve}} = \frac{1}{3}\times 1 + \frac{2}{3}\times\frac{1}{N} = \frac{1}{3} + \frac{2}{3N}\,.
\]
In the qubit case, $N=2$ so $P_{\mathrm{Eve}} = \frac{1}{3} + \frac{2}{6} = \frac{2}{3}$.
If $N > 2$, then
\[
\frac{2}{3N} < \frac{1}{3}\,.
\]
Thus, $P_{\mathrm{Eve}}$ is reduced compared to the conventional qubit six-state protocol.

\subsubsection{Relation to BB84 and B92 QKD Protocols}
 
If we drop one MUB, then we obtain the hybrid-packaged BB84 QKD protocol \cite{BB84,BB84Bennett1992,BB842014,BB84arXiv2020}.
If we keep only two non‑orthogonal signals, then we obtain the hybrid-packaged B92 QKD protocol \cite{Bennett1992}.

\subsubsection{Physical Interpretation}

First, in this packaged QKD protocol, the security is enhanced with higher dimensions.
By using a hybrid-packaged state of dimension $N = dD$, the number of distinguishable quantum states increases.
The probability that an eavesdropper correctly measures the state when choosing the wrong basis decreases inversely with the dimension.
Eve's average success probability reduces to
\[
P_{\mathrm{Eve}} = \frac{1}{3} + \frac{2}{3N}\,.
\]
Thus, when $N > 2$, Eve always has a lower overall guessing probability.
This shows that by increasing the dimension $d$ of the internal packaged state or the dimension $D$ of the external space, one can significantly improve the security against intercept-resend attacks.

On the other hand, the hybrid-packaged subspace provides intrinsic error protection.
By design, all transmitted states lie in the gauge-invariant subspace $\mathcal{H}_{Q=0}$.
Any error that would take the state out of this subspace is energetically suppressed by the superselection rule.

Hybrid packaged space increases alphabet size. 
A larger dimension implies more available symbols for encoding information.
This packaged QKD protocol permits encoding up to $2\log_2(dD)$
classical bits per transmitted state, while simultaneously reducing Eve’s correct measurement probability.

Hybrid packaged space enables flexible implementation. 
The hybrid structure allows one to choose the dimension $d$ (often from naturally occurring multiplets in high energy physics or field theory models) independent on the external degrees of freedom $D$, which may be implemented by other physical means (such as orbital angular momentum, path, or polarization).
This decoupling provides flexibility in experimental design.

\subsection{Device-Independent QKD via Hybrid-Packaged Bell Tests}

Device-independent QKD (DI‑QKD) \cite{MayersYao1998,BarrettHardyKent2005,Acin2007,Pironio2009}
is a cryptographic protocol for two parties (Alice and Bob) to securely share a secret key even if their quantum devices (e.g., sources, detectors) are un-trusted or potentially compromised.
DI-QKD protocols relies on violations of high-dimensional Bell inequalities.
For example, the Collins-Gisin-Linden-Massar-Popescu (CGLMP) inequality \cite{Collins2002} can generate stronger violations in higher dimensions.

In this subsection, we show how to generalize the conventional high‑dimensional DI‑QKD (based on the CGLMP family of inequalities) to the $(d \times D)$-dimensional hybrid-packaged Hilbert space in which all states and observables are gauge-invariant.
Thus, the protocol respects gauge super‑selection at every step.

\subsubsection{Protocol Steps}

\paragraph{(1) Distributing State.}

A source (untrusted) distributes the maximally packaged entangled state 
\begin{equation}\label{EQ:DIQKDRecourseState}
	\ket{\Phi_{0,0}}_{AB}
	=\frac1{\sqrt N}\sum_{J=0}^{N-1}
	\ket{J}_A\!\otimes\!\ket{J}_B,
	\quad
	J:=jD+k
\end{equation}
with each single‑index basis vector $\ket{J}\equiv\ket{j_P}\otimes\ket{k_E}$ lying in $\mathcal H_{Q=0}$. 
The preparation channel may be noisy.
Eve may hold a purifying system $E$.
The only assumption is that $\hat{Q}_A+\hat{Q}_B=0$.

\paragraph{(2) Measurement with Packaged Pauli Operators.}

Define the generalized hybrid-packaged Pauli pair
\[
Z_N\ket{J}= \omega_N^{\,J}\ket{J},\qquad
X_N\ket{J}= \ket{J\oplus1},
\qquad
\omega_N=e^{2\pi i/N},
\]
which act identically on every $\mathcal H_{Q=0}$ block and
obey $X_NZ_N=\omega_NZ_NX_N$.
Because $Z_N$ is diagonal and $X_N$ is a packaged cyclic shift, we have relations $[\,Z_N,\hat{Q}]=[\,X_N,\hat{Q}]=0$.

For the CGLMP inequality, each party needs two $N$-outcome measurements. Following the optimal constructions \cite{Collins2002}, we choose the eigenbasis
\begin{align}\label{EQ:ABEigenBasis}
	\begin{aligned}
		A_0 &:\;\text{eigenbasis of }Z_N, &
		B_0 &:\;\text{eigenbasis of }Z_NX_N^{\tfrac12}, \\
		A_1 &:\;\text{eigenbasis of }Z_NX_N, &
		B_1 &:\;\text{eigenbasis of }Z_NX_N^{-\tfrac12},
	\end{aligned}
\end{align}
where fractional powers of $X_N$ are defined spectrally.
All four POVMs are packaged:
the projectors are sums of $\ket{J}\!\bra{J}$-type operators and hence commute with $\hat{Q}$.

\paragraph{(3) Evaluating Bell Parameter in the Hybrid-Packaged Space.}

Let $P(a,b\!\mid\!x,y)$ be the joint output probabilities when Alice (Bob) chooses setting $x\,(y)\in\{0,1\}$ and obtains outcome $a,b\in\{0,\dots,N-1\}$. 
The $N$-dimensional CGLMP polynomial is

\begin{align}
	\begin{aligned}
		I_N
		=&\sum_{k=0}^{\lfloor N/2\rfloor-1}
		\Bigl[\,P\bigl(A_0=B_0\!+\!k\bigr)
		+P\bigl(B_0=A_1\!+\!k\!+\!1\bigr)
		+P\bigl(A_1=B_1\!+\!k\bigr)
		+P\bigl(B_1=A_0\!+\!k\bigr)\Bigr] \\
		&-\!\bigl(\text{same terms with }k\mapsto-k\!-\!1\bigr).
	\end{aligned}
\end{align}

Thus,
the local‑hidden‑variable (LHV) bound: $I_N^{\text{LHV}}\le 2$, 
and quantum Tsirelson bound: $I_N^{\text{QM}} = 2\,\bigl(1-\tfrac1N\bigr)^{-1}$ 
(exact for the settings Eq.~(\ref{EQ:ABEigenBasis}) and $\ket{\Phi_{0,0}}$).
The violation grows with $N$, e.g., $I_4^{\text{QM}}= \tfrac83$.

Because the packaged measurements evaluate exactly the same correlations as in ordinary qudit DI‑QKD, the standard proofs completely apply after the substitution $N=dD$.

\paragraph{(4) Key‑Generation Rounds vs. Test Rounds.}

Each run is randomly assigned:
\begin{itemize}
	\item Test round (probability $q_{\text{test}}$): 
	Alice and Bob use the four CGLMP settings to estimate $I_N$.
	
	\item Key round (probability $1-q_{\text{test}}$): 
	Both measure in $A_0=B_0$ (the computational packaged basis); the raw key symbols are the outcomes $a=b$.
\end{itemize}

\paragraph{(5) Evaluating Entropy Bound from Bell Violation.}

Using the entropy‑accumulation theorem (EAT) \cite{Pironio2010}, one obtains a lower bound on the conditional smooth min‑entropy of Alice’s key given Eve:
\[
H_{\min}^{\varepsilon}(A^{\text{key}}\!\mid\!E)
\;\ge\;
n_{\text{key}}
\;f\!\bigl(I_N^{\text{obs}}\bigr)
\;-\;\mathcal O\!\bigl(\sqrt{n}\bigr),
\]
where $n_{\text{key}}$ is the number of key rounds, 
$I_N^{\text{obs}}$ the experimentally observed value, and
\[
f(I)=\log_2 N\;-\;h_N\!\Bigl(\tfrac{I-2}{I_N^{\text{QM}}-2}\Bigr),
\qquad
h_N(p):=
-p\log_2\!\frac{p}{N-1}-(1-p)\log_2(1-p)
\]
is a high‑dimensional generalisation of the binary entropy. 
A positive $f(I_N^{\text{obs}})$ certifies secrecy.
The larger $N$, the larger the certified entropy for the same excess violation $\Delta I=I_N^{\text{obs}}-2$.

\paragraph{(6) Evaluating Secret‑Key Rate.}

After error correction leaking $ \text{leak}_{\text{EC}} $ bits, the asymptotic key rate per signal is
\[
R
\;=\;
(1-q_{\text{test}})\Bigl[f\!\bigl(I_N^{\text{obs}}\bigr)
- \text{leak}_{\text{EC}}\Bigr].
\]

For fixed relative violation, $f(I)$ roughly scales as $\log_2 N$.
Every extra packaged internal dimension $d$ increases the alphabet in a multiplicative way and therefore the extractable entropy.
While the external hardware size $D$ may already be limited by experimental constraints (e.g., path or OAM modes).

\subsubsection{Gauge-Invariance of DI‑QKD Protocol in Hybrid-Packaged Space}

All observables in Eq.~(\ref{EQ:ABEigenBasis}) are functions of $X_N, Z_N$, which commute with $\hat{Q}$.
Thus, the protocol never exits the protected sector $\mathcal H_{Q=0}$.

Practical realizations can reuse the same platforms as discussed in packaged QKD protocol (see Sec.~\ref{SEC:QKDProtocolInHybridPackagedSpace}) (multi‑port interferometers for photonic OAM + polarisation, or trapped‑ion internal hyperfine manifold $d\le 8$ combined with $D$ phonon modes).

The Bell‑test loopholes must still be closed (space‑like separation or fast random basis choice).
But gauge protection helps against certain side‑channel attacks that inject charge‑changing noise.

\subsubsection{Advantages of DI‑QKD Protocol in Hybrid-Packaged Space}

Extending DI‑QKD protocol into $(d\times D)$-dimensional hybrid-packaged subspace can yield better key rates per detected signal and additional physical flexibility.
We list the advantages in following table:

\begin{table}[H]
	\centering
	\caption{Advantages of DI‑QKD Protocol in Hybrid-Packaged Space}
	\begin{tabular}[hbt!]{|p{4cm}|p{5cm}|p{5cm}|}
		\hline\hline
		Feature & Ordinary Qudit DI‑QKD & Hybrid-Packaged DI‑QKD \\
		\hline
		Hilbert‑space size & $N=D$ Limited by external DOF & $N=dD$ (internal lift) \\
		\hline
		Bell‑violation strength & Grows as $\sim\log N$ & Same growth, but more $N$ for the same external complexity \\
		\hline
		Gauge leakage & Uncontrolled & Suppressed by $[\mathcal O,\hat{Q}]=0$ \\
		\hline
		Device trust & None (device‑independent) & None plus intrinsic physical filter \\
		\hline
	\end{tabular}
\end{table}

\subsection{Quantum Secret Sharing via Hybrid-Packaged GHZ states}

Quantum Secret Sharing (QSS) \cite{Karlsson1999,Hillery1999,Cleve1999} is a protocol that distributes a secret among multiple parties such that only authorized subsets can reconstruct it.
A prominent method is to use Greenberger-Horne-Zeilinger (GHZ) states, say the three-qubit state $ |\text{GHZ}\rangle = \frac{1}{\sqrt{2}}(|000\rangle + |111\rangle) $, which is generalized to more qubits.

In this subsection, we extend the conventional quantum secret sharing protocol to a new $(d \times D)$-dimensional hybrid-packaged subspace, where the internal space has dimension $d$ and the external space has dimension $D$.

\subsubsection{Definition of the Generalized GHZ State}

In quantum secret sharing, the secret is encoded in a multipartite entangled state.
Here, we have three parties: Alice (A), Bob (B), and Charlie (C).
Let us consider a generalized Greenberger-Horne-Zeilinger (GHZ) state in the high-dimensional hybrid-packaged subspace:
\begin{equation}\label{EQ:GHZStateInHybridPackagedSpace}
	|\mathrm{GHZ}\rangle_{ABC} = \frac{1}{\sqrt{N}} \sum_{j=0}^{N-1} |J\rangle_A \otimes |J\rangle_B \otimes |J\rangle_C\,,
\end{equation}
where $N = d\,D$ and
\[
|J\rangle \equiv |j_P\rangle\otimes |k_E\rangle\,,
\]
is the computational basis of $\mathcal{H}_{\text{hyb}}$.
This implies a mapping between the double index $(j_P,k_E)$ and the single-index $J$ (for example, one may choose a lexicographic ordering).

\begin{property}[Gauge-invariance of GHZ state in hybrid-packaged subspace]
	The generalized GHZ state defined in Eq.~(\ref{EQ:GHZStateInHybridPackagedSpace}) is gauge-invariant.
\end{property}

\begin{proof}
	Because the internal states are packaged and all lie in the $Q=0$ subspace, for each party, we have
	\[
	\hat{Q}\,|J\rangle = \hat{Q}\,(|j_P\rangle\otimes |k_E\rangle) = (\hat{Q}\,|j_P\rangle) \otimes |k_E\rangle = 0\,.
	\]
	
	Furthermore, under any local gauge transformation $U_g$ (acting only on the internal part),
	\[
	U_g\,|J\rangle = \bigl(U_g\,|j_P\rangle\bigr)\otimes |k_E\rangle = e^{i\phi(g)}\,|J\rangle\,.
	\]
	
	Since each of the three parties has the same overall phase, the total state transforms as
	\[
	(U_g \otimes U_g \otimes U_g)|\mathrm{GHZ}\rangle_{ABC} = e^{3i\phi(g)} |\mathrm{GHZ}\rangle_{ABC}.
	\]
	
	Since an overall phase is physically irrelevant, the generalized three parties state $|\mathrm{GHZ}\rangle_{ABC}$ remains within the same superselection sector and therefore is gauge-invariant.
\end{proof}

\subsubsection{Secret Sharing Protocol Outline}

The secret sharing protocol proceeds as follows:

\paragraph{(1) Preparing State.}

A dealer (say, Alice) prepares the above generalized GHZ state
\[
|\mathrm{GHZ}\rangle_{ABC} = \frac{1}{\sqrt{N}} \sum_{j=0}^{N-1} |J\rangle_A \otimes |J\rangle_B \otimes |J\rangle_C\,.
\]
Because the GHZ state is maximally packaged entangled over the entire $N$-dimensional packaged Hilbert space, it encodes correlations across all three parties.

By definition, each basis state $|J\rangle$ is a product of an internal state and an external state:
\[
|J\rangle_A = |j_P\rangle_A \otimes |k_E\rangle_A,\quad \text{and similarly for } B \text{ and } C.
\]
Thus the entire state is an element of
\[
\mathcal{H}_{\mathrm{GHZ}} \subset \mathcal{H}_{\text{hyb}}^{\otimes 3} \subset \left(\mathcal{H}_{Q=0}\right)^{\otimes 3}\,.
\]

Due to the packaging principle, the internal quantum numbers remain locked in and errors that try to alter them (or drive transitions to different $Q$) are suppressed.

\paragraph{(2) Encoding Secret.}

The secret is encoded by correlating a classical message with the outcome of a measurement in a given basis. 
For instance, the dealer may choose one of several mutually unbiased bases (MUBs) for the internal space.
Here, we assume that the dealer applies a local unitary $V_s$ (drawn from a set of unitary encoding operators that are gauge-invariant) to her qudit before distributing the state.

The complete encoding may be written as
\[
|\mathrm{GHZ}_{s}\rangle_{ABC} = \left(V_{s}\otimes I\otimes I\right)|\mathrm{GHZ}\rangle_{ABC}\,,
\]
where $V_s$ is chosen from a set corresponding to the secret message.

\paragraph{(3) Distribution.}

The GHZ state (or encoded GHZ state) is then distributed among the three parties.
The transmission channel $V_{\text{channel}}$ is assumed to commute with $\hat{Q}$ (i.e., $[V_{\text{channel}},\hat{Q}]=0$).
Thus, the state remains in the same gauge sector during propagation:
\[
|\mathrm{GHZ}'\rangle_{ABC} = \left(V_{\text{channel}}^A\otimes V_{\text{channel}}^B\otimes V_{\text{channel}}^C\right)|\mathrm{GHZ}_{s}\rangle_{ABC}\,.
\]

\paragraph{(4) Reconstructing Secret.}

At the reconstruction stage, a designated receiver (say, Charlie) collects measurement results from the other parties (Alice and Bob).
In high-dimensional QSS protocols, the reconstruction is usually accomplished via an appropriate joint measurement on the correlated outcomes.
Mathematically, if Alice and Bob measure their qudits in an agreed-upon basis (say, the computational basis), then the joint measurement outcome will collapse the state to a product state
\[
|J\rangle_A\otimes|J\rangle_B,
\]
which perfectly correlates with Charlie’s state $|J\rangle_C$.
Thus, Charlie can deduce the secret value encoded by the dealer’s unitary $V_s$. 
It is necessary that these measurements are done in bases that are mutually unbiased relative to other possible choices.
In our protocol, we can take advantage of the enlarged internal dimension to define more than two MUBs (for example, if $d>2$, up to $d+1$ MUBs exist in principle) and thereby enhance security.

\subsubsection{Security Analysis.}

With increasing $N = dD$, the security improves because the number of possible outcomes increases.
In secret sharing, an adversary (or eavesdropper) lacks information on which measurement basis was used and therefore must guess from a larger set.
If Eve’s optimal measurement yields a correct guess probability of, say, $1/N$ in an intercept-resend attack, then by increasing $d$ (the internal dimension), Eve’s success probability decreases.
For instance, in a two-qubit (i.e. 4-dimensional) system the guessing probability is $\frac{1}{4}$.
If one uses a three-dimensional internal space ($d=3$) with a two-dimensional external space ($D=2$, so $N=6$), then Eve’s correct measurement probability reduces to $\frac{1}{6}$.

Furthermore, if the dealer uses multiple MUBs on the internal space (e.g., three or more) to encode the secret, then Eve has to guess both the basis and the outcome.
If there are $M$ MUBs employed, then the effective guessing probability is approximately
\[
P_{\mathrm{Eve}} \simeq \frac{1}{M\,N}\,,
\]
assuming uniform distribution of the secret and that Eve’s measurement device is optimized for one basis.
A larger internal space (with $M \le d+1$ MUBs) thus directly contributes to enhanced security.

\subsubsection{Relation to E91/BBM92 QKD Protocols}

When $d=D=2$, this device‑independent Bell-test QKD reduces to E91/BBM92.

\subsubsection{Physical Interpretations}

This packaged secret sharing protocol has several advantages:

First, the packaged secret sharing protocol enhances security via higher dimensions.
The generalized GHZ state in an $N=dD$-dimensional space yields a secret sharing protocol that is inherently more secure:
an eavesdropper who intercepts any part of the state and performs a measurement (in a random basis) faces a much larger outcome space.
For example, if the baseline secret sharing protocol (using two-dimensional qubits) gives Eve a guessing probability of $\frac{1}{2}$, then even with two mutually unbiased bases the correct guess probability is at most $\frac{1}{2}\times\frac{1}{2}=\frac{1}{4}$.
In contrast, for an $N$-dimensional system, the corresponding probability is decreased to approximately $\frac{1}{M\,N}$.

The packaged secret sharing protocol also enhances fault-tolerance and robustness.
Because all operations remain confined to the gauge-invariant subspace, the protocol also benefits from intrinsic error suppression in the face of noise that attempts to induce transitions outside $\mathcal{H}_{Q=0}$.
This error suppression combined with the lower effective guessing probability for Eve, provides an overall robust framework for secure quantum secret sharing.

The high-dimensional secret sharing scheme presented here inherits additional security by increasing the number of orthogonal outcomes.
Provided that the technical challenges of increasing the internal dimension (for instance, using an $SU(3)$ or higher symmetry group) can be met, the new protocol reduces the information available to an adversary in any intercept-resend attack.

\subsection{Randomness Expansion Protocol in Hybrid-Packaged Space}

Randomness expansion \cite{Colbeck2009,Pironio2010,VaziraniVidick2014,LagoRivera2021} is a cryptographic protocol for a quantum device to generate certified random bits using a small initial seed of randomness.
Most importantly, the output randomness is provably unpredictable even if the device’s internal operations are untrusted (a device-independent guarantee).
This is achieved by exploiting the violations of Bell test (e.g., quantum non-locality) to certify randomness beyond classical limits.

In this subsection, we extend randomness expansion protocol to a ($d \times D$)‐dimensional hybrid-packaged subspace where the internal space of dimension $d$ is encoded in gauge-invariant packaged states (with $\hat{Q}=0$) and the external space of dimension $D$ carries other degrees of freedom.

In the hybrid-packaged subspace $\mathcal{H}_{\text{hyb}}$, the increased dimension offers at least two advantages:
(1) A larger alphabet (dimension $N=dD$) reduces the adversary’s correct-guess probability,
(2) The availability of many MUBs (especially in the internal space) further forces an adversary (Eve) to guess the correct basis and therefore reduces her chance of success.

\subsubsection{MUBs in Hybrid-Packaged Space.}

In a high-dimensional protocol, it is an advantage to use measurements from several MUBs.

For the internal space of dimension $d$ (with $d$ a prime or prime-power),
it is known that one can construct ($d+1$) MUBs 
\[
\mathcal{B}_\mu^{\rm int} = \{\,|\psi_{j}^{(\mu)}\rangle: \, j=0,\ldots,d-1\,\},\quad \mu=0,1,\dots,d\,.
\]

For the external space of dimension $D$ (if $D$ is similarly prime or a prime-power),
one can have up to ($D+1$) MUBs 
\[
\mathcal{B}_\nu^{\rm ext} = \{\,|\chi_{k}^{(\nu)}\rangle: \, k=0,\ldots,D-1\,\},\quad \nu=0,1,\dots,D\,.
\]

Thus, for the entire hybrid-packaged subspace of dimension $d \times D$, one can construct a full measurement basis by taking the tensor product of one chosen MUB from the internal space and one chosen MUB from the external space:
\[
\mathcal{B}_{\mu,\nu} = \{\,|\psi_{j}^{(\mu)}\rangle\otimes|\chi_{k}^{(\nu)}\rangle: \, j=0,\dots,d-1,\; k=0,\dots,D-1\,\}\,.
\]

In the ideal case, the measurement probabilities for states prepared in an unbiased state are all equal to $1/(dD)$.

\subsubsection{Protocol Procedure}

We now describe the protocol step by step.

\paragraph{(1) State Preparation and Encoded Randomness.}

\begin{enumerate}
	\item Preparing Initial State:
	
	A trusted source (or one of the parties) prepares a sequence of independent high-dimensional hybrid-packaged states.
	For each run, the source prepares a pure state
	\[
	|\psi\rangle = |m\rangle \in \mathcal{H}_{\text{hyb}}\,,
	\]
	where $m \in \{0,1,\dots,dD-1\}$.
	In an ideal run, the state should be maximally mixed in the measurement basis.
	Alternatively, the device may randomly (or adversarially) output a state whose randomness is certified from the measurement outcomes.
	
	\item Separating Internal-External States:
	
	Recall that
	\[
	|m\rangle = |j,k\rangle = |j_P\rangle\otimes|k_E\rangle\,.
	\]
	So the entire state is contained in the gauge-invariant subspace since
	\[
	\hat{Q}\,|m\rangle=0,\quad U_g\,|m\rangle=e^{i\phi(g)}|m\rangle\,.
	\]
	
	\item Choosing Measurement Basis (MUBs):
	 
	In order to expand randomness, one choose the measurement basis at random from a family of MUBs available for the hybrid space.
	For example, one may randomly select a pair of indices $(\mu,\nu)$ where $\mu$ (for the internal space) is chosen among the $d+1$ available MUBs and $\nu$ (for the external space) is chosen among $D+1$ MUBs.	
	The measurement basis is then
	\[
	\mathcal{B}_{\mu,\nu} = \{\,|\psi_{j}^{(\mu)}\rangle \otimes |\chi_{k}^{(\nu)}\rangle\,:\; j=0,\dots,d-1,\; k=0,\dots,D-1\,\}\,.
	\]
	
	Because these bases are mutually unbiased with respect to the conventional computational basis, any measurement outcome is unpredictable if an eavesdropper does not know the basis choice.
\end{enumerate}

\paragraph{(2) Transmitting Packaged State.}

The prepared state is sent through a quantum channel $\mathcal{E}$ described by the completely positive trace-preserving (CPTP) map
\[
\mathcal{E}(\rho)=\sum_{r} E_r\,\rho\,E_r^\dagger\,,
\]
with the additional property that each error (Kraus) operator commutes with the net-charge operator:
\[
[E_r, \hat{Q}]=0\quad \forall r.
\]
Thus, if the input state is $ \rho = |\psi\rangle\langle\psi|$, then the output state remains in the packaged subspace
\[
\mathcal{E}(|\psi\rangle\langle\psi|) \in \mathcal{H}_{Q=0}\,.
\]
This ensures that all the packaging advantages are maintained, especially the error-protection ability given by superselection.

\paragraph{(3) Measuring and Extracting Randomness.}

\begin{enumerate}
	\item Receiver's Measurement:
	 
	The measurement device (belonging to the user who certifies the randomness) randomly chooses a measurement basis $\mathcal{B}_{\mu,\nu}$ as described above. Suppose the measured observable is
	\[
	\Pi_{m} = |m; \mu,\nu\rangle\langle m; \mu,\nu|,
	\]
	where 
	\[
	|m; \mu,\nu\rangle = |\psi_{j}^{(\mu)}\rangle \otimes |\chi_{k}^{(\nu)}\rangle
	\]
	and the mapping $(j,k)\longrightarrow m$ is one-to-one.

	If the source is ideal, then the probability $p(m)$ of obtaining outcome $m$ is uniform:
	\[
	p(m)=\langle m; \mu,\nu| \rho |m; \mu,\nu\rangle = \frac{1}{dD}\,.
	\]
	
	\item Randomness Expansion:
	 
	In each run, the min-entropy
	\[
	H_{\min} = -\log_2\left(\max_m p(m)\right) = \log_2 (dD)
	\]
	is obtained per measurement. 
	Since the effective dimension $N=dD$ can be much larger than in a pure 2D protocol (or even than in the six-state protocol that has 6 states), the per-measurement randomness (or uniform randomness that can be extracted) is enhanced.
	
	\item Extracting Randomness:
	 
	Using a secure randomness extractor, the users can obtain a nearly uniform random bit string.
	The efficiency per measurement is determined by $\log_2(dD)$.
	Thus, the expansion rate is improved by increasing the internal dimension $d$ or the external dimension $D$.
\end{enumerate}

\subsubsection{Security Analysis and Eve’s Guessing Probability}

\paragraph{(1) Eve’s Probability in the Ideal Scenario.}

Let us assume that an eavesdropper (Eve) intercepts a state prepared uniformly at random from the $(d\times D)$-dimensional hybrid space.
In the absence of any basis information, Eve's optimal guess probability is given by the maximum probability of any outcome:
\[
P_{\mathrm{Eve}} = \frac{1}{dD}\,.
\]

Now assume that additional MUBs are randomly chosen (say $M$ different choices for the internal part and $M'$ for the external part) and the bases are truly mutually unbiased.
If Eve has to guess the correct basis as well, then her overall correct measurement probability becomes
\[
P_{\mathrm{Eve}} \le \frac{1}{M\,(dD)}.
\]
In some protocols, the basis information is revealed after a subset of rounds.
In the worst-case scenario, the probability of a correct guess is bounded by $1/(dD)$.

\paragraph{(2) Comparing with Conventional Six-State QKD.}

In a conventional six-state QKD protocol (on qubits), one employs 6 distinct signal states and Eve’s maximal guessing probability is $1/6$.
For hybrid-packaged protocols with $d \times D$ dimensions, only if we let $dD > 6$, then
\[
P_{\mathrm{Eve}} = \frac{1}{dD} < \frac{1}{6}.
\]
This indicates an improved security in the sense of reduced guessing probability per transmitted state.

\paragraph{(3) Generalizing Randomness Expansion Rate.}

Mathematically, if the total dimension is $N=dD$, then the randomness (in bits) per measurement is
\[
H_{\min} = \log_2 (N) = \log_2(dD)\,.
\]
Thus, by increasing $d$ (the internal packaged dimension) and/or $D$ (the external dimension), the randomness expansion rate improves linearly in the logarithm of the total dimension.

\subsubsection{Advantages of Packaged Randomness Expansion Protocol}

This packaged randomness expansion protocol has the following advantages:

First, the packaged randomness expansion protocol increases alphabet. 
The increase in the overall dimension $N=dD$ directly leads to a lower success probability for an eavesdropper to guess the transmitted symbol as her correct measurement probability is at most $1/(dD)$.

The packaged randomness expansion protocol also enhanced min-entropy. 
The min-entropy per transmitted state is $\log_2(dD)$.
This can be substantially larger than the 1-bit per qubit (or at most 2.585 bits per six-state qubit in ideal cases).
It improved randomness per signal allows for more efficient randomness expansion.

The packaged randomness expansion protocol improve robustness via MUBs. 
Randomly choosing from several MUBs (in the internal space, there are $d+1$ choices) further reduces Eve's success probability if her measurement basis is guessed incorrectly.

Since the internal space is consisted of gauge-invariant packaged quantum states, any transmission and measurement occurs in a restricted subspace protected by superselection rules.
Thus, even if the measurement device is partially untrusted, the inherent physical constraints of gauge-invariance can still protect against certain classes of side-channel attacks.
In addition, a high-dimensional state forces an eavesdropper to resolve more complex, multi-dimensional results.
This significantly reduces her chance of correctly guessing the symbol.

\section{Metrology and Sensing in Packaged Space}
\label{SEC:MetrologyAndSensingInPackagedSpace}

Quantum metrology \cite{Helstrom1969,Caves1981,Wineland1992,Giovannetti2004,Giovannetti2006,LIGO2011} and sensing \cite{Degen2017} use quantum phenomena (such as entanglement, squeezing, and superposition) to enhance the precision of measurements beyond classical limits.
By exploiting these resources, quantum sensors can detect physical quantities (e.g., time, magnetic fields, temperature, or gravitational waves) with unprecedented sensitivity.
This enables breakthroughs in both fundamental science and technology.

In this section, we show how to utilize the gauge‑protected structure of packaged qudits for quantum sensing and parameter estimation.
We begin by recalling the quantum Cramér-Rao bound \cite{Helstrom1967,Helstrom1969,BraunsteinCaves1994,Paris2009}, and then construct specific packaged probes (GHZ‑like, spin‑squeezed, and multi‑parameter MUB probes), compute their quantum Fisher information (QFI), and analyze robustness to typical noise channels.
Finally, we sketch sensing applications like magnetometry, clock synchronisation, and vector‑field tomography.

\subsection{Quantum Parameter Estimation with Packaged Probes}

\subsubsection{Quantum Cramér-Rao bound and QFI}

Suppose we wish to estimate a real parameter $\phi$ encoded unitarily as 
\[
\rho(\phi) = V(\phi)\,\rho_0\,V^\dagger(\phi), 
\quad 
V(\phi)=e^{-i\phi G},
\]
where $G$ is the generator (Hermitian) acting on our hybrid-packaged subspace $\mathcal H_{\text{hyb}}=\mathcal H_{\rm int}^{(d)}\otimes\mathcal H_{\rm ext}^{(D)}$. The variance of any unbiased estimator $\hat\phi$ from $\nu$ repetitions satisfies 
\[
\mathrm{Var}(\hat\phi)\;\ge\;\frac{1}{\nu\,\mathcal F_Q[\rho_0,G]},
\]
where the QFI is 
\[
\mathcal F_Q[\rho_0,G] 
= 4\,\mathrm{Var}_{\rho_0}(G) 
=4\Bigl(\langle G^2\rangle_{\rho_0}-\langle G\rangle_{\rho_0}^2\Bigr).
\]
Because packaged states satisfy $[\hat{Q},G]=0$, all time evolution and measurement stay within the physical sector.

\subsubsection{Packaged GHZ‑like probes}

Let us generalize the usual GHZ to a $N$-site hybrid register, each site has a $(d\times D)$-qudit.
Define 
\[
|\mathrm{GHZ}\rangle
=\frac{1}{\sqrt{d}}\sum_{j=0}^{d-1}
|j_P\rangle^{\otimes N}\otimes|0_E\rangle^{\otimes N}.
\]
Here the external label is fixed to $|0_E\rangle$ and all coherent superposition lives in the internal $d$-dimensional sector.

Let $G=\sum_{k=1}^N\hat\Lambda^{(k)}$, where $\hat\Lambda=|j_0\rangle\langle j_0|$ picks out one internal level (for some reference $j_0$).
Then we have
\[
\langle G\rangle=\frac{N}{d},\quad
\langle G^2\rangle=\frac{N^2}{d},\quad
\mathrm{Var}(G)=N^2\Bigl(\frac1d-\frac1{d^2}\Bigr),\quad
\mathcal F_Q=4N^2\Bigl(\frac1d-\frac1{d^2}\Bigr).
\]
Compared to a conventional qubit GHZ, the Heisenberg scaling $\propto N^2$ is achieved with an extra factor $1/d$.
Our per physical qudit sensitivity is boosted by $\sqrt d$.

\subsubsection{Packaged NOON‑like probes} 

Alternatively, one can encode in the external mode,
\[
|\mathrm{NOON}\rangle
=\frac{1}{\sqrt{2}}
\bigl(|0_P\rangle\otimes|0_E\rangle^{\otimes N}
+|0_P\rangle\otimes|1_E\rangle^{\otimes N}\bigr).
\]
Here, the internal register is frozen in $|0_P\rangle$ and the external modes form a standard NOON state of size $N$.
If $G=\sum_k|1_E\rangle\langle1_E|_k$, then we have $\mathcal F_Q=4N^2$.

Comparing GHZ with NOON, we find that the packaged GHZ trades external coherence for internal superposition.
This offers flexibility that depends on which DOF is cheaper to entangle.

\subsection{Multi‑Parameter Estimation and Vector‑Field Sensing}

Some tasks require simultaneous estimation of several parameters $\boldsymbol\phi=(\phi_x,\phi_y,\phi_z)$.
The total precision is governed by the quantum Fisher information matrix (QFIM)
\[
[\mathcal F_Q]_{\alpha\beta}
= \tfrac12\,\mathrm{Tr}\bigl[\rho\{L_\alpha,L_\beta\}\bigr],
\]
where $L_\alpha$ is the symmetric logarithmic derivative for $\phi_\alpha$.

In the $d \times D$ hybrid-packaged subspace, if $N=dD$ is a prime power, then one can construct $N+1$ mutually unbiased bases (MUBs) $\{\mathcal B^{(r)}\}$.
Measuring in each MUB, one obtains the outcome probabilities $p_{r,j}(\boldsymbol\phi)$.
The classical FI for each basis adds:
\[
\mathcal F_C = \sum_r\sum_j
\frac1{p_{r,j}}
\Bigl(\partial_\alpha p_{r,j}\Bigr)\Bigl(\partial_\beta p_{r,j}\Bigr)
\;\to\;
\mathcal F_Q
\quad\text{(when MUBs are tomographically complete).}
\]

For small rotations about $G_x,G_y,G_z$ (three orthogonal generators), 
$\mathcal F_Q$ is diagonal with entries $4N$.
Since each diagonal entry is $4N$, one achieves $\Delta\phi_\alpha\ge1/(2\sqrt{N\nu})$, i.e. Heisenberg-like scaling (up to the factor of 2).
This is more efficient than conventional separate qubit protocols.

\subsection{Robustness to Gauge‑Preserving Noise}

In the packaged framework, gauge-invariance constrains the form of physically allowable noise.
Consider a Lindblad channel on each hybrid qudit:
\[
\dot\rho
= -i[H,\rho]
+\sum_\mu\kappa_\mu\Bigl(L_\mu\,\rho\,L_\mu^\dagger
-\tfrac12\{L_\mu^\dagger L_\mu,\rho\}\Bigr),
\]
with $[L_\mu,\hat{Q}]=0$.
Such in‑sector noise does not cause leakage, but degrades coherence within $\mathcal H_{Q=0}$.

\paragraph{(1) Dephasing in the internal subspace.}

Apply pure dephasing operation $L=|j\rangle\langle j|$ on each internal level.
The GHZ QFI decays as 
\[
\mathcal F_Q(t)
=4\,\frac{N^2}{d}\,e^{-\gamma t}.
\]
This retains Heisenberg‑like scaling until $t\sim1/\gamma$, beyond which one falls back to SQL.

\paragraph{(2) Cross‑talk detection via leakage syndromes.} 

If a noise operator breaks gauge symmetry $[L,\hat{Q}]\neq0$, then it must take the state out of $\mathcal H_{Q=0}$.
One can detect and discard any such events by measuring $\hat{Q}$ non‑destructively at the end of each checking, which effectively suppress these errors to first order.

\subsection{Applications}

\paragraph{(1) Magnetometry with hybrid-packaged spin‑squeezed states.} 

Consider $N$ hybrid-packaged spin‑$j$ particles ($d=2j+1$) that is prepared in a one‑axis twisted spin‑squeezed state within $\mathcal H_{\rm int}^{(d)}$.
Assume all particles share an external mode $|0_E\rangle$.
The QFI for estimating a small magnetic field $B$ through $G=\sum_kJ_z^{(k)}$ is 
\[
\mathcal F_Q \approx \frac{N\,j}{\xi^2},
\]
where $\xi^2<1$ is the squeezing parameter.
Because $j$ can be large, packaged squeezing can outperform any conventional qubit ensemble for fixed $N$.

\paragraph{(2) Distributed clock synchronisation.} 

Consider two remote labs that share a packaged Bell pair of dimension $N=dD$.
By passing local clocks through controlled‑phase gates with their half of the Bell state, then performing packaged Bell measurements, they estimate relative time shifts with variance 
\[
\mathrm{Var}(\Delta t)
\ge\frac{1}{\nu\,dD\,\omega^2},
\]
where $\omega$ is the optical transition frequency.
The extra factor $d$ raises precision without extra photons.

\paragraph{(3) Vector‑field tomography.} 

Using the three MUB measurement settings in the minimal $N = 3 \times 3$ hybrid-packaged subspace, Alice can infer the three components of a weak vector field $\boldsymbol B$ in a single round, as each MUB setting optimally maps to one field axis.
Because all packaged projectors commute with $\hat{Q}$, systematic bias is suppressed.

\section{Experimental Measurements and Implementations in Packaged Space}
\label{SEC:ExperimentalProspects}

The theoretical framework of packaged qubits, packaged gates, and packaged circuits relies on encoding quantum information in a fixed superselection sector $ \mathcal{H}_Q $ (e.g., net charge $ Q=0 $) such that every state and operation is gauge-invariant.
In other words, for any packaged state $ \lvert \psi \rangle \in \mathcal{H}_Q $ and for all local gauge transformations $ U_g ~ (g \in G) $, one has
\[
U_g \, \lvert \psi \rangle = e^{i\phi(g)} \lvert \psi \rangle.
\]
Similarly, every physically admissible operation (or packaged gate) $ V_{\text{hyb}} $ satisfies
\[
[V_{\text{hyb}}, \hat{Q}] = 0,
\]
so that $ V_{\text{hyb}} : \mathcal{H}_Q \to \mathcal{H}_Q $.

In this section, we discuss several promising experimental platforms for implementing packaged quantum information and provide detailed mathematical formulations that underpin their operation.

\subsection{Experimental Observables in Hybrid-Packaged Space}
\label{SEC:ExperimentalObservablesInHybridPackagedSpace}

\subsubsection{Packaging Principle in the Enlarged Hilbert Space}

Each physical qudit is the tensor product of two independent blocks of degrees of freedom (DOF)
\[
\mathcal H_{\text{phys}}
=\underbrace{\mathcal H_{\text{int}}^{(d)}}_{\text{IQNs, gauge-locked}}
\;\otimes\;
\underbrace{\mathcal H_{\text{ext}}^{(D)}}_{\text{computational DOF}}
\qquad (N=dD).
\]

A packaged (gauge-invariant) state $\,|\psi\rangle\in\mathcal H_Q\subset\mathcal H_{\text{phys}}$ must transform only by a global phase under any element $U_g$ of the internal symmetry group $G$ (which acts exclusively on the first factor):
\begin{equation}\label{EQ:UgPsi}
	U_g\,|\psi\rangle = e^{i\phi(g)}\,|\psi\rangle, 
	\qquad\forall\,g\in G.
\end{equation}

Condition Eq.~(\ref{EQ:UgPsi}) says that all internal quantum numbers (IQNs) (charge, hyper-fine spin, flavour, color, etc.) form one inseparable unit. 
Any experimental confirmation of packaging therefore has to resolve both DOFs of every qudit in a single shot.
A partial measurement must never reveal which internal label $j\in\{0,\dots,d-1\}$ a given external state $|k_E\rangle$ is carrying.

\subsubsection{A Family of Packaged Entangled States}

A natural generalisation of the Bell state (see Sec.~\ref{SEC:HybridPackagedBellBasis}) used in the qubit example is
\begin{equation}\label{EQ:GeneralisationBellState}
	|\Psi^+_{d\times D}\rangle
	\;=\;
	\frac{1}{\sqrt{N}}
	\sum_{j=0}^{d-1}\;
	\sum_{k=0}^{D-1}
	|\,j_P,k_E\rangle_A \;\otimes\; |\bar{\jmath}_P,k_E\rangle_B ,
\end{equation}
where $|\bar{\jmath}_P\rangle$ denotes the conjugate IQN (e.g. opposite electric or color charge) to $|j_P\rangle$. 
Equation (\ref{EQ:GeneralisationBellState}) is an eigenstate of every joint gauge transformation $U_g^{(A)}\!\otimes U_g^{(B)}$ with a single overall phase, i.e. it fulfils Eq. (\ref{EQ:UgPsi}).

By simultaneously measuring $(j,k)$ on side A and $(\bar{\jmath},k)$ on side B, we can reconstruct the joint probabilities
\[
P_{jk} = \Pr\!\bigl[(j,k)_A,\;(\bar{\jmath},k)_B\bigr].
\]
Using state (\ref{EQ:GeneralisationBellState}), we have
\begin{equation}\label{EQ:Pjk}
	P_{j,k} = \begin{cases}
		1/N & j = \overline{\jmath},\\
		0   & j \neq \overline{\jmath}.
	\end{cases}
\end{equation}
Any statistically significant deviation from (\ref{EQ:Pjk}) would signal that IQNs are not perfectly packaged.

\begin{example}[Example ($d=2,\;D=3$)]
	Consider an ion-trapping system that could encode 
	$d=2$ hyper-fine states $\{|0_P\rangle,|1_P\rangle\}$ 
	and $D=3$ axial motional levels $\{|0_E\rangle,|1_E\rangle,|2_E\rangle\}$.
	Then we can do:
	
	\begin{enumerate}
		\item \textit{Internal read-out:}
		state-dependent fluorescence distinguishes $|0_P\rangle$ from $|1_P\rangle$.
		
		\item \textit{External read-out:}
		a blue-side-band $\pi$-pulse plus fluorescence maps the phonon number $\{0,1,2\}$ onto bright/dark outcomes.
		
		\item \textit{Coincidence logic:}
		field-programmable gate array (FPGA)-level time-stamping accepts only events where the two ions are detected in conjugate hyper-fine states and identical phonon numbers, verifying Eq.~\eqref{EQ:Pjk}.
	\end{enumerate}	
\end{example}

By scaling $d$ (more hyper-fine sub-levels) or $D$ (more motional states), the same hardware can certify packaging in any $d\times D$ hybrid space.

\subsection{Candidate Experimental Platforms for Hybrid-Packaged Qudits}
\label{SEC:HybridPlatforms}

For every platform in this subsection, we assume the hybrid-packaged Hilbert space on one site
\[
\mathcal H_{\text{phys}}^{(d \times D)}
=\underbrace{\mathcal H_{\text{int}}^{(d)}}_{\text{IQN block (gauge‐locked)}}
\;\otimes\;
\underbrace{\mathcal H_{\text{ext}}^{(D)}}_{\text{computational block}}
\qquad (N=dD)
\]
is engineered so that
(i) the internal $d$-level subsystem carries the conserved charge $\hat{Q}$ and 
(ii) all control pulses act as $V_{\text{phys}}\in\mathrm U(N)$ with 
$[V_{\text{phys}},\hat{Q}]=0$.
Scaling $d$ or $D$ is therefore a question of choosing different physical levels, not of redesigning the whole architecture.

\subsubsection{Cold-Atom Optical Lattices}

Ultracold atoms in optical lattices offer an ideal setting for simulating lattice gauge theories \cite{Banerjee2012,Zohar2016,Martinez2016}.
In these platforms, one can configure as follows:

\begin{itemize}
	\item Internal block ($d$):
	a set of hyper-fine Zeeman states $\{|m_F\rangle\}$ realising the matter field’s gauge multiplet 
	($|m_F\rangle\!\leftrightarrow\!$ particle, $|\overline{m}_F\rangle\!\leftrightarrow\!$ antiparticle).
	
	\item External block ($D$):
	either vibrational levels in a site, orbital bands, or Rydberg manifolds $\{|k_E\rangle,\,k=0\dots D-1\}$.
\end{itemize}

The effective Hamiltonian generated by Raman and lattice-modulation beams has a large Gauss-law penalty
\[
H = H_{\text{kin}} + H_{\text{gauge}} + \lambda\sum_x\hat G_x^2,\qquad \lambda\!\gg\!|H_{\text{kin}}|.
\]
Low-energy dynamics is restricted to $\mathcal H_Q=\bigl(\mathcal H_{\text{int}}^{(d)}\otimes\mathcal H_{\text{ext}}^{(D)}\bigr)^{\!\otimes L}$. 
Single-site gates such as the hybrid-packaged Hadamard gate
\[
H_N=\frac1{\sqrt N}\sum_{J,K=0}^{N-1}\omega_N^{JK}|J\rangle\!\langle K|,
\quad
\omega_N=e^{2\pi i/N},
\quad
J=jD+k
\]
are driven by two-photon transitions that 
(i) flip the external index $k$ and 
(ii) apply the corresponding phase to the internal label $j$ and therefore commute with $\hat{Q}$.

\subsubsection{Trapped-Ion Chains}

In ion traps \cite{CiracZoller1995,Monroe1995,Kielpinski2002,Gorecki2020}, the internal states of ions serve as qudit levels and can be coupled via phonon modes.
In these platforms, one can configure as follows:
\begin{itemize}
	\item Internal block ($d$): $d$ hyper-fine (or Zeeman) sub-levels of the $^2\!S_{1/2}$ ground state.
	
	\item External block ($D$): quantised axial phonon numbers $|n\rangle,\,n=0\dots D-1$.
\end{itemize}

By engineering the Hamiltonian to include strong couplings that enforce local gauge constraints, one can restrict the dynamics to a fixed $ \mathcal{H}_Q $.
For example, if the ions represent packaged qubits with states $ \lvert 0_P\rangle $ and $ \lvert 1_P\rangle $ (each prepared from an inseparable combination of internal levels), then multi-ion entangling gates (such as Molmer-Sorensen gates) are implemented with laser pulses that are designed to satisfy
\[
[V_{\text{hyb}}, \hat{Q}] = 0.
\]
This ensures that the gauge-invariance is preserved during the gate operations.

A Molmer-Sørensen interaction that is simultaneously detuned on the first $D$ side-bands realises a collective
\[
V_{\text{MS}}(\theta)=\exp\!\Bigl[-i\theta\,S_x\otimes\bigl(\hat a+\hat a^\dagger\bigr)\Bigr],
\]
with $S_x$ acting only inside $\mathcal H_{\text{int}}^{(d)}$. 
Because the drive does not couple to phonon states $n\!\ge\!D$, any leakage outside the computational manifold is automatically flagged (no fluorescence) and subsequently repumped.
All valid gates satisfy $[V_{\text{MS}},\hat{Q}]=0$.

\subsubsection{Superconducting-Circuit Lattices with Synthetic Gauge Fields}

Superconducting qubits (e.g., transmons) \cite{Marcos2013} coupled via microwave resonators provide another platform. 
In these platforms, one can configure as follows:

\begin{itemize}
	\item Internal block ($d$): flux-parity or transmon ``valley'' states carrying $\mathbb Z_d$ charge.
	
	\item External block ($D$): photon-number states $\{|k\rangle_{\text{cav}}\}$ of an on-chip 3-D cavity coupled dispersively to the qubit.
\end{itemize}

By arranging transmons in a lattice and designing flux-tunable Josephson junctions, one can enforce an effective local $ \mathrm{U}(1) $ or $ \mathbb{Z}_2 $ gauge symmetry. The effective Hamiltonian includes strong terms that penalize deviations from a fixed net charge. Logical packaged qubits may be encoded in multi-qubit cells that form a color singlet or net-charge neutral state. Operations implemented via microwave pulses are engineered so that
\[
[V_{\text{hyb}}, \hat{Q}] = 0,
\]
ensuring that the dynamics remain in the desired superselection sector.

Flux-tunable couplers create plaquette terms $B_p \propto \hat\Phi_1\hat\Phi_2\hat\Phi_3\hat\Phi_4$ while a large Stark shift $\lambda(\hat{Q}-\hat{Q}_0)^2$ keeps the system in the neutral sector. 
Microwave drives that address $|k{\,\rightarrow\,}k{+}1\rangle_{\text{cav}}$ are conditioned on the internal charge, giving packaged gates $V_{\text{hyb}}$ with $[V_{\text{hyb}},\hat{Q}]=0$.

\subsubsection{Topological-Spin-Network Architectures}

Topological systems \cite{Kitaev2003,Freedman2003,Nayak2008,Mourik2012} utilize exotic quantum states (e.g., anyons) with inherent fault tolerance due to their non-local encoding.
Spin networks \cite{Loss1998,Kane1998,Petta2005} inherently enforce local $ \mathrm{SU}(2) $ gauge-invariance at each node.
They appear in approaches to quantum gravity.
Logical information can be encoded in the gauge-invariant intertwiners (singlets) of the network.
Although these platforms are more conceptual at present, they offer a natural realization of the packaging principle.

In these platforms, one can configure as follows:
\begin{itemize}
	\item Internal block ($d$): topological charge (anyon type) or tensor-network intertwiner at a node. 
	
	\item External block ($D$): fusion channel, collective flux, or a local Majorana parity sector.
\end{itemize}

Braiding or fusion operations move only within the allowed $\hat{Q}=0$ manifold and the non-Abelian statistics naturally realises high-dimensional $d\times D$ logical qudits.

\subsection{Illustrative Derivation: Cold-Atom Realisation}
\label{SEC:HybridColdAtomDerivation}

To further illustrate, we consider the following derivation in a cold-atom optical lattice:

\begin{enumerate}
	\item Gauge-protected Hamiltonian 
	\[
	H = H_{\text{mat}} + H_{\text{gauge}}
	+\lambda\sum_{x}\hat G_x^{\,2}, 
	\qquad \lambda\gg\|H_{\text{mat}}\|.
	\]
	Under the gauge-constraint, only states with $\hat G_x|\Psi\rangle=0\;\forall x$ survive in the low-energy sector, i.e. 
	$\mathcal H_{\text{low}}=\mathcal H_{Q=0}^{(d)}\otimes\mathcal H_{\text{ext}}^{(D)}\equiv\mathcal H_Q$.
	
	\item Hybrid-packaged logical basis
	 
	For each site define
	\[
	|0_L\rangle=\frac1{\sqrt d}\sum_{j=0}^{d-1}|j_P,\,0_E\rangle,\qquad
	|1_L\rangle=\frac1{\sqrt d}\sum_{j=0}^{d-1}\omega_d^{\,j}|j_P,\,1_E\rangle,
	\]
	which both obey $\hat{Q}|0_L\rangle=\hat{Q}|1_L\rangle=0$. 
	Superpositions span a qudit of dimension $D$ inside $\mathcal H_Q$.
	
	\item Gauge-respecting gate
	 
	A laser-induced Raman pulse that couples $|k_E\rangle\leftrightarrow|k\!+\!1_E\rangle$ with phase that depends on $j_P$ realises the hybrid Fourier gate	
	$$
	H_N|j,k\rangle=
	\frac1{\sqrt{dD}}
	\sum_{j'=0}^{d-1}\sum_{k'=0}^{D-1}
	\omega_{dD}^{\,(jD+k)(j'D+k')}
	|j',k'\rangle.
	$$
	
	Because $j_P$ is conserved and the drive never leaves the first $D$ external levels, 
	$[H_N,\hat{Q}]=0$ and $H_N:\mathcal H_Q\rightarrow\mathcal H_Q$.
\end{enumerate}

Thus, all control sequences remain strictly inside the $(d \times D)$-dimensional hybrid-packaged subspace, but still allow universal manipulation of the external computational register.

\section{Discussion}

This work only focuses on the principle and analytic mathematical derivations.
Due to the paper length, we leave detailed numerical and experimental studies for future works.
The packaged quantum states for quantum simulation is already discussed in another manuscript Ref.\cite{MaLGT2025}.

This packaged quantum states framework offers a new way to quantum information science by encoding in a fixed superselection sector, where all IQNs are irreversibly bound together.
This packaging principle naturally enforces that any physical operations or error process must preserve the net charge or equivalent internal quantum numbers.
As a result, the superselection firewall prevents or at least suppresses error channels that would mix different gauge sectors.
Our derivations demonstrate that every packaged qubit (qudit), packaged gate, and packaged circuit commutes with the net-charge operator, i.e., $[V_{\text{hyb}}, \hat{Q}] = 0$.
This restriction reduces the effective error space to only those errors that are gauge-invariant and therefore offers a significant intrinsic error protection mechanism.

The framework also exhibits the possibility of enhancing fault tolerance by a robust error protecting mechanism.
Gauge-violating errors (which change the net charge) are energetically suppressed by factors on the order of $e^{-\Delta/(k_BT)}$, where $\Delta$ is the energy gap that protects the gauge sector.
As a result, the effective error rate $p_{\mathrm{eff}}$ becomes much lower than the generic physical error rate.
This enables error-correction codes (such as the Shor, Steane, and surface codes) to adapt to packaged states and therefore achieve higher fault-tolerant thresholds.
By confining all operations to a single superselection sector, our approach minimizes the potential impact of decoherence and other error mechanisms.

The advantages naturally extend to quantum communication.
Protocols such as teleportation, superdense coding, and quantum key distribution (QKD) benefit from the robustness provided by packaged resource states.
Because the transmitted packaged quantum states remain within the same net-charge subspace, joint measurement and conditional correction operations implemented with gauge-invariant packaged gates can guarantee that the information stays accurate, consistent, and logically organized.
This means that noise or interference is effectively filtered out by the superselection rules.

Moreover, the packaged quantum states framework can well align with current experimental platforms that inherently enforce gauge-invariance, such as cold-atom optical lattices, ion-trap systems, and superconducting circuits with synthetic gauge fields.
The packaged approach provides a clear way for using these systems for robust quantum information processing, although there are challenges in achieving long coherence times and in precisely measuring all IQNs.

On application side, there exist several challenges.
For example, many candidate physical systems (like neutral mesons) suffer from short lifetimes or rapid oscillations.
These hinder their use as stable quantum carriers.
Furthermore, it is experimentally difficult to perform joint measurements of all IQNs to verify that a state is truly packaged.
Scaling up to large arrays of packaged qubits while preserving strict gauge-invariance also presents nontrivial technical challenges.

Despite these obstacles, the packaged quantum states paradigm represents a promising alternative to conventional qubit models.
By directly incorporating fundamental physical symmetries into the very foundation of quantum information processing, this approach not only provides intrinsic error suppression and enhances fault tolerance but also holds the potential to revolutionize the design of secure and scalable quantum devices.

\section{Conclusion}

In summary, we have developed a framework for gauge-invariant quantum computation and communication based on packaged quantum states.
By encoding logical information within a fixed superselection sector where IQNs are inseparably locked together, we ensure that every physical operation (from unitary gates to measurements) commutes with the net-charge operator.
This fundamental constraint not only reflects the natural structure of quantum field theories but also provides inherent error protection by excluding gauge-violating noise channels.

We have demonstrated that the conventional quantum algorithms (including the QFT, QPE, Quantum Walks, and Grover’s algorithm), communication protocols (teleportation, superdense coding, and QKD), and quantum error-correction codes can be adapted to this packaged framework.
Our detailed derivations show that all operations, resource states, and error-correction procedures remain confined to the physical subspace, and therefore offer enhanced robustness and potentially higher fault-tolerant thresholds.

We hope that this work can lay a mathematical and conceptual foundation for integrating gauge-invariance into quantum information processing although significant challenges (especially regarding stability, measurement, and scalability) remain to be addressed.
It is possible to realize these concepts by experimental implementations in platforms such as cold atoms, ion traps, superconducting circuits, and photonic systems.
Finally, the packaged quantum states framework may provide a pathway toward robust, secure, and scalable quantum technologies that are naturally protected by the fundamental symmetries of nature.

\appendix

\section{Packaging Principle}
\label{SEC:PackagingPrinciple}

The packaging principle \cite{Ma2017,Ma2025} reflects fundamental constraints and formation of packaged states.
We can decompose the packaging principle into four parts:

\subsection{No Partial Factorization of IQNs}

Under the gauge group $G$, a single‐particle creation operator must transform as a full irrep.
One cannot independently factor out the various IQN components.

Let $G$ be a local gauge group (e.g., $U(1)$, $SU(3)$ or a discrete gauge group $\mathbb{Z}_n$ in certain condensed‐matter systems).
In canonical quantization, each field operator $\psi(x)$ (e.g., the electron field $\psi_e(x)$ in QED) transforms in an irrep of $G \times \mathrm{Lorentz}$.
Consequently, the creation operators $\hat{a}^\dagger$ or $\hat{b}^\dagger$ carry all IQNs in an inseparable block.
One cannot split or distribute the electric charge or color factor among multiple parts of a single excitation.
This is a direct result of local gauge-invariance \cite{WWW1952,PeskinSchroeder,WeinbergBook} and can be explained by Schur’s lemma:
any operator that commutes with all elements of an irreducible representation must be proportional to the identity.
The local irrep structure prohibits local partial access to an individual IQN.

\begin{example}
	Examples illustrating the impossibility of partial IQN factorisation:
	
	\begin{itemize}
		\item Electron Field (with gauge group $U(1)$ of QED): 
		The electron creation operator $\hat{a}_{e^-}^\dagger$ carries the complete set of IQNs as an indivisible block.
		It is an irrep with electric charge $-e$ and an additional quantum label such as spin $\tfrac12$ (we do not consider spin here).
		One cannot remove the charge label $-e$ from other IQNs and treat it as a separate subsystem because the local gauge group $U(1)$ does not permit that factorization.
		
		\item Quark Field (with gauge group $SU(3)$ of QCD): 
		The quark creation operator $\hat{q}_\alpha^\dagger$ carries the complete set of IQNs as an indivisible block.
		It transforms in the $\mathbf{3}$ color representation (with $\alpha \in \{r,g,b\}$) plus spin.
		One cannot factor out the color (denoted by $\alpha$) from the quark’s other IQNs and manipulate it independently.
		This is because color is locally gauged and appears in an irreducible block.
	\end{itemize}
\end{example}

In short, the IQNs come in indivisible blocks under local gauge transformations.

\subsection{Single net-charge superselection sector}

According to the Doplicher-Haag-Roberts (DHR) theory \cite{DHR1971,DHR1974} and related results \cite{WWW1952,StreaterWightman2001}, we know that quantum states of different total gauge charge $Q$ lie in orthogonal Hilbert subspaces $\mathcal{H}_Q$ separated by superselection.
Therefore, any multi‐particle wavefunction must fully reside in a single net‐charge sector.
The total Hilbert space then decomposes as
\begin{equation}\label{EQ:TotalHilbertSpaceDecomposeition}
	\mathcal{H} \;=\; \bigoplus_{Q} \mathcal{H}_Q,
	\quad
	\lvert\Psi\rangle\in \mathcal{H}_Q ~(\text{for a fixed }Q).
\end{equation}

Because states $\lvert\Psi\rangle$ from different net-charge sectors $\mathcal{H}_Q$ are orthogonal, they cannot superpose in any way.
Specifically, we cannot form a superposition like
\[
\alpha \lvert \Psi \rangle_{Q} \;+\; \beta \lvert \Phi \rangle_{Q'}
\] 
where $Q \neq Q'$.

This superselection rule guarantees that any coherent quantum superposition must involve states within the same charge sector.
In other words, physical states must lie in one net-charge sector and superposition must occur in the same charge sector.

\begin{example}
	Mixing the state of an electron-positron pair (with total electric charge $Q=0$) and the state of a single electron (with total charge $Q=-e$) is not allowed because it would violate superselection. 
\end{example}

\subsection{Multi-particle Packaged Entanglement}

Each creation operator $\hat{a}^\dagger$ is itself an irreducible package of IQNs.
After combining multiple operators, if the total net charge is fixed, then we can form nontrivial superpositions (packaged entanglements) within the single net-charge sector.
Formally, we give the definition of a packaged entangled state:

\begin{definition}[Packaged Entangled State]
	Let $G$ be a gauge group and $\{|\Theta_k\rangle\}$ be a complete orthonormal basis of charge sector $\mathcal{H}_Q$, where each
	\[
	|\Theta_k\rangle
	\;=\;
	\hat{a}^\dagger_{\alpha_{k,1}}\,\hat{a}^\dagger_{\alpha_{k,2}}\,\cdots\,\hat{a}^\dagger_{\alpha_{k,n}}\,|0\rangle
	\]
	is a multiparticle packaged product state.
	Define a packaged superposition state
	\begin{equation}
		|\Psi\rangle
		\;=\;
		\sum_{k} c_k\,|\Theta_k\rangle
	\end{equation}
	where $c_k$ are complex numbers and $k$ runs only over states in the fixed charge sector as required by superselection rules.
	We say $|\Psi\rangle$ is a \textbf{packaged entangled state} if it satisfies the following conditions:
	\begin{enumerate}		
		\item Each creation operator $\hat{a}^\dagger_{\alpha_{k,i}}$ in $|\Theta_k\rangle$ carries an irrep of $G$, 
		
		\item $|\Psi\rangle$ is confined to one superselection sector $\mathcal{H}_Q$,
		
		\item $|\Psi\rangle$ is non-factorizable, i.e., cannot be written as a product of single‐particle states over the entire system (entangled).
	\end{enumerate}
\end{definition}

In other words, a packaged entangled state is a packaged superposition state within a single net-charge or net-color \cite{Gross1973,Politzer1973} (net-flavor) sector, constructed from irreps that bundle internal charges.
It preserves local gauge-invariance and are critical for gauge-invariant quantum information.

\begin{example}[Electron-Positron Pairs]
	\label{ex:ElectronPositronPairs}
	Consider the net zero charge sector $\mathcal{H}_0$ spanned by $\hat{a}_{e^-}^\dagger\hat{b}_{e^+}^\dagger \lvert 0\rangle$ and $\hat{b}_{e^+}^\dagger\hat{a}_{e^-}^\dagger \lvert 0\rangle$.
	A packaged superposition state
	\[
	\alpha \,\hat{a}_{e^-}^\dagger(\mathbf{p}_1)\,\hat{b}_{e^+}^\dagger(\mathbf{p}_2)\lvert0\rangle 
	\;+\; 
	\beta \,\hat{b}_{e^+}^\dagger(\mathbf{p}_1)\,\hat{a}_{e^-}^\dagger(\mathbf{p}_2)\lvert0\rangle
	\]
	lies entirely in the charge sector $\mathcal{H}_0$ and is non‐separable across the two excitations. 
	Therefore, it is an exact example of a \textit{packaged entangled state}.
\end{example}

\subsection{Hybrid External-Internal Entanglement}

In a packaged state (either entangled or unentangled), the IQNs are gauge‐locked but external DOFs (spin, momentum, position) are still unconstrained.
In fact, we can construct hybrid-packaged entanglement between external DOFs and IQNs within a single net‐charge sector \cite{Ma2025}.
For example, a single-particle operator may carry both an external DOF (spin-$\frac{1}{2}$) and a gauge charge ($-e$) in a product representation, i.e., spin-$\frac{1}{2}$ $\otimes$ charge $-e$ (a tensor product of a representation of $\mathrm{SU}(2)_{\text{spin}}$ and one of $U(1)_{\text{charge}}$).
This allows for hybrid-packaged entangled states where spin (or momentum) becomes correlated with the internal charges.

If the total wavefunction is entangled across spin and charge, then a measurement of the spin subsystem on one particle can collapse the entire spin‐charge wavefunction.
Since spin carries no gauge charge, although measuring the spin on one particle collapses the entire spin‐charge wavefunction, superselection ensures the system’s charge sector remains fixed.

Thus, even if the internal charge is fixed and inseparable, one can still manipulate and entangle the external degrees of freedom, which leads to hybrid states that are especially useful in quantum information tasks.

\begin{example}[Hybrid Spin-Charge Entangled Electron-Positron Pair]
	\label{ex:HybridExample}
	Consider the electron-positron pair in Example \ref{ex:ElectronPositronPairs}.
	Now each particle can be spin-up $\uparrow$ or spin-down $\downarrow$. 
	A simple hybridized packaged entangled state is:	
	\[
	\bigl[
	\alpha \; \hat{a}_{e^-,\uparrow}^\dagger(\mathbf{p})\,\hat{b}_{e^+,\downarrow}^\dagger(\mathbf{q}) 
	\;+\;
	\beta \; \hat{b}_{e^+,\downarrow}^\dagger(\mathbf{p})\,\hat{a}_{e^-,\uparrow}^\dagger(\mathbf{q})
	\bigr]\,\lvert 0\rangle,
	\]
	where both terms lie in the net $Q=0$ sector. 
	Each creation operator $\hat{a}_{e^-,\uparrow}^\dagger$ or $\hat{b}_{e^+,\downarrow}^\dagger$ is a packaged operator carrying charge $\pm e$ and spin $\uparrow$ or $\downarrow$.
	
	The packaged entanglement is hybrid because the spin and IQNs are cross entangled.	
	Measuring the spin of one particle will collapse the entire state, including the inseparable gauge part, which is a unique feature of these hybrid states.
\end{example}

The above four points together form the packaging principle.
In short, all internal charges are locked into irreps, superselection forbids cross-charge superpositions, and any allowed entanglement must reside fully within one net-charge sector.
Consequently, multi-particle states in quantum field theories can display intricate correlations (packaged entanglement), though always subject to gauge constraints.

\section{Universality of Gate Set ``Clifford-Single-Index + $\Theta_r$''}
\label{APD:CliffordSingleIndexThetar}

Here, we prove the universality of hybrid-packaged gate set defined in Eq.~(\ref{EQ:FiniteUniversalGateLibrary2}), i.e.,
\[
\mathcal G_{\text{single}} = \{\,X_N, Z_N, H_N, \text{CSUM}_{N}, \Theta_r\}.
\]
in four steps:

1. Show that $X_N,\,Z_N,\,H_N$ generate all single-qudit
Cliffords (Lemma \ref{LEM:SingleQuditCliffordCompleteness}).

2. Add $\text{CSUM}_N$ to turn local Cliffords into the full
multi-qudit Clifford group
(Lemma \ref{LEM:TwoQuditEntanglingClifford}).

3. Append one carefully chosen non-Clifford phase
$\Theta_r$ and invoke a number-theoretic argument to reach the
whole $\mathrm{SU}(N)$
(Proposition \ref{Prop:CliffordPlusThetaR}).

4. Conclude a gauge-respecting Solovay-Kitaev theorem
(Theorem \ref{THM:Universality} and the corollary).

Throughout, we stay in the neutral sector so that gauge symmetry is
preserved automatically.

The single-qudit Pauli group (phase factors included) is
\begin{equation}\label{EQ:SingleQuditPauliGroup}
	\mathcal P_N:=\{\;\omega_N^{c}\,X_N^{a}Z_N^{b}\;:\;a,b,c\in\mathbb Z_N\},
\end{equation}
where $N=dD$ is the hybrid-qudit dimension and $\omega_N=e^{2\pi i/N}$.
The Clifford group
$\mathcal C\ell_N:=\mathrm N_{\mathrm U(N)}\!\bigl(\mathcal P_N\bigr)$
is its normaliser.
After modding out irrelevant global phases, we write
$\overline{\mathcal C\ell}_N:=\mathcal C\ell_N/\mathrm U(1)$.

\begin{lemma}[Completeness of Single-Qudit Clifford Generators]
	\label{LEM:SingleQuditCliffordCompleteness}
	Let 
	\[
	\mathcal S=\langle X_N,\;Z_N,\;H_N\rangle
	\subset\mathrm U(N),\quad 
	H_N=\frac1{\sqrt N}\sum_{J,K=0}^{N-1}\omega_N^{JK}|J\rangle\!\langle K|.
	\tag{A.3}
	\]
	Then $\mathcal S$ is the full single-qudit Clifford group: 
	\[
	\mathcal S = \mathcal C\ell_N,\qquad 
	\overline{\mathcal S}=\overline{\mathcal C\ell}_N\cong\operatorname{SL}\!\bigl(2,\mathbb Z_N\bigr) \ltimes \mathbb Z_N^{2}.
	\]
\end{lemma}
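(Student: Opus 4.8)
The plan is to realise $\mathcal C\ell_N$ as the normaliser of the Heisenberg–Weyl (Pauli) group $\mathcal P_N$ of Eq.~(\ref{EQ:SingleQuditPauliGroup}) and to control it through the conjugation action on $\mathcal P_N$. For any $U\in\mathcal C\ell_N$ and any Pauli $\omega_N^{c}X_N^{a}Z_N^{b}$, the conjugate $U(\omega_N^{c}X_N^{a}Z_N^{b})U^{\dagger}$ is again a Pauli, so $U$ induces an affine symplectic map $(a,b)\mapsto M_U\,(a,b)^{\mathrm T}+t_U$ on the phase space $\mathbb Z_N^{2}$, with $M_U\in\mathrm{SL}(2,\mathbb Z_N)$ (for $2\times2$ matrices the symplectic condition is just $\det=1$). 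This is the canonical homomorphism $\mu:\mathcal C\ell_N\to\mathrm{SL}(2,\mathbb Z_N)\ltimes\mathbb Z_N^{2}$, and the structure theorem that its kernel is exactly the global phases gives the stated isomorphism $\overline{\mathcal C\ell}_N\cong\mathrm{SL}(2,\mathbb Z_N)\ltimes\mathbb Z_N^{2}$. The inclusion $\mathcal S\subseteq\mathcal C\ell_N$ is the easy direction: each generator normalises $\mathcal P_N$, since $Z_NX_NZ_N^{\dagger}=\omega_N X_N$, $H_NX_NH_N^{\dagger}=Z_N$ and $H_NZ_NH_N^{\dagger}=X_N^{-1}$. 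All the content therefore sits in showing $\mu(\mathcal S)$ is the \emph{whole} affine group.

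I would split this surjectivity along the semidirect product. The translation part $\mathbb Z_N^{2}$ is hit by the Paulis: $X_N$ and $Z_N$ realise the elementary translations $(1,0)$ and $(0,1)$, so $\langle X_N,Z_N\rangle$ already surjects onto the kernel of the projection to $\mathrm{SL}(2,\mathbb Z_N)$. It then remains to generate the linear part, i.e.\ to show the symplectic images $\mu_{\mathrm{lin}}(X_N),\mu_{\mathrm{lin}}(Z_N),\mu_{\mathrm{lin}}(H_N)$ generate $\mathrm{SL}(2,\mathbb Z_N)$. The Paulis map to the identity, while $H_N$ maps to the order-four Fourier element $S=\bigl(\begin{smallmatrix}0&-1\\1&0\end{smallmatrix}\bigr)$. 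The closing move would be the classical generation theorem $\langle S,T\rangle=\mathrm{SL}(2,\mathbb Z_N)$, where $T=\bigl(\begin{smallmatrix}1&0\\1&1\end{smallmatrix}\bigr)$ is a shear (equivalently, $\mathrm{SL}(2,\mathbb Z_N)$ is generated by its two elementary transvections, and $S$ conjugates one into the other), followed by a routine order count to confirm that nothing is missed.

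The hard part — and, I believe, the genuine obstruction — is precisely this symplectic-generation step. Every word in $H_N,X_N,Z_N$ projects under $\mu_{\mathrm{lin}}$ to a power of $S$ alone, because the Paulis project to the identity; hence $\langle\mu_{\mathrm{lin}}(X_N),\mu_{\mathrm{lin}}(Z_N),\mu_{\mathrm{lin}}(H_N)\rangle=\langle S\rangle\cong\mathbb Z_4$, a cyclic group. For $N>2$ this is far from all of $\mathrm{SL}(2,\mathbb Z_N)$: no word in the listed generators can produce a shear/transvection $T$. Thus, as literally stated, $\mathcal S$ generates a proper subgroup rather than the full Clifford group. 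The natural repair is to adjoin the multiplier (phase) gate $S_N=\sum_{J}\omega_N^{\,cJ^{2}}|J\rangle\langle J|$, for which $S_NX_NS_N^{\dagger}=\omega_N^{c}X_NZ_N^{2c}$, so choosing $2c\equiv1$ gives $\mu_{\mathrm{lin}}(S_N)=T$ and then $\langle S,T\rangle=\mathrm{SL}(2,\mathbb Z_N)$ finishes the argument; one must take the usual metaplectic care when $N$ is even (where $2c\equiv1$ is unsolvable, replacing $J^{2}$ by $J(J+N)$).

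Accordingly, I would either restate the lemma with $S_N$ added to the generating set $\mathcal S$, or note that the exact identity $\mathcal S=\mathcal C\ell_N$ is inessential for the downstream goal: the universality of $\mathcal G_{\mathrm{single}}$ (Theorem~\ref{THM:UniversalGaugeRespectingGateSet}) only needs the closure of the gate set to be dense in $\mathrm{SU}(N)$, and the non-Clifford phase $\Theta_r$ supplies the diagonal richness from which the missing shear — and hence the full Clifford group — can be recovered. In this reading the lemma should be phrased as an identification of the Clifford \emph{closure} once a shear is present, and the only substantive verification remaining is the transvection generation theorem for $\mathrm{SL}(2,\mathbb Z_N)$ together with the phase-quotient computation of the kernel.
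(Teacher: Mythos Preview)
Your analysis is correct and, in fact, exposes a genuine error in the paper's own proof. Both you and the paper proceed via the standard homomorphism $\mu:\mathcal C\ell_N\to\mathrm{SL}(2,\mathbb Z_N)\ltimes\mathbb Z_N^{2}$ coming from the conjugation action on Pauli labels. The paper's step (A.4), however, asserts that conjugation by $X_N$ sends $(a,b)\mapsto(a+b,b)$ and conjugation by $Z_N$ sends $(a,b)\mapsto(a,b+a)$. This is false: since $X_N(X_N^{a}Z_N^{b})X_N^{-1}=\omega_N^{b}X_N^{a}Z_N^{b}$, conjugation by any Pauli fixes the label $(a,b)$ up to phase---exactly as you observe. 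Thus the paper's claim that $X_N,Z_N,H_N$ ``generate all shear and rotation matrices in $\operatorname{SL}(2,\mathbb Z_N)$'' is unsupported; the linear image of $\mathcal S$ is indeed only $\langle S\rangle\cong\mathbb Z_4$.

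Your diagnosis that the lemma is false as literally stated for $N>2$, and your proposed repair---adjoin the quadratic phase gate $S_N$ so that $\mu_{\mathrm{lin}}(S_N)$ supplies the missing transvection $T$, then invoke $\langle S,T\rangle=\mathrm{SL}(2,\mathbb Z_N)$---is the standard and correct fix. Your remark about the even-$N$ metaplectic subtlety is also appropriate. Finally, your observation that the downstream universality argument (Theorem~\ref{THM:Universality}) survives because $\Theta_r$ can fill the gap is a sensible salvage of the paper's overall logic, though strictly speaking one should then verify that $\Theta_r$ together with $H_N$ does generate a shear, which requires a short additional computation.
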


\begin{proof}
	$X_N$ and $Z_N$ let us slide around the Pauli grid.
	The hybrid Fourier transform $H_N$ flips position and momentum axes.
	Those together reproduce $\operatorname{SL}(2,\mathbb Z_N)$ that acts on the label space $\mathbb Z_N^{2}$.
	Specifically:
	
	\begin{enumerate}
		\item \textit{All three gates are Clifford.} 
		Referring to Sec.~(\ref{SEC:ElementaryHybridPackagedGates}), we have $X_N,Z_N \in \mathcal P_N \subset \mathcal C\ell_N$.
		We also have $H_NX_NH_N^{\dagger}=Z_N$ and $H_NZ_NH_N^{\dagger}=X_N^{\dagger}$, which means that $H_N$ permutes Pauli operators and therefore $H_N\in\mathcal C\ell_N$.
		This gives $\mathcal S\subseteq\mathcal C\ell_N$.
		
		\item \textit{Action on the Pauli label space.} 
		Factor out global phases and label each Pauli operator by the pair $(a,b)\in\mathbb Z_N^{2}$. 
		Conjugation acts as 
		\[
		X_N:\;(a,b)\mapsto(a+b,b),\qquad
		Z_N:\;(a,b)\mapsto(a,b+a),\qquad
		H_N:\;(a,b)\mapsto(-b,a).
		\tag{A.4}
		\]
		These three transformations generate all shear and rotation matrices in $\operatorname{SL}(2,\mathbb Z_N)$.
		Together with the translation subgroup produced by $X_N, Z_N$, we obtain the full semidirect product $\operatorname{SL}(2,\mathbb Z_N)\ltimes\mathbb Z_N^{2}$.
		
		\item \textit{Maximality (noe room left).} 
		$\overline{\mathcal C\ell}_N$ is defined as that semidirect product, so $\overline{\mathcal S}=\overline{\mathcal C\ell}_N$.
		Including global phases gives $\mathcal S=\mathcal C\ell_N$.
	\end{enumerate}
\end{proof}

This shows that $\{X_N,Z_N,H_N\}$ already equals ``all single-qudit Cliffords''.

\begin{lemma}[Two-Qudit Entangling Clifford]\label{LEM:TwoQuditEntanglingClifford}
	Let
	\[
	\text{\rm CSUM}_N=\sum_{J=0}^{N-1}|J\rangle\!\langle J|\otimes X_N^{J}.
	\] 
	The set 
	\[
	\langle\;X_N^{(i)},\,Z_N^{(i)},\,H_N^{(i)},\,\text{\rm CSUM}_N^{(i\!\to j)}\;\rangle
	\]
	acting on $k\ge 2$ qudits generates the whole $k$-qudit Clifford group $\mathcal C\ell_N^{(k)}$.
\end{lemma}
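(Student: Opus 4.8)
The plan is to run the standard normaliser argument for the Clifford group, adapted to the hybrid qudit dimension $N=dD$, and to reduce the claim to a generation statement for the symplectic group $\operatorname{Sp}(2k,\mathbb Z_N)$. First I would dispose of the easy inclusion $\langle X_N^{(i)},Z_N^{(i)},H_N^{(i)},\text{CSUM}_N^{(i\to j)}\rangle\subseteq\mathcal C\ell_N^{(k)}$. By Lemma~\ref{LEM:SingleQuditCliffordCompleteness} the single-qudit generators already lie in $\mathcal C\ell_N$, and a one-line computation shows that $\text{CSUM}_N$ conjugates the two-qudit Pauli generators back into $\mathcal P_N^{\otimes k}$, namely $X_N^{(1)}\mapsto X_N^{(1)}X_N^{(2)}$, $Z_N^{(2)}\mapsto Z_N^{(1)\dagger}Z_N^{(2)}$, with $Z_N^{(1)}$ and $X_N^{(2)}$ fixed. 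Since every generator normalises $\mathcal P_N^{\otimes k}$, so does the group they generate.

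For the reverse inclusion I would pass to the quotient $\overline{\mathcal C\ell}_N^{(k)}\cong\operatorname{Sp}(2k,\mathbb Z_N)\ltimes\mathbb Z_N^{2k}$, in which the symplectic factor encodes the linear action on the Pauli labels $(a_1,b_1,\dots,a_k,b_k)\in\mathbb Z_N^{2k}$ and the translation factor $\mathbb Z_N^{2k}$ is furnished directly by the Pauli generators $X_N^{(i)},Z_N^{(i)}$. It then suffices to show that the symplectic images of the generators exhaust $\operatorname{Sp}(2k,\mathbb Z_N)$. The local Cliffords contribute, again by Lemma~\ref{LEM:SingleQuditCliffordCompleteness}, the block-diagonal subgroup $\operatorname{SL}(2,\mathbb Z_N)^{\times k}$, while each $\text{CSUM}_N^{(i\to j)}$ contributes one off-diagonal symplectic transvection coupling blocks $i$ and $j$.

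The main obstacle is precisely showing that the block-diagonal $\operatorname{SL}(2,\mathbb Z_N)$ pieces together with the coupling transvections generate all of $\operatorname{Sp}(2k,\mathbb Z_N)$. Over a field this is the classical theorem that symplectic transvections generate the symplectic group, but here $N=dD$ may be composite, so the argument must be carried out over the ring $\mathbb Z_N$. I would handle this by the Chinese-remainder splitting $\mathbb Z_N\cong\prod_i\mathbb Z_{p_i^{e_i}}$, treat each factor $\operatorname{Sp}(2k,\mathbb Z_{p^e})$ separately, and invoke the known generation of $\operatorname{Sp}(2k,\mathbb Z_{p^e})$ by elementary (transvection-type) symplectic matrices, lifting from $\mathbb Z_p$ to $\mathbb Z_{p^e}$ by a standard surjectivity/Hensel argument. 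Conjugating the single coupling transvection by the local $\operatorname{SL}(2,\mathbb Z_N)$ factors then yields the full family of transvections in every coordinate two-plane, which are exactly the needed elementary generators.

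Finally I would reassemble the phase data: commutators of the lifted Clifford words reproduce the required $\omega_N$ phases, so that after restoring global phases the generated group equals $\mathcal C\ell_N^{(k)}$ on the nose and not merely modulo $\mathrm U(1)$. Throughout, every generator commutes with $\hat Q_{\mathrm{tot}}$, so the entire construction stays inside the neutral sector and the conclusion is automatically gauge-respecting.
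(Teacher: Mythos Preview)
Your proposal is correct and follows essentially the same route as the paper: pass to the Pauli-label picture, identify the quotient with $\operatorname{Sp}(2k,\mathbb Z_N)\ltimes\mathbb Z_N^{2k}$, note that local Cliffords supply the block-diagonal $\operatorname{SL}(2,\mathbb Z_N)^{\times k}$ and that $\text{CSUM}_N^{(i\to j)}$ supplies an off-diagonal shear, and conclude that these generate the full symplectic group. The paper's own proof is a three-line sketch that simply asserts the generation step; you are considerably more careful, explicitly handling the easy inclusion, the translation factor, the composite-$N$ case via the Chinese-remainder splitting and lifting from $\mathbb Z_p$ to $\mathbb Z_{p^e}$, and the phase reassembly---all points the paper glosses over.
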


\begin{proof}
	Write a Pauli operator on $k$ qudits as the label
	$(a_1,b_1\,|\,\dots|\,a_k,b_k)$.
	Local Cliffords supply
	$\operatorname{SL}(2,\mathbb Z_N)^{\times k}$.
	Conjugation by
	$\text{CSUM}_N^{(1\to2)}$ performs the shear
	\[
	(a_1,b_1\,|\,a_2,b_2)\;\longmapsto\;
	(a_1,b_1\,|\,a_2+a_1,b_2).
	\tag{A.4}
	\]
	That elementary shear, together with local shears and the $H_N$-swaps, generates the full symplectic group $\operatorname{Sp}(2k,\mathbb Z_N)$, i.e., the entire
	$k$-qudit Clifford normaliser.
\end{proof}

This lemma shows indicates that adding a single CSUM gate turns local Cliffords into the full multi-qudit Clifford arsenal.

\begin{proposition}[Universality Once We Add a Non-Clifford Phase]
	\label{Prop:CliffordPlusThetaR}
	Fix $N\ge3$ and choose an integer $r$ coprime to $N$ with
	$r\notin\{1,2,4\}$.
	The diagonal gate
	$\Theta_r$ (Definition~\eqref{EQ:NonCliffordDiagonalPhase})
	is not Clifford, and
	\[
	\overline{\bigl\langle\mathcal C\ell_N,\Theta_r\bigr\rangle}
	\;=\; \operatorname{SU}(N).
	\]
\end{proposition}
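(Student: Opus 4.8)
The plan is to establish the two separate claims in the proposition: first that $\Theta_r$ escapes the Clifford group, and second that adjoining it to the full Clifford group generates a dense subgroup of $\operatorname{SU}(N)$. By Lemma~\ref{LEM:SingleQuditCliffordCompleteness} we already have the entire single-qudit Clifford group $\mathcal C\ell_N = \langle X_N, Z_N, H_N\rangle$, so it suffices to analyze the subgroup generated by $\mathcal C\ell_N$ together with one non-Clifford diagonal phase.

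For the first claim, I would show $\Theta_r \notin \mathcal C\ell_N$ by examining its conjugation action on the Pauli group $\mathcal P_N$. Since $\Theta_r = \sum_J \exp(2\pi i r J^2/N^2)\,|J\rangle\!\langle J|$ is diagonal, it fixes $Z_N$ but conjugates $X_N$ into $\Theta_r X_N \Theta_r^\dagger = \sum_J \exp\!\bigl(\tfrac{2\pi i r}{N^2}(2J+1)\bigr)\,|J\oplus 1\rangle\!\langle J|$. The point is that the emerging phase is \emph{linear} in $J$ for a genuine Clifford diagonal gate (which would produce $\omega_N^{aJ+b}X_N Z_N^{c}$-type images), whereas the quadratic exponent $J^2$ in $\Theta_r$ yields a phase increment $\propto (2J+1)$ that is linear in $J$; I must check carefully that for $r \notin \{1,2,4\}$ with $\gcd(r,N)=1$ this increment does not collapse onto an allowed Clifford pattern $\omega_N^{\text{integer}}$ after reduction modulo $N^2$. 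The exclusion $r\in\{1,2,4\}$ is precisely what rules out the degenerate cases where $\Theta_r$ accidentally normalizes $\mathcal P_N$, so this number-theoretic check is essential.

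For the density claim, I would invoke the standard ``Clifford $+$ single non-Clifford'' universality argument, following Boykin et al.~\cite{Boykin1999} for qubits and its qudit generalization by Brennen et al.~\cite{Brennen2005arxiv,Brennen2005PRA}. The group $G := \langle \mathcal C\ell_N, \Theta_r\rangle$ is a compact subgroup of $\operatorname{SU}(N)$; I would consider its closure $\overline{G}$, which is a closed (hence Lie) subgroup. Because $\mathcal C\ell_N$ acts irreducibly on $\mathbb C^N$ (the Pauli operators already span $\operatorname{End}(\mathbb C^N)$), $\overline{G}$ acts irreducibly, and the presence of the non-Clifford element $\Theta_r$ prevents $\overline{G}$ from being finite. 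One then argues that a closed, infinite, irreducible subgroup of $\operatorname{SU}(N)$ whose normalizer contains the full Clifford group must be all of $\operatorname{SU}(N)$: the Clifford-conjugates of the one-parameter family containing $\log\Theta_r$ generate, under nested commutators, a Clifford-invariant ideal in $\mathfrak{su}(N)$, and irreducibility forces this ideal to be the whole simple Lie algebra $\mathfrak{su}(N)$. Connectedness then gives $\overline{G}=\operatorname{SU}(N)$ by Lie's correspondence.

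The main obstacle is the second step's Lie-algebraic argument: one must verify that conjugating the diagonal generator $\log\Theta_r$ by the full Clifford group produces enough linearly independent traceless Hermitian operators to span $\mathfrak{su}(N)$, rather than closing up into a proper Clifford-invariant subalgebra (which could happen if $\Theta_r$ were too symmetric). This is exactly where the quadratic phase structure and the coprimality $\gcd(r,N)=1$ enter crucially, ensuring the Clifford orbit of $\log\Theta_r$ is rich enough. I would lean on the cited qudit universality results \cite{Brennen2005arxiv,Brennen2005PRA} to supply this spanning property, since reproving it from scratch for arbitrary composite $N=dD$ would require a careful case analysis of the invariant-subalgebra lattice that goes beyond the scope of this appendix. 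Finally, I note that every generator lies in the commutant $\mathcal C_{\hat Q}$ (all are built from $\hat Q$-commuting Weyl operators), so the density statement automatically respects gauge superselection and the result holds within the neutral sector $\mathcal H_{Q=0}\cong\mathbb C^N$.
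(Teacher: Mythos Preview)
Your approach to the non-Clifford claim mirrors the paper's: both compute $\Theta_r X_N \Theta_r^\dagger$ and argue that the resulting $J$-dependent phase corresponds to a $Z_N$-exponent $2r/N$ that is non-integral under the stated hypotheses on $r$, so the image lies outside $\mathcal P_N$. Your phrasing about ``linear in $J$'' is slightly muddled (both Clifford diagonals and $\Theta_r$ produce phases linear in $J$ after conjugating $X_N$; the real distinction is whether the slope is an integer multiple of $1/N$), but the substance is correct and matches the paper.

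For the density claim you take a genuinely different and more careful route. The paper's proof invokes the Tits alternative to deduce infiniteness---an odd choice, since the irrationality of the eigenvalue ratios of $\Theta_r$ already gives infinite order directly---and then asserts without further argument that an infinite subgroup of the compact simple group $\operatorname{SU}(N)$ must have full closure, which glosses over the existence of proper positive-dimensional closed subgroups such as $\operatorname{SO}(N)$ or $\operatorname{Sp}(N/2)$. Your Lie-algebraic strategy---passing to the closed Lie subgroup $\overline{G}$, using irreducibility of the Clifford action, forming the Clifford-invariant subalgebra generated by conjugates of $\log\Theta_r$, and invoking simplicity of $\mathfrak{su}(N)$---is the standard correct template, and you rightly flag that the spanning step is where the real work lies (and defer it to \cite{Brennen2005arxiv,Brennen2005PRA}, as does the paper elsewhere). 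In short: same two-part architecture, but your density half is structurally sounder; the paper's brevity comes at the price of two genuine logical shortcuts.
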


\begin{proof}
	The Clifford set is finite (for fixed $N$).
	Injecting $\Theta_r$ gives us an element whose eigen-phases are irrational with respect to that finite set.
	So the group it generates cannot be finite.
	Because $\Theta_r$ together with Clifford group generates an infinite, non‑discrete subgroup that is both closed under inverses and contains a neighbourhood of the identity, its closure is SU(N).
	Specifically:
	
	\begin{itemize}
		\item \textit{$\Theta_r$ is non-Clifford.}\;
		Conjugating $X_N$ by $\Theta_r$ yields
		$\Theta_r X_N\Theta_r^{\dagger}=X_N Z_N^{2r/N}$.
		Unless $r\in\{0,N/2,N/4\}$ (excluded), the exponent
		$2r/N$ is not an integer, so the result lies outside the Pauli
		group—hence outside the Clifford group.
		
		\item \textit{Density.}\;
		Finite subgroups of $\operatorname{SU}(N)$ cannot contain
		elements with those irrational phase ratios.
		By the Tits alternative the subgroup generated by the Cliffords
		plus $\Theta_r$ is therefore infinite; for compact simple
		groups the only possibility is that its closure is the whole
		$\operatorname{SU}(N)$.
	\end{itemize}
\end{proof}

This proposition indicates that one well-chosen non-Clifford phase promotes the Clifford set to a dense subgroup of $\operatorname{SU}(N)$.

\begin{theorem}[Universality and Gauge Conservation]
	\label{THM:Universality}
	Let
	\[
	\mathcal G_{\text{single}}
	=\bigl\{X_N,Z_N,H_N,\text{CSUM}_N,\Theta_r\bigr\}
	\subset \mathcal C_{\hat{Q}},
	\]
	where $\hat{Q}$ is the conserved charge and
	$\mathcal C_{\hat{Q}}$ its commutant.
	Then
	\[
	\overline{\bigl\langle\mathcal G_{\text{single}}\bigr\rangle}
	\;=\; \operatorname{SU}\!\bigl(\mathcal H_{Q=0}\bigr).
	\]
\end{theorem}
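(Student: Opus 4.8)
The plan is to assemble the theorem from the three results already established in this appendix and then overlay the gauge-conservation statement. First I would record the easy containment. Each of the five generators in $\mathcal G_{\text{single}}=\{X_N,Z_N,H_N,\text{CSUM}_N,\Theta_r\}$ was verified in Sec.~\ref{SEC:ElementaryHybridPackagedGates} to commute with $\hat{Q}$, so every generator lies in the commutant $\mathcal C_{\hat{Q}}=\{V\in\mathrm U(N)\mid[V,\hat{Q}]=0\}$. Since $\mathcal C_{\hat{Q}}$ is a subgroup that is closed in the operator norm, both $\langle\mathcal G_{\text{single}}\rangle$ and its norm-closure remain inside $\mathcal C_{\hat{Q}}$, and because each generator maps the neutral sector $\mathcal H_{Q=0}$ into itself, the relevant ambient group is exactly $\mathrm{SU}(\mathcal H_{Q=0})$. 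Thus the inclusion $\overline{\langle\mathcal G_{\text{single}}\rangle}\subseteq\mathrm{SU}(\mathcal H_{Q=0})$ is automatic, and only density remains to be shown.

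For density I would proceed in the order in which the lemmas were proved. By Lemma~\ref{LEM:SingleQuditCliffordCompleteness}, the triple $\{X_N,Z_N,H_N\}$ already generates the full single-qudit Clifford group $\mathcal C\ell_N$; adjoining $\text{CSUM}_N$ and invoking Lemma~\ref{LEM:TwoQuditEntanglingClifford} promotes this, via the symplectic-generation argument on the Pauli label space, to the complete multi-qudit Clifford group $\mathcal C\ell_N^{(k)}$. It then remains to show that this full Clifford group together with a single non-Clifford phase $\Theta_r$ (with $\gcd(r,N)=1$ and $r\notin\{1,2,4\}$) is dense. Proposition~\ref{Prop:CliffordPlusThetaR} supplies exactly the single-qudit version of this statement: the conjugate $\Theta_r X_N\Theta_r^{\dagger}$ escapes the Pauli group, so $\langle\mathcal C\ell_N,\Theta_r\rangle$ is infinite and, by the Tits-alternative argument for compact simple groups, closes up to all of $\mathrm{SU}(N)$.

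The one genuine gap — and the step I expect to demand the most care — is lifting the single-qudit density of Proposition~\ref{Prop:CliffordPlusThetaR} to the many-qudit ambient group $\mathrm{SU}(N^k)$, since the gate set entangles qudits through $\text{CSUM}_N$. I would argue that the full multi-qudit Clifford group $\mathcal C\ell_N^{(k)}$ acts irreducibly on $(\mathbb C^{N})^{\otimes k}$ and that, together with a single local non-Clifford insertion $\Theta_r^{(i)}$, it generates a subgroup contained in no proper closed subgroup of $\mathrm{SU}(N^{k})$; this is the qudit generalisation of the Boykin et al. ``Clifford $+$ one non-Clifford'' universality proof \cite{Boykin1999,Brennen2005arxiv,Brennen2005PRA}. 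Concretely, one takes commutators of $\Theta_r$-conjugated Paulis against Clifford elements and uses the irreducible Clifford action to show that the resulting Lie algebra exhausts $\mathfrak{su}(N^{k})$ — the same commutator-closure strategy already employed in Theorem~\ref{THM:UniversalGaugeRespectingGateSet}, now applied to the tensor-power sector rather than a single hybrid qudit. Combining this density with the containment from the first paragraph gives $\overline{\langle\mathcal G_{\text{single}}\rangle}=\mathrm{SU}(\mathcal H_{Q=0})$, which is the assertion of the theorem.
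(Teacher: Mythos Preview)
Your proposal is correct and follows essentially the same route as the paper: invoke Lemma~\ref{LEM:SingleQuditCliffordCompleteness} and Lemma~\ref{LEM:TwoQuditEntanglingClifford} to obtain the full multi-qudit Clifford group, then Proposition~\ref{Prop:CliffordPlusThetaR} to promote to density in $\mathrm{SU}(N)$, and wrap everything in the commutant containment $[\,\cdot\,,\hat{Q}]=0$.

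The one place you go further than the paper is in flagging the single-qudit $\to$ multi-qudit lift. The paper's proof is a four-sentence summary that simply cites Proposition~\ref{Prop:CliffordPlusThetaR} for ``density in $\operatorname{SU}(N)$'' and stops; it does not separately argue the passage to $\mathrm{SU}(N^{k})$. Your instinct that this step requires a Clifford-irreducibility plus commutator-closure argument (in the spirit of the Boykin/Brennen qudit proofs and of Theorem~\ref{THM:UniversalGaugeRespectingGateSet}) is sound, and is in fact more careful than what the paper records. So your plan matches the paper's approach while being more explicit about the one point the paper leaves implicit.
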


\begin{proof}
	Lemmas \ref{LEM:SingleQuditCliffordCompleteness} and
	\ref{LEM:TwoQuditEntanglingClifford} give the full multi-qudit Clifford
	group.
	Proposition \ref{Prop:CliffordPlusThetaR} adds density in
	$\operatorname{SU}(N)$.
	Because every generator commutes with $\hat{Q}$, the whole group sits
	inside $\mathcal C_{\hat{Q}}$.
	Conversely, any neutral-sector unitary is in both
	$\operatorname{SU}(N)$ and $\mathcal C_{\hat{Q}}$, so the closures
	match.
\end{proof}

This theorem shows that our five-gate library is universal and never breaks gauge
conservation.

\begin{corollary}[Solovay-Kitaev Theorem with Gauge Symmetry]
	For any $n$-qudit unitary
	$V \in \operatorname{SU}(N^{n})$ that preserves the neutral sector and
	any accuracy $\varepsilon\in(0,1]$, there exists a word
	\[
	\widetilde V = G_{i_L}\cdots G_{i_1},
	\qquad
	G_{i_\ell}\in\mathcal G_{\text{single}},
	\]
	such that
	\[
	\|V-\widetilde V\|_{\mathrm{op}}\le\varepsilon,
	\qquad
	L=O\!\bigl(\log^{\kappa}\!\varepsilon^{-1}\bigr),
	\;\kappa\le3.97,
	\]
	and every prefix $G_{i_\ell}\cdots G_{i_1}$ still commutes with the
	total charge $\hat{Q}_{\text{tot}}$.
\end{corollary}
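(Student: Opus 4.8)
The plan is to reduce the statement to the classical Solovay-Kitaev (SK) theorem \cite{Kitaev1997,DawsonNielsen2005} and then verify that the gauge-conservation constraint survives every step of the SK recursion. The three hypotheses that SK demands of a gate library are that it be finite, inverse-closed, and generate a subgroup whose closure is all of $\operatorname{SU}(N)$. The first is immediate from the definition of $\mathcal G_{\text{single}}$. The third is precisely Theorem~\ref{THM:Universality}, which already supplies $\overline{\langle\mathcal G_{\text{single}}\rangle}=\operatorname{SU}(\mathcal H_{Q=0})$. For the second, I would observe that $X_N^{\dagger},Z_N^{\dagger},H_N^{\dagger},\text{CSUM}_N^{\dagger}$ and $\Theta_r^{\dagger}=\Theta_{-r}$ are all of the same algebraic type (shifts, phases, the Fourier gate, the controlled-shift, and a quadratic diagonal phase), so one may either adjoin them explicitly or synthesise them as short words; in either case the enlarged library remains finite and contained in the commutant $\mathcal C_{\hat{Q}}$.

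The key new ingredient relative to ordinary SK---and the part that earns the phrase ``with gauge symmetry''---is that the recursion never leaves $\mathcal C_{\hat{Q}}$. The SK algorithm builds its approximants using only two primitives: group inversion $G\mapsto G^{\dagger}$ and the group commutator $\operatorname{Comm}(A,B)=ABA^{\dagger}B^{\dagger}$. If $A,B\in\mathcal C_{\hat{Q}}$, i.e. $[A,\hat{Q}_{\text{tot}}]=[B,\hat{Q}_{\text{tot}}]=0$, then $[A^{\dagger},\hat{Q}_{\text{tot}}]=0$ and, by the Leibniz rule for commutators, $[ABA^{\dagger}B^{\dagger},\hat{Q}_{\text{tot}}]=0$ as well. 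Since $\mathcal C_{\hat{Q}}$ is a group, every word formed during the recursion---and in particular every prefix $G_{i_\ell}\cdots G_{i_1}$---is a product of commutant elements and therefore again commutes with $\hat{Q}_{\text{tot}}$. This is exactly the closure argument already used for $\mathcal G_{\text{sep}}$ in the proof of Corollary~\ref{COR:SolovayKitaevTheoremInHybridPackagedSpace}, and it transfers verbatim to $\mathcal G_{\text{single}}$.

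With density and commutant-closure in hand, the quantitative estimates are the standard ones. I would fix a $\delta_0$-net $G_0\subset\langle\mathcal G_{\text{single}}\rangle$ with $\delta_0<1/4$, obtainable by brute-force search to depth $O(\log\delta_0^{-1})$ because the group is dense; every net element lies in $\mathcal C_{\hat{Q}}$. The SK contraction $\delta_{k+1}<c\,\delta_k^{1+\alpha}$ then drives the error below $\varepsilon$ after $k=O(\log\log\varepsilon^{-1})$ levels, while the word-length recursion $L_{k+1}\le 5L_k+O(1)$ solves to $L=O(5^{k})=O(\log^{\kappa}\varepsilon^{-1})$ with $\kappa=\log_2 5/(1+\alpha)\le 3.97$. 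The extension to $n$ qudits follows by running the same argument inside $\operatorname{SU}(N^{n})$: because $\mathcal G_{\text{single}}$ acts on a fixed local dimension $N$ and is closed under tensoring with the identity, all bounds hold with $N\mapsto N^{n}$ up to a polynomial overhead in $n$.

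The main obstacle I anticipate is not analytic but bookkeeping: one must confirm that enforcing inverse-closure does not break gauge-invariance, i.e. that $\Theta_r^{\dagger}$ together with the adjoints of the Clifford generators remains in $\mathcal C_{\hat{Q}}$ and still synthesisable as finite words. This is straightforward here since $\Theta_{-r}$ is diagonal in the neutral-sector computational basis and hence trivially commutes with $\hat{Q}_{\text{tot}}$, but it is the one place where the construction could fail for a poorly chosen non-Clifford resource. Everything else is a direct transcription of the classical SK machinery and of the already-established Corollary~\ref{COR:SolovayKitaevTheoremInHybridPackagedSpace}.
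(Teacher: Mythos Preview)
Your proposal is correct and follows essentially the same approach as the paper: invoke Theorem~\ref{THM:Universality} for density, note that the Dawson--Nielsen SK recursion uses only inversion and group commutators, observe that the commutant $\mathcal C_{\hat Q}$ is a subgroup closed under both primitives, and read off the standard length bound. The paper's own proof is a four-sentence sketch of exactly these points; your version is a more detailed elaboration, explicitly tracking inverse-closure of $\Theta_r$ and the extension to $n$ qudits, but the underlying argument is identical.
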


\begin{proof}
	$\mathcal G_{\text{single}}$ is finite and inverse-closed.
	Theorem \ref{THM:Universality} supplies density.
	The Dawson-Nielsen Solovay-Kitaev construction \cite{DawsonNielsen2005} builds successive
	approximants using commutators of previous words, and the commutant of
	$\hat{Q}$ is a subgroup, so gauge symmetry is maintained at every
	stage.
	The standard length bound
	$L=O(\log^{\kappa}\varepsilon^{-1})$ (with $\kappa\le3.97$) also applies.
\end{proof}

In other words, with the five gates $\{X_N,Z_N,H_N,\text{CSUM}_N,\Theta_r\}$, we can approximate any neutral-sector operation to arbitrary precision.
Every intermediate step automatically obeys the gauge-conservation rule that is built into the hybrid-packaged architecture.
``Clifford-single-index $+\Theta_r$'' is therefore a truly universal and fault-tolerant gate set for packaged qudits.

\section{Other Quantum Algorithms/Protocols in Packaged Space}

In this appendix, we reconstruct several important quantum algorithms/protocols in high-dimensional hybrid-packaged subspace.
The content of this appendix supplements Section \ref{SEC:QuantumComputationInPackagedSpace} and \ref{SEC:QuantumCommunicationAndCryptography}.
In other words, these algorithms/protocols are built on top of those primary ones constructed in Section \ref{SEC:QuantumComputationInPackagedSpace} and \ref{SEC:QuantumCommunicationAndCryptography}.

\subsection{Shor-Style Integer Factoring in Hybrid-Packaged Space}
\label{SEC:ShorHybridPackaged}

Shor’s original algorithm~\cite{Shor1994} reduces integer factoring to
order-finding via the Quantum Fourier Transform (QFT) over a finite cyclic group.

In this subsection, we replace binary qubits by gauge-invariant hybrid-packaged qudits of dimension $N = dD$, which offers two benefits:
\begin{enumerate}
	\item Register compression:
	the control register length drops from
	$n_c^{(2)}=\lceil 2\log_2 M\rceil$
	to
	$n_c \equiv n_c^{(N)}
	=\bigl\lceil 2\log_{N} M\bigr\rceil$,
	
	\item Gauge resilience:
	every logical gate commutes with the net charge operator~$\hat Q$, so leakage out of the protected sector is exponentially suppressed by the energy gap $\Delta_Q$ \cite{Jordan2006,Lidar2008}.
\end{enumerate}
We only rebuild the quantum sub-routine, but keep the classical post-processing (continued-fraction) unchanged.

\subsubsection{Algorithmic Steps}

\paragraph{(1) Hybrid-Packaged Hilbert Space and Encoding.}

Let
$
\mathcal U:
\mathcal H_{\mathrm{logic}}\!\longrightarrow\!\mathcal H_{Q=0}
$
be the isometry defined in Eq.~\eqref{EQ:IsometryHybridPackaged}.
It maps each conventional logical digit $x\in\{0,\dots ,N-1\}$ to a gauge-invariant basis vector
$
\mathcal U\lvert x\rangle
=\lvert J\rangle
=\lvert j,k\rangle
$
with
$J=jD+k$ ($0\le j<d,\;0\le k<D$).

A classical integer
$X=\sum_{\ell=0}^{n-1}x_\ell N^\ell$
is therefore stored as the
$n$-qudit word
$
\bigl\lvert X\bigr\rangle
=\bigotimes_{\ell=0}^{n-1}\lvert J_\ell\rangle .
$

\paragraph{(2) Order-Finding Circuit.}

Fix an odd composite $M$ to be factored and pick a random coprime
$a<M$.
Multiplication by $a$ induces the permutation
$
V_a:\lvert y\rangle\mapsto\lvert ay\bmod M\rangle
$
on the work register and satisfies
$[V_a,\hat Q]=0$.

\smallskip\noindent
We split the order-finding process into tree steps:

\smallskip\noindent
\emph{Step 1 - Prepare initial hybrid-packaged state (superposition).}
With a control register of $n_c$ qudits and
$n_w= \lceil\log_{N} M\rceil$ work qudits, prepare
\[
\lvert\psi_0\rangle
=\frac{1}{\sqrt{N^{n_c}}}
\sum_{X=0}^{N^{n_c}-1}
\lvert X\rangle_{\!C}\!\otimes\!\lvert1\rangle_{\!W}.
\]

\noindent
\emph{Step 2 - Controlled modular multiplication.}
Applying the controlled unitary
$
\mathrm{c}\text{-}V_a^X
$
(with repeated-squaring decomposition) yields
\[
\lvert\psi_1\rangle
=\frac{1}{\sqrt{N^{n_c}}}
\sum_{X=0}^{N^{n_c}-1}
\lvert X\rangle_{\!C}\!\otimes
\lvert a^{X}\bmod M\rangle_{\!W},
\]
with depth
$O\!\bigl(n_c\,\mathrm{poly}(\log M)\bigr)$
in gauge-invariant gates.

\noindent
\emph{Step 3 - Apply packaged QFT.}
Perform the QFT
$\mathrm{QFT}_{N^{n_c}}$ on~$C$:
\[
\lvert\psi_2\rangle
=\frac{1}{N^{n_c}}
\sum_{s=0}^{N^{n_c}-1}
\Bigl[\,
\sum_{X=0}^{N^{n_c}-1}
\exp\!\bigl(\tfrac{2\pi i}{N^{n_c}}sX\bigr)
\lvert a^{X}\bmod M\rangle_{\!W}
\Bigr]
\otimes\lvert s\rangle_{\!C}.
\]
Because the QFT factorises into single-qudit rotations
$
\lvert x\rangle\!\mapsto\!
\tfrac1{\sqrt{N}}
\sum_{y=0}^{N-1}\!
\omega_{N}^{xy}\lvert y\rangle
$
with
$\omega_{N}=e^{2\pi i/N}$,
its depth is
$O\!\bigl(n_c^{\,2}\bigr)$.

\paragraph{(3) Measurement Statistics.}

Measuring $C$ in the computational basis, we obtain an outcome
$s\in\{0,\dots ,N^{n_c}-1\}$ with probability
\[
\Pr[s]
=\frac{1}{rN^{n_c}}
\Bigl|
\sum_{k=0}^{r-1}
\exp\!\bigl(\tfrac{2\pi i}{N^{n_c}}sk\bigr)
\Bigr|^{2}
\approx\frac1r
\quad\text{when}\quad
\bigl|\,s/N^{n_c}-k/r\,\bigr|
\le\frac1{2N^{n_c}},
\]
where $r$ is the order of $a\bmod M$.
Taking $n_c=\lceil 2\log_{N}M\rceil$, we guarantees (exactly as in the binary proof) that continued-fraction post-processing recovers $r$ with
probability at least $1-1/M$.

\paragraph{(4) Classical Post-Processing.}

Once $r$ is found, we use the standard gcd test
$
\gcd\!\bigl(a^{r/2}\!\pm\!1,\;M\bigr)
$
to find a non-trivial factor with probability $\ge\frac12$.
If it fails, then we repeat the algorithm with a new random

\subsubsection{Gauge-Invariance}

After lifting the Shor's algorithm into hybrid-packaged subspace, every elementary gate commutes with $\hat Q$:
\begin{enumerate}
	\item The packaged QFT uses phases
	$\exp\bigl(2\pi i\,\hat n_j\hat n_k/N^{\,|j-k|+1}\bigr)$.
	It commutes with $\hat{Q}$ because each $\hat n$ acts diagonally within the packaged basis.
	
	\item Each modular multiplication is a permutation of
	computational states.
	Thus, it preserves the charge sector.
	
	\item The measurements are equal to projector operators $ \lvert X\rangle\langle X\rvert $, which commute with $\hat{Q}$ by definition. Thus, the measurements commute with $\hat{Q}$.
\end{enumerate}
Due to all above reasons, the full Shor circuit maps the neutral sector $\mathcal H_{Q=0}$ onto itself.
Thus, Shor’s Algorithm in packaged space is gauge-invariant.

\subsubsection{Resource Comparison}

\begin{table}[h]
	\centering
	\caption{Binary versus hybrid-packaged resources for Shor’s algorithm.}
	\label{TAB:AdvOfPackagedShorAlgorithm}
	\begin{tabular}{@{}lcc@{}}
		\toprule[1pt]
		Resource &
		Binary qubits &
		Hybrid-packaged qudits \\ \midrule
		Control width &
		$n_c^{(2)}=2\lceil\log_{2}M\rceil$ &
		$n_c  =2\lceil\log_{N}M\rceil$ \\
		Work width  &
		$\lceil\log_{2}M\rceil$        &
		$\lceil\log_{N}M\rceil$    \\
		QFT depth   &
		$O\!\bigl((n_c^{(2)})^{2}\bigr)$   &
		$O\!\bigl(n_c^{\,2}\bigr)$      \\
		Gauge leakage &
		n/a (qubits)              &
		exponentially suppressed        \\ \bottomrule[1pt]
	\end{tabular}
\end{table}

Doubling either internal dimension $d$ or external dimension $D$
halves the required register length.
Thus, it gives a quadratic saving in QFT gates and a proportional reduction in arithmetic depth.

One may consider implementing the packaged Shor’s algorithm on superconducting circuit QED.
It is a collective bosonic parity‑protected manifolds ($d=4$) coupled to multi‑level transmons ($D=5$), which gives $N=20$ with high‑fidelity cross‑Kerr gates.
Such hardware may demonstrate the elementary ingredients (hybrid QFT, modular arithmetic) and offers a natural near-term test‑bed for gauge‑protected Shor factoring.

\subsection{Harrow-Hassidim-Lloyd (HHL) Algorithm in Hybrid-Packaged Space}

The Harrow-Hassidim-Lloyd (HHL) algorithm \cite{HHL2009} is a quantum algorithm that numerically solves linear systems, and is experimentally implemented in 2013 \cite{Cai2013,Barz2014,Pan2014}.
Specifically, to solve a linear system $A\vec{x}=\vec{b}$,
HHL algorithm encodes the vector $\vec{b}$ as a quantum state $\lvert b\rangle$,
performs quantum phase estimation (QPE) on a unitary $e^{-iAt}$,
executes a controlled rotation to incorporate the inverse eigenvalue $1/\lambda_j$,
and finally un-computes the phase register to obtain $\lvert x\rangle\propto A^{-1}\lvert b\rangle$.

In this subsection, we extend the Harrow-Hassidim-Lloyd (HHL) algorithm to a hybrid-packaged Hilbert space of dimension $N=dD$ for solving linear systems of equations.
In our proposal, every operator is assumed to be packaged so that it commutes with the net-charge operator $\hat{Q}$ (i.e. $[V,\hat{Q}]=0$).
This ensures that all states and operations remain confined to the fixed gauge-invariant subspace.

\subsubsection{Algorithmic Procedure}

We now extend the conventional HHL algorithm to the hybrid-packaged subspace $\mathcal{H}_{\text{hyb}}$.
In every step, we ensure that all operations are packaged (i.e. they respect the gauge-invariance condition $[V,\,\hat{Q}] = 0$).

\paragraph{(1) State Preparation.}

Prepare the state $|b\rangle_{\text{hyb}}$ in the hybrid-packaged subspace such that
\[
|b\rangle_{\text{hyb}} = \sum_{x=0}^{N-1} b_x\,|x\rangle_{\text{hyb}}\quad \text{with } N = dD\,.
\]
Since the computational basis for $\mathcal{H}_{\text{hyb}}$ is given by $\{|j,k\rangle\}$, we express
\[
|b\rangle_{\text{hyb}} = \sum_{j=0}^{d-1}\sum_{k=0}^{D-1} b_{j,k}\,|j_P\rangle\otimes |k_E\rangle\,,
\]
with the coefficients $\{b_{j,k}\}$ chosen so that $\sum_{j,k}|b_{j,k}|^2=1$.

Because each basis state is packaged (i.e. lies in $\mathcal{H}_{Q=0}$), the resulting state also satisfies the gauge-invariance condition:
\[
U_g\,|b\rangle_{\text{hyb}} = e^{i\phi(g)}\,|b\rangle_{\text{hyb}}\,.
\]
An appropriate generalized Hadamard operator $H_{\text{hyb}}$, defined as a tensor product $H_{\text{hyb}} = H_{\text{int}} \otimes H_{\text{ext}}$ (or more generally by a suitable unitary transformation that produces an equal superposition), is applied to obtain
\[
|s\rangle = \frac{1}{\sqrt{N}} \sum_{x=0}^{N-1} |x\rangle_{\text{hyb}}\,,
\]
which is then modified into the state $|b\rangle_{\text{hyb}}$ by a unitary that encodes the coefficients $b_{j,k}$.

\paragraph{(2) Quantum Phase Estimation (QPE).}

Assume that the linear operator $A$ is Hermitian (or has been converted to a Hermitian operator via dilation) and acts on $\mathcal{H}_{\text{hyb}}$ with eigen-decomposition
\[
A\,|u_j\rangle = \lambda_j\,|u_j\rangle,\quad j=0,1,\dots, N-1\,.
\]
Since $A$ is assumed to be physical and compatible with the gauge structure, we take it to be a packaged operator satisfying
\[
[A, \hat{Q}] = 0\,.
\]
Thus, every eigenstate $|u_j\rangle$ is itself a packaged state (i.e. $ U_g\,|u_j\rangle = e^{i\phi(g)}|u_j\rangle $).

Now, apply a packaged quantum phase estimation algorithm to the state $|b\rangle_{\text{hyb}}$ using the unitary evolution $e^{iAt}$.
For a proper normalization, one should rescale $A$ so that its eigenvalues lie between $0$ and $2\pi$.
In the phase estimation subroutine, an ancilla register is prepared in the state
$
|0\rangle^{\otimes m},
$
and controlled-$e^{iAt}$ operations are applied to create an entangled state of the form
\[
\sum_{j=0}^{N-1} b_j\,|u_j\rangle \otimes |\tilde{\lambda}_j\rangle\,,
\]
where $|\tilde{\lambda}_j\rangle$ is a binary approximation to the eigenvalue $\lambda_j$.
Because all operations are constructed from packaged unitaries, the total state remains in
\[
\mathcal{H}_{\text{hyb}} \otimes \mathcal{H}_{\text{ancilla}}\,,
\]
and the part in $\mathcal{H}_{\text{hyb}}$ is always gauge-invariant.

\paragraph{(3) Controlled Rotation.}

The next step is to perform a controlled rotation that effectively encodes $\lambda_j^{-1}$ into an ancilla amplitude.
More specifically, one applies a unitary $R$ that acts as
\[
R\,|u_j\rangle\otimes |0\rangle \longrightarrow |u_j\rangle\otimes \left( \sqrt{1 - \Bigl(\frac{C}{\lambda_j}\Bigr)^2}\,|0\rangle + \frac{C}{\lambda_j}\,|1\rangle \right),
\]
for a chosen constant $C$ that ensures $C/\lambda_j \le 1$ for all $j$.
Usually, one sets $C$ equal to the smallest eigenvalue.

Because $R$ is defined by its action on the eigenbasis of $A$ and is conditioned on the measured phase encoded in the ancilla register, its implementation can be viewed as a packaged operation.
In the hybrid-packaged subspace, the entire state
\[
|u_j\rangle \otimes |0\rangle
\]
remains in $\mathcal{H}_{\text{hyb}} \otimes \mathcal{H}_{\text{ancilla}}$.
If $R$ is designed carefully, then it will satisfy
\[
[R, \hat{Q} \otimes I]=0,
\]
because the eigenstate $|u_j\rangle$ is itself packaged and the ancilla register does not carry internal degrees (or is assumed to be invariant under the gauge group).

\paragraph{(4) Uncomputation and Postselection.}

After the controlled rotation, the QPE register is uncomputed by reversing the QPE circuit so that the ancilla remains entangled only with the hybrid state.
We then measures the ancilla register and postselects on outcomes corresponding to the rotated state.
Usually, if the ancilla is found to be $|1\rangle$, then the state of hybrid register is proportional to
\[
\sum_{j}\frac{b_j C}{\lambda_j}\,|u_j\rangle \propto A^{-1}|b\rangle\,.
\]
Since all of the operations (phase estimation, controlled rotation, inverse QPE) are implemented by packaged (i.e. gauge-invariant) gates, the resulting state remains in the hybrid-packaged subspace and all steps satisfy
$
[V, \hat{Q}] = 0\,.
$

\subsubsection{Advantages of Packaged HHL Algorithm}

This packaged HHL algorithm has the following advantages:

The hybrid-packaged HHL algorithm preserves the packaged structure.
Each step of the packaged HHL algorithm (state preparation, QPE, controlled rotation, and uncomputation) is performed using operations that are constructed from (or mapped via) a unitary isomorphism into the packaged hybrid space.
This implies that we have
$
U_g\,|\psi\rangle = e^{i\phi(g)}\,|\psi\rangle
$
and 
$
[V, \hat{Q}] = 0
$
for all intermediate and final states.
Therefore, the entire algorithm proceeds in the fixed superselection sector (here $\mathcal{H}_{Q=0}$).

The hybrid-packaged HHL algorithm is inherent capable for protecting gauge-like errors. 
The packaged structure of the internal space ensures that any error operator that would change the IQNs is forbidden or energetically suppressed.
As a result, only errors that commute with $\hat{Q}$ (i.e. gauge-conserving errors) may affect the computation.
Since the overall space has dimension $N = d \times D$, the effective dimension can be increased either by raising $d$ (by using more internal states) or $D$ (by using a larger external space).
The hybrid-packaged HHL algorithm’s runtime still scales as $\sqrt{N}$ in the worst-case scenario, but the additional structure may allow for more resourceful encoding or improved robustness against noise.

The mapping
$
\mathcal{U}: \mathcal{H}_{\rm logic} \to \mathcal{H}_{\text{hyb}} \subset \mathcal{H}_Q
$
guarantees that every state, operator, and measurement is physically valid (i.e. gauge-invariant). This is essential for any implementation in high-energy or condensed-matter systems where local gauge constraints are in force.

In a conventional HHL algorithm, the computational Hilbert space is usually given by qubits (or qudits) without further constraints. In our hybrid-packaged HHL algorithm,
every state is guaranteed to lie in a fixed superselection sector.
The gates are restricted (by construction) to be packaged, so that they commute with $\hat{Q}$.
Therefore, errors that would transfer amplitude out of the permitted sector are automatically suppressed.
Furthermore, the quantum phase estimation, controlled rotations, and uncomputation can be carried out using the same mathematical steps, but now the operations are lifted to the hybrid-packaged subspace.

\subsection{Quantum Swapping Protocol in Hybrid-Packaged Space}

Quantum swapping protocol (or entanglement swapping) \cite{YurkeStoler1992,Zukowski1993} is a quantum communication protocol that transfers entanglement between particles that have never directly interacted.
Mathematically, quantum swapping is identical to two serial teleportation processes.
It serves as a foundational technique for building quantum networks and repeaters.

Let us now generalize the conventional two‑qudit swap operation to our ($d \times D$)‑dimensional hybrid-packaged subspace, which preserves gauge-invariance and the structure of packaged state.

\subsubsection{The Standard Swap on Two Qudits}

For two $N$‑dimensional qudits (with computational bases $\{|i\rangle\}_{i=0}^{N-1}$), the swap operator $S$ is the unitary operations on $\mathbb{C}^N\otimes\mathbb{C}^N$ defined by
\[
S\bigl(|i\rangle\otimes|j\rangle\bigr) \;=\; |j\rangle\otimes|i\rangle,
\qquad i,j=0,\dots,N-1,
\]
or in operator‐sum form,
\[
S \;=\;\sum_{i,j=0}^{N-1} |j,i\rangle\langle i,j|\,. 
\]
It is Hermitian ($S=S^\dagger$), involutive ($S^2=I$), and satisfies $S\, (A\otimes B)\,S = B\otimes A$ for any pair of operators $A,B$.

\subsubsection{Computational Basis in the Hybrid-Packaged Space}

In the hybrid-packaged subspace, each qudit is of dimension $N = dD$
with computational basis
\[
|\,j,k\rangle\;\equiv\;|j_P\rangle\otimes|k_E\rangle,\quad
j=0,\dots,d-1,\;k=0,\dots,D-1,
\]
where
$\{|j_P\rangle\}$ is the $d$‑dimensional packaged internal basis, each satisfying $\hat{Q}\,|j_P\rangle=0$ and transforming as $U_g|j_P\rangle=e^{i\phi(g)}|j_P\rangle$,
and $\{|k_E\rangle\}$ is the $D$‑dimensional external basis, inert under gauge transforms ($U_g|k_E\rangle=|k_E\rangle$).

In the algorithm, we need two qudits,i.e., 
\[
\mathcal H_{\text{hyb}}\otimes\mathcal H_{\text{hyb}}
\;=\;
\bigl(\mathcal H_{\rm int}^{(d)}\otimes\mathcal H_{\rm ext}^{(D)}\bigr)_A
\;\otimes\;
\bigl(\mathcal H_{\rm int}^{(d)}\otimes\mathcal H_{\rm ext}^{(D)}\bigr)_B.
\]

\subsubsection{Defining the Hybrid-Packaged Swap Operator}

\paragraph{(1) Action on Basis States.}

We define the hybrid swap 
\[
S_{\text{hyb}}:\;\mathcal H_{\text{hyb}}^A\otimes\mathcal H_{\text{hyb}}^B\;\longrightarrow\;\mathcal H_{\text{hyb}}^A\otimes\mathcal H_{\text{hyb}}^B
\]
by
\[
S_{\text{hyb}}\,\bigl(|j,k\rangle_A\otimes|j',k'\rangle_B\bigr)
\;=\;
|\,j',k'\rangle_A\;\otimes\;|\,j,k\rangle_B,
\]
for all $j,j'=0,\dots,d-1$ and $k,k'=0,\dots,D-1$.

Equivalently, in operator form,
\[
S_{\text{hyb}}
\;=\;
\sum_{\substack{j,j'= 0\\k,k'=0}}^{d-1,\;D-1}
|\,j',k'\rangle_A \,\langle j,k|_A
\;\otimes\;
|\,j,k \rangle_B \,\langle j',k'|_B.
\]

\paragraph{(2) Checking Gauge‑Invariance.}

Because each basis vector $|j,k\rangle$ lies in the gauge‑invariant sector ($\hat{Q}_A+\hat{Q}_B=0$) that is up to an overall phase $e^{i\phi(g)}$, and since the swap merely permutes basis vectors without mixing charge sectors, one verifies
\[
\bigl[\;S_{\text{hyb}}\,,\;\hat{Q}_A+\hat{Q}_B\;\bigr]
\;=\;0,
\quad
\bigl[\;S_{\text{hyb}}\,,\;U_g^A\otimes U_g^B\;\bigr]
\;=\;0.
\]
Thus the swap remains a packaged (gauge‑invariant) operator.

\subsubsection{Matrix Representation and Decomposition}

Index the composite basis by a single label
\[
\alpha = j\,D + k,\quad
\beta = j'\,D + k',\qquad
\alpha,\beta\in\{0,\dots,N-1\}.
\]
Then $S_{\text{hyb}}$ takes exactly the same $N^2\times N^2$ permutation‐matrix form as the ordinary Swap on two $N$‑level systems:
\[
\bigl[S_{\text{hyb}}\bigr]_{(\alpha,\beta),(\gamma,\delta)}
=\delta_{\alpha,\delta}\,\delta_{\beta,\gamma}.
\]
One may therefore reuse circuit decompositions (e.g., via $N(N-1)/2$ two‑level controlled-swaps) by interpreting each level $\alpha$ as the pair $(j,k)$.

\subsubsection{Implementing $S_{\text{hyb}}$ with Packaged Gates}

\paragraph{(1) Elementary Controlled‑Swaps.}

A direct implementation is via the well‑known decomposition into controlled‑two-dimensional transpositions. For each pair of label‑pairs $(\alpha,\beta)$ with $\alpha<\beta$,
\[
\text{CSWAP}_{(\alpha,\beta)}
:\;
|\alpha,\beta\rangle \; \mapsto\; |\beta,\alpha\rangle,\quad
|\beta,\alpha\rangle \; \mapsto\; |\alpha,\beta\rangle,
\]
and acts trivially on all other basis states.
Since each such CSWAP is by definition a permutation among packaged basis vectors, it too commutes with the total charge.

\paragraph{(2) Scaling and Resource Count.}

A full packaged quantum swap includes $\tfrac12N(N-1)=\tfrac12(dD)(dD-1)$ elementary transpositions.
If one has direct access to a $\mathrm{CSWAP}$ gate on the hybrid labels $\alpha$, one needs $O(N^2)$ two‑level operations.
However, one can often exploit the tensor‐product structure:
\[
S_{\text{hyb}}
=
S_{\rm int} \otimes S_{\rm ext}
\quad
\text{followed by suitable re‐indexing,}
\]
where
\[
S_{\rm int}=\sum_{j,j'}|j'\rangle \langle j|\,,\;\;
S_{\rm ext}=\sum_{k,k'}|k'\rangle \rangle k|.
\]
One may show
\[
S_{\text{hyb}}
=\;\bigl(S_{\rm int} \otimes I_E\bigr)\;\bigl(I_P \otimes S_{\rm ext}\bigr)\;\bigl(S_{\rm int} \otimes I_E\bigr),
\]
which uses three swaps on smaller spaces and thus costs
\[
3\bigl[\tfrac12\,d(d-1)\;+\;\tfrac12\,D(D-1)\bigr]
\]
two‑dimensional transpositions.
This is often a significant saving when $d$ and $D$ are much smaller than $N$.

\subsubsection{Advantages of Packaged Quantum Swapping Protocol}

This packaged quantum swapping protocol has the following advantages:

First, the packaged quantum swapping protocol can swap both internal and external DOFs.
Swapping the internal packaged labels $j \leftrightarrow j'$ simply exchanges the packaged quantum numbers (e.g.\ flavor or charge‐conjugation eigenstates) carried by the two particles. 
On the other hand, swapping the external labels $k \leftrightarrow k'$ exchanges spatial‐mode or spin degrees of freedom.

The packaged quantum swapping protocol provides gauge protection. 
Because each elementary swap never mixes sectors of different total $\hat{Q}$, the entire operation is immune to any gauge‐violating noise.
This built‐in superselection protection may reduce leakage errors in platforms with strict local‐charge conservation.

In an ion‐trap or photonic implementation, one can implement $S_{\rm int}$ by a sequence of controlled rotations and phase‐shifts acting on the packaged internal levels $\{|j_P\rangle\}$, and $S_{\rm ext}$ by beam‐splitter or spin-echo operations on the external register.
The sandwich formula above shows that a hybrid swap can be factorized into three smaller swaps—an important optimization for experiments.

\subsection{Quantum Secure Direct Communication in Hybrid-Packaged Space}

Quantum Secure Direct Communication (QSDC) \cite{Long2002,Bostrom2002,Deng2004,Zhang2017} is a cryptographic protocol that enables direct transmission of secret messages over a quantum channel without pre-sharing an encryption key.
Security is guaranteed by quantum principles (e.g., no-cloning theorem, entanglement), which ensures that eavesdropping attempts disrupt the quantum states and are detectable.

Let us now extend QSDC to a $(d \times D)$-dimensional hybrid-packaged subspace.
Every operation $V$ in the protocol satisfies $[V,\hat{Q}]=0$, ensuring that all states remain in $\mathcal H_{Q=0}$.

\subsubsection{Protocol Steps}

\paragraph{(1) Agreement on Bases and Symbols.}

Alice and Bob publicly agree on a mapping 
\[
m\;\longleftrightarrow\;(j,k),\quad m=jD+k,\;j=0,\dots,d-1,\;k=0,\dots,D-1,
\] 
so that the computational (Z)‑basis is 
$\ket{m}=\ket{j_P}\!\otimes\!\ket{k_E}$.

They also fix a family of decoy bases 
\[
\mathcal B_{\mu\nu}
=\mathcal B^{(\mu)}_{\rm int}\otimes\mathcal B^{(\nu)}_{\rm ext},
\]
where $\mu\in\{0,\dots,d\}$ runs over the $d+1$ mutually unbiased bases (MUBs) of the internal space (for prime‑power $d$), and $\nu$ labels optional MUBs in the external space.

Finally, they choose a probability $p_{\rm dec}$ that any given round is a decoy (measured in one of these MUBs) rather than a signal (measured in the computational basis).

\paragraph{(2) State Preparation.}

On each round, Alice flips a coin that lands decoy with probability $p_{\rm dec}$.

If it is a signal round, then she encodes her classical symbol $m\in\{0,\dots,N-1\}$ by preparing 
\[
\ket{\psi_m} \;=\;\ket{m_P}\otimes\ket{m_E},
\] 
which apparently obeys $\hat{Q}\,\ket{\psi_m}=0$ and $U_g\ket{\psi_m}=e^{i\phi(g)}\ket{\psi_m}$.

However, if instead it is a decoy round, then she selects a random pair $(\mu,\nu)$ and internal/external labels $(j,k)$, and prepares the hybrid MUB state 
\[
\ket{\psi_{jk}^{(\mu,\nu)}}
=\ket{\psi_j^{(\mu)}}_{\rm int}\;\otimes\;\ket{\chi_k^{(\nu)}}_{\rm ext},
\]
also guaranteed to lie in the neutral sector.

\paragraph{(3) Transmission.}

Alice sends each state through a gauge‑invariant quantum channel $\mathcal E(\rho)=\sum_r E_r\rho E_r^\dagger$ with $[E_r,\hat{Q}]=0$.
By definition, $\mathcal E$ cannot move the state out of $\mathcal H_{Q=0}$.

\paragraph{(4) Measurement by Bob.}

Bob receives each system and, before knowing whether it is a signal or decoy, measures in the computational basis $\{\ket{m}\bra{m}\}$.
Later, when Alice announces which rounds were decoys and the chosen $(\mu,\nu)$, Bob re‑labels those outcomes in the corresponding MUB $\mathcal B_{\mu\nu}$.
In an ideal (noiseless) channel, a signal round recovers exactly the symbol $m$ Alice sent.
Any deviation indicates noise or eavesdropping.

\paragraph{(5) Classical Post‑Processing and Security Check.}

Alice then reveals the positions and bases of all decoy rounds.
On this subset, they compute either the raw error rate or the CGLMP parameter $S_{dD}$ (if they used Bell‑type decoys).
If the observed disturbance exceeds the threshold set by their chosen MUBs, then Alice and Bob abort.
Otherwise, the remaining signal‑round outcomes are exactly Bob’s recovered plaintext.

\subsubsection{Security Analysis}

Assume an eavesdropper intercepts and measures a state taken uniformly from $\{|m\rangle\}_{m=0}^{N-1}$.
Then her correct-guess probability is
\[
P_{\mathrm{Eve}} = \frac{1}{N} = \frac{1}{dD}\,.
\]

If additional MUBs are used (say $M$ such bases for the internal space), then even if Eve guesses a basis, her correct outcome probability is further reduced.
The success probability becomes
\[
P_{\mathrm{Eve}} \le \frac{1}{M\,dD}\,.
\]
Thus, by increasing $d$ (and/or $D$ if possible), the security is enhanced.

Moreover, because every physical interaction must commute with $\hat{Q}$, Eve cannot perform gauge‑violating couplings to gain side information without creating detectable charge‑sector leaks.

\subsubsection{Comparison and Advantages}

Utilizing both packaged internal quantum numbers (IQNs) and conventional external modes, our $(d\times D)$ hybrid protocol achieves a large, hardware‑efficient alphabet, enhanced decoy flexibility via internal MUBs, and intrinsic protection against gauge‑violating disturbances.
These features go beyond what pure 2D or single‑DOF schemes can offer.
We list the advantages in following table:

\begin{table}[H]
	\centering
	\caption{Advantages of Packaged Quantum Secure Direct Communication}
	\begin{tabular}[hbt!]{p{3.5cm}|p{5.5cm}|p{5.5cm}}
		\hline\hline
		Aspect &Hybrid-Packaged QSDC &Conventional QSDC/QKD \\
		\hline\hline
		Alphabet size & $N=dD$, tunable by internal $d$ and external $D$ & Limited by single physical DOF (e.g.\ $D$ or $d$) \\
		\hline
		Decoy structure & $d+1$ internal MUBs $\times$ external MUBs & Typically only two or three bases \\
		\hline
		Error suppression & Gauge superselection forbids sector‑violating errors & No analogous built‑in symmetry protection \\
		\hline
		Resource efficiency & More logical symbols per carrier, no extra hardware & Requires extra systems or rounds for larger alphabets \\
		\hline
	\end{tabular}
\end{table}

\bibliographystyle{plain}
	
\end{document}